\newcommand{\typeof}{1} %
\newcommand{\longversion}[1]{\ifthenelse{\equal{\typeof}{0}}{}{#1}\ignorespaces}
\newcommand{\shortversion}[1]{\ifthenelse{\equal{\typeof}{0}}{#1}{}\ignorespaces}
\newcommand{\longshortversion}[2]{\ifthenelse{\equal{\typeof}{0}}{#2}{#1}}
\tikzset{
  ->-/.style = { decoration = { markings, mark=at
      position #1 with {\arrow{>}} }, postaction =
    {decorate} }, ->-/.default = 0.5 }
\tikzset{ -<-/.style = { decoration = { markings,
      mark=at position #1 with {\arrow{<}}},
    postaction = {decorate} }, -<-/.default = 0.5
}
\tikzset{rect/.style = {
    draw,
    minimum width=15,
    minimum height=15
  }  
}
  \newtheorem{notation}{\textbf{Notation}}[section]
\newcommand{\metcppo}{\mathbf{MetCppo}}
\newcommand{\sem}[1]{\llbracket#1\rrbracket}
\newcommand{\psem}[1]{\llparenthesis#1\rrparenthesis}
\newcommand{\letin}[3]{\mathbf{let} \ #1 \
  \mathbf{be} \ #2 \ \mathbf{in} \ #3}
\newcommand{\fix}[2]{\mathbf{fix}_{#1}(#2)}
\newcommand{\LL}[1]{\Lambda_{#1}}
\newcommand{\LLL}[1]{\Lambda_{#1}^{!}}
\newcommand{\lrec}[1]{\Lambda_{\mathrm{rec}}}
\newcommand{\bbR}{\mathbb{R}}
\newcommand{\bbRR}{\mathbb{R}_{\geq 0}^{\infty}}
\newcommand{\dobs}{d^{\mathrm{obs}}}
\newcommand{\dext}{d^{\mathrm{log}}}
\newcommand{\deq}{d^{\mathrm{equ}}}
\newcommand{\dden}{d^{\mathrm{den}}}
\newcommand{\dint}{d^{\mathrm{int}}}
\newcommand{\term}{\mathbf{Term}}
\newcommand{\fterm}{\mathbf{Term}^{!}}
\newcommand{\type}{\mathbf{Ty}}
\newcommand{\env}{\mathbf{Env}}
\newcommand{\val}{\mathbf{Value}}
\newcommand{\fval}{\mathbf{Value}^{!}}
\newcommand{\tr}{\mathrm{tr}}
\newcommand{\const}[1]{\overline{#1}}
\newcommand{\abs}[1]{|#1|}
\newcounter{numberone}
\newenvironment{varenumerate}
{
\begin{list}{\arabic{numberone}.}
{
  \usecounter{numberone}
  \setlength{\itemsep}{0pt}
  \setlength{\topsep}{0pt}
  \setlength{\parsep}{0pt}
  \setlength{\partopsep}{0pt}
  \setlength{\leftmargin}{15pt}
  \setlength{\rightmargin}{0pt}
  \setlength{\itemindent}{0pt}
  \setlength{\labelsep}{5pt}
  \setlength{\labelwidth}{15pt}
}}
{
\end{list} 
}
\declaretheorem[style=definition,name=Example,qed=$\square$]{myexample}
\newcommand{\XX}{x}
\title{On the Lattice of Program 
  Metrics\footnote{The first
    and third authors are
    partially supported by the
    ERC CoG DIAPASoN, GA
    818616.}}
\author{Ugo Dal Lago}{University of Bologna \& INRIA Sophia Antipolis}{ugodallago@unibo.it}{}{}
\author{Naohiko Hoshino}{Sojo University}{nhoshino@cis.sojo-u.ac.jp}{}{}
\author{Paolo Pistone}{University Roma Tre}{paolo.pistone@uniroma3.it}{}{}
\begin{document}
% 
% 
% \titlerunning{Abbreviated paper title} If the
% paper title is too long for the running head,
% you can set an abbreviated paper title here
% 
% \author{Ugo Dal
% Lago\inst{1,2}\orcidID{0000-1111-2222-3333} \and
% Naohiko
% Hoshino\inst{3}\orcidID{1111-2222-3333-4444}
% \and Paolo
% Pistone\inst{1,2}\orcidID{2222--3333-4444-5555}}
% 
\authorrunning{U. Dal Lago et al.}
% First names are abbreviated in the running head.
% If there are more than two authors, 'et al.' is
% used.
% 
% \institute{University of Bologna \and Inria
% Sophia Antipolis \and Sojo University}
% 
\maketitle % typeset the header of the contribution

\keywords{Metrics, Lambda Calculus, Linear Types}

\begin{abstract}
  In this paper we are concerned with understanding the nature
  of program metrics for calculi with higher-order
  types, seen as natural generalizations of
  program equivalences. Some of the metrics we are
  interested in are well-known, such as those based
  on the interpretation of terms in metric spaces
  and those obtained by generalizing observational
  equivalence. We also introduce a new one, called
  the interactive metric, built by applying the
  well-known Int-Construction to the category of
  metric complete partial orders. Our aim is then
  to understand how these metrics relate to each
  other, i.e., whether and in which cases one such
  metric refines another, in analogy
  with corresponding well-studied problems
  about program equivalences. The results we obtain
  are twofold. We first show that the metrics of
  semantic origin, i.e., the denotational and
  interactive ones, lie \emph{in between} the
  observational and equational metrics and that in
  some cases, these inclusions are strict. Then, we
  give a result about the relationship between the
  denotational and interactive metrics, revealing that the former is less discriminating
  than the latter. All our results are given for a
  linear lambda-calculus, and some of them can be
  generalized to calculi with graded comonads, in
  the style of Fuzz.
\end{abstract}

\section{Introduction}
Program equivalence is one of the most important concepts in the semantics of 
programming languages: every way of giving semantics to programs induces a 
notion of equivalence, and the various notions of equivalence available for the 
same language, even when very different from each other, help us understanding 
the deep nature of the language itself. Indeed, there is not \emph{one} single, 
preferred way to construct a notion of equivalence for programs. The latter is 
especially true in presence of higher-order types or in scenarios in which 
programs have a fundamentally interactive behavior, e.g. in process algebras. 
For example, the relationship between  
observational equivalence,  the most 
coarse-grained congruence relation among those which are coherent 
with the underlying notion of observation, and denotational semantics has led in some cases to so-called full-abstraction results (e.g.~\cite{Hyland:2000aa, 10.1145/3164540}), which are known to hold only for 
\emph{some} denotational models and in \emph{some} programming languages. A 
similar argument applies to applicative bisimularity, which, e.g., is indeed 
fully abstract in presence of \emph{probabilistic} effects \cite{DBLP:conf/esop/CrubilleL14, DBLP:conf/birthday/CrubilleLSV15} but not so in 
presence of \emph{nondeterministic} effects \cite{LassenPhD}.

Equivalences, although central to the theory of programming languages, do not 
allow us to say anything about all those pairs of programs which, while 
qualitatively exhibiting different behaviors, behave \emph{similarly} in a 
quantitative sense. This has led to the study of notions of \emph{distance} 
between programs, which often take the form of (pseudo-)metrics on the 
space of programs or their denotations. In this sense we can distinguish at 
least three defining styles:
\begin{itemize}
\item
	First, observational equivalence can be generalized to a metric, 
	maintaining the intrinsic quantification across all contexts, but observing 
	a difference rather than an equality~\cite{DLC2015,DLC2016}.
\item
	There is also an approach obtained by generalizing equational logic, 
	recently introduced by Mardare et al.~\cite{Mardare2016}, which has been adapted to 
	higher-order computations with both linear~\cite{DBLP:conf/csl/DahlqvistN22} and non-linear~\cite{DLHLP2022} types.
\item	
Finally, linear calculi admit a denotational interpretation in the 
	category of metric complete partial orders~\cite{10.1145/3009837.3009890}, and this is well-known 
	to work well in presence of graded comonads.
\end{itemize}
In other words, various definitional styles for program equivalences for 
higher-order calculi have been proved to have a meaningful metric counterpart, at least when the underlying type system is based on linear or graded 
types. There is a missing tale in this picture, however, namely the one 
provided by interactive semantic models akin to game semantics and the geometry 
of interaction \cite{GoI1}, which were key ingredients towards the 
aforementioned
full-abstraction results. Moreover, the relationship between the various 
notions of distance in the literature has been studied only superficially, and 
the overall situation is currently less clear than for program equivalences.

The aim of this work is to shed light on the landscape about metrics in 
higher-programs. Notably, a new metric between programs inspired by Girard's 
geometry of interaction \cite{GoI1} is defined, being obtained by applying the so-called 
Int-construction \cite{jsv, nfg} to the category of metric complete partial orders. The result is a denotational model, which, while fundamentally 
different from existing metric models, provides a natural way to measure the 
distance between programs, which we will call the \emph{interactive metric}. In 
the interactive metric, differences between two programs can be observed 
incrementally, by interacting with the underlying denotational interpretation 
in the question-answer protocol typical of game semantics and the geometry of 
interaction.

Technically, the main part of the work is an in-depth study of the 
relationships between the various metrics existing in the literature, including 
the interactive metric. Overall, the result of this analysis is the one in 
Figure \ref{fig1}. The observational metric remains the least discriminating, while the 
equational metric is proved to be the one assigning the greatest distances 
to (pairs of) programs. The two metrics of a semantic nature, namely the 
denotational one and the interactive one, stand in between the two metrics 
mentioned above, with the interactive metric being more discriminating than the denotational one.

\begin{figure}
\begin{center}
\includegraphics[scale=0.8]{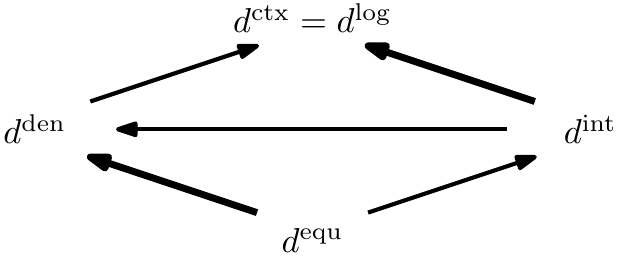}
\end{center}
\caption{Illustration of our comparison results for program metrics: an arrow $d^{\mathbf a} \to d^{\mathbf b}$ indicates that 
$d^{\mathbf b}$ is \emph{coarser} (i.e.~less discriminating) than $d^{\mathbf a}$. Thick arrows indicate \emph{strict} domination.}
\label{fig1}
\end{figure}

%But what is the relationship between denotational and 
%interactive metrics? \UDL{I'd ask Paolo to write a few words here.}

The remainder of this manuscript is structured as follows.
After recalling some basic facts about metric spaces in Section 2, in Section 3 
we introduce a basic linear programming language over the reals and its 
associated notion of program metrics; in Section 4 we discuss the logical 
relation metric and the observational metric; in Section 5 we discuss the 
equational metric; in Section 6 we introduce the two denotational metrics; 
Sections 7 and 8 contain our main comparison results, and in Section 9 we 
\shortversion{shortly } discuss the case of graded exponentials.
\shortversion{Some proofs are omitted and can be found in an extended version
of this paper \cite{longversion}.}

\section{Preliminaries}
\label{sec:preliminaries}

In this section, we recall the notions of extended
pseudo-metric spaces and non-expansive functions.
Let $\bbRR$ be the set
$\{a \in \bbR \mid a \geq 0\} \cup \{\infty\}$ of
non-negative real numbers and infinity. An
\emph{extended pseudo-metric space} $X$ consists
of a set $|X|$ and a function
$d_{X} \colon |X| \times |X| \to \bbRR$ satisfying
the following conditions:
\begin{itemize}
\item For all $x \in |X|$, we have
  $d_{X}(x,x) = 0$;
\item For all $x,y \in |X|$, we have
  $d_{X}(x,y) = d_{X}(y,x)$;
\item For all $x,y,z \in |X|$, we have
  $d_{X}(x,z) \leq d_{X}(x,y) + d_{X}(y,z)$.
\end{itemize}
In the sequel, we simply refer to extended
pseudo-metric spaces as metric spaces, and we
denote the underlying set $|X|$ by $X$.

For metric spaces $X$ and $Y$, a function
$f \colon X \to Y$ is said to be
\emph{non-expansive} when for all $x,x' \in X$, we
have $d_{Y}(fx,fx') \leq d_{X}(x,x')$. We write
$\mathbf{Met}$ for the category of metric spaces
and non-expansive functions. The category
$\mathbf{Met}$ has a symmetric monoidal closed
structure $(1,\otimes,\multimap)$ where the metric
of the tensor product $X \otimes Y$ is given by
\begin{equation*}
  d_{X \otimes Y}((x,y),(x',y'))
  = d_{X}(x,x') + d_{Y}(y,y').
\end{equation*}
We suppose that the monoidal product is left
associative, and we denote the $n$-fold monoidal
product of $X$ by $X^{\otimes n}$. In the sequel,
$\bbR$ denotes the metric space of real numbers
equipped with the absolute distance
$d_{\bbR}(a,b) = |a - b|$.

\section{A Linear Programming Language}
\label{sec:line-progr-lang}

\subsection{Syntax and Operational Semantics}
\label{sec:syn}

% \mymarginnote{ Introduce a linear
% $\lambda$-calculus with real number types,
% first-order functions, and the like. Mention,
% without giving the details, that it can be
% extended with graded exponentials.}

We introduce our target language that is a linear
lambda calculus equipped with constant symbols for
real numbers and non-expansive functions. We fix a
set $S$ of non-expansive functions
$f \colon \bbR^{\otimes n} \to \bbR$ with
$n \geq 1$. We call $n$ the \emph{arity} of $f$.
For example, $S$ may include addition
$+ \colon \mathbb{R} \otimes \mathbb{R} \to
\mathbb{R}$ and trigonometric functions such as
$\sin, \cos \colon \mathbb{R} \to \mathbb{R}$. We
assume function symbols $\const{f}$ for $f \in S$
and constant symbols $\const{a}$ for real numbers
$a \in \bbR$.

Our language, denoted by $\LL{S}$, is given as
follows. Types and environments are given by
\begin{equation*}
  \mathrm{Types}
  \quad
  \tau,\sigma := \mathbf{R} \mid
  \mathbf{I} \mid
  \tau \multimap \sigma \mid
  \tau \otimes \sigma,
  \qquad
  \mathrm{Environments}
  \quad
  \Gamma,\Delta :=
  \varnothing \mid
  \Gamma , x : \tau.
\end{equation*}
We denote the set of types by $\type$ and denote
the set of environments by $\env$. We always
suppose that every variable appears at most once
in any environment. For environments $\Gamma$ and
$\Delta$ that do not share any variable, we write
$\Gamma \# \Delta$ for a \emph{merge}
\cite{dill,10.5555/519939} of $\Gamma$ and
$\Delta$, that is an environment obtained by
shuffling variables in $\Gamma$ and $\Delta$
preserving the order of variables in $\Gamma$ and
the order of variables in $\Delta$. For example,
$(x:\tau,y:\sigma,y':\sigma',x':\tau')$ is a merge
of $(x:\tau,x':\tau')$ and
$(y:\sigma,y':\sigma')$. \longversion{Formally, an
  environment $\Xi$ is said to be a merge of
  $\Gamma$ and $\Delta$ when
  \begin{itemize}
  \item $\Xi$, $\Gamma$ and $\Delta$ are equal to
    the empty environment; or
  \item $\Gamma = \Gamma',x:\tau$ and there is a
    merge $\Xi'$ of $\Gamma'$ and $\Delta$ such
    that $\Xi = \Xi',x:\tau$; or
  \item $\Delta = \Delta',x:\tau$ and there is a
    merge $\Xi'$ of $\Gamma$ and $\Delta'$ such
    that and $\Xi = \Xi',x:\tau$.
  \end{itemize}} When we write $\Gamma \# \Delta$,
we implicitly suppose that no variable is shared by
$\Gamma$ and $\Delta$. Terms, values and contexts
are given by the following BNF.
\begin{equation*}
  \begin{array}{ll}
    \!
    \mathrm{Terms}
    &
      M,N := x \in \mathbf{Var}
      \mid
      \const{a}  \mid
      \ast \mid
      \const{f}(M_{1},
      \ldots,M_{\mathrm{ar}(f)}) \mid
      M\,N \mid
      \lambda x:\tau.\,M \mid
    \\
    & \phantom{M,N :=}
      M \otimes N \mid 
      \letin{\ast}{M}{N} \mid
      \letin{x \otimes y}{M}{N} \\
    \!
    \mathrm{Values}
    &
      V,U := \const{a}
      \mid \ast 
      \mid \lambda x:\tau.\, M
      \mid V \otimes U
    \\
    \!
    \mathrm{Contexts}
    &
      C[-] :=
      [-] \mid \const{f}(M,\ldots,M',
      C[-],N',\ldots,N) \mid
      C[-]\,M \mid M\,C[-] \mid
      \lambda x:\tau.\, C[-] \mid\\
    & \phantom{C[-]::=}\hspace{-0.8pt}
      C[-] \otimes M \mid
      M \otimes C[-] \mid
      \letin{\ast}{C[-]}{M} \mid 
      \letin{\ast}{M}{C[-]} \mid \\
    &  \phantom{C[-]::=}
      \letin{x \otimes y}{C[-]}{M} \mid
      \letin{x \otimes y}{M}{C[-]}
  \end{array}
\end{equation*}
Here, $a$ ranges over $\bbR$, $f$ ranges over $S$,
and $x$ ranges over a countably infinite set
$\mathbf{Var}$ of variables. We write
$\Gamma \vdash M :\tau$ when the typing judgement
is derived from the rules given in
Figure~\ref{fig:typing_rules}. Evaluation rules
are given in Figure~\ref{fig:evaluation_rules}.
Since $\LL{S}$ is a purely linear programming
language, for any closed term $\vdash M:\tau$,
there is a value $\vdash V:\tau$ such that
$M \hookrightarrow V$. For an environment $\Gamma$
and a type $\tau$, we define $\term(\Gamma,\tau)$
to be the set of all terms $M$ such that
$\Gamma \vdash M:\tau$, and we define $\val(\tau)$
to be the set of closed values of type $\tau$. We
simply write $\term(\tau)$ for
$\term(\varnothing,\tau)$, that is the set of
closed terms of type $\tau$. For a context $C[-]$,
we write
$C[-] \colon (\Gamma,\tau) \to (\Delta,\sigma)$
when for all terms $\Gamma \vdash M:\tau$, we have
$\Delta \vdash C[M]:\sigma$.

We adopt Church-style lambda abstraction so that
every type judgement $\Gamma \vdash M : \tau$ has
a unique derivation, which makes it easier to
define denotational semantics for $\LL{S}$. Except
for this point, our language can be understood as
a fragment of Fuzz
\cite{10.1145/1932681.1863568}--- the typing
judgment $x:\sigma , \ldots, y:\rho \vdash M:\tau$
corresponds to
$x:_{1}\sigma , \ldots, y:_{1}\rho \vdash M:\tau$
in Fuzz. In Section~\ref{sec:grade}, we discuss
extending our results in this paper to a richer
language, closer to the one from
\cite{10.1145/1932681.1863568}.

% \longversion{\begin{proposition} For any
%     environment $\Gamma$ and any term $M$, there
%     is at most one type $\tau$ such that
%     $\Gamma \vdash M :\tau$ is derivable. Any
%     judgement $\Gamma \vdash M :\tau$ has a unique
%     derivation.
%   \end{proposition}
%   \begin{proposition}
%     If $\Gamma \vdash M:\tau$ and
%     $\Delta,x:\tau \vdash N:\sigma$ are drivable,
%     then $\Gamma \# \Delta \vdash N[M/x]:\sigma$
%     is derivable.
%   \end{proposition}
%   \begin{proposition}
%     For any $\vdash M:\tau$, there is a unique
%     $\vdash V$ such that $M \hookrightarrow V$.
%   \end{proposition}
% }

% \longversion{
  \begin{figure}[t]
  \centering
  \begin{equation*}
    \prftree{
    }{
      x : \tau
      \vdash x:\tau
    }
    \qquad
    \prftree{
      a \in \bbR
    }{
      \vdash  \const{a} : \mathbf{R}
    }
    \qquad
    \prftree{
    }{
      \vdash \ast: \mathbf{I}
    }
    \qquad
    \prftree{
      f \in S
    }{
      \Gamma_{1} \vdash M_{1}:\mathbf{R}
    }{
      \ldots
    }{
      \Gamma_{\mathrm{ar}(f)} \vdash M_{\mathrm{ar}(f)}:\mathbf{R}
    }{
      \Gamma_{1} \# \cdots \# \Gamma_{\mathrm{ar}(f)}
      \vdash \const{f} (M_{1},\ldots,M_{\mathrm{ar}(f)}):\mathbf{R}
    }
  \end{equation*}
  \begin{equation*}
    \prftree{
      \Gamma , x : \sigma  \vdash M:\tau
    }{
      \Gamma
      \vdash
      \lambda x:\sigma.\,M : \sigma \multimap \tau
    }
    \qquad
    \prftree{
      \Gamma \vdash M : \sigma \multimap \tau
    }{
      \Delta \vdash N : \sigma
    }{
      \Gamma \# \Delta \vdash
      M\,N : \tau
    }
    \qquad
    \prftree{
      \Gamma \vdash M : \tau
    }{
      \Delta \vdash N : \sigma
    }{
      \Gamma \# \Delta \vdash
      M \otimes N : \tau \otimes \sigma
    }
  \end{equation*}
  \begin{equation*}
    \prftree{
      \Gamma \vdash M : \mathbf{I}
    }{
      \Delta \vdash N : \tau
    }{
      \Gamma \# \Delta
      \vdash
      \letin{\ast}{M}{N} : \tau
    }
    \qquad
    \prftree{
      \Gamma \vdash M : \sigma_{1} \otimes \sigma_{2}
    }{
      \Delta,x:\sigma_{1},y:\sigma_{2} \vdash N:\tau
    }{
      \Gamma \# \Delta
      \vdash
      \letin{x \otimes y}{M}{N}:\tau
    }
  \end{equation*}
  \caption{Typing Rules}
  \label{fig:typing_rules}
\end{figure}
% }

\begin{figure}[t]
  \centering
  \begin{equation*}
    \prftree{
      V
      \hookrightarrow
      V
    }
    \qquad
    \prftree{
      M_{1}
      \hookrightarrow
      \const{a_{1}}
    }{
      \ldots
    }{
      M_{n}
      \hookrightarrow
      \const{a_{n}}
    }{
      \const{f}(M_{1},\ldots,M_{n})
      \hookrightarrow
      \const{f(a_{1},\ldots,a_{n})}
    }
    \qquad
    \prftree{
      M
      \hookrightarrow
      \lambda x:\tau.\,L
    }{
      N \hookrightarrow V
    }{
      L[V/x]
      \hookrightarrow U
    }{
      M\,N
      \hookrightarrow
      U
    }
  \end{equation*}
  \begin{equation*}
    \prftree{
      M \hookrightarrow V
    }{
      N \hookrightarrow U
    }{
      M \otimes N
      \hookrightarrow
      V \otimes U
    }
    \qquad
    \prftree{
      M \hookrightarrow \ast
    }{
      N \hookrightarrow V
    }{
      \letin{\ast}{M}{N}
      \hookrightarrow
      V
    }
    \qquad
    \prftree{
      M
      \hookrightarrow
      V \otimes U
    }{
      N[V/x,U/y]
      \hookrightarrow
      W
    }{
      \letin{x \otimes y}{M}{N}
      \hookrightarrow
      W
    }
  \end{equation*}
  \caption{Evaluation Rules}
  \label{fig:evaluation_rules}
\end{figure}

\subsection{Equational Theory}

In this paper we consider an equational theory
for $\LL{S}$, which will turn out to be instrumental to define a
notion of well-behaving family of metrics for
$\LL{S}$ called admissibility
(Section~\ref{sec:adm}) and to give a quantitative
equational theory for $\LL{S}$
(Section~\ref{sec:eq}). In both cases, if two terms
are to be considered equal, then the distance
between them is required to be $0$. Here, we adopt
the standard equational theory for the linear
lambda calculus \cite{Mackie1993AnIL} extended
with the following axiom
\begin{equation*}
  \prftree{
    f \in S
  }{
    f(a_{1},\ldots,a_{\mathrm{ar}(f)}) = b
  }{
    \vdash \const{f}(
    \const{a_{1}},
    \ldots,
    \const{a_{\mathrm{ar}(f)}}
    ) = \const{b} : \tau
  }.
\end{equation*}
For terms $\Gamma \vdash M : \tau$ and
$\Gamma \vdash N : \tau$, we write
$\Gamma \vdash M = N : \tau$ when the equality is
derivable.

We may add some other axioms to the equational
theory as long as the axioms are valid when we
interpret function symbols $\const{f}$ as $f$ and
constant symbols $\const{a}$ as $a$. For example,
when
$\mathrm{add} \colon \mathbb{R} \otimes \mathbb{R}
\to \mathbb{R}$ is in $S$, we may add the
commutativity law
\begin{math}
  x:\mathbf{R},y:\mathbf{R} \vdash
  \overline{\mathrm{add}}(x,y) =
  \overline{\mathrm{add}}(y,x) : \mathbf{R}
\end{math}
to the equational theory. The rest of this paper
is not affected by such extensions to the
equational theory.

\longversion{
  \begin{figure}
    \centering
    \begin{equation*}
      \prftree{
        \Gamma \vdash M : \tau      
      }{
        \Gamma \vdash M = M :\tau      
      }
      \qquad
      \prftree{
        \Gamma \vdash M = N: \tau
      }{
        \Gamma \vdash N = M :\tau      
      }
      \qquad
      \prftree{
        \Gamma \vdash M = N: \tau
      }{
        \Gamma \vdash N = L: \tau
      }{
        \Gamma \vdash M = L :\tau      
      }
    \end{equation*}
    \begin{equation*}
          \prftree{
        f(a_{1},\ldots,a_{n}) = b
      }{
        \vdash \const{f}(
        \const{a_{1}},
        \ldots,
        \const{a_{\mathrm{ar}(f)}}
        ) = \const{b} : \tau
      }
      \qquad
      \prftree{
        \Gamma \vdash M = N:\tau
      }{
        \Delta \vdash C[M] : \sigma
      }{
        \Delta \vdash C[N] : \sigma
      }{
        \Delta \vdash C[M] = C[N] :\sigma
      }
    \end{equation*}
    \begin{equation*}
      \prftree{
        \Gamma,x:\tau \vdash M:\sigma
      }{
        \Delta \vdash N:\tau
      }{
        \Gamma \# \Delta \vdash (\lambda x:\tau.\,M)\,N
        = M[N/x]:\sigma
      }
      \qquad
      \prftree{
        \Gamma \vdash M:\tau \multimap \sigma
      }{
        \Gamma \vdash \lambda x:\tau.\,M\,x
        = M:\tau \multimap \sigma
      }
    \end{equation*}
    \begin{equation*}
      \prftree{
        \Gamma \vdash M:\tau
      }{
        \Gamma \vdash
        \letin{\ast}{\ast}{M}
        = M : \tau
      }
      \qquad
      \prftree{
        \Gamma \vdash
        \letin{x \otimes y}{M \otimes N}{L}
        : \tau 
      }{
        \Gamma
        \vdash
        \letin{x \otimes y}{M \otimes N}{L}
        =
        L[M/x,N/y] : \tau
      }
    \end{equation*}
    \begin{equation*}
      \prftree{
        \Gamma \vdash M:\mathbf{I}
      }{
        \Gamma \vdash
        \letin{\ast}{M}{\ast}
        = M : \mathbf{I}
      }
      \qquad
      \prftree{
        \Gamma \vdash M : \tau \otimes \sigma
      }{
        \Gamma
        \vdash
        \letin{x \otimes y}{M}{x \otimes y}
        =
        M : \tau \otimes \sigma
      }
    \end{equation*}
    \begin{equation*}
      \prftree{
        \Gamma \vdash \letin{\ast}{M}{C[N]}:\tau
      }{
        \Gamma \vdash \letin{\ast}{M}{C[N]}
        =
        C[\letin{\ast}{M}{N}]
        :\tau
      }
    \end{equation*}
    \begin{equation*}
      \prftree{
        \Gamma \vdash \letin{x \otimes y}{M}{C[N]}:\tau
      }{
        \Gamma \vdash \letin{x \otimes y}{M}{C[N]}
        =
        C[\letin{x \otimes y}{M}{N}]
        :\tau
      }
    \end{equation*}
    \caption{Derivation Rules of Equational Theory
      for $\LL{S}$}
    \label{fig:equational_theory}
  \end{figure}
}

\subsection{Admissibility}
\label{sec:adm}

Let us call a family
$\{d_{\Gamma,\tau}\}_{\Gamma \in \env,\tau \in
  \type}$ in which $d_{\Gamma,\tau}$ is a metric
on $\term(\Gamma,\tau)$ a \emph{metric on
  $\LL{S}$}. We introduce a class of metrics on
$\LL{S}$, which is the object of study of this
paper.

\begin{definition}[Admissible Metric]\label{def:adm}
  Let
  $\{d_{\Gamma,\tau}\}_{\Gamma \in \env, \tau \in
    \type}$ be a metric on $\LL{S}$. We say that
  $\{d_{\Gamma,\tau}\}_{\Gamma \in
    \env,\tau:\type}$ is \emph{admissible} when
  the following conditions hold:
  \begin{varenumerate}
  \item[(A1)] For any environment $\Gamma$, any
    type $\tau$, any pair of terms
    $\Gamma \vdash M : \tau$,
    $\Gamma \vdash N : \tau$ and any context
    $C[-] \colon (\Gamma,\tau) \to
    (\Delta,\sigma)$, we have
    \begin{math}
      d_{\Delta,\sigma}(C[M],C[N])
      \leq
      d_{\Gamma,\tau}(M,N).  
    \end{math}
  \item[(A2)] For all $a,b \in \bbR$,
    we have
    \begin{math}
      d_{\varnothing,\mathbf{R}}(a,b) = |a - b|.
    \end{math}
  \item[(A3)] For all
    $a_{1},\ldots,a_{n}, b_{1},\ldots,b_{n} \in
    \bbR$ and all closed values $\vdash V : \tau$
    and $\vdash U : \tau$, we have
    \begin{equation*}
      d_{\varnothing, \mathbf{R}^{\otimes n} \otimes \tau}
      \left(
        \const{a_{1}} \otimes \cdots \otimes
        \const{a_{n}}
        \otimes V,
        \const{b_{1}} \otimes \cdots \otimes
        \const{b_{n}}
        \otimes U
      \right) 
      \geq
      |a_{1}- b_{1}| + \cdots +
      |a_{n} - b_{n}|
      .
    \end{equation*}
  \item[(A4)] If $\Gamma \vdash M = N: \tau$, then
    $d_{\Gamma,\tau}(M,N) = 0$.
  \end{varenumerate}
\end{definition}

The first condition (A1) states that all contexts
are non-expansive, and the second condition (A2)
states that the metric on $\mathbf{R}$ coincides
with the absolute metric on $\mathbb{R}$. (A3)
states that the distance between two terms
$\const{a_{1}} \otimes \cdots \otimes
\const{a_{n}} \otimes V$ and
$\const{b_{1}} \otimes \cdots \otimes
\const{b_{n}} \otimes U$ is bounded (from below)
by the distance between their ``observable
fragments''
$d_{\bbR^{\otimes n}}
((a_{1},\ldots,a_{n}),(b_{1},\ldots,b_{n}))$. The
last condition (A4) states that $d_{\Gamma,\tau}$
subsumes the equational theory for $\LL{S}$.

The definition of admissibility is motivated by
the study of Fuzz \cite{10.1145/1932681.1863568},
which is a linear type system for verifying
differential privacy
\cite{10.1145/1065167.1065184}. There, Reed and
Pierce introduce a syntactically defined metrics
on Fuzz using a family of relations called metric
relations, and they prove that all programs are
non-expansive with respect to the syntactic metric
(Theorem~6.4 in \cite{10.1145/1932681.1863568}).
(A1) is motivated by this result. Furthermore, in
the definition of the metric relation, the tensor
product of types is interpreted as the monoidal
product of metric spaces, and the type of real
numbers is interpreted as $\mathbb{R}$ with the
absolute distance. (A2) and (A3) are motivated by
these definitions. In fact, given an admissible
metric
$\{d_{\Gamma,\tau}\}_{\Gamma \in \env,\tau \in
  \type}$ on $\LL{S}$, we can show that
$d_{\varnothing,\mathbf{R}^{\otimes n}}$ coincides
with the metric of $\bbR^{\otimes n}$.
\begin{lemma}
  If a metric
  $\{d_{\Gamma,\tau}\}_{\Gamma \in \env,\,\tau \in
    \type}$ is admissible, then for all
  $a_{1},b_{1},\ldots,a_{n},b_{n} \in \bbR$,
  \begin{equation}\label{eq:L1}
    d_{\varnothing,\mathbf{R}^{\otimes n}}
    (
    \const{a_{1}} \otimes \cdots \otimes \const{a_{n}},
    \const{b_{1}} \otimes \cdots \otimes \const{b_{n}}
    )
    = |a_{1} - b_{1}| + \cdots +
    |a_{n} - b_{n}|.
  \end{equation}
\end{lemma}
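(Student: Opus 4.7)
The proof splits into the two inequalities $\leq$ and $\geq$.

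For the upper bound, I would interpolate between the two tuples one component at a time. Set $T_k = \const{b_1} \otimes \cdots \otimes \const{b_k} \otimes \const{a_{k+1}} \otimes \cdots \otimes \const{a_n}$ for $k = 0, \ldots, n$, so that $T_0$ and $T_n$ are the two terms of interest. For each $k \geq 1$, the context
\begin{equation*}
C_k[-] \; = \; \const{b_1} \otimes \cdots \otimes \const{b_{k-1}} \otimes [-] \otimes \const{a_{k+1}} \otimes \cdots \otimes \const{a_n}
\end{equation*}
has signature $(\varnothing,\mathbf{R}) \to (\varnothing,\mathbf{R}^{\otimes n})$, and by (A1) applied to it,
$d_{\varnothing,\mathbf{R}^{\otimes n}}(T_{k-1}, T_k) \leq d_{\varnothing,\mathbf{R}}(\const{a_k}, \const{b_k})$, which equals $|a_k - b_k|$ by (A2). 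The triangle inequality along the chain $T_0, T_1, \ldots, T_n$ yields the upper bound $\sum_{k=1}^n |a_k - b_k|$.

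For the lower bound I would bridge through a term obtained by appending a trivial value. Applying (A3) with $\tau = \mathbf{I}$ and $V = U = \ast$ gives
\begin{equation*}
d_{\varnothing,\mathbf{R}^{\otimes n} \otimes \mathbf{I}}\bigl(\const{a_1} \otimes \cdots \otimes \const{a_n} \otimes \ast,\; \const{b_1} \otimes \cdots \otimes \const{b_n} \otimes \ast\bigr) \;\geq\; \sum_{i=1}^n |a_i - b_i|.
\end{equation*}
On the other hand, applying (A1) to the context $C[-] = [-] \otimes \ast$, of signature $(\varnothing,\mathbf{R}^{\otimes n}) \to (\varnothing,\mathbf{R}^{\otimes n} \otimes \mathbf{I})$, shows that this same distance is bounded above by $d_{\varnothing,\mathbf{R}^{\otimes n}}(\const{a_1} \otimes \cdots \otimes \const{a_n}, \const{b_1} \otimes \cdots \otimes \const{b_n})$. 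Chaining the two inequalities gives the lower bound.

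The only genuinely non-routine step is the lower bound, where (A3) does not apply directly to the desired terms because its statement carries a trailing closed value $V, U$ of arbitrary type $\tau$. The trick is to instantiate $\tau = \mathbf{I}$ with $V = U = \ast$ and then use (A1) in the direction that lets the hole-free constant $\ast$ be ``added'' to the smaller term, transferring the lower bound from the tensored-with-$\ast$ version to the original one. Note that (A4), subsumption of the equational theory, is not needed for this lemma.
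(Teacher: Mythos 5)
Your proof is correct and follows essentially the same route as the paper's: the upper bound by interpolating one coordinate at a time via (A1), (A2), and the triangle inequality, and the lower bound by instantiating (A3) at $\tau=\mathbf{I}$ with $V=U=\ast$ and transferring it back through (A1) applied to the context $[-]\otimes\ast$. Your closing remark correctly identifies the only subtle step and correctly observes that (A4) is not needed.
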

\longversion{
  \begin{proof}
    By (A1) and (A3),
    \begin{align*}
      \sum_{1 \leq i \leq n}
      |a_{i} - b_{i}|
      &\leq 
      d_{\varnothing,\mathbf{R}^{\otimes n} \otimes \mathbf{I}}
      \left(
        \const{a_{1}} \otimes \cdots \otimes \const{a_{n}}
        \otimes \ast,
        \const{b_{1}} \otimes \cdots \otimes \const{b_{n}}
        \otimes \ast
      \right)
      \\
      &\leq
      d_{\varnothing,\mathbf{R}^{\otimes n}}
      \left(
        \const{a_{1}} \otimes \cdots \otimes \const{a_{n}},
        \const{b_{1}} \otimes \cdots \otimes \const{b_{n}}
      \right).
    \end{align*}
    The other inequality follows from (A1), (A2)
    and triangle inequalities:
    \begin{align*}
      \sum_{1 \leq i \leq n}
      |\const{a_{i}} - \const{b_{i}}|
      &\geq
      d_{\varnothing,\mathbf{R}^{\otimes n}}
      \left(
        \const{a_{1}} \otimes \const{a_{2}} \otimes \cdots \otimes \const{a_{n}},
        \const{b_{1}} \otimes \const{a_{2}} \otimes \cdots \otimes \const{a_{n}}
      \right) 
      +
      \sum_{2 \leq i \leq n}
      |\const{a_{i}} - \const{b_{i}}| \\
      &\geq
      d_{\varnothing,\mathbf{R}^{\otimes n}}
      \left(
        \const{a_{1}} \otimes \const{a_{2}}
        \otimes \cdots \otimes \const{a_{n}},
        \const{b_{1}} \otimes \const{b_{2}}
        \otimes \const{a_{3}} \otimes \cdots \otimes \const{a_{n}}
      \right)
      +
      \sum_{3 \leq i \leq n}
      | \const{a_{i}} - \const{b_{i}}| \\
      &\geq \cdots \\
      &\geq
      d_{\varnothing,\mathbf{R}^{\otimes n}}
      \left(
        \const{a_{1}} \otimes \const{a_{2}}
        \otimes \cdots \otimes \const{a_{n}},
        \const{b_{1}} \otimes \const{b_{2}}
        \otimes \cdots \otimes \const{b_{n}}
      \right).
    \end{align*}
  \end{proof}
} The reason that we do not take \eqref{eq:L1} as
the third condition of admissibility and instead
rely on the stronger condition (A3) above is that
requiring \eqref{eq:L1} would not allow us to
characterize the observational metric
(Section~\ref{sec:observational-metric}) as the
least admissible metric on $\LL{S}$.

\section{Logical Metric and Observational Metric}
\label{sec:dext_dobs}
% \mymarginnote{Give the operational semantics,
% the contextual metric, and the metric logical
% relations.}

We give two syntactically defined metrics on
$\LL{S}$: one is based on logical relations, and
the other is given in the style of Morris
observational equivalence \cite{morrisPhd}. We
then show that the two metrics coincide. This can
be seen as a metric variant of Milner's context
lemma \cite{MILNER19771}.

\subsection{Logical Metric}
\label{sec:dext}

The first metric on $\LL{S}$ is given by means of
a quantitative form of logical relations
\cite{10.1145/1932681.1863568} called \emph{metric
  logical relations}. Here, we directly define
metric logical relations, and then, we define the
induced metric on $\LL{S}$. The metric logical
relations
\begin{equation*}
  \left\{
    (-) \simeq_{r}^{\tau} (-)
    \subseteq
    \term(\tau)
    \times
    \term(\tau)
  \right\}_{\tau \in \type,\, r \in \bbRR}
\end{equation*}
are given by induction on $\tau$ as follows.
\begin{align*}
  M \simeq_{r}^{\mathbf{R}} N
  &
  \iff
  M \hookrightarrow \const{a}
  \text{ and }
  N \hookrightarrow \const{b}
  \text{ and }
  |a - b| \leq r \\
  M \simeq_{r}^{\mathbf{I}} N
  &
  \iff
  M \hookrightarrow \ast
  \text{ and }
  N \hookrightarrow \ast \\  
  M \simeq_{r}^{\tau \otimes \sigma} N
  &\iff
  M \hookrightarrow V \otimes V'
  \text{ and }
  N \hookrightarrow U \otimes U'
  \text{ and } \\
  &\mathrel{\phantom{\iff}}
  \exists s,s' \in \bbRR,\,
  V \simeq_{s}^{\tau} U
  \text{ and }
  V' \simeq_{s'}^{\sigma} U'
  \text{ and }
  s + s' \leq r
  \\
  M \simeq_{r}^{\tau \multimap \sigma} N
  &\iff
  M \hookrightarrow \lambda x:\tau.\,M'
  \text{ and }
  N \hookrightarrow \lambda x:\tau.\,N'
  \text{ and } \\
  &\mathrel{\phantom{\iff}}
  \forall V,U \in \val(\tau),\,
  \text{if} \
  V \simeq^{\tau}_{s} U,
  \ \text{then} \
  M'[V/x] \simeq_{r + s}^{\sigma} N'[U/x]
\end{align*}
Then for an environment
$\Gamma = (x:\sigma,\ldots,y:\rho)$ and a pair of
terms $\Gamma \vdash M:\tau$ and
$\Gamma \vdash N:\tau$, we define
$\dext_{\Gamma,\tau}(M,N) \in \bbRR$ by
\begin{equation*}
  \dext_{\Gamma,\tau}(M,N) =
  \inf \{r \in \bbRR \mid
  \lambda x:\sigma.\,\cdots
  \lambda y:\rho.\,M
  \simeq_{r}^{\sigma \multimap \cdots
  \multimap \rho \multimap \tau}
  \lambda x:\sigma.\,\cdots
  \lambda y:\rho.\,N
  \}.
\end{equation*}

\shortversion{We give a consequence of our results
  in this paper, namely, Theorem~\ref{thm:ctx-ext}
  and Theorem~\ref{thm:main}. For the detail of
  the proof of Proposition~\ref{prop:ext-adm}, see
  \cite{longversion}.}
\begin{proposition}\label{prop:ext-adm}
  For any environment $\Gamma$ and any type
  $\tau$, the function $\dext_{\Gamma,\tau}$ is a
  metric on $\term(\Gamma,\tau)$. Furthermore, 
  $\{\dext_{\Gamma, \tau}\}_{\Gamma \in \env, \tau
    \in \type}$ is admissible.
\end{proposition}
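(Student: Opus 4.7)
The plan is to factor the proof through a fundamental (substitution) lemma for the relations $\simeq_r^\tau$, from which both the metric properties of $\dext$ and conditions (A1)--(A4) will follow. I would first establish, by simultaneous induction on $\tau$, three closure properties of $\simeq_r^\tau$ on closed terms: symmetry ($M \simeq_r^\tau N$ iff $N \simeq_r^\tau M$), the triangle inequality ($M \simeq_r^\tau N$ and $N \simeq_s^\tau L$ imply $M \simeq_{r+s}^\tau L$), and upward closure in $r$. The base cases $\mathbf{R}$ and $\mathbf{I}$ are immediate; at $\tau \otimes \sigma$, determinism of $\hookrightarrow$ forces the two terms to decompose into paired values in a unique way, after which the inductive hypotheses apply componentwise with the appropriate additivity; at $\tau \multimap \sigma$, one unfolds the defining clause, feeds related values into both sides, and reindexes over the quantified distance.

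Next I prove the fundamental lemma: whenever $\Gamma = (x_1{:}\sigma_1,\ldots,x_n{:}\sigma_n) \vdash M : \tau$ and $V_i \simeq_{r_i}^{\sigma_i} U_i$ are pairs of closed values, then $M[\vec V/\vec x] \simeq_{r_1+\cdots+r_n}^{\tau} M[\vec U/\vec x]$. The proof is by induction on the typing derivation, distributing the substitution according to merges of environments. The variable and $\ast$ cases are immediate; $\const{a}$ uses the clause at $\mathbf{R}$; $\const{f}$ uses the non-expansiveness assumption on $f$; the $\multimap$- and $\otimes$-introduction rules unfold the clause at the relevant type and invoke the induction hypothesis on the body; the $\multimap$-elimination and $\mathbf{let}$-rules first apply the induction hypothesis to the principal subterm, step evaluation forward to obtain matching component values on both sides, and then re-invoke the induction hypothesis on the continuation, combining the resulting distances via the triangle inequality of the previous phase. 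A parallel structural induction on contexts lifts this to: for every $C[-]\colon(\Gamma,\tau)\to(\Delta,\sigma)$, related inputs yield related outputs at the same distance.

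The metric properties of $\dext$ and (A1)--(A4) then fall out. Reflexivity $\dext(M,M)=0$ is the specialisation of the fundamental lemma with $V_i=U_i$ and $r_i=0$; symmetry and the triangle inequality for $\dext$ lift from those for $\simeq_r^\tau$ through the closing abstraction. (A1) is precisely the context extension just mentioned, reformulated for the closing-$\lambda$ definition of $\dext$. (A2) is immediate from the clause at $\mathbf{R}$: $\const{a} \simeq_r^{\mathbf{R}} \const{b}$ iff $|a-b|\leq r$, so $\dext(\const{a},\const{b}) = |a-b|$. (A3) follows by iteratively unfolding $\simeq_r^{\mathbf{R}^{\otimes n} \otimes \tau}$, since each tensor layer contributes a nonnegative summand bounded below by $|a_i-b_i|$, forcing $\sum_i |a_i - b_i| \leq r$. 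For (A4), I would check each axiom of the equational theory in turn: the congruence rule reduces to (A1); the $\const{f}$ axiom follows from the evaluation rule for $\const{f}$ together with the clause at $\mathbf{R}$; and $\beta$, $\eta$, and the $\mathbf{let}$-commutation axioms all follow from the observation that $\simeq_r^\tau$ depends only on the values to which its arguments evaluate, while the two sides of each axiom, under any closing substitution by values, evaluate to the same value.

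The main obstacle is the fundamental lemma at the elimination and $\mathbf{let}$ cases, where one must synchronise one step of evaluation on the two sides and then re-apply the induction hypothesis on the continuation with a substitution whose distance budget has been additively decomposed across a merge of environments; tracking this decomposition is delicate but mechanical. A secondary point of care is (A4) for the $\eta$ and $\mathbf{let}$-commutation axioms, which I expect to handle uniformly by appealing to the evaluation-based definition of $\simeq_r^\tau$ rather than by direct manipulation of the defining clauses.
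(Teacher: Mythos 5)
Your proposal is correct in outline, but it takes a genuinely different route from the paper. You work entirely on the syntactic side: closure properties of the relations $\simeq_{r}^{\tau}$ (symmetry, additivity under composition, upward closure in $r$), a fundamental lemma for value substitutions, a congruence property proved by induction on contexts, and a case-by-case verification of the equational axioms for (A4). The paper instead observes that the metric axioms and (A1) are immediate for the \emph{observational} metric $\dobs$ (contexts compose, so contextual non-expansiveness is trivial there), and transfers them to $\dext$ across the coincidence $\dext=\dobs$ of Theorem~\ref{thm:ctx-ext}; and it obtains (A4) \emph{semantically}: once $\dobs\leq\dden\leq\deq$ is in place (Theorem~\ref{thm:main}), $\Gamma\vdash M=N:\tau$ yields $\dobs_{\Gamma,\tau}(M,N)\leq\dden_{\Gamma,\tau}(M,N)=0$ by soundness of the $\mathbf{Met}$-enriched model, with no syntactic case analysis at all (Corollary~\ref{cor:adm}). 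The ingredients you propose --- the fundamental lemma and the reflexivity/triangle properties of $\simeq$ --- are exactly the lemmas the paper proves anyway on the way to Theorem~\ref{thm:ctx-ext}, so your route is more self-contained where the paper's is more economical; the price you pay is the explicit induction on contexts for (A1) (the paper's proof of $\dobs\leq\dext$ avoids this by placing a fresh variable $y$ in the hole and applying the fundamental lemma to $C[y\,x]$ once) and the axiom-by-axiom check for (A4).

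One claim in your (A4) step is literally false and needs the repair you already gesture at: for the $\eta$-axiom and the let-commutations the two sides do \emph{not} evaluate to the same value under a closing substitution --- $\lambda x{:}\tau.\,M\,x$ is a value syntactically distinct from the abstraction to which $M$ evaluates. What saves the argument is that $\simeq_{0}^{\tau\multimap\sigma}$ does not compare values for equality but compares applicative behaviour, so $\lambda x{:}\tau.\,M\,x\simeq_{0}M$ follows from the fundamental lemma together with an explicit invariance lemma: $\simeq_{r}^{\tau}$ is unchanged when a term is replaced by any term evaluating to the same value. State that lemma (and the companion fact that $M[N/x]$ and $M[V/x]$ co-evaluate when $N\hookrightarrow V$, needed for $\beta$) and all the axioms go through.
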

\longversion{\begin{proof} It is straightforward
    to show that $\dobs$ given in the next section
    is a metric on $\LL{S}$ and satisfies (A1).
    Hence, it follows from
    Theorem~\ref{thm:ctx-ext} that
    $\dext_{\Gamma,\tau}$ is a metric on
    $\term(\Gamma,\tau)$ and satisfies (A1). (A2)
    and (A3) follow from the definition of
    $\dext$. The proof of (A4) is given in
    Corollary~\ref{cor:adm}.
  \end{proof}} We call $\dext$ \emph{logical
  metric}. \shortversion{ We note that we can
  directly check that $\dext$ satisfies (A2) and
  (A3), and we need Theorem~\ref{thm:ctx-ext} and
  Theorem~\ref{thm:main} to show that $\dext$ is a
  metric on $\LL{S}$ and satisfies (A1) and (A4).}

\begin{myexample}\label{eg:ma}
  For $a \in \bbR$, we define a term $M_{a}$ to be
  \begin{equation*}
    \vdash \const{a} \otimes \const{a} \otimes
    V
    :
    \mathbf{R} \otimes \mathbf{R} \otimes
    ((\mathbf{R} \otimes \mathbf{R} \multimap
    \mathbf{R}) \multimap \mathbf{R})
    \qquad
    \text{  where
      $V= \lambda k:\mathbf{R} \otimes \mathbf{R}
      \multimap \mathbf{R}.\,
      k\,\const{0}\,\const{0}$.}
  \end{equation*}
  Since
  $\dext_{\varnothing,\mathbf{R}}(\const{0},\const{1})
  = 1$, we obtain
  $\dext_{\varnothing, \mathbf{R} \otimes
    \mathbf{R} \otimes ((\mathbf{R} \otimes
    \mathbf{R} \multimap \mathbf{R}) \multimap
    \mathbf{R})} (M_{0},M_{1}) = 1 + 1 + 0 = 2$.
  % \shortversion{In the long version of this paper
  %   \cite{longversion}, we show that the metric
  %   logical relations coincide with the
  %   restriction of the \emph{metric relations}
  %   similarly to what the author do in
  %   \cite{10.1145/1932681.1863568}. }
\end{myexample}

\subsection{Observational Metric}
\label{sec:observational-metric}

We next give a metric, which we call the 
\emph{observational metric}, that measures distances
between terms by observing concrete values
produced by any possible context. For terms
$\Gamma \vdash M :\tau$ and
$\Gamma \vdash N :\tau$, we define
$\dobs_{\Gamma,\tau}(M,N) \in \bbRR$ by
\begin{equation*}
  \dobs_{\Gamma,\tau}(M,N)
  =
  \sup_{(n,\sigma,C[-]) \in \mathcal{K}(\Gamma,\tau)}
  \left\{
    |a_{1}-b_{1}| + \cdots + |a_{n}-b_{n}|
    \,\middle|\hspace{-3pt}
    \begin{array}{l}
      C[M] \hookrightarrow
      \const{a_{1}} \otimes
      \cdots \otimes \const{a_{n}}
      \otimes V
      \\
      \text{and }
      C[N] \hookrightarrow
      \const{b_{1}} \otimes
      \cdots \otimes \const{b_{n}}
      \otimes U      
    \end{array}
    \hspace{-4.5pt}
  \right\}
\end{equation*}
where
$(n,\sigma,C[-]) \in \mathcal{K}(\Gamma,\tau)$ if
and only if $C[-]$ is a context from
$(\Gamma,\tau)$ to
$(\varnothing,\mathbf{R}^{\otimes n} \otimes
\sigma)$.

\shortversion{
  \begin{theorem}\label{thm:ctx-ext}
    For any environment $\Gamma$ and any type
    $\tau$, we have
    $\dobs_{\Gamma,\tau}= \dext_{\Gamma,\tau}$.
  \end{theorem}
  This theorem follows from
  coincidence of the metric logical relations
  with the metric relations and
  the fundamental lemma for metric
  logical relations.}
\begin{myexample}
  We consider the term $\vdash M_{a} :\tau$ given
  in Example~\ref{eg:ma} again. By observing
  $M_{0}$ and $M_{1}$ by the trivial context
  $[-]$, we can directly check that
  $\dobs_{\varnothing, \mathbf{R} \otimes
    \mathbf{R} \otimes ((\mathbf{R} \otimes
    \mathbf{R} \multimap \mathbf{R}) \multimap
    \mathbf{R}) } (M_{0},M_{1}) \geq 2$. (In fact,
  it follows from Theorem~\ref{thm:ctx-ext} that
  the distance is equal to $2$.) The purpose of
  the auxiliary type $\sigma$ in the definition of
  $\mathcal{K}(\Gamma,\tau)$ is to enable
  observations of this type. In this case, while
  the logical metric distinguishes $M_{0}$ from
  $M_{1}$, we can not observationally distinguish
  $M_{0}$ from $M_{1}$ by means of observations at
  types $\mathbf{R}^{\otimes n}$ when
  $S = \emptyset$. \longversion{See
    Proposition~\ref{aprop:two_to_one_arg} for
    impossibility of observational distinction of
    these terms at $\mathbf{R}^{\otimes n}$.}
\end{myexample}

\longversion{\begin{proposition}\label{aprop:two_to_one_arg}
    If $S = \emptyset$, then for any
    $n \in \mathbb{N}$, there is no context
    \begin{equation*}
      C[-] \colon (\varnothing, \mathbf{R} \otimes
      ((\mathbf{R}^{\otimes 2} \multimap \mathbf{R})
      \multimap \mathbf{R})) \to
      (\varnothing,\mathbf{R}^{\otimes n}). 
    \end{equation*}
  \end{proposition}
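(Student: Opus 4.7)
My plan is a proof by contradiction. Suppose such a context exists, equivalently a derivable judgment $z : \mathbf{R} \otimes ((\mathbf{R}^{\otimes 2} \multimap \mathbf{R}) \multimap \mathbf{R}) \vdash M : \mathbf{R}^{\otimes n}$ in $\LL{\emptyset}$. First I introduce the resource measure $r \colon \type \to \mathbb{Z}$ defined by $r(\mathbf{I}) = 0$, $r(\mathbf{R}) = 1$, $r(\tau \otimes \sigma) = r(\tau) + r(\sigma)$, $r(\tau \multimap \sigma) = r(\sigma) - r(\tau)$, and extended additively to environments. A straightforward induction on the typing rules shows that every derivable $\Gamma \vdash N : \tau$ in $\LL{\emptyset}$ satisfies $r(\tau) - r(\Gamma) = \#\{\const{a}\text{-occurrences in } N\}$ (since $S = \emptyset$, constants $\const{a}$ are the only leaf-rule that perturbs the balance); in particular this quantity is non-negative.

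Next, by strong normalisation of $\LL{\emptyset}$, I may assume $M$ is in $\beta\eta$-long normal form. Because $\mathbf{R}^{\otimes n}$ is purely tensorial and the initial environment offers no consumer of $z$'s type, the unique occurrence of $z$ must sit at the head of a tensor destructor; up to commuting conversions, $M = \letin{u \otimes v}{z}{M'}$ with $u : \mathbf{R}$ and $v : (\mathbf{R}^{\otimes 2} \multimap \mathbf{R}) \multimap \mathbf{R}$. Similarly, $v$'s function type cannot appear in the output, so in $\beta\eta$-long normal form its unique occurrence must be the head of an application $v\,K$ for some subterm $\Gamma_K \vdash K : \mathbf{R}^{\otimes 2} \multimap \mathbf{R}$.

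The central step is to show $r(\Gamma_K) \geq 0$. In $\beta\eta$-long normal form every $\lambda$-abstraction sits at a position of unapplied function type, and the only such position arising in the derivation of $M$ is the argument of $v$, whose bound variable has type $\mathbf{R}^{\otimes 2}$ with $r = 2$. Likewise every tensor destructor in $M$ decomposes either $z$ (factors with $r$-values $1$ and $2$) or a $\lambda$-bound variable of type $\mathbf{R}^{\otimes 2}$ (both factors of $r = 1$): indeed no ambient function outputs a tensor, so no tensor-typed neutral exists, and pairing/destructing would form a $\beta$-redex. Hence every variable ever bound into a context has non-negative $r$, and in particular $r(\Gamma_K) \geq 0$. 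Applying the counting invariant to $\Gamma_K \vdash K : \mathbf{R}^{\otimes 2} \multimap \mathbf{R}$ then gives $r(\mathbf{R}^{\otimes 2} \multimap \mathbf{R}) - r(\Gamma_K) = -1 - r(\Gamma_K) \leq -1 < 0$, contradicting non-negativity. Hence no such $K$, $M$, or context exists.

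The hard part is the normal-form control in the third paragraph: one must verify uniformly that $\beta\eta$-long normal forms rule out every exotic construction (tensors with function-typed components later destructured, or $\lambda$-abstractions introduced as the argument of a user-built, rather than ambient, function) that could otherwise smuggle a negative-$r$ variable into $\Gamma_K$. Each such case either produces a $\beta$-redex or, by the same counting invariant applied at a sub-judgment, demands a closed witness of a type with strictly negative $r$, which is already ruled out; this lets the structural induction on term size close cleanly.
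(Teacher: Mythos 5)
Your opening move --- the measure $r$ with $r(\mathbf{R})=1$, $r(\mathbf{I})=0$, $r(\tau\otimes\sigma)=r(\tau)+r(\sigma)$, $r(\tau\multimap\sigma)=r(\sigma)-r(\tau)$, and the observation that $r(\tau)-r(\Gamma)$ is non-negative for every derivable judgement of $\LL{\emptyset}$ --- is exactly the counting argument the paper uses (your exact-equality refinement is true but not needed). The two proofs then part ways. The paper applies the invariant only to \emph{closed} terms, concluding that $\val(\mathbf{R}^{\otimes 2}\multimap\mathbf{R})$ is empty, and then finishes \emph{operationally}: it instantiates the hypothetical context with the concrete closed value $\const{0}\otimes(\lambda f.\,f\,\const{0}\,\const{0})$, evaluates, and notes that the only way to consume the abstraction $\lambda f.\,f\,\const{0}\,\const{0}$ during evaluation is a $\beta$-step substituting a closed value of type $\mathbf{R}^{\otimes 2}\multimap\mathbf{R}$ for $f$, which cannot exist; by linearity the abstraction therefore survives as a subterm of the final value, contradicting the shape of values of type $\mathbf{R}^{\otimes n}$. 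You instead analyse the $\beta\eta$-long normal form of the open term $M$ and hunt for a subterm $K$ of type $\mathbf{R}^{\otimes 2}\multimap\mathbf{R}$ whose environment would violate the invariant. That route is workable, but it is strictly more demanding: it presupposes normalization of open terms under full $\beta\eta$ plus commuting conversions (never developed in the paper, though standard), whereas the paper only needs the big-step evaluation of one closed program.

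The genuine weak point is your third paragraph, and you flag it yourself. The claim ``every variable ever bound into a context has non-negative $r$'' cannot be read off locally; it must be an induction invariant on the normal form, of roughly the shape: \emph{every variable in scope has non-negative $r$, every function-typed variable in scope has an argument type of non-negative $r$, and every tensor-typed variable in scope has both component types of non-negative $r$}, together with the check that in normal form a $\lambda$ can only appear in argument position of such a head and a let-tensor can only destruct such a neutral. You enumerate the right cases informally but never state the invariant or set up the induction, and your fallback ``demands a closed witness of a type with strictly negative $r$, which is already ruled out'' is circular as stated: what is ruled out is \emph{closed} witnesses, while the offending subterm is open in general, and whether its environment has non-negative $r$ is precisely what the invariant is supposed to guarantee. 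As written this is a plausible sketch rather than a proof; either carry out that scope invariant explicitly, or adopt the paper's shortcut of plugging a closed value into the hole and arguing about its evaluation, which bypasses the normal-form analysis entirely.
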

  \begin{proof}
    We first show that there
    is no closed term of type
    $\mathbf{R}^{\otimes 2} \multimap \mathbf{R}$.
    To see this,
    for each type $\tau$, we inductively define
    $|\tau| \in \mathbf{Z}$ by
    \begin{align*}
      |\mathbf{R}| &= 1,&
      |\mathbf{I}| &= 0, &
      |\tau \otimes \sigma| &= |\tau| + |\sigma|, &
      |\tau \multimap \sigma| &= -|\tau| + |\sigma|.
    \end{align*}
    We extend the definition of $|-|$ to
    environments
    $\Gamma = (x:\tau,\ldots,y:\sigma)$ by letting
    $|\Gamma|$ to be $|\tau| + \cdots + |\sigma|$.
    Then by induction on the derivation of
    $\Gamma \vdash M : \tau$, we can show that if
    $S = \emptyset$, then $|\Gamma| \leq |\tau|$.
    Since
    $|\mathbf{R}^{\otimes 2} \multimap \mathbf{R}|
    = -1$, we see that there is no closed term of
    type
    $\mathbf{R}^{\otimes 2} \multimap \mathbf{R}$.
    We next show the statement. Let us suppose
    that there is a context
    $C[-] : (\varnothing,\mathbf{R} \otimes (
    (\mathbf{R}^{\otimes 2} \multimap \mathbf{R})
    \multimap \mathbf{R})) \to
    (\varnothing,\mathbf{R}^{\otimes n})$ for some
    $n \in \mathbb{N}$, and we derive
    contradiction. Because $\LL{S}$ is
    normalizing, there is a value $V$ such that
    $C[\const{0} \otimes (\lambda
    f:\mathbf{R}^{\otimes 2} \multimap
    \mathbf{R}.\, f\,\const{0}\,\const{0})]
    \hookrightarrow V$. As we have observed, there
    is no closed value
    $U \in \val(\mathbf{R}^{\otimes 2} \multimap
    \mathbf{R})$. Therefor, there is no
    $\beta$-reduction of the form
    $(\lambda f:(\mathbf{R}^{\otimes 2} \multimap
    \mathbf{R}).\, f\,\const{0}\,\const{0})\,U
    \hookrightarrow U\,\const{0}\,\const{0}$
    during the reduction
    $C[\const{0} \otimes (\lambda
    f:(\mathbf{R}^{\otimes 2} \multimap
    \mathbf{R}).\, f\,\const{0}\,\const{0})]
    \hookrightarrow V$. Hence,
    $\lambda f:(\mathbf{R}^{\otimes 2} \multimap
    \mathbf{R}).\, f\,\const{0}\,\const{0}$ must
    be a subterm of $V$, contradicting
    $V \in \val(\mathbf{R}^{\otimes n})$.
  \end{proof}}

\longversion{
  \subsection{Coincidence of the Logical 
    Metric and the Observational Metric}
  \label{sec:coinc-logic-relat}

  This section is devoted to prove that the
  logical metric coincides with the observational
  metric. For the proof, we introduce another
  family of quantitative relations, called
  \emph{metric relations}
  \cite{10.1145/1932681.1863568}. We define the
  metric relations
  \begin{equation*}
    \{
    (-) \cong_{r}^{\tau} (-)
    \subseteq
    \term(\tau)
    \times
    \term(\tau)
    \}_{\tau \in \type, r \in \bbRR}
  \end{equation*}
  by induction on $\tau$ as follows.
  \begin{align*}
    M \cong_{r}^{\mathbf{R}} N
    &
    \iff
    M \hookrightarrow \const{a}
    \text{ and }
    N \hookrightarrow \const{b}
    \text{ and }
    |a - b| \leq r \\
    M \cong_{r}^{\mathbf{I}} N
    &
    \iff
    M \hookrightarrow \ast
    \text{ and }
    N \hookrightarrow \ast \\  
    M \cong_{r}^{\tau \otimes \sigma} N
    &\iff
    M \hookrightarrow V \otimes V'
    \text{ and }
    N \hookrightarrow U \otimes U'
    \text{ and } \\
    &\mathrel{\phantom{\iff}}
    \exists s,s' \in \bbRR,\,
    V \cong_{s}^{\tau} U
    \text{ and }
    V' \cong_{s'}^{\sigma} U'
    \text{ and }
    s + s' \leq r
    \\
    M \cong_{r}^{\tau \multimap \sigma} N
    &\iff
    M \hookrightarrow \lambda x:\tau.\,M'
    \text{ and }
    N \hookrightarrow \lambda x:\tau.\,N'
    \text{ and } \\
    &\mathrel{\phantom{\iff}}
    \forall V \in \val(\tau),\,   
    M'[V/x] \cong_{r}^{\sigma} N'[V/x]
  \end{align*}
  The only difference between the definition of
  $\simeq$ and $\cong$ is in the case of the
  linear function type.

  Let us introduce some notations. For an
  environment
  $\Gamma=(x_{1}:\tau_{1},\ldots,
  x_{n}:\tau_{n})$, we define $\val(\Gamma)$ to be
  $\val(\tau_{1}) \times \cdots \times
  \val(\tau_{n})$. Given $\gamma \in \val(\Gamma)$
  and $\Gamma \vdash M : \tau$, we define
  $\vdash M\gamma:\tau$ in the obvious way. For
  $\gamma = (V_{1},\ldots,V_{n}),\delta
  =(U_{1},\ldots,U_{n}) \in \val(\Gamma)$, we
  write $\gamma \simeq_{r}^{\Gamma} \delta$ when
  there are $s_{1},\ldots,s_{n} \in \bbRR$ such
  that $r \geq s_{1} + \cdots + s_{n}$ and
  $V_{1} \simeq_{s_{1}}^{\tau_{1}} U_{1}, \ldots,
  V_{n} \simeq_{s_{n}}^{\tau_{n}} U_{n}$ hold.

  \begin{lemma}\label{alem:ext->ctx}
    For any environment
    $\Gamma=
    (x_{1}:\tau_{1},\ldots,x_{n}:\tau_{n})$ and
    any pair of terms $\Gamma \vdash M : \tau$ and
    $\Gamma \vdash N : \tau$, if
    $\gamma \in \val(\Gamma)$, then
    $M\gamma
    \cong_{\dobs_{\Gamma,\tau}(M,N)}^{\tau}
    N\gamma$.
  \end{lemma}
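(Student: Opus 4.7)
The plan is to prove Lemma \ref{alem:ext->ctx} by induction on the type $\tau$, treating the four type constructors separately. Throughout, I will use (implicitly) a standard invariance lemma for $\cong$: if $M \hookrightarrow V$ and $N \hookrightarrow U$, then $M \cong_r^\tau N$ iff $V \cong_r^\tau U$, since the definition of $\cong$ only looks at the value a term evaluates to. I will also use monotonicity of $\cong$ in $r$. Let $d := \dobs_{\Gamma,\tau}(M,N)$ and $\gamma = (V_1,\ldots,V_n)$.

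Base case $\tau = \mathbf{R}$: The term $M\gamma$ reduces to some $\const{a}$ and $N\gamma$ to some $\const{b}$. I build the context $C[-] = (\lambda x_1:\tau_1.\,\cdots\lambda x_n:\tau_n.\,[-])\,V_1\,\cdots\,V_n \otimes \ast$, which is a context from $(\Gamma,\mathbf{R})$ to $(\varnothing, \mathbf{R}\otimes \mathbf{I})$. Then $C[M] \hookrightarrow \const{a}\otimes \ast$ and $C[N] \hookrightarrow \const{b}\otimes \ast$, so by definition of $\dobs$ we have $|a-b|\le d$, giving $M\gamma \cong_d^{\mathbf{R}} N\gamma$. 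The case $\tau = \mathbf{I}$ is immediate since both sides evaluate to $\ast$.

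Inductive case $\tau = \tau_1 \multimap \tau_2$: By normalization, $M\gamma \hookrightarrow \lambda x:\tau_1.\,M''$ and $N\gamma \hookrightarrow \lambda x:\tau_1.\,N''$ (up to $\alpha$-renaming). Given $V \in \val(\tau_1)$, form the terms $\Gamma \vdash M\,V : \tau_2$ and $\Gamma \vdash N\,V : \tau_2$. Any observational context on $(M\,V,N\,V)$ is also one on $(M,N)$ by pre-composition with $[-]\,V$, so $\dobs_{\Gamma,\tau_2}(M\,V,N\,V) \le d$. The inductive hypothesis at $\tau_2$ then gives $(M\,V)\gamma \cong_{d}^{\tau_2} (N\,V)\gamma$, i.e.\ $M''[V/x] \cong_{d}^{\tau_2} N''[V/x]$ by the invariance lemma, which is exactly what is needed.

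Inductive case $\tau = \tau_1 \otimes \tau_2$: Here $M\gamma \hookrightarrow V\otimes V'$ and $N\gamma \hookrightarrow U\otimes U'$ for some closed values. Set $s := \dobs_{\varnothing,\tau_1}(V,U)$ and $s' := \dobs_{\varnothing,\tau_2}(V',U')$; by the IH (applied at the empty environment) we obtain $V \cong_s^{\tau_1} U$ and $V' \cong_{s'}^{\tau_2} U'$. The remaining obligation is $s + s' \le d$. Given any $C_1 : (\varnothing,\tau_1) \to (\varnothing, \mathbf{R}^{\otimes m}\otimes \sigma_1)$ and $C_2 : (\varnothing,\tau_2) \to (\varnothing, \mathbf{R}^{\otimes m'}\otimes \sigma_2)$, I assemble the combined context
\begin{equation*}
  C[-] = (\lambda x_1.\,\cdots\lambda x_n.\,\letin{y_1\otimes y_2}{[-]}{R(C_1[y_1], C_2[y_2])})\,V_1\,\cdots\,V_n,
\end{equation*}
where $R(\cdot,\cdot)$ is a term that, using iterated $\mathbf{let}$ on tensor products, rearranges its two arguments so that all the $\mathbf{R}$-components are tupled at the front and the residual $\sigma_i$ parts are tupled at the end. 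Then $C$ is a context from $(\Gamma,\tau_1\otimes\tau_2)$ to $(\varnothing, \mathbf{R}^{\otimes(m+m')} \otimes (\sigma_1\otimes\sigma_2))$, and the evaluation of $C[M]$ (resp.\ $C[N]$) exhibits the concatenation of the observable $\mathbf{R}$-values produced by $C_1[V]$ and $C_2[V']$ (resp.\ $C_1[U]$ and $C_2[U']$). Hence the sum $\sum|a_i-c_i| + \sum|b_j-d_j|$ witnessed by $C_1,C_2$ is bounded by $d$. Taking the supremum first over $C_1$ and then over $C_2$ (via a standard $\varepsilon$-approximation to turn "sup of sums" into "sum of sups", including the case of value $\infty$) yields $s+s'\le d$.

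The main obstacle is the tensor case: producing the combined context $C[-]$ and verifying that the rearrangement term $R$ has the correct type and reduction behaviour is where some care is required, though the construction is routine once one fixes a convention for $\mathbf{let}$-based shuffling of tensor components. Everything else is driven by the coarse nature of $\dobs$ (closure under composition with further contexts) and the invariance of $\cong$ under evaluation.
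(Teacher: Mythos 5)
Your proof is correct and follows essentially the same route as the paper's: induction on $\tau$, with the $\mathbf{R}$ case handled by the context $(\lambda \vec x.\,[-]\otimes\ast)\vec V$, the tensor case by combining two witnessing contexts through a let-destructuring plus a rebracketing term and an $\varepsilon$/strict-inequality argument for the sum of suprema, and the arrow case by reducing to the induction hypothesis at the result type. The only (harmless) variation is in the $\multimap$ case, where you invoke the IH on $\Gamma\vdash M\,V, N\,V:\tau_2$ and bound $\dobs_{\Gamma,\tau_2}(M\,V,N\,V)$ by pre-composition with $[-]\,V$, whereas the paper applies the IH to the closed terms $(\lambda x.M')\,U$ and $(\lambda x.N')\,U$ and builds the closing context explicitly; both arguments are sound.
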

  \begin{proof}
    We prove the statement by induction on $\tau$.
    (When $\tau=\mathbf{R}$) Let
    $\gamma=(V_{1},\ldots,V_{n})$ be an element of
    $\val(\Gamma)$. For $a,b \in \mathbb{R}$ such
    that $M\gamma \hookrightarrow \const{a}$ and
    $N\gamma \hookrightarrow \const{b}$, we show
    that
    $|a-b| \leq \dobs_{\Gamma,\mathbf{R}}(M,N)$.
    Let a context $C[-]$ be
    \begin{equation*}
      (\lambda x_{1}:\tau_{1}.\,\cdots\lambda
      x_{n}:\tau_{n}.\,[-] \otimes \ast
      )\,V_{1}\,\cdots\,V_{n}. 
    \end{equation*}
    Then, since we have
    $C[M] \hookrightarrow \const{a} \otimes \ast$
    and
    $C[N] \hookrightarrow \const{b} \otimes \ast$,
    we obtain
    $|a-b| \leq \dobs_{\Gamma,\mathbf{R}}(M,N)$.
    (When $\tau= \sigma_{1} \otimes \sigma_{2}$)
    Let $\gamma=(V_{1},\ldots,V_{n})$ be an
    element of $\val(\Gamma)$. For
    $U_{1},V_{1} \in \val(\sigma_{1})$ and
    $U_{2},V_{2} \in \val(\sigma_{2})$ such that
    $M \gamma \hookrightarrow U_{1} \otimes U_{2}$
    and
    $N \gamma \hookrightarrow V_{1} \otimes
    V_{2}$, we show that there are
    $s,s' \in \bbRR$ such that
    $U_{1} \cong_{s}^{\sigma_{1}} W_{1}$ and
    $U_{2} \cong_{s'}^{\sigma_{2}} W_{2}$ and
    $s + s' \leq \dobs_{\Gamma,\sigma_{1} \otimes
      \sigma_{2}}(M,N)$. By the induction
    hypothesis on $\sigma_{1}$ and $\sigma_{2}$,
    we obtain
    $U_{1}
    \cong^{\sigma_{1}}_{\dobs_{\varnothing,\sigma_{1}}
      (U_{1},W_{1})} W_{1}$ and
    $U_{2}
    \cong^{\sigma_{2}}_{\dobs_{\varnothing,\sigma_{2}}
      (U_{2},W_{2})} W_{2}$. Hence, by the
    definition of $\cong$, we have
    $M\gamma
    \cong_{\dobs_{\varnothing,\sigma_{1}}(U_{1},W_{1})
      +
      \dobs_{\varnothing,\sigma_{2}}(U_{2},W_{2})}^{\sigma_{1}
      \otimes \sigma_{2}} N\gamma$. It remains to
    check that
    $\dobs_{\varnothing,\sigma_{1}}(U_{1},W_{1}) +
    \dobs_{\varnothing,\sigma_{2}}(U_{2},W_{2})
    \leq \dobs_{\Gamma,\sigma_{1} \otimes
      \sigma_{2}}(M,N)$. To see this, we show that
    for any
    $t_{1} <
    \dobs_{\varnothing,\sigma_{1}}(U_{1},W_{1})$
    and
    $t_{2} <
    \dobs_{\varnothing,\sigma_{2}}(U_{2},W_{2})$,
    we have
    $t_{1} + t_{2} \leq \dobs_{\Gamma,\sigma_{1}
      \otimes \sigma_{2}}(M,N)$. Given such
    $t_{1}$ and $t_{2}$, we can find contexts
    $C_{1}[-] \colon (\varnothing,\sigma_{1}) \to
    (\varnothing,\mathbf{R}^{\otimes n} \otimes
    \rho_{1})$ and
    $C_{2}[-] \colon (\varnothing,\sigma_{2}) \to
    (\varnothing,\mathbf{R}^{\otimes m} \otimes
    \rho_{2})$ such that
    \begin{align*}
      C_{1}[U_{1}] &\hookrightarrow
      \const{a_{1}} \otimes \cdots \otimes
      \const{a_{n}} \otimes W_{1}, \\
      C_{1}[V_{1}] &\hookrightarrow
      \const{b_{1}} \otimes \cdots \otimes
      \const{b_{n}} \otimes W'_{1}, \\
      C_{2}[U_{2}] &\hookrightarrow
      \const{c_{1}} \otimes \cdots \otimes
      \const{c_{m}} \otimes W_{2}, \\
      C_{2}[V_{2}] &\hookrightarrow
      \const{d_{1}} \otimes \cdots \otimes
      \const{d_{m}} \otimes W'_{2},
    \end{align*}
    and
    \begin{equation*}
      t_{1} \leq
      |a_{1} - b_{1}| +
      \cdots + |a_{n} - b_{n}|,
      \qquad
      t_{2} \leq
      |c_{1} - d_{1}| +
      \cdots + |c_{m} - d_{m}|.
    \end{equation*}
    We define a context $D[-]$ by
    \begin{multline*}
      D[-] =
      \letin{y \otimes z}{
        (\lambda x_{1}:\tau_{1}.\,\cdots\lambda
        x_{n}:\tau_{n}.\,[-])\,V_{1}\,\cdots\,V_{n}
      }{} \\
      \letin{v \otimes v'}{C_{1}[y]}{
        \letin{u \otimes u'}{C_{2}[z]}{
          (H\,(v \otimes u)) \otimes (v' \otimes u')
        }
      }
    \end{multline*}
    where $y$ and $z$ are fresh variables that do
    not appear in $C[-]$ nor $D[-]$, and
    $\vdash H \colon \mathbf{R}^{\otimes n}\otimes
    \mathbf{R}^{\otimes m} \to \mathbf{R}^{\otimes
      (n + m)}$ is a value that changes bracketing
    by using let-bindings. Then, we obtain
    \begin{align*}
      B[M] &\hookrightarrow \const{a_{1}} \otimes \cdots \otimes
      \const{a_{n}} \otimes \const{c_{1}} \otimes \cdots \otimes
      \const{c_{m}} \otimes (W_{1} \otimes W_{2}), \\
      B[N] &\hookrightarrow \const{b_{1}} \otimes \cdots \otimes
      \const{b_{n}} \otimes \const{d_{1}} \otimes \cdots \otimes
      \const{d_{m}} \otimes (W'_{1} \otimes W'_{2}).
    \end{align*}
    Hence,
    $t_{1} + t_{2} \leq \dobs_{\Gamma,\sigma_{1}
      \otimes \sigma_{2}}(M,N)$.
    (When $\tau=\sigma_{1} \multimap \sigma_{2}$)
    Let $\gamma=(V_{1},\ldots,V_{n})$ be an
    element of $\val(\Gamma)$. Given
    $\lambda x:\sigma_{1}.\, M'$ and
    $\lambda x:\sigma_{1}.\, N'$ in
    $\val(\sigma_{1} \multimap \sigma_{2})$ such
    that
    $M \gamma \hookrightarrow \lambda
    x:\sigma_{1}.\, M'$ and
    $N \gamma \hookrightarrow \lambda
    x:\sigma_{1}.\, N'$, we show that for any
    $U \in \val(\sigma_{1})$, we have
    $M'[U/x] \cong_{\dobs_{\Gamma,\sigma_{1}
        \multimap \sigma_{2}}(M,N)}^{\sigma_{2}}
    N'[U/x]$. By the induction hypothesis on
    $\sigma_{2}$ and the definition of $\cong$,
    we have
    \begin{equation*}
      M'[U/x] \cong^{\sigma_{2}}_{\dobs_{\varnothing,\sigma_{2}}
        ((\lambda x:\sigma_{1}.\,M')\,U,
        (\lambda x:\sigma_{1}.\,N')\,U)} N'[U/x].
    \end{equation*}
    Hence, it remains to check that
    $\dobs_{\varnothing,\sigma_{2}} ((\lambda
    x:\sigma_{1}.\,M')\,U, (\lambda
    x:\sigma_{1}.\,N')\,U) \leq
    \dobs_{\Gamma,\sigma_{1} \multimap
      \sigma_{2}}(M,N)$. To see this, we show that
    for any
    $r < \dobs_{\varnothing,\sigma_{2}} ((\lambda
    x:\sigma_{1}.\,M')\,U, (\lambda
    x:\sigma_{1}.\,N')\,U)$, we have
    $r \leq \dobs_{\Gamma,\sigma_{1} \multimap
      \sigma_{2}}(M,N)$. Since
    $r < \dobs_{\varnothing,\sigma_{2}} ((\lambda
    x:\sigma_{1}.\,M')\,U, (\lambda
    x:\sigma_{1}.\,N')\,U)$, there is a context
    \begin{equation*}
      C[-] \colon (\varnothing,\sigma_{2}) \to
      (\varnothing,\mathbf{R}^{\otimes m} \otimes
      \upsilon)
    \end{equation*}
    such that
    \begin{align*}
      C[(\lambda x:\sigma_{1}.\,M')\,U] &\hookrightarrow
      \const{a_{1}} \otimes \cdots \otimes \const{a_{m}}
      \otimes V \\
      C[(\lambda
      x:\sigma_{1}.\,N')\,U] &\hookrightarrow
      \const{b_{1}} \otimes \cdots \otimes \const{b_{m}}
      \otimes W
    \end{align*}
    and
    \begin{math}
      r \leq |a_{1} - b_{1}| + \cdots + |a_{m} - b_{m}|.
    \end{math}
    We define $D[-]$ by
    \begin{equation*}
      D[-] =
      (\lambda y:\sigma_{1} \multimap
      \sigma_{2}.\, C[y\,U])\,
      ((\lambda x_{1}:\tau_{1}.\,\cdots\lambda
      x_{n}:\tau_{n}.\,[-])\,V_{1}\,\cdots\,V_{n}).
    \end{equation*}
    Since
    \begin{equation*}
      D[M] \hookrightarrow
      \const{a_{1}} \otimes \cdots \otimes \const{a_{m}}
      \otimes V,
      \qquad
      D[N] \hookrightarrow
      \const{b_{1}} \otimes \cdots \otimes \const{b_{m}}
      \otimes W,
    \end{equation*}
    we see that
    $r \leq \dobs_{\Gamma,\sigma_{1} \multimap
      \sigma_{2}}(M,N)$.
  \end{proof}

  \begin{lemma}\label{lem:simeq-basic-lemma}
    Let
    $\Gamma =
    (x_{1}:\tau_{1},\ldots,x_{n}:\tau_{n})$ be an
    environment, and let
    $\gamma,\gamma' \in \val(\Gamma)$ be
    substitutions. If
    $\gamma \simeq_{r}^{\Gamma} \gamma'$, then
    for any term $\Gamma \vdash M:\tau$, we have
    $M\gamma \simeq_{r}^{\tau} M\gamma'$.
  \end{lemma}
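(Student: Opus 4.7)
The plan is to proceed by induction on the typing derivation of $\Gamma \vdash M : \tau$, extending the standard fundamental lemma for logical relations with quantitative bookkeeping of distances. Two auxiliary observations will drive the proof. First, by inspection of the defining clauses, each relation $\simeq_r^\tau$ is closed under evaluation in the sense that if $M \hookrightarrow V$, $N \hookrightarrow U$, and $V \simeq_r^\tau U$, then $M \simeq_r^\tau N$ (because the defining clauses only inspect the final values). Second, a splitting property for merges: if $\gamma \simeq_r^{\Gamma \# \Delta} \gamma'$, then the substitution decomposes as $\gamma = \gamma_\Gamma \cup \gamma_\Delta$ and $\gamma' = \gamma'_\Gamma \cup \gamma'_\Delta$ with $\gamma_\Gamma \simeq_s^\Gamma \gamma'_\Gamma$ and $\gamma_\Delta \simeq_{s'}^\Delta \gamma'_\Delta$ for some $s + s' \leq r$; this is immediate from the fact that $\simeq^\Gamma$ is defined componentwise as a sum of per-variable distances.

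The easy cases are the axioms. For the variable rule $x:\tau \vdash x:\tau$, a substitution $\gamma = (V)$ with $\gamma \simeq_r^\Gamma \gamma' = (V')$ gives $V \simeq_r^\tau V'$ directly. For the constants $\const{a}$ and $\ast$, the terms are closed, so $M\gamma = M$ and $M\gamma' = M$, and $M \simeq_0^\tau M$ holds by the base clauses; since $0 \leq r$, we are done. For the function symbol case $\const{f}(M_1,\ldots,M_n)$, split $\gamma \simeq_r^{\Gamma_1 \# \cdots \# \Gamma_n} \gamma'$ into parts $\gamma_i \simeq_{s_i}^{\Gamma_i} \gamma'_i$ with $\sum s_i \leq r$; applying the IH yields $M_i \gamma_i \hookrightarrow \const{a_i}$ and $M_i \gamma'_i \hookrightarrow \const{b_i}$ with $|a_i - b_i| \leq s_i$, whence non-expansiveness of $f \in S$ together with the evaluation rule for $\const{f}$ gives the bound on $|f(a_1,\ldots,a_n) - f(b_1,\ldots,b_n)|$.

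The compound cases combine the merge-splitting property with the clauses of $\simeq$. For the application $MN$, split $\gamma \simeq_r^{\Gamma \# \Delta} \gamma'$ as above, so that the IH yields $M\gamma_\Gamma \simeq_s^{\sigma \multimap \tau} M\gamma'_\Gamma$ and $N\gamma_\Delta \simeq_{s'}^\sigma N\gamma'_\Delta$ with $s + s' \leq r$. By definition, $M\gamma_\Gamma \hookrightarrow \lambda x:\sigma.\,L$ and $M\gamma'_\Gamma \hookrightarrow \lambda x:\sigma.\,L'$ with $L[V/x] \simeq_{s+t}^\tau L'[V'/x]$ for any $V \simeq_t^\sigma V'$; while $N\gamma_\Delta \hookrightarrow V$ and $N\gamma'_\Delta \hookrightarrow V'$ with $V \simeq_{s'}^\sigma V'$. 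Instantiating at $t = s'$, we obtain $L[V/x] \simeq_{s+s'}^\tau L'[V'/x]$, and because $MN\gamma$ reduces through the application rule to the same value as $L[V/x]$ (and likewise on the right), the evaluation-closure observation delivers $MN\gamma \simeq_{s+s'}^\tau MN\gamma'$, hence $MN\gamma \simeq_r^\tau MN\gamma'$. The tensor $M \otimes N$ and the two $\mathbf{let}$-rules follow the same pattern, using the respective evaluation rules and the splitting property.

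The abstraction case is where the quantitative subtlety of the function-type clause appears. For $\Gamma, x:\sigma \vdash M : \tau$, both $(\lambda x:\sigma.\,M)\gamma$ and $(\lambda x:\sigma.\,M)\gamma'$ evaluate to themselves; given any $V \simeq_t^\sigma V'$, we form the extended substitutions $\gamma \cup \{V/x\}$ and $\gamma' \cup \{V'/x\}$, which satisfy $\gamma \cup \{V/x\} \simeq_{r+t}^{\Gamma, x:\sigma} \gamma' \cup \{V'/x\}$ by the componentwise definition of $\simeq^{\Gamma, x:\sigma}$. Applying the IH to $M$ yields $M(\gamma \cup \{V/x\}) \simeq_{r+t}^\tau M(\gamma' \cup \{V'/x\})$, which is exactly the clause required to conclude $(\lambda x:\sigma.\,M)\gamma \simeq_r^{\sigma \multimap \tau} (\lambda x:\sigma.\,M)\gamma'$. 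The main subtlety throughout is keeping careful track of how the residual distance $r$ is accumulated from the split of the merged environment and how it combines additively with the argument distance at function types; the rest is routine structural induction enabled by the evaluation-closure of the logical relation.
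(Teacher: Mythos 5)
Your proof is correct and follows exactly the route the paper intends (the paper itself only says ``by induction on the derivation of $\Gamma \vdash M:\tau$''), with the merge-splitting of the environment, the evaluation-closure of $\simeq$, and the additive bookkeeping at function types all handled properly. The only detail worth making explicit is the monotonicity of $\simeq^{\tau}_{r}$ in $r$ (if $M \simeq_{s}^{\tau} N$ and $s \leq r$ then $M \simeq_{r}^{\tau} N$, by an easy induction on $\tau$), which you use silently in the variable and constant cases.
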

  \begin{proof}
    By induction on the derivation of
    $\Gamma \vdash M: \tau$.
  \end{proof}

  \begin{lemma}\label{lem:simeq-cong-metric}
    Let $\tau$ be a type.
    \begin{enumerate}
    \item For any $\vdash M : \tau$, we have
      $M \simeq_{0}^{\tau} M$.
    \item For any $\vdash M,N,L:\tau$, if
      $M \simeq_{r}^{\tau} N$ and
      $N \simeq_{s}^{\tau} L$, then
      $M \simeq_{r+s}^{\tau} L$.
    \item For any $\vdash M:\tau$, we have
      $M \cong_{0}^{\tau} M$.
    \item For any $\vdash M,N,L:\tau$, if
      $M \cong_{r}^{\tau} N$ and
      $N \cong_{s}^{\tau} L$, then
      $M \cong_{r+s}^{\tau} L$.
    \end{enumerate}
  \end{lemma}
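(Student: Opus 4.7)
The plan is to prove all four parts simultaneously by induction on the type $\tau$, since the inductive step for the linear function type will mix reflexivity at the argument type with the triangle inequality at the result type.

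For the base cases $\tau = \mathbf{R}$ and $\tau = \mathbf{I}$, all four statements are essentially immediate from the definitions: reflexivity uses $|a-a| = 0$ (or just $M \hookrightarrow \ast$), while the triangle inequalities reduce to $|a-c| \leq |a-b| + |b-c|$ (or are vacuous for $\mathbf{I}$). For the tensor case $\tau = \sigma \otimes \sigma'$, reflexivity follows by invoking IH(1) and IH(3) on the two components with witnesses $s = s' = 0$; for the triangle inequalities one takes the witnesses $s_1,s'_1$ and $s_2, s'_2$ from the two hypotheses and applies IH(2)/IH(4) componentwise, noting $(s_1 + s_2) + (s'_1 + s'_2) \leq r + s$.

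The interesting case is $\tau = \sigma \multimap \rho$. For part (3), reflexivity of $\cong$ is direct: if $M \hookrightarrow \lambda x:\sigma.\,M'$, then for every $V \in \val(\sigma)$ we have $M'[V/x] \cong_0^\rho M'[V/x]$ by IH(3) at $\rho$. Part (4) for $\cong$ is equally direct: applying IH(4) at $\rho$ pointwise in the argument $V$ yields $M'[V/x] \cong_{r+s}^\rho L'[V/x]$. For part (1) for $\simeq$, reflexivity $M \simeq_0^{\sigma \multimap \rho} M$ requires, for every $V \simeq_s^\sigma U$, the conclusion $M'[V/x] \simeq_{0+s}^\rho M'[U/x]$, which is exactly an instance of Lemma~\ref{lem:simeq-basic-lemma} applied to the one-variable term $x:\sigma \vdash M':\rho$ with substitutions $\gamma = V$ and $\gamma' = U$.

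The main obstacle is part (2) for $\simeq$ at the function type, where the two hypotheses $M \simeq_r^{\sigma \multimap \rho} N$ and $N \simeq_s^{\sigma \multimap \rho} L$ speak about \emph{different} pairs of substituted values. Given $V \simeq_t^\sigma W$, I will pick the ``middle'' substitution on the $\sigma$-side by reusing $W$: from the first hypothesis with $V \simeq_t^\sigma W$ I get $M'[V/x] \simeq_{r+t}^\rho N'[W/x]$, and from the second hypothesis combined with the already-established IH(1) at $\sigma$, which gives $W \simeq_0^\sigma W$, I get $N'[W/x] \simeq_{s+0}^\rho L'[W/x]$. Applying IH(2) at $\rho$ then yields $M'[V/x] \simeq_{(r+t)+s}^\rho L'[W/x]$, i.e.\ $M \simeq_{r+s}^{\sigma \multimap \rho} L$ as required. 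This step is what forces the four parts to be proved by a single mutual induction on $\tau$ (rather than being proved one after another), and it is also where Lemma~\ref{lem:simeq-basic-lemma} enters in a crucial way through the reflexivity clause at $\sigma$.
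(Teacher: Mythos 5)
Your proposal is correct and follows essentially the same route as the paper's (very terse) proof: reflexivity of $\simeq$ via Lemma~\ref{lem:simeq-basic-lemma}, transitivity of $\simeq$ by induction on $\tau$ using reflexivity at the argument type to bridge the mismatched substitutions in the arrow case, and the $\cong$ statements by straightforward induction on $\tau$. The only cosmetic difference is that the paper obtains part (1) directly from the fundamental lemma (applied with the empty environment) rather than by a separate induction on $\tau$ that invokes that lemma only at the arrow case.
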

  \begin{proof}
    (1) follows from
    Lemma~\ref{lem:simeq-basic-lemma}. (2) By
    induction on $\tau$. For the case of
    $\tau = \tau_{1} \multimap \tau_{2}$, we use
    (1). (3 and 4) By induction on $\tau$.
  \end{proof}

  \begin{lemma}\label{lem:simeq=cong}
    For any type $\tau$ and any $r \in \bbRR$,
    $M \simeq_{r}^{\tau} N$ if and only if
    $M \cong_{r}^{\tau} N$.
  \end{lemma}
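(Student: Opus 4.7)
The plan is to proceed by induction on the type $\tau$, since the two relations differ only in the treatment of the linear arrow. For $\tau \in \{\mathbf{R}, \mathbf{I}\}$ the two definitions literally coincide, so there is nothing to prove. For the tensor case $\tau \otimes \sigma$, both definitions demand that $M$ and $N$ reduce to pairs $V \otimes V'$ and $U \otimes U'$ with witnesses $s, s'$ summing to at most $r$; the existential clauses are identical modulo the relation used on the components, so two applications of the induction hypothesis give the equivalence.

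The interesting case is $\tau \multimap \sigma$. For the direction $\simeq \,\Rightarrow\, \cong$, assume $M \simeq_r^{\tau \multimap \sigma} N$, so $M \hookrightarrow \lambda x{:}\tau.\,M'$ and $N \hookrightarrow \lambda x{:}\tau.\,N'$. Given $V \in \val(\tau)$, Lemma~\ref{lem:simeq-cong-metric}(1) supplies $V \simeq_0^\tau V$, whence $M'[V/x] \simeq_r^\sigma N'[V/x]$ by definition of $\simeq$. The induction hypothesis at $\sigma$ converts this to $M'[V/x] \cong_r^\sigma N'[V/x]$, which is exactly what $\cong$ requires.

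For the converse $\cong \,\Rightarrow\, \simeq$, assume $M \cong_r^{\tau \multimap \sigma} N$; again both terms reduce to $\lambda$-abstractions. Given $V, U \in \val(\tau)$ with $V \simeq_s^\tau U$, I want $M'[V/x] \simeq_{r+s}^\sigma N'[U/x]$. First, by the hypothesis on $\cong$ and the induction hypothesis at $\sigma$, one has $M'[V/x] \simeq_r^\sigma N'[V/x]$. Second, applying Lemma~\ref{lem:simeq-basic-lemma} with $\Gamma = (x{:}\tau)$ to the term $N'$ and to the pair of substitutions $V, U$ (noting $(V) \simeq_s^\Gamma (U)$), one gets $N'[V/x] \simeq_s^\sigma N'[U/x]$. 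Transitivity of $\simeq$ (Lemma~\ref{lem:simeq-cong-metric}(2)) then yields $M'[V/x] \simeq_{r+s}^\sigma N'[U/x]$, as required.

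The main conceptual obstacle is precisely this last step: the relation $\cong$ only tests with a single substitution, while $\simeq$ allows two $s$-related arguments, so the equivalence cannot be proved componentwise and must be bridged by reflexivity, the substitution lemma, and transitivity of $\simeq$. All three ingredients are already available (Lemmas~\ref{lem:simeq-basic-lemma} and~\ref{lem:simeq-cong-metric}), and the argument above combines them in the only way the types permit, so once the induction is organised around the arrow case, the rest of the verification is routine.
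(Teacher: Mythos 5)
Your proof is correct and follows essentially the same route as the paper's: induction on the type, with the arrow case bridged by reflexivity, the substitution lemma (Lemma~\ref{lem:simeq-basic-lemma}), and transitivity of $\simeq$. The only (immaterial) difference is in the $\cong\Rightarrow\simeq$ direction, where you split the triangle as $M'[V/x]\simeq_r N'[V/x]\simeq_s N'[U/x]$ by applying the substitution lemma to $N'$, whereas the paper splits it as $M'[V/x]\simeq_s M'[U/x]\simeq_r N'[U/x]$ by applying it to $M'$ — both yield the bound $r+s$.
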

  \begin{proof}
    By induction on $\tau$. The only non-trivial
    case is $\tau=\tau_{1} \multimap \tau_{2}$. We
    first show that $M \simeq_{r}^{\tau} N$
    implies $M \cong_{r}^{\tau} N$. Let
    $\lambda x:\tau_{1}.\, M'$ and
    $\lambda x:\tau_{1}.\, N'$ be values such that
    $M \hookrightarrow \lambda x:\tau_{1}.\, M'$
    and
    $N \hookrightarrow \lambda x:\tau_{1}.\, N'$.
    We show that for any $V \in \val(\tau_{1})$,
    we have
    $M'[V/x] \cong_{r}^{\tau_{2}} N'[V/x]$. Given
    $V \in \val(\tau_{1})$, by
    Lemma~\ref{lem:simeq-cong-metric}, we have
    $M'[V/x] \simeq_{r}^{\tau_{2}} N'[V/x]$. By
    the induction hypothesis on $\tau_{2}$, we
    obtain the conclusion
    $M'[V/x] \cong_{r}^{\tau_{2}} N'[V/x]$. We
    next suppose that $M \cong_{r}^{\tau} N$ and
    $M \hookrightarrow \lambda x:\tau_{1}.\, M'$
    and
    $N \hookrightarrow \lambda x:\tau_{1}.\, N'$.
    For all $V,U \in \val(\tau_{1})$, we show that
    if $V \simeq_{s}^{\tau_{1}} U$, then
    $M'[V/x] \simeq_{r+s}^{\tau_{2}} N'[U/x]$. By
    Lemma~\ref{lem:simeq-basic-lemma}, we have
    $M'[V/x] \simeq_{s}^{\tau_{2}} M'[U/x]$. From
    the assumption $M \cong_{r}^{\tau} N$, we
    obtain $M'[U/x] \cong_{r}^{\tau_{2}} N'[U/x]$.
    Then it follows from the induction hypothesis
    on $\tau_{2}$ that
    $M'[U/x] \simeq_{r}^{\tau_{2}} N'[U/x]$.
    Finally, it follows from by
    Lemma~\ref{lem:simeq-cong-metric} that
    $M'[V/x] \simeq_{r+s}^{\tau_{2}} N'[U/x]$.
  \end{proof}

  \begin{theorem}\label{thm:ctx-ext}
    For any environment $\Gamma$ and any type
    $\tau$, we have
    $\dobs_{\Gamma,\tau}= \dext_{\Gamma,\tau}$.
  \end{theorem}
  \begin{proof}
    It follows from Lemma~\ref{alem:ext->ctx} and
    Lemma~\ref{lem:simeq=cong} that
    $\dobs_{\Gamma,\tau} \geq
    \dext_{\Gamma,\tau}$. For the other
    inequality, we show that if
    $\dext_{\Gamma,\tau}(M,N) \leq r$, then
    $\dobs_{\Gamma,\tau}(M,N) \leq r$. For
    simplicity, we suppose that
    $\Gamma = (x:\sigma)$, and we define $V$ for
    $\lambda x:\sigma.\,M$ and write $U$ for
    $\lambda x:\sigma.\,N$. When
    $\dext_{\Gamma,\tau}(M,N) \leq r$, we have
    $V \simeq_{r}^{\sigma \multimap \tau} U$.
    Then, by Lemma~\ref{lem:simeq-basic-lemma},
    for any context
    $C[-] \colon (\Gamma,\tau) \to
    (\varnothing,\mathbf{R}^{\otimes m} \otimes
    \upsilon)$, we have
    $C[V\,x] \simeq_{r}^{\mathbf{R}^{\otimes m}
      \otimes \upsilon} C[U\,x]$. From this, it is
    not difficult to derive
    $C[M] \simeq_{r}^{\mathbf{R}^{\otimes m}
      \otimes \upsilon} C[N]$. By the definition
    of $\simeq$, we obtain
    \begin{math}
      \dobs_{\Gamma,\tau}(M,N) \leq r.
    \end{math}
  \end{proof}
}

\section{Equational Metric}
\label{sec:eq}

% \mymarginnote{Take it, essentially, from "An
% Internal Language for Categories Enriched over
% Generalised Metric Spaces", checking whether
% (arch) and (join) are redundant in our case.}

We give another syntactic metric on $\LL{S}$,
which we call the \emph{equational metric}. This is
essentially the quantitative equational theory
from \cite{DBLP:conf/csl/DahlqvistN22} without the
rules called \textbf{weak}, \textbf{join} and
\textbf{Arch}. We exclude these rules since they
do not affect the equational metric $\deq$ given
below. \longversion{See Remark~\ref{rem:arch} for
  a proof.}

For terms $\Gamma \vdash M :\tau$
and $\Gamma \vdash N :\tau$,
and for $r \in \bbRR$,
we write
\begin{equation*}
  \Gamma \vdash M \approx_{r} N:\tau
\end{equation*}
when we can derive the judgement from the rules in
Figure~\ref{fig:qeq}. Then, for terms
$\Gamma \vdash M:\tau$ and $\Gamma \vdash N:\tau$
we define $\deq_{\Gamma,\tau}(M,N) \in \bbRR$ by
\begin{equation*}
  \deq_{\Gamma,\tau}(M,N)
  =
  \inf
  \{r \in \bbRR \mid
  \Gamma \vdash M \approx_{r} N:\tau\}.
\end{equation*}

\begin{figure}
  \centering
  \begin{equation*}
    \prftree{
      \Gamma \vdash M = N : \tau
    }{
      \Gamma \vdash M \approx_{0} N : \tau
    }
    \qquad
    \prftree{
      \Gamma \vdash M\approx_{r} N:\tau
    }{
      \Gamma \vdash N\approx_{r} M:\tau
    }
    \qquad
    \prftree{
      \Gamma \vdash M \approx_{r} N : \tau
    }{
      \Gamma \vdash N \approx_{s} L : \tau
    }{
      \Gamma \vdash M \approx_{r + s} L : \tau
    }
    % \qquad \prftree{ \Gamma \vdash M \approx_{r}
    % N:\tau }{ r \leq s }{ \Gamma \vdash M
    % \approx_{s} N:\tau }
  \end{equation*}
  % \begin{equation*}
  %   \prftree{ M,N \in \gterm(\Gamma) }{
  %   \delta_{\Gamma}(M,N) \leq r }{ \Gamma \vdash
  %   M \approx_{r} N :\mathbf{R} }
  % \end{equation*}
  \begin{equation*}
    \prftree{
      |a - b| \leq r
    }{
      \vdash \const{a} \approx_{r} \const{b} : \mathbf{R}
    }
    \qquad
    % \prftree{
    %   M \hookrightarrow V
    % }{
    %   \vdash M \approx_{0} V :\tau
    % }
    % \prftree{ \Gamma \vdash M = N :\tau }{
    % \Gamma \vdash M \approx_{0} N :\tau }
    % \qquad
    \prftree{
      \Gamma \vdash M \approx_{r} N : \tau
    }{
      C[-] \colon (\Gamma,\tau)
      \to (\Delta,\sigma)
    }{
      \Delta \vdash C[M] \approx_{r} C[N]: \sigma
    }
  \end{equation*}  
  \caption{Derivation Rules for
    $\Gamma \vdash M \approx_{r} N :\tau$}
  \label{fig:qeq}
\end{figure}

\begin{proposition}\label{prop:deq}
  For any environment $\Gamma$ and any type
  $\tau$, the function $\deq_{\Gamma,\tau}$ is a metric
  on $\term(\Gamma,\tau)$. Furthermore,
  $\{\deq_{\Gamma,\tau}\}_{\Gamma \in \env,\tau \in \type}$ is
  admissible.
\end{proposition}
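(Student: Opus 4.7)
The plan is to dispatch the metric axioms and the \emph{direct} admissibility conditions from the derivation rules in Figure~\ref{fig:qeq}, and then to obtain the remaining lower bounds by comparing $\deq$ against the already-admissible logical/observational metric.

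I would first verify that $\deq_{\Gamma,\tau}$ is a pseudo-metric on $\term(\Gamma,\tau)$: reflexivity follows from $\Gamma \vdash M = M : \tau$ (derivable in the equational theory) together with the rule turning $=$-equalities into $\approx_{0}$; symmetry is the symmetry rule; and the triangle inequality follows by taking infima over the transitivity rule, since for all $r > \deq_{\Gamma,\tau}(M,N)$ and $s > \deq_{\Gamma,\tau}(N,L)$ the transitivity rule produces $\Gamma \vdash M \approx_{r+s} L$. Conditions (A1) and (A4) of admissibility, and the upper bound $\deq_{\varnothing,\mathbf{R}}(\const{a},\const{b}) \leq |a-b|$ in (A2), are then read off directly from the context rule, the equality-to-$\approx_{0}$ rule, and the constant rule, respectively.

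The crux is then the lower bound in (A2) together with the full (A3). For these I would prove the sandwich inequality $\dobs_{\Gamma,\tau}(M,N) \leq \deq_{\Gamma,\tau}(M,N)$ by induction on the derivation of $\Gamma \vdash M \approx_{r} N : \tau$, showing $\dobs_{\Gamma,\tau}(M,N) \leq r$ at each rule. Each rule is absorbed by a property enjoyed by $\dobs$, which is admissible by Proposition~\ref{prop:ext-adm} together with Theorem~\ref{thm:ctx-ext}: the reflexivity rule uses (A4) for $\dobs$; the symmetry and transitivity rules use the metric axioms for $\dobs$; the constant rule uses (A2) for $\dobs$; and the context rule uses (A1) for $\dobs$. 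Once $\dobs \leq \deq$ is in hand, the lower bound in (A2) becomes $\deq_{\varnothing,\mathbf{R}}(\const{a},\const{b}) \geq \dobs_{\varnothing,\mathbf{R}}(\const{a},\const{b}) = |a-b|$, and (A3) follows identically, since $\dobs$ already satisfies (A3).

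The main obstacle is not any single hard step but rather a dependency chain: the sandwich argument relies on the admissibility of $\dobs$, which itself presupposes Theorem~\ref{thm:ctx-ext} and, for (A4), the later main comparison theorem. A fully self-contained alternative would be to exhibit an interpretation of $\LL{S}$ in $\mathbf{Met}$ with $\mathbf{R}$ interpreted as $\bbR$ under the absolute metric, verify soundness of the $\approx_{r}$-rules for this model, and extract the lower bounds directly; this is essentially the denotational content of Section~6. Either route produces as a by-product the conceptual fact that $\deq$ is the \emph{largest} admissible metric on $\LL{S}$, since the induction on $\approx_{r}$ derivations specializes to show $d \leq \deq$ for any admissible $d$.
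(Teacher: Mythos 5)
Your proposal is correct, and its skeleton matches the paper's: the metric axioms, (A1), (A4) and the upper half of (A2) are read off the rules of Figure~\ref{fig:qeq}, while the lower bounds in (A2) and (A3) are obtained by sandwiching $\deq$ from below against an already-admissible metric, via the rule-by-rule induction showing $d_{\Gamma,\tau}(M,N)\leq r$ whenever $\Gamma\vdash M\approx_{r}N:\tau$ (this induction is exactly the second half of the proof of Theorem~\ref{thm:main}(\ref{item:1})). The difference is the choice of comparison metric: you instantiate the sandwich at $d=\dobs$, whereas the paper (Corollary~\ref{cor:adm}) instantiates it at $d=\dden$ and concludes $|a-b|\leq\dden_{\varnothing,\mathbf{R}}(\const{a},\const{b})\leq\deq_{\varnothing,\mathbf{R}}(\const{a},\const{b})$, and similarly for (A3). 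The paper's choice is the more economical one, for the reason you yourself identify: the induction requires (A4) for the comparison metric, and (A4) for $\dobs$ (soundness of the equational theory for observational distance) is not available by purely syntactic means --- the paper only obtains it \emph{from} $\dobs\leq\dden$ and soundness of the denotational model, so your route through $\dobs$ detours through $\dden$ anyway. By contrast, (A4) for $\dden$ is immediate from soundness of the symmetric monoidal closed interpretation (Proposition~\ref{prop:adm_M}), and (M2)/(M3) give its (A2)/(A3) directly. Your ``fully self-contained alternative'' of verifying the $\approx_{r}$-rules against a $\mathbf{Met}$-valued interpretation is in substance what the paper does; I would simply promote that alternative to the main argument and drop the detour through $\dobs$. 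One cosmetic point: in your triangle-inequality step, the transitivity rule combines the particular witnesses $r',s'$ of derivable judgements, not arbitrary $r>\deq(M,N)$ (there is no weakening rule in Figure~\ref{fig:qeq}); the infimum argument still goes through, but the phrasing should quantify over derivable indices.
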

\longversion{
  \begin{proof}
    It is straightforward to check that
    $\deq_{\Gamma,\tau}$ is a metric on
    $\term(\Gamma,\tau)$. It is also straightforward
    to check (A1) and (A4). (A2) and (A3) follow
    from semantic observation (Corollary~\ref{cor:adm}).
  \end{proof}}
\begin{myexample}
  The equational metric measures differences between
  terms by comparing their subterms. For example,
  we have
  $\vdash \const{2} =_{1} \const{3} : \mathbf{R}$,
  and therefore, 
  \begin{math}
    k:\mathbf{R} \multimap \mathbf{R}
    \vdash k\,\const{2} =_{1} k\,\const{3} : \mathbf{R}
  \end{math}
  holds. From this, we see that
  $\deq_{(k:\mathbf{R} \multimap
    \mathbf{R}),\mathbf{R}} (k\,\const{2},
  k\,\const{3}) \leq 1$. In fact, this is an
  equality. This follows from
  $\dobs_{(k:\mathbf{R} \multimap
    \mathbf{R}),\mathbf{R}} (k\,\const{0},
  k\,\const{1}) \geq 1$, which is easy to check,
  and Theorem~\ref{thm:main} below.
\end{myexample}
In general, we have
$\dobs_{\Gamma,\tau}(M,N) <
\deq_{\Gamma,\tau}(M,N)$, i.e., the equational
metric is strictly more discriminating than the
observational metric (Theorem~\ref{thm:main}),
which is proved by semantically inspired metrics
in the next section.

\longversion{
  \begin{remark}\label{rem:arch}
    The following rules
    \begin{equation*}
      \prftree[r]{$(\mathbf{weak})$}{
        r \geq s
      }{
        \Gamma \vdash M \approx_{s} N:\tau
      }{
        \Gamma \vdash M \approx_{r} N:\tau
      }
      \qquad
      \prftree[r]{$(\mathbf{join})$}{
        \Gamma \vdash M \approx_{r} N:\tau
      }{
        \Gamma \vdash M \approx_{s} N:\tau
      }{
        \Gamma \vdash M \approx_{\min\{r,s\}} N:\tau
      }
    \end{equation*}
    \begin{equation*}
      \prftree[r]{$(\mathbf{Arch})$}{
        \forall r > s, \
        \Gamma \vdash M \approx_{r} N:\tau
      }{
        \Gamma \vdash M \approx_{s} N:\tau
      }
    \end{equation*}
    considered in
    \cite{DBLP:conf/csl/DahlqvistN22} is absent in
    Figure~\ref{fig:qeq} since they do not affect
    the equational metric. To see this, let us
    define
    $\Gamma \vdash M \approx_{r}^{+} N : \tau$ to
    be the family of binary relations generated by
    the rules in Figure~\ref{fig:qeq} with the
    rules \textbf{weak}, \textbf{join} and
    \textbf{Arch}. Then we have
    \begin{equation*}
      \deq_{\Gamma,\tau} (M,N) = \inf \{r \in \bbRR
      \mid \Gamma \vdash M \approx_{r}^{+} N :
      \tau\}. 
    \end{equation*}
    In fact, since
    $\Gamma \vdash M \approx_{r} N :\tau$ implies
    $\Gamma \vdash M \approx_{r}^{+} N :\tau$ for
    all $\Gamma \vdash M :\tau$ and
    $\Gamma \vdash N:\tau$, we have
    $\deq_{\Gamma,\tau}(M,N) \geq \inf \{r \in
    \bbRR \mid \Gamma \vdash M \approx_{r}^{+} N
    :\tau\}$. On the other hand, since the
    following family of binary relations
    \begin{equation*}
      \Gamma \vdash
      M \mathrel{\dot{\approx}_{r}}
      N :\tau
      \iff
      \deq_{\Gamma,\tau}(M,N) \leq r
    \end{equation*}
    satisfies the rules in Figure~\ref{fig:qeq}
    and the above three rules, if
    $\Gamma \vdash M \approx_{r}^{+} N:\tau$, then
    $\deq_{\Gamma,\tau}(M,N) \leq r$. Hence,
    $\deq_{\Gamma,\tau}(M,N) \leq \inf \{r \in
    \bbRR \mid \Gamma \vdash M \approx_{r}^{+} N
    :\tau\}$.
  \end{remark}}

\section{\texorpdfstring{Models of $\LL{S}$ and Associated Metrics}{Models and Associated Metrics}}
\label{sec:models}

Now, we move our attention to semantically derived
metrics on $\LL{S}$. We first give a notion of
models of $\LL{S}$ based on
$\mathbf{Met}$-enriched symmetric monoidal closed
categories. $\mathbf{Met}$-enriched symmetric
monoidal closed categories are studied in
\cite{DBLP:conf/csl/DahlqvistN22} as models of quantitative
equational theories for the linear lambda
calculus. Then, we give two examples of semantic
metrics on $\LL{S}$: one is based on domain
theory, and the other is based on Geometry of
Interaction.

\subsection{\texorpdfstring{$\mathbf{Met}$-enriched
    Symmetric Monoidal Closed
    Category}{Met-enriched Symmetric Monoidal
    Closed Category}}

We say that a symmetric monoidal closed category
$(\mathcal{C},I,\otimes,\multimap)$ is
\emph{$\mathbf{Met}$-enriched} when every hom-set
$\mathcal{C}(X,Y)$ has the structure of a metric
space subject to the following conditions:
\begin{itemize}
\item the composition is a morphism in
  $\mathbf{Met}$ from
  $\mathcal{C}(X,Y) \otimes \mathcal{C}(Z,X)$
  to $\mathcal{C}(Z,Y)$; and
\item the tensor is a morphism in
  $\mathbf{Met}$ from
  $\mathcal{C}(X,Y) \otimes \mathcal{C}(Z,W)$
  to $\mathcal{C}(X\otimes Z, Y \otimes W)$; and
\item the currying operation is an isomorphism in
  $\mathbf{Met}$ from $\mathcal{C}(X \otimes Y,Z)$
  to $\mathcal{C}(X,Y \multimap Z)$.
\end{itemize}
For morphisms $f,g \colon X \to Y$ in
$\mathcal{C}$, we write $d(f,g)$ for the distance
between $f$ and $g$.

\begin{definition}
  A \emph{pre-model
    $\mathcal{M}= (\mathcal{C}, I, \otimes,
    \multimap, \lfloor-\rfloor)$ of $\LL{S}$} is a
  $\mathbf{Met}$-enriched symmetric monoidal
  closed category
  $(\mathcal{C},I,\otimes,\multimap)$ equipped
  with an object
  $\lfloor \mathbf{R}\rfloor \in \mathcal{C}$ and
  families of morphisms
  $\{\lfloor a \rfloor \colon I \to \lfloor
  \mathbf{R} \rfloor\}_{a \in \mathbb{R}}$ and
  $\{\lfloor f\rfloor \colon \lfloor
  \mathbf{R}\rfloor^{\otimes \mathrm{ar}(f)} \to
  \lfloor \mathbf{R} \rfloor\}_{f \in S}$.
\end{definition}

For a pre-model
$\mathcal{M} =
(\mathcal{C},I,\otimes,\multimap,\lfloor -
\rfloor)$ of $\LL{S}$, we interpret types as
follows:
\begin{align*}
  \sem{\mathbf{R}}^{\mathcal{M}}
  &=
  \lfloor \mathbf{R} \rfloor,
  &
  \sem{\mathbf{I}}^{\mathcal{M}}
  &= I,
  &
  \sem{\tau \otimes \sigma}^{\mathcal{M}}
  &=
  \sem{\tau}^{\mathcal{M}} \otimes \sem{\sigma}^{\mathcal{M}},
  &
  \sem{\tau \multimap \sigma}^{\mathcal{M}}
  &=
  \sem{\tau}^{\mathcal{M}} \multimap \sem{\sigma}^{\mathcal{M}}.
\end{align*}
For an environment
$\Gamma=(x:\tau,\ldots,y:\sigma)$, we define
$\sem{\Gamma}^{\mathcal{M}}$ to be
$\sem{\tau}^{\mathcal{M}} \otimes \cdots \otimes
\sem{\sigma}^{\mathcal{M}}$. Then, the
interpretation
$\sem{\Gamma \vdash M:\tau}^{\mathcal{M}} \colon
\sem{\Gamma}^{\mathcal{M}} \to
\sem{\tau}^{\mathcal{M}}$ in $\mathcal{M}$ is
given in the standard manner following
\cite{Mackie1993AnIL}, and 
constants are interpreted as follows:
$\sem{\vdash \const{a}:\mathbf{R}}^{\mathcal{M}}
  = \lfloor a \rfloor$,
\begin{equation*}
  \sem{\Gamma \# \cdots \# \Delta
    \vdash \const{f}(M,\ldots,N):\mathbf{R}}^{\mathcal{M}}
  = \lfloor f \rfloor
  \circ (\sem{M}^{\mathcal{M}} \otimes \cdots \otimes
  \sem{N}^{\mathcal{M}}) \circ \theta
\end{equation*}
where
$\theta \colon \sem{\Gamma \#
  \Delta}^{\mathcal{M}} \xrightarrow{\cong}
\sem{\Gamma}^{\mathcal{M}} \otimes
\sem{\Delta}^{\mathcal{M}}$ swaps objects
following the merge $\Gamma \# \Delta$.

\begin{definition}
  We say that a pre-model
  $\mathcal{M} =
  (\mathcal{C},I,\otimes,\multimap,\lfloor -
  \rfloor)$ of $\LL{S}$ is a \emph{model} of
  $\LL{S}$ if $\mathcal{M}$ satisfies the
  following three conditions.
  \begin{itemize}
  \item (M1)
    For any $f \in S$,
    if $f(a_{1},\ldots,a_{\mathrm{ar}(f)}) = b$,
    then $\sem{\const{f}(\const{a_{1}},
      \ldots,\const{a_{n}})}^{\mathcal{M}}=
    \sem{\const{b}}^{\mathcal{M}}$.
  \item (M2) For all $a,b \in \bbR$,
    $d( \lfloor a \rfloor, \lfloor b \rfloor ) =
    |a-b|$.
  \item (M3) For all $x,y \colon I \to X$ in
    $\mathcal{C}$ and all finite sequences
    $a_{1},\ldots,a_{n},b_{1},\ldots,b_{n} \in
    \mathbb{R}$, we have
    \begin{equation*}
      d(\lfloor a_{1} \rfloor
      \otimes \cdots \otimes \lfloor a_{n} \rfloor
      \otimes x,\lfloor b_{1} \rfloor
      \otimes \cdots \otimes \lfloor b_{n} \rfloor
      \otimes y)
      \geq
      |a_{1} - b_{1}| + \cdots +
      |a_{n} - b_{n}|.
    \end{equation*}
  \end{itemize}
\end{definition}

The first condition corresponds to the reduction
rule for function symbols and is necessary to
prove soundness for models of $\LL{S}$. The
remaining conditions are for admissibility of the
metric derived from models of $\LL{S}$.

\begin{proposition}[Soundness]\label{prop:soundness}
  Let $\mathcal{M}$ be a model of $\LL{S}$. For
  any term $M \in \term(\tau)$ and any
  value $V \in \val(\tau)$, if
  $M \hookrightarrow V$, then
  $\sem{M}^{\mathcal{M}} = \sem{V}^{\mathcal{M}}$.
\end{proposition}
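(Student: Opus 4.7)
The plan is to prove the statement by induction on the derivation of $M \hookrightarrow V$, after first establishing a substitution lemma as a preliminary tool.

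The substitution lemma states that for any derivations $\Gamma , x : \sigma \vdash M : \tau$ and $\Delta \vdash V : \sigma$ (with $V$ a value so that $\Delta \# (\Gamma, x:\sigma)$ merges reorder into a context for $M[V/x]$), the interpretation $\sem{M[V/x]}^{\mathcal{M}}$ factors through $\sem{M}^{\mathcal{M}}$ by plugging $\sem{V}^{\mathcal{M}}$ at the $x$-slot, modulo the canonical structural isomorphism of the monoidal category reflecting the merge. I would prove this by induction on the derivation of $\Gamma , x : \sigma \vdash M : \tau$, where each syntactic case is routine once the associators and symmetries are tracked carefully. I regard this tracking of structural isomorphisms as the main technical annoyance, though it is essentially the same bookkeeping already implicit in the definition of $\sem{-}^{\mathcal{M}}$ via merges.

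With the substitution lemma in hand, the main induction on $M \hookrightarrow V$ proceeds rule by rule. The reflexive case $V \hookrightarrow V$ is immediate. For $\const{f}(M_{1},\ldots,M_{n}) \hookrightarrow \const{f(a_{1},\ldots,a_{n})}$, the induction hypothesis gives $\sem{M_{i}}^{\mathcal{M}} = \sem{\const{a_{i}}}^{\mathcal{M}} = \lfloor a_{i}\rfloor$, and then (M1) concludes. For application $MN \hookrightarrow U$ from $M \hookrightarrow \lambda x : \tau.\,L$, $N \hookrightarrow V$, and $L[V/x] \hookrightarrow U$, the induction hypothesis together with the definition of $\sem{-}^{\mathcal{M}}$ for abstraction and application reduces the problem to the categorical $\beta$-equation in a symmetric monoidal closed category, namely $\mathrm{ev} \circ (\Lambda(\sem{L}^{\mathcal{M}}) \otimes \sem{V}^{\mathcal{M}}) = \sem{L}^{\mathcal{M}} \circ (\id \otimes \sem{V}^{\mathcal{M}})$; combining with the substitution lemma identifies this with $\sem{L[V/x]}^{\mathcal{M}}$, and a final use of the induction hypothesis on $L[V/x] \hookrightarrow U$ finishes the case.

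The tensor case $M \otimes N \hookrightarrow V \otimes U$ follows immediately from the induction hypothesis and functoriality of $\otimes$. The $\letin{\ast}{M}{N}$ case uses the induction hypothesis $\sem{M}^{\mathcal{M}} = \sem{\ast}^{\mathcal{M}}$ (which is the unit map) together with the left unit isomorphism of the monoidal structure, built into the interpretation of the let-rule. The most delicate evaluation case is $\letin{x \otimes y}{M}{N} \hookrightarrow W$ from $M \hookrightarrow V \otimes U$ and $N[V/x,U/y] \hookrightarrow W$: here the induction hypothesis on $M$ gives $\sem{M}^{\mathcal{M}} = \sem{V \otimes U}^{\mathcal{M}} = \sem{V}^{\mathcal{M}} \otimes \sem{U}^{\mathcal{M}}$, and then the substitution lemma (applied twice, or once with a two-variable version) together with the defining equation of the elimination of $\otimes$ in a symmetric monoidal category identifies $\sem{\letin{x\otimes y}{M}{N}}^{\mathcal{M}}$ with $\sem{N[V/x,U/y]}^{\mathcal{M}}$, from which the induction hypothesis on $N[V/x,U/y] \hookrightarrow W$ concludes.

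I expect the main obstacle to be the careful formulation and proof of the substitution lemma in the presence of merges, since this is where all the structural isomorphisms of the underlying $\mathbf{Met}$-enriched symmetric monoidal closed category are invoked; once that lemma is in place, every inductive case of the soundness theorem reduces to a standard categorical identity.
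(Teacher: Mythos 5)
Your proposal is correct and follows essentially the same route as the paper: induction on the derivation of $M \hookrightarrow V$, with the function-symbol case handled by (M1) and all remaining cases reduced to the standard soundness of symmetric monoidal closed categories for the linear lambda calculus (which the paper simply cites, while you unfold it via the substitution lemma and the categorical $\beta$/let-equations). Since evaluation only ever substitutes closed values, the merge bookkeeping in your substitution lemma is even lighter than you anticipate.
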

\longversion{
  \begin{proof}
    By induction on the derivation of
    $M \hookrightarrow V$. Except for the case
    $\const{f}(M_{1},\ldots, M_{\mathrm{ar}(f)})
    \hookrightarrow \const{b}$, we can check
    $\sem{M}^{\mathcal{M}} = \sem{V}^{\mathcal{M}}$
    by using soundness of symmetric monoidal closed
    categories with respect to the linear lambda
    calculus \cite{Mackie1993AnIL}. The case
    $\const{f}(M_{1},\ldots, M_{\mathrm{ar}(f)})
    \hookrightarrow \const{b}$ follows from (M1).
  \end{proof}}

Let
$\mathcal{M}=
(\mathcal{C},I,\otimes,\multimap,\lfloor-\rfloor)$
be a model of $\LL{S}$. For an environment
$\Gamma$ and a type $\tau$, we define
$d^{\mathcal{M}}_{\Gamma,\tau}$ to be the function
\begin{equation*}
  d(\sem{-}^{\mathcal{M}},
  \sem{-}^{\mathcal{M}}) \colon
  \term(\Gamma,\tau) \times \term(\Gamma,\tau)
  \to \bbRR.
\end{equation*}
\begin{proposition}\label{prop:adm_M}
  For any environment $\Gamma$ and any type
  $\tau$, the function $d_{\Gamma,\tau}^{\mathcal{M}}$
  is a metric on $\term(\Gamma,\tau)$.
  Furthermore, 
  $\{d_{\Gamma,\tau}^{\mathcal{M}}\}_{\Gamma \in \env,\tau \in \type}$
  is admissible.
\end{proposition}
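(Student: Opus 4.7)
The plan is to verify each clause of Definition~\ref{def:adm} separately, leveraging the $\mathbf{Met}$-enrichment of $\mathcal{C}$ together with the axioms (M1)–(M3). That each $d^{\mathcal{M}}_{\Gamma,\tau}$ is a pseudo-metric is immediate: by definition it is the pullback of the hom-metric on $\mathcal{C}(\sem{\Gamma}^{\mathcal{M}},\sem{\tau}^{\mathcal{M}})$ along the interpretation map $\sem{-}^{\mathcal{M}}$, so reflexivity, symmetry and the triangle inequality are inherited.

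The core of the argument is (A1). Here I would prove by induction on the context $C[-] \colon (\Gamma,\tau) \to (\Delta,\sigma)$ that the assignment $M \mapsto \sem{C[M]}^{\mathcal{M}}$ extends to a non-expansive map $\mathcal{C}(\sem{\Gamma}^{\mathcal{M}},\sem{\tau}^{\mathcal{M}}) \to \mathcal{C}(\sem{\Delta}^{\mathcal{M}},\sem{\sigma}^{\mathcal{M}})$. Every context constructor unfolds into a combination of composition with fixed morphisms, tensoring with a fixed morphism (coming from the interpretation of terms filling the other argument of a merge), currying/uncurrying for $\lambda$-abstraction, and further compositions to realise $\letin{-}{-}{-}$. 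All these operations are non-expansive (indeed the currying is an isometry) by the enrichment axioms, and a composition of non-expansive maps is non-expansive. Taking $d$ of $\sem{C[M]}^{\mathcal{M}}$ and $\sem{C[N]}^{\mathcal{M}}$ therefore yields $d^{\mathcal{M}}_{\Delta,\sigma}(C[M],C[N]) \leq d^{\mathcal{M}}_{\Gamma,\tau}(M,N)$.

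Clauses (A2) and (A3) are essentially restatements of (M2) and (M3) once one unpacks the interpretation. For (A2), $d^{\mathcal{M}}_{\varnothing,\mathbf{R}}(\const{a},\const{b}) = d(\lfloor a\rfloor,\lfloor b\rfloor) = |a-b|$. For (A3), a closed value $\vdash V:\tau$ is interpreted as a morphism $I \to \sem{\tau}^{\mathcal{M}}$, and $\sem{\const{a_1} \otimes \cdots \otimes \const{a_n} \otimes V}^{\mathcal{M}}$ equals $\lfloor a_1\rfloor \otimes \cdots \otimes \lfloor a_n\rfloor \otimes \sem{V}^{\mathcal{M}}$ modulo the canonical isomorphism $I^{\otimes k} \cong I$; then (M3) delivers precisely the required lower bound.

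For (A4), I would appeal to soundness of symmetric monoidal closed categories with respect to the standard equational theory of the linear $\lambda$-calculus \cite{Mackie1993AnIL}: if $\Gamma \vdash M = N : \tau$, then $\sem{M}^{\mathcal{M}} = \sem{N}^{\mathcal{M}}$, whence $d^{\mathcal{M}}_{\Gamma,\tau}(M,N) = 0$. The only non-standard axiom is the reduction $\const{f}(\const{a_1},\ldots,\const{a_n}) = \const{b}$ whenever $f(a_1,\ldots,a_n) = b$, and this is exactly (M1). The main bookkeeping obstacle is the induction step for (A1) through a merge $\Gamma \# \Delta$, where the hole sits in one sub-environment while a fixed term is interpreted on the other side; but once one tensors the inductive non-expansive map with the fixed morphism corresponding to that other term and post-composes with the canonical $\theta$ reshuffling the merge, non-expansiveness of $\otimes$ and of composition closes the case.
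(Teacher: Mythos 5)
Your proposal is correct and follows essentially the same route as the paper's proof: the metric axioms are inherited by pulling back the hom-metric along $\sem{-}^{\mathcal{M}}$, (A1) follows from the $\mathbf{Met}$-enrichment (non-expansiveness of composition and tensor, isometry of currying) by induction on contexts, (A2) and (A3) come from (M2) and (M3), and (A4) from soundness of symmetric monoidal closed categories for the linear lambda calculus together with (M1). The paper merely states these facts without spelling out the context induction, which you have filled in correctly.
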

\longversion{
  \begin{proof}
    It follows from $\mathbf{Met}$-enrichment that
    $d_{\Gamma,\tau}^{\mathcal{M}}$ is a metric
    and (A1) holds. (A2) and (A3) follow from (M2)
    and (M3). (A4) follows from soundness of
    symmetric monoidal closed categories with
    respect to the linear lambda calculus
    \cite{Mackie1993AnIL} and (M1).
  \end{proof}}

\begin{myexample}
  The symmetric monoidal closed category
  $\mathbf{Met}$ of metric spaces and
  non-expansive functions can be extended to a
  model
  $(\mathbf{Met},I,\otimes,
  \multimap,\lfloor-\rfloor)$ of $\LL{S}$ where we
  define
  $\lfloor \mathbf{R} \rfloor \in \mathbf{Met}$ to
  be $\bbR$, and for $f \in S$, we define
  $\lfloor f \rfloor \colon \bbR^{\otimes
    \mathrm{ar}(f)} \to \bbR$ to be $f$.
  % We simply
  % denote this model as $\mathbf{Met}$.
\end{myexample}

\subsection{Denotational Metric}
\label{sec:domain}

% \mymarginnote{Give the usual, direct, denotational
%   semantics based on MetCPO.}

In this section, we recall the notion of metric
cpos introduced in \cite{10.1145/3009837.3009890}
as a denotational model of Fuzz, and we give a
model of $\LL{S}$ based on metric cpos. While we
do not need the domain-theoretic feature of metric
cpos to model $\LL{S}$, we believe that the
category of metric cpos is a good place to explore
how metrics from denotational models and metrics
from interactive semantic models are related. This
is because the domain theoretic structure of the
category of metric cpos directly gives rise to an
interactive semantic model via Int-construction as
we show in Section~\ref{sec:inter-semant-model}.

Let us recall the notion of (pointed)
metric cpos \cite{10.1145/3009837.3009890}.

\begin{definition}
  A \emph{(pointed) metric cpo} $X$ consists of a
  metric space $(|X|,d_{X})$ with a partial order
  $\leq_{X}$ on $|X|$ such that $(|X|,\leq_{X})$
  is a (pointed) cpo, and for all ascending chains
  $(x_{i})_{i \in \mathbb{N}}$ and
  $(x'_{i})_{i \in \mathbb{N}}$ on $X$, we have
  \begin{math}
    d_{X}\left(\bigvee_{i \in \mathbb{N}}x_{i},
      \bigvee_{i \in \mathbb{N}}x'_{i}\right) \leq
    \bigvee_{i \in \mathbb{N}}
    d_{X}(x_{i},x'_{i}).
  \end{math}
\end{definition}
For metric cpos $X$ and $Y$, a function
$f \colon |X| \to |Y|$ is said to be continuous
and non-expansive when $f$ is a continuous
function from $(|X|,\leq_{X})$ to $(|Y|,\leq_{Y})$
and is a non-expansive function from $(|X|,d_{X})$
to $(|Y|,d_{Y})$. Below, we simply write $X$
for the underlying set $|X|$.

Pointed metric cpos and continuous and
non-expansive functions form a category, which is
denoted by $\metcppo$. The unit object $I$
of $\metcppo$ is the unit object of
$\mathbf{Met}$ equipped with the trivial partial
order. The tensor product $X \otimes Y$ is given
by the tensor product of metric spaces with the
componentwise order. The hom-object
$X \multimap Y$ is given by the set of continuous
and non-expansive functions equipped with the
pointwise order and
\begin{equation*}
  d_{X \multimap Y}(f,g)
  = \sup_{x \in X} d_{Y}(fx,gx).
\end{equation*}
We associate $\metcppo$ with the structure
of a model of $\LL{S}$ as follows. We define
$\lfloor \mathbf{R} \rfloor$ to be
\begin{math}
  (\bbR \cup \{\bot\},d_{R},\leq_{R}) 
\end{math}
where $d_{R}$ is the extension of the metric on
$\bbR$ given by $d_{R}(a,\bot) = \infty$ for all
$a \in \bbR$, and $(\bbR \cup \{\bot\},\leq_{R})$
is the lifting of the discrete cpo $\bbR$. For
$f \in S$, we define
$\lfloor f \rfloor \colon R^{\otimes
  \mathrm{ar}(f)} \to R$ to be the function
satisfying
$\lfloor f \rfloor
(x_{1},\ldots,x_{\mathrm{ar}(f)}) = y \in \bbR$ if
and only if
$x_{1},\ldots,x_{\mathrm{ar}(f)} \in \bbR$ and
$f(x_{1},\ldots,x_{\mathrm{ar}(f)}) = y$. In the
sequel, we denote the metric on $\LL{S}$ induced
by this model by $\dden$, and we call the metric
$\dden$ the \emph{denotational metric}.

\subsection{Interactive Metric}
\label{sec:interaction-metric}

% \mymarginnote{Introduce the Int construction, and
%   show that it can be applied to the
%   aforementioned language when the underlying
%   category is MetCPO.}

We describe another model of $\LL{S}$, which we
call the \emph{interactive semantic model}. In the
interactive semantic model, terms are interpreted
as strategies interacting with their evaluation
environments. Categorically speaking, the
construction is based on the notion of
\emph{trace operator} and on the related \emph{Int-construction}
\cite{jsv}. Below, we first explain how terms are
interpreted, and then, we formally describe the
construction of the interactive semantic model.

\subsubsection{How Terms are Interpreted, Informally}
\label{sec:how-terms-are}

We present the interpretation of terms in the 
interactive semantic model using string diagrams
without explaining their meaning precisely. We
first consider a simple term
$F = \lambda x:\mathbf{R}.\, \const{f}(x)$. Its
interpretation is given by the following diagram.
\begin{equation*}
  \begin{tikzpicture}[scale=0.4]
    \node[rect] at (0,0) (f) {$f$};%
    \draw[-<-] (f.west) -- ++ (-1,0) node[left]
    {$R$};%
    \draw[->-] (f.east) -- ++ (1,0) node[right]
    {$R$};%
  \end{tikzpicture}.
\end{equation*}
This interpretation means that given an argument
$a \in \bbR$, it returns the evaluation result of
$F(\const{a}) \hookrightarrow \const{f(a)}$ as
follows:
\begin{equation*}
  \begin{tikzpicture}[scale=0.4]
    \begin{scope}
      \fill[gray!15] (-1,-1) rectangle ++(2,2);
      \fill[gray!15] (2,-1) rectangle ++(2,2);
    \end{scope}
    \begin{scope}
      \node[rect] at (0,0) (a) {$a$}; %
      \node[rect] at (3,0) (f) {$f$}; %
      \draw[->-] (a.east) -- node[above] %
      {$R$} (f.west); %
      \draw[->-] (f.east) -- ++ (1,0) %
      node[right] {$R$}; %
    \end{scope}
    \begin{scope}[xshift=6.5cm]
      \node at (0,0) {$=$};%
    \end{scope}
    \begin{scope}[xshift=8.5cm]
      \node[rect] at (0,0) (fa)
      {$fa$}; \draw[->-] (fa.east) -- ++(1,0) %
      node[right] {$R$};%
    \end{scope}
  \end{tikzpicture}.
\end{equation*}
Here, the grey regions denote components
corresponding to the argument $\const{a}$ and the
term $F$. In this example, there is no interaction
between functions and their arguments, which
instead shows up in higher-order computation. Let
us consider
$M_{a} = \lambda k:\mathbf{R} \multimap
\mathbf{R}.\,k\,\const{a}$ for $a \in \bbR$. The
interpretation of this term is given in
Figure~\ref{sfig:a}, and the interpretation of the
application $M_{a}\,F$ is given in
Figure~\ref{sfig:b}, which can be understood as a
representation of the following interaction
process between the term $M_{a}$ and its argument
$F$: the term $M_{a}$ first sends the query
$\const{a}$ to the argument $F$, and then the
argument $F$ invokes $\const{f}(\const{a})$. The
evaluation result $\const{f(a)}$ is sent to
$M_{a}$, and $M_{a}$ outputs the value
$\const{f(a)}$.
\begin{figure}
  \centering
  \captionsetup[subfigure]{justification=centering}
  \begin{subfigure}[t]{0.3\textwidth}
    \centering
    \begin{tikzpicture}[scale=0.4]
      \draw[->-=0.15] (0,0) node[left] {$R$} %
      -- ++ (1,0) -- ++ (1,1.5) -- ++ (2,0) %
      node[right] {$R$};%
      \node[rect] at (2.5,0) (0) {$a$}; \draw[->-]
      (0.east) -- ++ (1,0) %
      node[right] {$R$};%
    \end{tikzpicture}
    \caption{The interpretation of $M_{a}$}
    \label{sfig:a}
  \end{subfigure}
  \begin{subfigure}[t]{0.6\textwidth}
    \centering
    \begin{tikzpicture}[scale=0.4]
      \begin{scope}
        \fill[gray!15] (-1,-1) rectangle ++ (2,2);
        \fill[gray!15] (2,-1) rectangle ++ (3,3);
      \end{scope}
      \begin{scope}
        \node[rect] at (0,0) (f) {$f$}; %
        \node[rect] at (4,0) (0) {$a$}; %
        \draw[->-=0.9] (f.east) -- ++ (2,0) -- ++
        (1,1.5) -- ++ (3,0) node [right] {$R$}; %
        \draw[->-] (0.east) -- ++ (1,0) arc
        (90:-90:0.75) %
        -- ++ (-7,0) arc (270:90:0.75) %
        -- node [above left] {$R$} (f.west);%
      \end{scope}
      \begin{scope}[xshift=8cm,yshift=-0.5cm]
        \node at (0,0) {$=$};
      \end{scope}
      \begin{scope}[xshift=10cm,yshift=-0.5cm]
        \node[rect] at (0,0) (f0) {$f(a)$}; %
        \draw[->-] (f0.east) -- ++(1,0) %
        node[right] {$R$};
      \end{scope}
    \end{tikzpicture}
    \caption{Interpretation of $M_{a}\,F$}
    \label{sfig:b}
  \end{subfigure}
  \\[10pt]
  \begin{subfigure}[t]{0.3\textwidth}
    \centering
    \begin{tikzpicture}[scale=0.4]
      \node[draw,rect] at (0,0) (f) {$f$};
      \node[draw,rect] at (0,1.5) (g) {$g$};
      \draw[-<-] (f.west) -- (-1.5,0) %
      node[left] {$R$}; %
      \draw[->-] (f.east) -- (1.5,0) %
      node[right] {$R$}; %
      \draw[-<-] (g.west) -- (-1.5,1.5) %
      node[left] {$R$}; %
      \draw[->-] (g.east) -- (1.5,1.5) %
      node[right] {$R$}; %
    \end{tikzpicture}
    \caption{Interpretation of $N$}
    \label{sfig:c}
  \end{subfigure}
  \begin{subfigure}[t]{0.6\textwidth}
    \centering
    \begin{tikzpicture}[scale=0.4]
      \begin{scope}
        \fill[gray!15] (-1,-1) rectangle ++
        (2,3.5); %
        \fill[gray!15] (-6.5,-1) rectangle ++
        (3.5,3); %
        \node[rect] at (0,0) (f) {$f$}; %
        \node[rect] at (0,1.5) (g) {$g$}; %
        \node[rect] at (-4,0) (0) {$a$}; %
        \draw[-<-] (f.west) -- node[above] {$R$}
        (0); %
        \draw[->-=0.4] (f.east) -- (1,0)
        arc(90:-90:0.75) -- ++ (-8,0) %
        arc(270:90:0.75) -- node[above left] {$R$}
        ++ (1,0) -- ++(1,1.5) -- (g); %
        \draw[->-=0.7] (g.east) -- ++ (1,0)
        node[right] {$R$};
      \end{scope}
      \begin{scope}[xshift=5cm]
        \node at (0,0) {$=$};
      \end{scope}
      \begin{scope}[xshift=7.5cm]
        \node[rect] at (0,0) (gf0) {$g(f(a))$};
        \draw[->-] (gf0) -- ++ (3,0) node [right]
        {$R$};
      \end{scope}
    \end{tikzpicture}
    \caption{Interpretation of $N\,M_{a}$}
    \label{sfig:d}
  \end{subfigure}
  \caption{Interpretation of Terms in the
    Interactive Semantic Model}
  \label{fig:ism}
\end{figure}
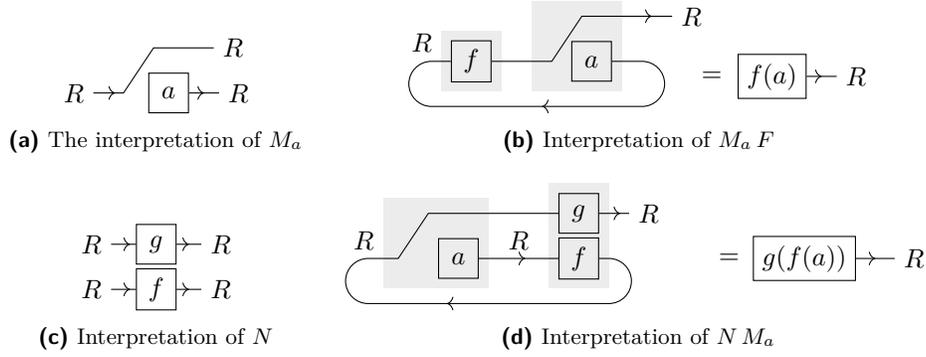
We consider another example
$N=\lambda k:(\mathbf{R} \multimap \mathbf{R})
\multimap \mathbf{R}.\, \const{g}(k\,(\lambda
x:\mathbf{R}.\,\const{f}(x)))$. Its interpretation
is given in Figure~\ref{sfig:c}, and the
interpretation of $N\,M_{a}$ is given in
Figure~\ref{sfig:d}. The interaction between $N$
and $M_{a}$ starts with the query $\const{a}$ from
$M_{a}$ to $N$. Then, $N$ invokes
$\const{f}(\const{a})$. The evaluation result
$\const{f(a)}$ is sent to $M_{a}$, and $M_{a}$
sends $\const{f(a)}$ to $N$. Finally, $N$ invokes
$\const{g}(\const{f(a)})$ and outputs the
evaluation result $\const{g(f(a))}$ of
$\const{g}(\const{f(a)})$.

In this way, in the interactive semantic model,
terms are interpreted as string diagrams that
represent ``strategies to interact with its
arguments''. The intuition of interactive metric
$\dint$ associated to the interactive semantic
model is to measure difference between these
strategies. For example, we have
\begin{math}
  \dint_{\varnothing,(\mathbf{R} \multimap
    \mathbf{R}) \multimap \mathbf{R}}
  (M_{0},M_{1}) = 1
\end{math}
since the difference between the two
interpretations
\begin{equation*}
  \begin{tikzpicture}[scale=0.4]
    \draw[->-=0.15] (0,0) node[left] {$R$} %
    -- ++ (1,0) -- ++ (1,1.5) -- ++ (2,0) %
    node[right] {$R$};%
    \node[rect] at (2.5,0) (0) {$0$};%
    \draw[->-] (0.east) -- ++ (1,0) %
    node[right] {$R$};%
  \end{tikzpicture}, \qquad
  \begin{tikzpicture}[scale=0.4]
    \draw[->-=0.15] (0,0) node[left] {$R$} %
    -- ++ (1,0) -- ++ (1,1.5) -- ++ (2,0) %
    node[right] {$R$};%
    \node[rect] at (2.5,0) (0) {$1$};%
    \draw[->-] (0.east) -- ++ (1,0) %
    node[right] {$R$};%
  \end{tikzpicture}
\end{equation*}
are the queries $0$ and $1$. We note that the
interactive semantic model provides an intentional
view, and therefore, interactive metric
distinguish some observationally equivalent terms.
For example, if $S$ has a constant function
$c \colon \mathbb{R} \to \mathbb{R}$, then for all
$a \in \mathbb{R}$, the terms
$L_{a}=\lambda k:\mathbf{R} \multimap \mathbf{R}
.\, \const{c}(k\,\const{a})$ are observationally
equivalent. On the other hand, we have
$\dint_{\varnothing,(\mathbf{R} \multimap
  \mathbf{R})\multimap \mathbf{R}}(L_{0},L_{1}) =
1$. This is because the interpretation of $L_{a}$
tells us that for any value
$V:\mathbf{R} \multimap \mathbf{R}$, the first
event in the evaluation of $L_{a}\,V$ is to invoke
$V\,\const{a}$.

\subsubsection{The Interactive Semantic Model, Formally}
\label{sec:inter-semant-model}

In order to formally describe the interactive
semantic model, we first observe that the category
$\metcppo$ has a trace operator, which is
necessary to apply the Int-construction to
$\metcppo$. For
$f \colon X \otimes Z \to Y \otimes Z$ in
$\metcppo$, we define
$\tr_{X,Y}^{Z}(f) \colon X \to Y$ by
\begin{equation*}
  \tr_{X,Y}^{Z}(f)(x)
  =
  \text{the first component of }
  f(x,z)
\end{equation*}
where $z$ is the least fixed point of the
continuous function $f(x,-) \colon Z \to Z$. When
we ignore the fragment of metric spaces, the
definition of $\tr_{X,Y}^{Z}(f)$ coincides with
the definition of the trace operator associated to
the least fixed point operator on the category of
pointed cpos and continuous function
\cite{10.5555/519939}. Hence, in order to show
that $\tr_{X,Y}^{Z}$ is a trace operator, it is
enough to check non-expansiveness of
$\tr_{X,Y}^{Z}(f)$.

\begin{proposition}\label{prop:trace}
  The symmetric monoidal category
  $(\metcppo,I,\otimes)$ equipped with the
  family of operators
  $\{\tr_{X,Y}^{Z}\}_{X,Y,Z \in \metcppo}$
  is a traced symmetric monoidal category.
\end{proposition}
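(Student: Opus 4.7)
The argument should split into two layers. The first layer verifies the standard trace axioms (vanishing, superposing, yanking, sliding and naturality), and the second layer confirms that the candidate function $\tr_{X,Y}^Z(f)\colon X \to Y$ is indeed a morphism in $\metcppo$ (i.e.\ is both continuous and non-expansive). The first layer is straightforward: the definition sets $\tr_{X,Y}^Z(f)(x) = \pi_1 f(x,z)$ where $z$ is the least fixed point of $\pi_2 \circ f(x,-)$, and if one forgets the metric structure this is literally the standard least-fixed-point trace on the category of pointed cpos and continuous functions. Since the axioms are equalities of underlying functions and these underlying functions coincide with the familiar ones, they transfer verbatim from the classical verification.

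The second layer is where the metric content appears. Continuity of $\tr_{X,Y}^Z(f)$ in $x$ is again the standard cpo-level calculation using continuity of $f$ in both arguments together with continuity of the $\mathrm{lfp}$ operator. The new ingredient is non-expansiveness. Given $x,x'\in X$ with $c = d_X(x,x')$, let $z, z'$ be the associated least fixed points and $y = \tr_{X,Y}^Z(f)(x)$, $y' = \tr_{X,Y}^Z(f)(x')$, so that $(y,z) = f(x,z)$ and $(y',z') = f(x',z')$. Non-expansiveness of $f$ gives directly
\[
d_Y(y,y') + d_Z(z,z') \;\leq\; c + d_Z(z,z'),
\]
and, provided $d_Z(z,z') < \infty$, the two summands can be cancelled to yield the desired inequality $d_Y(y,y') \leq c$.

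The main obstacle, and the only place where the metric cpo axiom really bites, is the case in which $d_Z(z,z')$ is infinite. This genuinely can happen, since iterating the non-expansiveness inequality along the Kleene chain $z_0 = \bot$, $z_{n+1} = \pi_2 f(x,z_n)$ (and analogously $z'_n$) only yields $d_Z(z_n, z'_n) \leq nc$, which need not stay finite in the limit. To handle this case, the plan is to set $y_n = \pi_1 f(x,z_n)$, derive the stepwise bound $d_Y(y_n, y'_n) + d_Z(z_{n+1}, z'_{n+1}) \leq c + d_Z(z_n, z'_n)$ from the non-expansiveness of $f$, and then exploit the metric cpo inequality $d_Y\bigl(\bigvee_n y_n, \bigvee_n y'_n\bigr) \leq \bigvee_n d_Y(y_n, y'_n)$ together with a telescoping of the stepwise bounds in order to control $d_Y(y,y')$ by $c$ even when $d_Z(z,z')$ diverges. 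Once non-expansiveness is established, continuity plus non-expansiveness gives $\tr_{X,Y}^Z(f)\in\metcppo$, and together with the trace axioms inherited from the cpo trace this yields the proposition.
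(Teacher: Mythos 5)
Your decomposition---trace axioms inherited verbatim from the least-fixed-point trace on pointed cpos, with non-expansiveness of $\tr_{X,Y}^{Z}(f)$ as the only new obligation---is exactly the paper's, and your treatment of the case $d_{Z}(z,z')<\infty$ (non-expansiveness of $f$ at the pair of fixed points, then cancellation of the finite term) is precisely the paper's concluding step. You also correctly isolate the genuinely delicate point: the Kleene iterates only give $d_{Z}(z_{n},z'_{n})\le nc$, so finiteness of $d_{Z}(z,z')$ is not automatic.

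However, your plan for the case $d_{Z}(z,z')=\infty$ has a gap. Telescoping the stepwise bound $d_{Y}(y_{n},y'_{n})+d_{Z}(z_{n+1},z'_{n+1})\le c+d_{Z}(z_{n},z'_{n})$ from index $N$ to $M$ gives $\sum_{n=N}^{M-1}d_{Y}(y_{n},y'_{n})\le (M-N)c+d_{Z}(z_{N},z'_{N})$, which controls only the Ces\`aro averages and hence $\liminf_{n}d_{Y}(y_{n},y'_{n})\le c$; it bounds neither $\sup_{n}d_{Y}(y_{n},y'_{n})$ nor $\limsup_{n}d_{Y}(y_{n},y'_{n})$, since $d_{Z}(z_{n},z'_{n})$ may climb slowly for a long stretch and then drop sharply, allowing individual terms $d_{Y}(y_{n},y'_{n})$ to be arbitrarily large. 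The metric-cpo inequality in the form you invoke it, $d_{Y}(\bigvee_{n}y_{n},\bigvee_{n}y'_{n})\le\bigvee_{n}d_{Y}(y_{n},y'_{n})$, even when applied to tails of the chain, only yields a bound by the $\limsup$, so the two halves of your argument do not meet. What is needed is the lower-semicontinuity refinement $d_{Y}(\bigvee_{n}y_{n},\bigvee_{n}y'_{n})\le\liminf_{n}d_{Y}(y_{n},y'_{n})$ (Lemma~4.5 of the metric-cpo paper cited in the proof; it follows from the axiom by passing to a subsequence realizing the liminf, along which the suprema are unchanged). With that refinement your telescoping does close. The paper instead deploys the refinement inside a proof by contradiction: assuming $d_{Y}(y,y')>d_{X}(x,x')$, lower semicontinuity forces $d_{Y}(y_{n},y'_{n})>d_{X}(x,x')$ for all large $n$, and the stepwise bound (after cancelling the finite $d_{X}(x,x')$) then shows $d_{Z}(z_{n},z'_{n})$ is eventually non-increasing, so $d_{Z}(z,z')\le d_{Z}(z_{N},z'_{N})<\infty$---that is, the ``hard'' case never actually occurs, and your easy case finishes the job.
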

\longversion{\begin{proof} We write
    $g \colon X \otimes Z \to Z$ and
    $h \colon X \otimes Z \to Y$ for the
    continuous and non-expansive functions such
    that $f(x,a) = (g(x,a),h(x,a))$. To prove
    non-expansiveness of
    $\mathrm{tr}_{X,Y}^{Z}(f)$, we suppose that
    there are $x,x' \in X$ such that
    \begin{equation*}
      d_{Y}(\mathrm{tr}_{X,Y}^{Z}(f)(x),
      \mathrm{tr}_{X,Y}^{Z}(f)(x'))
      > d_{X}(x,x')
    \end{equation*}
    and derive a contradiction. By the assumption,
    $d_{X}(x, x')$ is finite. We define
    $a_{n},a'_{n} \in Z$ by
    \begin{align*}
      a_{0} &= a'_{0} = \bot,
      &
      a_{n + 1} &= g(x,a_{n}),
      &
      a'_{n + 1} &= g(x', a'_{n}).
    \end{align*}
    We write $a_{\infty}$ for
    $\bigvee_{n \in \mathbb{N}}a_{n}$ and
    $a'_{\infty}$ for
    $\bigvee_{n \in \mathbb{N}}a'_{n}$. Below, we
    show that $d_{Z}(a_{\infty},a'_{\infty})$ is
    finite. We first check that
    $d_{Z}(a_{n},a'_{n})$ is finite. The base case
    is trivial. For the induction step $n > 0$, it
    follows from non-expansiveness of $g$ that we
    have
    \begin{equation*}
      d_{X}(x,x') + d_{Z}(a_{n},a'_{n})
      \geq d_{Z}(a_{n+1},a_{n+1}').
    \end{equation*}
    Hence, we conclude that $d_{Z}(a_{n},a'_{n})$
    is finite. We next check that
    the sequence $d_{Z}(a_{n},a'_{n})$ is bounded.
    Since we have
    \begin{equation*}
      \mathrm{tr}_{X,Y}^{Z}(f)(x)
      = h(x,a_{\infty})
      = \bigvee_{n \geq 0} h(x,a_{n}),
      \quad
      \mathrm{tr}_{X,Y}^{Z}(f)(x')
      = h(x,a'_{\infty})
      = \bigvee_{n \geq 0} h(x',a'_{n}),
    \end{equation*}
    by using Lemma 4.5 in
    \cite{10.1145/3009837.3009890},
    we obtain
    \begin{equation*}
      \liminf_{n \to \infty}
      d_{Y}(h(x,a_{n}), h(x',a'_{n}))
      \geq
      d_{Y}(\mathrm{tr}_{X,Y}^{Z}(f)(x),
      \mathrm{tr}_{X,Y}^{Z}(f)(x'))
      > d_{X}(x,x').
    \end{equation*}
    From this, we see that there exists $N \geq 0$
    such that for all $n \geq N$,
    \begin{equation*}
      d_{Y}(h(x,a_{n}), h(x',a'_{n})) > d_{X}(x,x').
    \end{equation*}
    Then, it follows from non-expansiveness of $f$
    that for all $n \geq N$, we have
    \begin{align*}
      d_{X}(x,x') + d_{Z}(a_{n}, a'_{n})
      &\geq
      d_{Y}(h(x,a_{n}), h(x',a'_{n}))
      + d_{Z}(a_{n + 1}, a'_{n + 1}) \\
      &\geq
      d_{X}(x,x')
      + d_{Z}(a_{n + 1}, a'_{n + 1}).
    \end{align*}
    Hence, since $d_{X}(x,x')$ is finite,
    we have
    \begin{equation*}
      d_{Z}(a_{n}, a'_{n})
      \geq
      d_{Z}(a_{n+1}, a'_{n+1})
    \end{equation*}
    for all $n \geq N$.
    Now, we obtain
    \begin{equation*}
      d_{Z}(
      a_{\infty},a'_{\infty}
      )
      = d_{Z}\left(\bigvee_{n \geq N} a_{n},
        \bigvee_{n \geq N} a'_{n}\right)
      \leq d_{X}(a_{N},a_{N}') < \infty.
    \end{equation*}
    Since
    \begin{equation*}
      d_{X}(x,x') + d_{Z}(a_{\infty},a'_{\infty})
      \geq
      d_{Y}
      (\mathrm{tr}_{X,Y}^{Z}(f)(x),
      \mathrm{tr}_{X,Y}^{Z}(f)(x'))
      +
      d_{Z}
      (a_{\infty},a'_{\infty}),
    \end{equation*}
    we have
    \begin{equation*}
      d_{X}(x,x') \geq d_{Y}
      (\mathrm{tr}_{X,Y}^{Z}(f)(x),
      \mathrm{tr}_{X,Y}^{Z}(f)(x')),
    \end{equation*}
    which contradicts the assumption.
  \end{proof}}

Now, we can apply the Int-construction to $\metcppo$
and obtain a symmetric monoidal closed category
$\mathbf{Int}(\metcppo)$. (In fact, what we obtain
is a compact closed category, and we only need its
symmetric monoidal closed structure to interpret
$\LL{S}$.) Objects in $\mathbf{Int}(\metcppo)$ are
pairs $X = (X_{+},X_{-})$ consisting of objects
$X_{+}$ and $X_{-}$ in $\metcppo$, and a morphism
from $X$ to $Y$ in $\mathbf{Int}(\metcppo)$ is a
morphism from $X_{+} \otimes Y_{-}$ to
$X_{-} \otimes Y_{+}$ in $\metcppo$. The identity
on $(X_{+},X_{-})$ is the symmetry
$X_{+} \otimes X_{-} \cong X_{-} \otimes X_{+}$,
and the composition of
$f \colon (X_{+},X_{-}) \to (Y_{+},Y_{-})$ is
given by
\begin{equation*}
  \tr_{X_{+} \otimes Z_{-},X_{-} \otimes Z_{+}}^{Y_{-} \otimes Y_{+}}
  \left(
    (X_{-} \otimes \theta)
    \circ
    (f \otimes g)
    \circ
    (X_{+} \otimes \theta')
  \right)
\end{equation*}
where
$\theta \colon Y_{+} \otimes Y_{-} \otimes Z_{+}
\to Z_{+} \otimes Y_{-} \otimes Y_{+}$ and
$\theta' \colon Y_{-} \otimes Y_{+} \otimes Z_{-}
\to Z_{-} \otimes Y_{-} \otimes Y_{+}$ are the
canonical isomorphisms, and we omit some coherence
isomorphisms. The symmetric monoidal closed
structure of $\mathbf{Int}(\metcppo)$ is
given as follows. The tensor unit is $(I,I)$, and
the tensor product $X \otimes Y$ is
$(X_{+} \otimes Y_{+},X_{-} \otimes Y_{-})$. The
hom-object $X \multimap Y$ is
$(X_{-} \otimes Y_{+},X_{+} \otimes Y_{-})$. For
more details on the categorical structure of
$\mathbf{Int}(\metcppo)$, see
\cite{jsv,selinger2011}.

We associate $\mathbf{Int}(\metcppo)$ with
the structure of a model of $\LL{S}$ as follows.
We define $\lfloor \mathbf{R} \rfloor$ to be
$(R,I)$, and for each $f \in S$, we define
$\lfloor f \rfloor \colon (R,I)^{\otimes
  \mathrm{ar}(f)} \to (R,I)$ by
\begin{equation*}
  R^{\otimes \mathrm{ar}(f)} \otimes I
  \xrightarrow{\cong}
  R^{\otimes \mathrm{ar}(f)}
  \xrightarrow{\text{the interpretation of $\const{f}$
      in $\metcppo$}}
  R
  \xrightarrow{\cong}
  I^{\otimes \mathrm{ar}(f)} \otimes R.
\end{equation*}
We write $\dint$ for the metric on $\LL{S}$
induced by the interactive semantic model, and we
call $\dint$ the \emph{interactive metric}.

In Figure~\ref{fig:goi}, we describe the
interpretation of $\LL{S}$ in
$\mathbf{Int}(\metcppo)$ in terms of string
diagrams. Here, we write $\psem{\tau}_{+}$ and
$\psem{\tau}_{-}$ for the positive part and the
negative part of the interpretation of $\tau$, and
we write $\psem{\Gamma \vdash M : \tau}$ for the
interpretation of a term $\Gamma \vdash M :\tau$.
See Figure~\ref{fig:smcc} (and
\cite{selinger2011}) for the meaning of string
diagrams. The interpretation
$\psem{\vdash \ast : \mathbf{I}}$ is not in
Figure~\ref{fig:goi} since
$\psem{\vdash \ast : \mathbf{I}}$ is the identity
on the unit object $I$, which is presented by zero
wires. In the interpretation of
$\const{f}(M_{1},\ldots,M_{\mathrm{ar}(f)})$, we
suppose that $\mathrm{ar}(f) = 2$ for legibility.

\begin{figure}[t]
  \centering
  \begin{tabular}{ccc}
    % \hline
    \rule{0pt}{12pt}
    $\mathrm{id}_{X} \colon X \to X$
    &
    $(g \colon Y \to Z) \circ (f \colon X \to Y)$
    &
    $f \colon X \otimes \cdots \otimes Z
    \to Y \otimes \cdots \otimes W$
    \\
    $\vcenter{\hbox{
        \begin{tikzpicture}[scale=0.5]
          \draw[->-] (0,0) node[left]
          {$\scriptstyle X$} -- ++ (2,0); %
        \end{tikzpicture}}}$ & $\vcenter{\hbox{
        \begin{tikzpicture}[scale=0.5]
          \draw (0,0) rectangle node {$f$} ++
          (1,1); %
          \draw (2,0) rectangle node {$g$} ++
          (1,1); %
          \draw[->-] (-1,0.5) node[left]
          {$\scriptstyle X$} -- ++ (1,0); %
          \draw[->-] (1,0.5) -- node[above]
          {$\scriptstyle Y$} ++ (1,0); %
          \draw[->-] (3,0.5) -- ++ (1,0)
          node[right] {$\scriptstyle Z$}; %
        \end{tikzpicture}}}$ &
    $\vcenter{\hbox{\begin{tikzpicture}[scale=0.5]
          \draw (0,0) rectangle node {$f$} ++
          (1,1.3); %
          \node[scale=0.7] at (-0.5, 0.8) {$\vdots$}; %
          \node[scale=0.7] at (1.5, 0.8) {$\vdots$}; %
          \draw[->-] (-1,0.2) node[left]
          {$\scriptstyle X$} -- ++ (1,0); %
          \draw[->-] (1,0.2) -- ++ (1,0)
          node[right] {$\scriptstyle Y$}; %
          \draw[->-] (-1,1.1) node[left]
          {$\scriptstyle Z$} -- ++ (1,0); %
          \draw[->-] (1,1.1) -- ++ (1,0)
          node[right] {$\scriptstyle W$}; %
        \end{tikzpicture}}}$
    % \rule[-18pt]{0pt}{0pt}
    \\[5pt]
    % \hline
    \rule{0pt}{12pt}
    $(f \colon X \to Y) \otimes (g \colon Z \to
    W)$ &
    $\mathrm{sym}_{X,Y} \colon X \otimes Y \to Y
    \otimes X$ &
    $\mathrm{tr}_{X,Y}^{Z}(f) \colon X \to Y$
    \\
    $\vcenter{\hbox{\begin{tikzpicture}[scale=0.4]
          \draw (0,0) rectangle node {$f$} ++
          (1,1); %
          \draw (0,1.25) rectangle node {$g$} ++
          (1,1); %
          \draw[->-] (-1,0.5) node[left]
          {$\scriptstyle X$} -- ++ (1,0); %
          \draw[->-] (1,0.5) -- ++ (1,0)
          node[right] {$\scriptstyle Y$}; %
          \draw[->-] (-1,1.75) node[left]
          {$\scriptstyle Z$} -- ++ (1,0); %
          \draw[->-] (1,1.75) -- ++ (1,0)
          node[right] {$\scriptstyle W$}; %
        \end{tikzpicture}}}$ &
    $\vcenter{\hbox{\begin{tikzpicture}[scale=0.4]
          \draw[->-] (0,0) node[left]
          {$\scriptstyle X$} -- ++ (1,0); %
          \draw[->-] (0,1) node[left]
          {$\scriptstyle Y$} -- ++ (1,0); %
          \draw (1,0) -- ++(1,1); %
          \draw (1,1) -- ++(1,-1); %
          \draw[->-] (2,0) -- ++ (1,0) node[right]
          {$\scriptstyle Y$}; %
          \draw[->-] (2,1) -- ++ (1,0) node[right]
          {$\scriptstyle X$}; %
        \end{tikzpicture}}}$ &
    $\vcenter{\hbox{\begin{tikzpicture}[scale=0.5]
          \draw (0,0) rectangle node
          {$\scriptstyle f$} ++ (1,1); %
          \draw[->-] (-2,0.25) node[left]
          {$\scriptstyle X$} -- ++ (2,0); %
          \draw[->-] (1,0.25) -- ++ (2,0)
          node[right] {$\scriptstyle Y$}; %
          \draw[->-] (-1,0.75) -- ++ (1,0); %
          \draw[->-] (1,0.75) -- ++ (1,0); %
          \draw (-1,0.75) arc(270:90:0.25); %
          \draw (2,0.75) arc(-90:90:0.25); %
          \draw[-<-] (-1,1.25) -- ++(3,0); %
        \end{tikzpicture}}}$
    % \rule[-18pt]{0pt}{0pt}
    % \\
    % \hline
  \end{tabular}
  \caption{String Diagrams for the Traced
    Symmetric Monoidal Structure}
  \label{fig:smcc}
\end{figure}
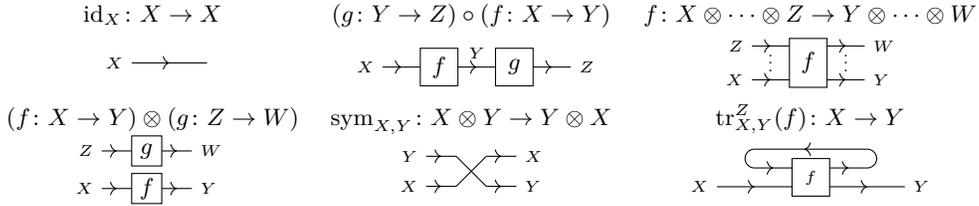

\begin{figure}[t]
  \centering
  \begin{tabular}{ccc}
    % \hline
    \rule{0pt}{10pt}
    $\psem{x:\tau \vdash x:\tau}$
    &
    $\psem{\Gamma \vdash \lambda
      x:\sigma.\,M : \sigma \multimap \tau}$
    &
    $\psem{\Gamma \# \Delta \vdash M\,N : \tau}$
    \\
    $\vcenter{\hbox{\begin{tikzpicture} %
          \begin{scope}[scale=0.4] %
            \draw[->-=0.2] (0,0) node [left]
            {$\scriptstyle \psem{\tau}_{+}$} %
            -- ++ (1,0) %
            -- ++ (1,1) %
            -- ++ (1,0); %
            \draw[->-=0.2] (0,1) node[left] %
            {$\scriptstyle \psem{\tau}_{-}$} %
            -- ++ (1,0) %
            -- ++ (1,-1) %
            -- ++ (1,0); %
          \end{scope} %
        \end{tikzpicture}}}$ %
    &
    $\vcenter{\hbox{\begin{tikzpicture} \draw
          (0,0) rectangle node {$M$} ++
          (0.5,1.25); %
          \draw[->-] (-1,0.125) node[left]
          {$\scriptstyle \psem{\Gamma}_{+}$} -- ++
          (1,0); %
          \draw[->-] (0.5,0.125) -- ++ (1,0)
          node[right]
          {$\scriptstyle \psem{\Gamma}_{-}$}; %
          \draw[->-] (-1,0.625) node[left]
          {$\scriptstyle \psem{\sigma}_{+}$} -- ++
          (1,0); %
          \draw[->-] (0.5,0.625) -- ++ (1,0)
          node[right]
          {$\scriptstyle \psem{\sigma}_{-}$}; %
          \draw[->-] (-1,1.125) node[left]
          {$\scriptstyle \psem{\tau}_{-}$} -- ++
          (1,0); %
          \draw[->-] (0.5,1.125) -- ++ (1,0)
          node[right]
          {$\scriptstyle \psem{\tau}_{+}$}; %
        \end{tikzpicture}}}$
    &
    $\vcenter{\hbox{\begin{tikzpicture}[scale=0.4]
          \begin{scope}[yshift=3cm]
            \draw (0,0) rectangle node {$M$} ++
            (1,1.5); %
            \draw[->-] (-1,0.25) -- ++ (1,0); %
            \draw[->-] (1,0.25) -- ++ (1,0); %
            \draw[->-] (-1,0.75) -- ++ (1,0); %
            \node at (-2.4,0.4)
            {$\scriptstyle \psem{\sigma}_{+}$};
            \draw[->-] (1,0.75) -- ++ (1,0);%
            \node at (3.4,0.4)
            {$\scriptstyle \psem{\sigma}_{-}$}; %
            \draw[->-] (-3,1.25) node[left]
            {$\scriptstyle \psem{\tau}_{-}$} -- ++
            (3,0); %
            \draw[->-] (1,1.25) -- ++ (3,0)
            node[right]
            {$\scriptstyle \psem{\tau}_{+}$}; %
            \draw (2,0.25) -- ++ (1,-0.75); %
            \draw[->-] (3,-0.5) -- ++ (1,0)
            node[right]
            {$\scriptstyle \psem{\Delta}_{-}$}; %
            \draw (-1,0.25) -- ++ (-1,-0.75); %
            \draw[-<-] (-2,-0.5) -- ++ (-1,0)
            node[left]
            {$\scriptstyle \psem{\Delta}_{+}$}; %
          \end{scope}
          \begin{scope}
            \draw (0,1) rectangle node {$N$} ++
            (1,1); %
            \draw[->-] (-3,1.25) node[left]
            {$\scriptstyle \psem{\Gamma}_{+}$} --
            ++ (3,0); %
            \draw[->-] (1,1.25) -- ++ (3,0)
            node[right]
            {$\scriptstyle \psem{\Gamma}_{-}$}; %
            \draw[->-] (-1,1.75) -- ++ (1,0);
            \node at (-2,1.8)
            {$\scriptstyle \psem{\sigma}_{-}$}; %
            \draw[->-] (1,1.75) -- ++ (1,0);
            \node at (3,1.8)
            {$\scriptstyle \psem{\sigma}_{+}$}; %
          \end{scope}
          \draw (-1,1.75) arc(270:90:0.25); %
          \draw[-<-] (-1,2.25) -- ++ (1,0); %
          \draw (0,2.25) -- ++ (1,0.5); %
          \draw[-<-] (1,2.75) -- ++ (1,0); %
          \draw (2,2.75) arc(-90:90:0.5); %
          \draw (2,1.75) arc(-90:90:0.25); %
          \draw[->-] (2,2.25) -- ++ (-1,0); %
          \draw (1,2.25) -- ++ (-1,0.5); %
          \draw[->-] (0,2.75) -- ++ (-1,0); %
          \draw (-1,2.75) arc(270:90:0.5); %
        \end{tikzpicture}}}$
    \\[25pt]
    % \hline
    \rule{0pt}{10pt}
    $\psem{\vdash \const{a} : \mathbf{R}}$
    &
    $\psem{\Gamma\#\Delta \vdash
      \const{f}(M,N):\mathbf{R}}$
    &
    $\psem{\Gamma \# \Delta \vdash M \otimes N :
      \tau \otimes \sigma}$ 
    \\
    $\vcenter{\hbox{
        \begin{tikzpicture}[scale=0.4]
          \draw (0,0) rectangle node {$a$}
          ++(1,1); %
          \draw[->-] (1,0.5) -- ++ (1,0)
          node[right] {$R$}; %
        \end{tikzpicture}
      }}$
    &
    $\vcenter{\hbox{\begin{tikzpicture}[scale=0.4]
          \begin{scope}[yshift=3cm]
            \draw (0,0) rectangle node {$N$} ++
            (1,1.5); %
            \draw[->-] (-1,0.25) -- ++ (1,0); %
            \draw[->-] (1,0.25) -- ++ (1,0); %
            \draw[->-] (1,1.25) -- ++ (2.5,0)
            node[above left] {$\scriptstyle R$}; %
            \draw (2,0.25) -- ++ (0.5,-0.75); %
            \draw[->-] (2.5,-0.5) -- ++ (3,0)
            node[right]
            {$\scriptstyle \psem{\Delta}_{-}$}; %
            \draw (-1,0.25) -- ++ (-0.5,-0.75); %
            \draw[-<-] (-1.5,-0.5) -- ++ (-1,0)
            node[left]
            {$\scriptstyle \psem{\Delta}_{+}$}; %
          \end{scope}
          \begin{scope}
            \draw (0,1.25) rectangle node {$M$} ++
            (1,1.5); %
            \draw[->-] (-2.5,1.5) node[left]
            {$\scriptstyle \psem{\Gamma}_{+}$} --
            ++ (2.5,0); %
            \draw[->-] (1,1.5) -- ++ (4.5,0)
            node[right]
            {$\scriptstyle \psem{\Gamma}_{-}$}; %
            \draw[->-] (1,2.5) -- ++ (1,0); %
            \draw (2,2.5) -- ++ (0.5,0.75); %
            \draw[->-] (2.5,3.25) -- ++ (1,0)
            node[above left] {$\scriptstyle R$}; %
          \end{scope}
          \begin{scope}[xshift=3.5cm,yshift=3cm]
            \draw (0,0) rectangle node {$f$} ++
            (1,1.5); %
            \draw[->-] (1,0.75) -- ++ (1,0) node
            [right] {$\scriptstyle R$}; %
          \end{scope}
        \end{tikzpicture}}}$
    &
    $\vcenter{\hbox{\begin{tikzpicture}[scale=0.4]
          \begin{scope}[yshift=3cm]
            \draw (0,0) rectangle node {$N$} ++
            (1,1.5); %
            \draw[->-] (-1,0.25) -- ++ (1,0); %
            \draw[->-] (1,0.25) -- ++ (1,0); %
            \draw[->-] (-3,1.25) node[left]
            {$\scriptstyle \psem{\sigma}_{-}$} --
            ++ (3,0); %
            \draw[->-] (1,1.25) -- ++ (3,0)
            node[right]
            {$\scriptstyle \psem{\sigma}_{+}$}; %
            \draw (2,0.25) -- ++ (1,-0.75); %
            \draw[->-] (3,-0.5) -- ++ (1,0)
            node[right]
            {$\scriptstyle \psem{\Delta}_{-}$}; %
            \draw (-1,0.25) -- ++ (-1,-0.75); %
            \draw[-<-] (-2,-0.5) -- ++ (-1,0)
            node[left]
            {$\scriptstyle \psem{\Delta}_{+}$}; %
          \end{scope}
          \begin{scope}
            \draw (0,1.25) rectangle node {$M$} ++
            (1,1.5); %
            \draw[->-] (-3,1.5) node[left]
            {$\scriptstyle \psem{\Gamma}_{+}$} --
            ++ (3,0); %
            \draw[->-] (1,1.5) -- ++ (3,0)
            node[right]
            {$\scriptstyle \psem{\Gamma}_{-}$}; %
            \draw[->-] (-1,2.5) -- ++ (1,0); %
            \draw[->-] (1,2.5) -- ++ (1,0); %
            \draw (2,2.5) -- ++ (1,0.75); %
            \draw[->-] (3,3.25) -- ++ (1,0)
            node[right]
            {$\scriptstyle \psem{\tau}_{+}$}; %
            \draw (-1,2.5) -- ++ (-1,0.75); %
            \draw[-<-] (-2,3.25) -- ++ (-1,0)
            node[left]
            {$\scriptstyle \psem{\tau}_{-}$}; %
          \end{scope}
        \end{tikzpicture}}}$
    \\[25pt]
    % \hline
    \rule{0pt}{10pt}
    &
    $\psem{\Gamma \# \Delta \vdash
      \letin{\ast}{M}{N} : \tau}$
    &
    $\psem{\Gamma \# \Delta \vdash \letin{x
        \otimes y}{M}{N} : \sigma}$
    \\
    & $\vcenter{\hbox{
        \begin{tikzpicture}[scale=0.4]
          \begin{scope}[yshift=3cm]
            \draw (0,0) rectangle node {$N$} ++
            (1,1.5); %
            \draw[->-] (-2,1.25) node[left]
            {$\scriptstyle \psem{\tau}_{-}$} -- ++
            (2,0); %
            \draw[->-] (1,1.25) -- ++ (2,0)
            node[right]
            {$\scriptstyle \psem{\tau}_{+}$}; %
            \draw[->-] (-2,0.25) node[left]
            {$\scriptstyle \psem{\Delta}_{-}$} --
            ++ (2,0); %
            \draw[->-] (1,0.25) -- ++ (2,0)
            node[right]
            {$\scriptstyle \psem{\Delta}_{+}$}; %
          \end{scope}
          \begin{scope}
            \draw (0,1.75) rectangle node {$M$} ++
            (1,1); %
            \draw[->-] (-2,2.25) node[left]
            {$\scriptstyle \psem{\Gamma}_{+}$} --
            ++ (2,0); %
            \draw[->-] (1,2.25) -- ++ (2,0)
            node[right]
            {$\scriptstyle \psem{\Gamma}_{-}$}; %
          \end{scope}
        \end{tikzpicture}}}$ &
    $\vcenter{\hbox{\begin{tikzpicture}[scale=0.4]
          \begin{scope}[yshift=3cm]
            \draw (0,0) rectangle node {$M$} ++
            (1,1.5); %
            \draw[->-] (-1,0.25) -- ++ (1,0); %
            \draw[->-] (1,0.25) -- ++ (1,0); %
            \draw[->-] (-1,0.75) -- ++ (1,0); %
            \node at (-3,0.4)
            {$\scriptstyle \psem{\tau_{1} \otimes
                \tau_{2}}_{+}$}; \draw[->-]
            (1,0.75) -- ++ (1,0);%
            \node at (4,0.4)
            {$\scriptstyle \psem{\tau_{1} \otimes
                \tau_{2}}_{-}$}; %
            \draw[->-] (-3,1.25) node[left]
            {$\scriptstyle \psem{\tau}_{-}$} -- ++
            (3,0); %
            \draw[->-] (1,1.25) -- ++ (3,0)
            node[right]
            {$\scriptstyle \psem{\tau}_{+}$}; %
            \draw (2,0.25) -- ++ (1,-0.75); %
            \draw[->-] (3,-0.5) -- ++ (1,0)
            node[right]
            {$\scriptstyle \psem{\Delta}_{-}$}; %
            \draw (-1,0.25) -- ++ (-1,-0.75); %
            \draw[-<-] (-2,-0.5) -- ++ (-1,0)
            node[left]
            {$\scriptstyle \psem{\Delta}_{+}$}; %
          \end{scope}
          \begin{scope}
            \draw (0,1) rectangle node {$N$} ++
            (1,1); %
            \draw[->-] (-3,1.25) node[left]
            {$\scriptstyle \psem{\Gamma}_{+}$} --
            ++ (3,0); %
            \draw[->-] (1,1.25) -- ++ (3,0)
            node[right]
            {$\scriptstyle \psem{\Gamma}_{-}$}; %
            \draw[->-] (-1,1.75) -- ++ (1,0); %
            \draw[->-] (1,1.75) -- ++ (1,0);
          \end{scope}
          \draw (-1,1.75) arc(270:90:0.25); %
          \draw[-<-] (-1,2.25) -- ++ (1,0); %
          \draw (0,2.25) -- ++ (1,0.5); %
          \draw[-<-] (1,2.75) -- ++ (1,0); %
          \draw (2,2.75) arc(-90:90:0.5); %
          \draw (2,1.75) arc(-90:90:0.25); %
          \draw[->-] (2,2.25) -- ++ (-1,0); %
          \draw (1,2.25) -- ++ (-1,0.5); %
          \draw[->-] (0,2.75) -- ++ (-1,0); %
          \draw (-1,2.75) arc(270:90:0.5); %
        \end{tikzpicture}}}$
    % \hline
  \end{tabular}
  \caption{The Interpretation of $\LL{S}$ in
    $\mathbf{Int}(\metcppo)$}
  \label{fig:goi}
\end{figure}

\section{Finding Your Way Around the Zoo}
\label{sec:finding-your-way}
% \mymarginnote{Give the inclusions and separation
%   results we know.}

We describe how admissible metrics on $\LL{S}$ in
this paper are related. Below, for metrics
$d =\{d_{\Gamma,\tau}\}_{\Gamma \in \env,\tau \in
  \type}$ and
$d' = \{d'_{\Gamma,\tau}\}_{\Gamma \in \env,\tau
  \in \type}$ on $\LL{S}$, we write $d \leq d'$
when for all terms $\Gamma \vdash M : \tau$ and
$\Gamma \vdash N : \tau$, we have
$d_{\Gamma,\tau}(M,N) \leq d'_{\Gamma,\tau}(M,N)$.
We write $d < d'$ when we have $d \leq d'$ and
$d \neq d'$. Our main results are about the
relationships between the various metrics on
$\LL{S}$ illustrated in Figure~\ref{fig1}.

\begin{theorem}\label{thm:main}
  The following inclusions hold.
  \begin{enumerate}
  \item For any admissible metric
    $d$ on $\LL{S}$, we have
    $\dext = \dobs \leq d \leq \deq$. \label{item:1}%
    \longversion{
    \item If a metric $d$ on $\LL{S}$ satisfies
      (A1) and $\dobs \leq d \leq \deq$, then $d$
      is admissible. \label{item:2}}
  \item $\dext = \dobs \leq \dden < \dint \leq \deq$.
    \label{item:3}
  \end{enumerate}
\end{theorem}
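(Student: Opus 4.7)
For Part~(1), the equality $\dext = \dobs$ is exactly Theorem~\ref{thm:ctx-ext}. For $\dobs \leq d$ with $d$ an arbitrary admissible metric, given a context $C[-] \colon (\Gamma,\tau) \to (\varnothing,\mathbf{R}^{\otimes n} \otimes \sigma)$ with $C[M] \hookrightarrow \const{a_1} \otimes \cdots \otimes \const{a_n} \otimes V$ and $C[N] \hookrightarrow \const{b_1} \otimes \cdots \otimes \const{b_n} \otimes U$, I would first observe that the reduction relation is derivable in the equational theory (a straightforward induction on reduction rules against the $\beta$-rules, let-rules, and the constant-symbol rule), so by (A4), $d(C[M],\const{a_1} \otimes \cdots \otimes V) = 0$ and similarly for $C[N]$. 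The triangle inequality together with (A3) then give $|a_1 - b_1| + \cdots + |a_n - b_n| \leq d(C[M], C[N])$, which by (A1) is at most $d(M,N)$; taking the supremum over $C[-]$ yields $\dobs(M,N) \leq d(M,N)$. For $d \leq \deq$, I would induct on the derivation of $\Gamma \vdash M \approx_r N : \tau$, checking $d(M,N) \leq r$: the three base rules correspond exactly to (A4), (A2), and (A1), while symmetry and transitivity come from $d$ being a metric.

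For Part~(2), Proposition~\ref{prop:adm_M} guarantees that both $\dden$ and $\dint$ are admissible, so Part~(1) immediately delivers $\dobs \leq \dden$, $\dint \leq \deq$, and $\dext = \dobs$. The remaining content is the sandwich $\dden \leq \dint$ together with strictness. For $\dden \leq \dint$ I would construct, by induction on types $\tau$, a non-expansive collapsing map
\begin{equation*}
  \pi_{\tau} \colon \mathbf{Int}(\metcppo)(I,\psem{\tau}) \to \metcppo(I,\sem{\tau}),
\end{equation*}
using the trace structure of $\metcppo$ provided by Proposition~\ref{prop:trace} to close the feedback loops that the Int-construction leaves open. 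One then verifies by induction on the term formation rules of Figure~\ref{fig:typing_rules} that $\pi_{\tau}(\psem{M}) = \sem{M}$ for every closed $\vdash M:\tau$; non-expansiveness of $\pi_{\tau}$ plus this commutation then yields $\dden(M,N) \leq \dint(M,N)$. The main technical obstacle is exactly here: a careful case analysis comparing the trace-based composition of $\mathbf{Int}(\metcppo)$ against the direct compositional interpretation in $\metcppo$ is needed, with the most delicate cases being application and let-destructors where the string diagrams in Figure~\ref{fig:goi} bend back on themselves.

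Finally, strictness requires exhibiting at least one witnessing pair, for which I would use the construction already hinted at in Section~\ref{sec:interaction-metric}. Assuming $S$ contains a constant function $c \colon \mathbb{R} \to \mathbb{R}$, the terms $L_a = \lambda k:\mathbf{R} \multimap \mathbf{R}.\, \const{c}(k\,\const{a})$ for $a \in \{0,1\}$ have identical denotational interpretations (the constant value $c$, not depending on $k$), so $\dden(L_0,L_1) = 0$. In the interactive model, by contrast, their string-diagrammatic interpretations differ precisely in the query $\const{0}$ versus $\const{1}$ that they dispatch to $k$, and a direct use of (A2) together with the trivial composition reading gives $\dint(L_0,L_1) \geq 1 > 0$.
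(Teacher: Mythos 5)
Your treatment of Part~(1) matches the paper's: the chain $\sum_i|a_i-b_i|\leq d(C[M],C[N])\leq d(M,N)$ obtained from (A4), (A3) and (A1), and the induction on derivations of $\Gamma\vdash M\approx_r N:\tau$ using (A4), (A2), (A1) for the three base rules, are exactly the paper's argument; your explicit remark that evaluation is subsumed by the equational theory is what the paper compresses into ``(A3)$+$(A4)''. Deriving $\dobs\leq\dden$ and $\dint\leq\deq$ from admissibility of the two semantic metrics is also the paper's route, and your separation witness for $\dden\neq\dint$ is sound, though it assumes a constant function in $S$ whereas the paper's witness $k:\mathbf{R}\multimap\mathbf{I}\vdash k\,\const{2}$ versus $k\,\const{3}$ needs no assumption on $S$. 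Note also that the long version of the theorem contains an additional item (any metric satisfying (A1) with $\dobs\leq d\leq\deq$ is admissible) which you do not address; it follows quickly from the sandwich at type $\mathbf{R}$ together with $\dobs\leq d$ and $d\leq\deq$.

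The genuine gap is in your proof of $\dden\leq\dint$, which is the heart of the theorem and occupies an entire section of the paper. A non-expansive collapse $\pi_\tau\colon\mathbf{Int}(\metcppo)(I,\psem{\tau})\to\metcppo(I,\sem{\tau})$ satisfying $\pi_\tau(\psem{M})=\sem{M}$ would indeed suffice, but it cannot be built by induction on types as you propose. To define $\pi_{\sigma\multimap\rho}(f)$ as a \emph{total} element of $\sem{\sigma}\multimap\sem{\rho}$ --- and to bound $\dden$, whose supremum at higher types ranges over \emph{all} points of $\sem{\sigma}$, not merely over collapses of strategies --- you would need a section $\epsilon_\sigma\colon\sem{\sigma}\to\mathbf{Int}(\metcppo)(I,\psem{\sigma})$ turning arbitrary denotational points into interactive strategies. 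No such section exists: at $\sigma=(\mathbf{R}\multimap\mathbf{R})\multimap\mathbf{R}$ a point of $\psem{\sigma}$ is essentially a single-query strategy $k\mapsto h(k(q))$, while $\sem{\sigma}$ contains non-expansive continuous functionals such as $F(k)=\min(k(0),k(1))$ that interrogate their argument twice and are not the collapse of any strategy. This mismatch is precisely why the interactive model is more discriminating, and why the authors remark that the known monoidal coreflection between $\mathbf{Int}(\metcppo)$ and $\metcppo$ does not obviously settle the question. The paper's actual proof is syntactic: Proposition~\ref{prop:wires} and Corollary~\ref{cor:goiterms} show that for $\beta$-normal $M$ the interpretation $\psem{M}$ decomposes as a tensor of first-order ``int-terms'' $H^M_j$, so that $\dint(M,N)=\sum_j\dden(H^M_j,H^N_j)$; then, for each argument $\vec a\in\sem{\Gamma}$, one builds a hybrid chain $M=M_0,\dots,M_k=N$ replacing one int-term at a time, bounds each step by the denotational distance of the corresponding pair of int-terms, and telescopes with the triangle inequality; the general case reduces to $\beta$-normal forms via $\dden(M,M^{*})=0$ and $\dint(M^{*},N^{*})\leq\dint(M,N)$. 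Some argument of this kind --- one that controls $\sem{M}$ on non-definable, non-collapsible arguments --- is required; the collapsing-map plan as stated would fail at exactly the step you flag as the main obstacle.
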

\longshortversion{
  \begin{proof}
    (Proof of (\ref{item:1})) We first show that
    $\dobs \leq d$. For any 
    $(n,\sigma,C[-]) \in \mathcal{K}(\Gamma,\tau)$, if
    $C[M] \hookrightarrow \const{a_{1}} \otimes
    \cdots \otimes \const{a_{n}} \otimes V$ and
    $C[N] \hookrightarrow \const{b_{1}} \otimes
    \cdots \otimes \const{b_{n}} \otimes U$, then
    \begin{equation*}
      \sum_{1 \leq i \leq n}
      |a_{i} - b_{i}|
      \overset{\text{(A3)}+\text{(A4)}}{\leq}
      d_{\varnothing,\mathbf{R}^{\otimes n}\otimes \sigma}
      (C[M],C[N])
      \overset{\text{(A1)}}{\leq}
      d_{\Gamma,\tau}(M,N).    
    \end{equation*}
    By the definition of $\dobs$, we obtain
    $\dobs \leq d$. We next show that
    $d \leq \deq$. We can inductively show that if
    $\Gamma \vdash M \approx_{r} N : \tau$, then
    $d_{\Gamma,\tau}(M,N) \leq r$. In the
    induction step for
    $\Gamma \vdash M \approx_{0} N:\tau$, we use
    (A4). In the induction step for
    $\vdash \const{a} \approx_{r} \const{b}:
    \mathbf{R}$, we use (A2). In the induction
    step for
    $\Delta \vdash C[M] \approx_{r} C[N]: \sigma$,
    we use (A1). By the definition of
    $\deq_{\Gamma,\tau}(M,N)$, we obtain
    $d_{\Gamma,\tau}(M,N) \leq
    \deq_{\Gamma,\tau}(M,N)$. (Proof of
    (\ref{item:2})) We check that $d$ satisfies
    (A2), (A3) and (A4). The condition (A2) holds
    because
    $|a-b|= \dobs_{\varnothing,\mathbf{R}}
    (\const{a},\const{b}) \leq
    d_{\varnothing,\mathbf{R}}(\const{a},\const{b})
    \leq
    \deq_{\varnothing,\mathbf{R}}(\const{a},\const{b})
    = |a-b|$. (A3) follows from
    $\dobs_{\Gamma,\tau} \leq d_{\Gamma,\tau}$.
    (A4) follows from
    $d_{\Gamma,\tau} \leq \deq_{\Gamma,\tau}$.
    (Proof of (\ref{item:3})) The inequalities
    $\dobs \leq \dden$ and $\dint \leq \deq$
    follow from \eqref{item:3}. The proof of the
    strict inequality $\dden < \dint$ is deferred
    to the next section.
  \end{proof}
}{The first claim in the main theorem states that
  the observational metric is the least admissible
  metric and the equational metric is the greatest
  admissible metric. In the proof, the conditions
  (A1), (A2), (A3) and (A4) in the definition of
  admissibility play different roles. While
  $\dobs \leq d$ follows from (A1), (A3) and (A4),
  $d \leq \deq$ follows from (A1), (A2) and (A4).
  In the long version \cite{longversion}, we also
  show the converse of this statement. Namely, if
  a metric $d$ on $\LL{S}$ satisfies (A1) and
  $\dobs \leq d \leq \deq$, then $d$ is
  admissible. This implies the notion of
  admissibility captures reasonable class of
  metrics on $\LL{S}$. The second claim in the
  main theorem is what illustrated in
  Figure~\ref{fig1}. The inequalities
  $\dobs \leq \dden$ and $\dint \leq \deq$ follow
  from the first claim; the proof of the strict
  inequality $\dden < \dint$ is deferred to the
  next section.}

Concrete metrics in-between $\dobs$ and $\deq$ are
useful to calculate $\dobs$ and $\deq$. For
example, it is not easy to \emph{directly} prove
$\deq_{(k:\mathbf{R} \multimap
  \mathbf{I}),\mathbf{I}}
(k\,\const{2},k\,\const{3}) \geq 1$ since we need
to know that \emph{whenever}
$k:\mathbf{R} \multimap \mathbf{I} \vdash
k\,\const{2} \approx_{r} k\,\const{3}:\mathbf{I}$
is derivable, we have $r \geq 1$. Let us give a
semantic proof for the inequality
$\deq_{(k:\mathbf{R} \multimap
  \mathbf{I}),\mathbf{I}}
(k\,\const{2},k\,\const{3}) \geq 1$. Here, we use
the interactive semantic model. The
interpretations of these terms in the interactive
semantic model are
\begin{equation*}
  \begin{tikzpicture}[scale=0.35]
    \draw[dotted,->-=0.15] (0,0) node[left] {$I$} %
    -- ++ (1,0) -- ++ (1,1.5) -- ++ (2,0) %
    node[right] {$I$};%
    \node[rect] at (2.5,0) (0) {$2$};%
    \draw[->-] (0.east) -- ++ (1,0) %
    node[right] {$R$};%
  \end{tikzpicture}, \qquad
  \begin{tikzpicture}[scale=0.35]
    \draw[dotted,->-=0.15] (0,0) node[left] {$I$} %
    -- ++ (1,0) -- ++ (1,1.5) -- ++ (2,0) %
    node[right] {$I$};%
    \node[rect] at (2.5,0) (0) {$3$};%
    \draw[->-] (0.east) -- ++ (1,0) %
    node[right] {$R$};%
  \end{tikzpicture}
\end{equation*}
where we can \emph{directly} see the values applied
to $k$. Hence, we obtain
$\dint_{(k:\mathbf{R} \multimap
  \mathbf{I}),\mathbf{I}}
(k\,\const{2},k\,\const{3}) = 1$. Then, the claim
follows from $\dint \leq \deq$.

\longversion{
  By applying Theorem~\ref{thm:main} to
  $\dden$, we can show admissibility of $\dobs$ and
  $\deq$.
  \begin{corollary}\label{cor:adm}
    The metrics $\dext=\dobs$ and $\deq$ on
    $\LL{S}$ are admissible.
  \end{corollary}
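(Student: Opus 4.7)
The plan is to route both admissibility claims through the denotational metric $\dden$, which Proposition~\ref{prop:adm_M} has already shown to be admissible. Instantiating Theorem~\ref{thm:main}(\ref{item:1}) with $d = \dden$ yields the sandwich $\dobs \leq \dden \leq \deq$, and this chain does most of the work for conditions (A2), (A3), and (A4) at both endpoints. No circularity arises: the proof of Theorem~\ref{thm:main}(\ref{item:1}) uses only admissibility of the generic metric $d$, never admissibility of $\dobs$ or $\deq$.

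For $\dobs$ (equivalently $\dext$, by Theorem~\ref{thm:ctx-ext}), condition (A1) is immediate from the supremum definition: given any outer context $D[-]\colon(\Delta,\sigma)\to(\varnothing,\mathbf{R}^{\otimes n}\otimes\rho)$, the composite $D[C[-]]$ is a witness for $\dobs_{\Gamma,\tau}(M,N)$, so the supremum defining $\dobs_{\Delta,\sigma}(C[M],C[N])$ is bounded above by the one defining $\dobs_{\Gamma,\tau}(M,N)$. For (A2) and (A3), the lower bounds come directly from the trivial context $[-]$, while the matching upper bounds follow from $\dobs \leq \dden$ combined with (A2) and (A3) for $\dden$. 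Condition (A4) reduces to (A4) for $\dden$ via the same inequality $\dobs \leq \dden$.

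For $\deq$, conditions (A1) and (A4) are built into the derivation rules of Figure~\ref{fig:qeq}: the context rule gives (A1), and the rule $\Gamma \vdash M = N : \tau \Rightarrow \Gamma \vdash M \approx_0 N : \tau$ gives (A4). For (A2), the axiom $\vdash \const{a} \approx_{|a-b|} \const{b} : \mathbf{R}$ supplies the upper bound $\deq_{\varnothing,\mathbf{R}}(\const{a},\const{b}) \leq |a-b|$, and the matching lower bound together with the full statement of (A3) follow from $\dden \leq \deq$ and the corresponding conditions for $\dden$.

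The main obstacle is bookkeeping rather than mathematical depth: one must be careful not to invoke Theorem~\ref{thm:main}(\ref{item:1}) at $d = \dobs$ or $d = \deq$, since their admissibility is precisely what we are proving. The essential external input is the admissibility of $\dden$ furnished by Proposition~\ref{prop:adm_M}, which provides an admissible metric obtained independently from the syntactic ones and thereby closes the argument.
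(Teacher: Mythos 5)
Your proposal is correct and follows essentially the same route as the paper: both arguments sandwich the syntactic metrics via the admissible denotational metric, instantiating Theorem~\ref{thm:main}(\ref{item:1}) at $d=\dden$ to obtain $\dobs\leq\dden\leq\deq$, and then use $\dobs\leq\dden$ to settle (A4) for $\dobs$ and $\dden\leq\deq$ to settle (A2)/(A3) for $\deq$, the remaining conditions having been checked directly in Propositions~\ref{prop:ext-adm} and~\ref{prop:deq}. Your explicit remark that no circularity arises (since the proof of Theorem~\ref{thm:main}(\ref{item:1}) only uses admissibility of the generic $d$) is a point the paper leaves implicit, but the substance of the argument is identical.
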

  \begin{proof}
    We first show admissibility of $\dobs$. As we
    mentioned in the proof of
    Proposition~\ref{prop:ext-adm}, it remains to
    check that $\dobs$ satisfies (A4). When
    $\Gamma \vdash M = N : \tau$, then by
    Theorem~\ref{thm:main}, we obtain
    $\dobs_{\Gamma,\tau}(M,N) \leq
    \dden_{\Gamma,\tau}(M,N) = 0$. Hence,
    $\dext_{\Gamma,\tau}(M,N) =
    \dobs_{\Gamma,\tau}(M,N) = 0$. We next show
    admissibility of $\deq$. As we mentioned in
    the proof of Proposition~\ref{prop:deq}, it
    remains to check that $\deq$ satisfies (A2)
    and (A3). Since for all
    $a,b \in \bbR$, we have
    \begin{equation*}
      |a-b| \leq \dden_{\varnothing,\mathbf{R}}
      (\const{a},\const{b})
      \leq \deq_{\varnothing,\mathbf{R}}(\const{a},\const{b}).
    \end{equation*}
    Hence, $\deq$ satisfies (A2). (A3) can be
    checked in the same way.
  \end{proof}

  As for semantic metrics, we have the following
  separation results.
  \begin{proposition}
    If $S = \emptyset$, then we have $\dobs < \dden$
    and $\dobs < \dint$.
  \end{proposition}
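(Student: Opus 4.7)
The plan is to exhibit a single pair of open terms witnessing both strict inequalities simultaneously. Take $\Gamma = (k_1 : \mathbf{R} \multimap \mathbf{R},\, k_2 : \mathbf{R} \multimap \mathbf{R})$ and
\begin{equation*}
  \Gamma \vdash M = k_1\,(k_2\,\const{0}) : \mathbf{R},\qquad
  \Gamma \vdash N = k_2\,(k_1\,\const{0}) : \mathbf{R},
\end{equation*}
two ``compositions at zero'' in opposite orders. I aim to show $\dobs(M,N) = 0$ while $\dden(M,N) = \infty$; the strict inequality $\dobs < \dden$ is then immediate, and combined with the bound $\dden \leq \dint$ contained in Theorem~\ref{thm:main}~(\ref{item:3}) it also yields $\dobs < \dint$.

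For the observational side, I would first prove an identity lemma for $\LL{\emptyset}$: every closed value $V$ of type $\mathbf{R} \multimap \mathbf{R}$ satisfies $V\,\const{a} \hookrightarrow \const{a}$ for every $a \in \mathbb{R}$. Writing $V = \lambda x:\mathbf{R}.\,M'$, one proceeds by structural induction on $M'$, using the type-size equality $|x:\mathbf{R}| = |\mathbf{R}| = 1$ in the style of Proposition~\ref{aprop:two_to_one_arg}: absent function symbols, no subterm of $M'$ can introduce a fresh real constant, because such a constant would create an extra $\mathbf{R}$-typed resource that linearity neither allows to discard nor to combine with $x$. Granting this lemma, for any closed values $V_1, V_2$ of type $\mathbf{R} \multimap \mathbf{R}$ we obtain $V_1(V_2\,\const{0}) \hookrightarrow \const{0}$ and $V_2(V_1\,\const{0}) \hookrightarrow \const{0}$; a direct unfolding of $\simeq_0$ at the type $(\mathbf{R} \multimap \mathbf{R}) \multimap (\mathbf{R} \multimap \mathbf{R}) \multimap \mathbf{R}$ then gives $\dext_{\Gamma,\mathbf{R}}(M,N) = 0$, and Theorem~\ref{thm:ctx-ext} supplies $\dobs_{\Gamma,\mathbf{R}}(M,N) = 0$.

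For the denotational side I would compute directly in $\metcppo$: the interpretations are $\sem{M}(k_1,k_2) = k_1(k_2(0))$ and $\sem{N}(k_1,k_2) = k_2(k_1(0))$. Instantiating with the translation $k_1(x) = x + C$ and the negation $k_2(x) = -x$, each extended by sending $\bot$ to $\bot$, both are strict, non-expansive and continuous functions on $\mathbb{R} \cup \{\bot\}$. This yields $\sem{M}(k_1,k_2) = C$ and $\sem{N}(k_1,k_2) = -C$, so the pointwise distance at this particular $(k_1,k_2)$ equals $2C$; sending $C \to \infty$ forces the supremum defining $\dden(M,N)$ to be $\infty$.

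The main obstacle is the identity lemma invoked in the observational step. The intuition is transparent---without arithmetic one cannot manufacture fresh reals, and linearity forbids discarding $\mathbf{R}$-typed variables---but a rigorous proof requires a careful induction on the derivation of $x:\mathbf{R} \vdash M':\mathbf{R}$, leveraging the invariant $|\Gamma| \leq |\tau|$ from Proposition~\ref{aprop:two_to_one_arg} to rule out let-bindings and applications whose sub-derivations would mismatch the resource count. Everything else reduces to a direct computation in $\metcppo$ or an appeal to results already established in the paper.
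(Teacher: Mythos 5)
Your proposal is correct, but it takes a genuinely different route from the paper. The paper works with the environment $\Gamma = (f : \mathbf{R}^{\otimes 2} \multimap \mathbf{R})$ and the pair $\const{0} \otimes (f\,\const{0}\,\const{0})$ versus $\const{1} \otimes (f\,\const{0}\,\const{0})$: since $|\mathbf{R}^{\otimes 2} \multimap \mathbf{R}| = -1$, the type $\mathbf{R}^{\otimes 2}\multimap\mathbf{R}$ has \emph{no} closed inhabitants when $S=\emptyset$ (this is exactly what Proposition~\ref{aprop:two_to_one_arg} already establishes), so the clause of the logical relation at that function type is vacuous and $\dobs=\dext=0$ for free, while $\dden=1$ because the two terms differ in a directly observable constant. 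Your example instead lives at the \emph{inhabited} type $\mathbf{R}\multimap\mathbf{R}$, so you must prove the stronger ``identity lemma'' that every closed value of that type behaves as the identity; this is the one piece you leave at sketch level. It is true and provable along the lines you indicate: since $|x:\mathbf{R}|=|\mathbf{R}|$ is \emph{tight} and every typing rule only increases the slack in the invariant $|\Gamma|\leq|\tau|$, no constant rule (which contributes slack $0\leq 1$) can occur in a derivation of $x:\mathbf{R}\vdash M':\mathbf{R}$; hence $M'[\const{a}/x]$ contains the single constant $\const{a}$, and as reduction in $\LL{\emptyset}$ creates no new constants, its value must be $\const{a}$. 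What your approach buys is a more vivid separation (composition order is observationally invisible but denotationally has distance $\infty$, rather than $1$ versus $0$) and a clean reduction of the $\dint$ case to the inequality $\dden\leq\dint$ from Theorem~\ref{thm:main}; what it costs is the extra lemma, which the paper's choice of an uninhabited argument type avoids entirely. Minor point: for the conclusion $\dobs<\dden$ you should say explicitly that $\dobs\leq\dden$ holds globally by Theorem~\ref{thm:main}\,(\ref{item:1}), so that exhibiting one pair with $\dobs=0<\dden$ suffices.
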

  \begin{proof}
    We only check the statement $\dobs < \dden$.
    The other strict inequality can be checked in
    the same way. Since $\dobs \leq \dden$, we
    only need to check they are different. Let
    $\Gamma$ be
    $(f : \mathbf{R}^{\otimes 2} \multimap
    \mathbf{R})$. Then, we have
    \begin{equation*}
      \dden_{\Gamma,\mathbf{R}^{\otimes 2}}(
      \const{0} \otimes (f\,\const{0}\,\const{0}),
      \const{1} \otimes (f\,\const{0}\,\const{0}))
      = 1.
    \end{equation*}
    On the other hand, as we observed in the proof
    of Proposition~\ref{aprop:two_to_one_arg},
    there is no closed term of type
    $\mathbf{R}^{\otimes 2} \multimap \mathbf{R}$.
    Hence,
    $\dobs_{\Gamma,\mathbf{R}^{\otimes 2}}(
    \const{0} \otimes (f\,\const{0}\,\const{0}),
    \const{1} \otimes (f\,\const{0}\,\const{0}))
    =
    \dext_{\Gamma,\mathbf{R}^{\otimes 2}}(
    \const{0} \otimes (f\,\const{0}\,\const{0}),
    \const{1} \otimes (f\,\const{0}\,\const{0}))
    = 0$.
  \end{proof}

  \begin{proposition}
    We have
    $\dden_{(k : \mathbf{R} \multimap
    \mathbf{I}),\mathbf{I}}(k\,\const{0},
    k\,\const{1}) = 0$ and
    $\dint_{(k : \mathbf{R} \multimap
    \mathbf{I}),\mathbf{I}}(k\,\const{0},
    k\,\const{1}) = 1$.
    In particular,
    $\dint \not\leq \dden$.
  \end{proposition}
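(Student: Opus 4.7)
The proof splits into two independent computations, one for each metric.

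For $\dden$, the key observation is that $\sem{\mathbf{I}}^{\metcppo} = I$ is the tensor unit of $\metcppo$, which by construction has a one-point underlying set. Hence the hom-set $\metcppo(X, I)$ contains a unique morphism for every $X$, so $\sem{k\,\const{0}}^{\metcppo}$ and $\sem{k\,\const{1}}^{\metcppo}$ coincide as morphisms $\sem{\mathbf{R} \multimap \mathbf{I}}^{\metcppo} \to I$, and their distance is therefore $0$.

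For $\dint$ I would directly compute the interactive interpretations and read off their distance from the $\mathbf{Met}$-enrichment. Writing $R = \sem{\mathbf{R}}^{\metcppo}$, the types unfold as $\psem{\mathbf{R}} = (R, I)$, $\psem{\mathbf{I}} = (I, I)$, and, using the Int-construction formula for the linear arrow,
\begin{equation*}
\psem{\mathbf{R} \multimap \mathbf{I}} = (\psem{\mathbf{R}}_- \otimes \psem{\mathbf{I}}_+,\ \psem{\mathbf{R}}_+ \otimes \psem{\mathbf{I}}_-) \cong (I, R).
\end{equation*}
A morphism $\psem{\mathbf{R} \multimap \mathbf{I}} \to \psem{\mathbf{I}}$ in $\mathbf{Int}(\metcppo)$ is thus a morphism $I \otimes I \to R \otimes I$ in $\metcppo$, i.e., simply an element of $R$. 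Using the diagram for application in Figure~\ref{fig:goi}, the interpretation of $k : \mathbf{R} \multimap \mathbf{I} \vdash k\,\const{a} : \mathbf{I}$ is computed as follows: the box for $\const{a}$ emits $a$ on the wire $\psem{\mathbf{R}}_+ = R$, which is routed through the evaluation loop to the negative $R$-wire supplied by the context variable $k$, while every other wire carries the trivial object $I$ and cancels. The net effect is the morphism $I \to R$ picking out $a$.

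Since the metric on $\metcppo(I, R) \cong R$ coincides with $d_R$, we conclude
\begin{equation*}
\dint_{(k : \mathbf{R} \multimap \mathbf{I}),\mathbf{I}}(k\,\const{0}, k\,\const{1}) = d_R(0, 1) = 1,
\end{equation*}
from which $\dint \not\leq \dden$ is immediate. The main subtlety lies in unfolding $\psem{k\,\const{a}}$: one must verify that the trace introduced by the application rule collapses to ordinary composition in this special case. This happens because the component being traced, namely $\psem{\mathbf{R}}_- = I$, is the tensor unit, so no genuine feedback loop arises and the computation reduces to the single $R$-wire carrying the constant $a$.
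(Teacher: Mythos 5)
Your proposal is correct and follows essentially the same route as the paper: for $\dden$ one observes that both interpretations land in the one-point unit object $I$ (the paper phrases this as $\sup_{k} d(k(0),k(1))=0$), and for $\dint$ one unfolds the Int-construction to find $\psem{k\,\const{a}} = a \colon I \to R$, giving distance $|0-1|=1$. Your extra care about the trace degenerating because the traced component is the unit $I$ is a correct elaboration of what the paper leaves implicit.
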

  \begin{proof} We have
    $\dden_{(k : \mathbf{R} \multimap \mathbf{I}),
      \mathbf{I}}(k\,\const{0}, k\,\const{1}) =
    \sup_{k \colon \lfloor \mathbf{R} \rfloor \to
      I} d(k(0),k(1)) = 0$. On the other hand, for
    $a \in \bbR$, we have
    $\psem{k\,\const{a}} = a \colon I \to R$.
    Hence,
    $\dint_{(k : \mathbf{R} \multimap
      \mathbf{I}),\mathbf{I}}(k\,\const{0},
    k\,\const{1}) = 1$.
  \end{proof}}

\section{Comparing the Two Denotational
  Viewpoints}

In this section we show that, by passing from $\mathbf{MetCppo}$ to the interactive model via the Int-construction, one obtains a more discriminative metric.
In other words, our goal is to establish that $\dden< \dint$.

In this section, 
beyond the evaluation relation defined in Section 3, we will make reference to the standard $\beta$-reduction and $\beta$-equivalence relations on $\Lambda_{S}$. Indeed, the two semantics we are considering behave differently with respect to these relations: for $\beta$-equivalent terms $M,N$, while their interpretations in $\mathbf{MetCppo}$ coincide (and thus $\dden(M,N)=0$), this needs not be the case in the interactive model.

Let us start by making the interactive metric more explicit. 
Notably, in the case of $\beta$-normal terms, computing distances in $\mathbf{Int}(\mathbf{MetCppo})$ can be reduced to computing  distances in $\mathbf{MetCppo}$ as follows: a morphism from $\Gamma$ to $\sigma$ in $\mathbf{Int}(\mathbf{MetCppo})$ is a morphism in $\mathbf{MetCppo}$ from $\psem{\Gamma}_{+}\otimes \psem{\sigma}_{+}$ to $\psem{\Gamma}_{-}\otimes \psem{\sigma}_{+}$, where these two objects correspond to tensors of the form $\mathbf U\otimes \dots \otimes \mathbf U$, with $\mathbf U\in \{\mathbf I, \mathbf R\}$, 
%modulo the isomorphisms $\mathbf I\otimes\mathbf  R\simeq \mathbf R\otimes \mathbf I\simeq \mathbf R$, with tensors of the form 
%$\mathbf R\otimes \dots \otimes \mathbf R$. 
More precisely, with any list of types $\Gamma$ one can associate two natural numbers $\Gamma^{+},\Gamma^{-}$ defined inductively by
$(\emptyset)^{+}= (\emptyset)^{-}  =0$, $(\mathbf U*\Gamma)^{+}=1+\Gamma^{+}$, $ (\mathbf U*\Gamma)^{-}=\Gamma^{-}$, 
$(\sigma\multimap \tau*\Gamma)^{+}= \sigma^{-}+ \tau^{+} + \Gamma^{+}$, $ 
(\sigma\multimap \tau* \Gamma)^{-}= \sigma^{+}+ \tau^{-}+ \Gamma^{-}$,
$(\sigma\otimes \tau*\Gamma)^{+}=\sigma^{+}+ \tau^{+}+ \Gamma^{+}$,   $
(\sigma\otimes \tau*\Gamma)^{-} = \sigma^{-}+ \tau^{-}+ \Gamma^{-}$.
Then one has the following:

%
%$U_{1}\otimes \dots \otimes U_{m}$, $V_{1}\otimes \dots\otimes V_{n}$, where $U_{i}, V_{j}$ are either $I$ or $ R$; now, one
%has the following:
\begin{proposition}[first-order int-terms]\label{prop:wires}
Let $M,N$ be $\beta$-normal terms such that $\Gamma\vdash M,N:\sigma$ and let $m=\Gamma^{+}+ \sigma^{-}$, $n=\Gamma^{-}+ \sigma^{+}$.
Then there exist \emph{first-order} linear terms $H^{M}_{1},\dots, H^{M}_{n}$, depending on variables $x_{1},\dots, x_{m}$, and a partition $I_{1},\dots, I_{m}$ of $\{1,\dots, m\}$ such that:
\begin{itemize}
\item $\Gamma_{j}\vdash H^{M}_{j}: \mathbf U$, for all $j=1,\dots,n$, where $\Gamma_{j}=\{ x_{l}:\mathbf U\mid  l\in I_{j}\}$, with $\mathbf U\in \{\mathbf I, \mathbf R\}$;
\item $\sem{M}^{\mathbf{Int}(\mathbf{MetCppo})}= \bigotimes_{j}\sem{H^{M}_{j}}^{\mathbf{MetCppo}}$.
\end{itemize}

\end{proposition}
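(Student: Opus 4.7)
The proof should proceed by induction on the structure of the $\beta$-normal term $\Gamma \vdash M:\sigma$, exploiting the string-diagram presentation of $\psem{-}$ in $\mathbf{Int}(\metcppo)$ recalled in Figure~\ref{fig:goi}. A $\beta$-normal term in $\LL{S}$ is either a variable, a constant $\const{a}$ or $\ast$, an abstraction $\lambda x.M'$, a tensor $M'\otimes N'$, a function symbol application $\const{f}(M_1,\dots,M_k)$, a let-binding destructing a neutral term, or an application spine $h\,M_1\ldots M_k$ whose head $h$ is a variable. The plan is to show that in each case the Int-interpretation can be assembled as a tensor of first-order components in $\metcppo$, with the partition of inputs being dictated by the linear usage of variables in $M$.

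For the base cases, the axiom $x:\tau\vdash x:\tau$ is interpreted by the symmetry $\psem{\tau}_{+}\otimes \psem{\tau}_{-}\to \psem{\tau}_{-}\otimes \psem{\tau}_{+}$; as a morphism between tensors of base objects this is a permutation of wires, and one can read off the first-order components $H^{M}_{j}=x_{\pi(j)}$ together with singleton partitions $I_{j}$. Constants and $\ast$ contribute a trivial constant component. For $\lambda x.M'$, the interpretation only relabels the wires belonging to $x$ from inputs to outputs, so the decomposition supplied by the induction hypothesis on $M'$ directly yields one for the abstraction; for a tensor $M'\otimes N'$, the two decompositions combine disjointly. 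For $\const{f}(M_{1},\dots,M_{k})$, inspection of its diagram shows that the $\mathbf R$-valued output of each $M_{i}$ is fed into an $\const{f}$-node producing a single new $\mathbf R$ wire, so one combines the components of the $M_{i}$'s and introduces the new component $\const{f}(H^{M_{1}}_{j_{1}},\dots,H^{M_{k}}_{j_{k}})$, whose context is the disjoint union of the contexts of these head components.

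The delicate cases are the eliminations: applications with variable head $h\,M_{1}\ldots M_{k}$ and let-bindings destructing such a neutral term. Here Figure~\ref{fig:goi} exhibits traces that plug the $\sigma$-wires of $h$ against the $\sigma$-wires of its arguments. The crucial observation is that $h$ being a variable, its own interpretation is a symmetry of wires, so the traces only reroute wires rather than create genuine feedback loops; hence each output wire of the composite can be tracked through $h$ back to a disjoint subset of input wires coming either from the ambient context or from exactly one of the $M_{i}$'s. The induction hypothesis on each $M_{i}$, together with linearity (no variable of $\Gamma$ is shared across the $M_{i}$'s), then guarantees that the resulting partition $I_{1},\dots,I_{n}$ of $\{1,\dots,m\}$ is indeed a partition and that each new $H^{M}_{j}$ is obtained as a composition of first-order components of the $M_{i}$'s. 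The main obstacle is the bookkeeping in this neutral case: one needs an explicit description of how the first-order components of the arguments get chained together through $h$ to produce the outputs. A clean way to handle this is to first isolate an auxiliary claim about neutral terms, showing that their Int-interpretation admits a decomposition in which the wiring through the head variable can be read off directly, and then use this lemma uniformly for both the application and the let-binding cases.
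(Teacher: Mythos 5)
Your proposal is correct and follows essentially the same route as the paper: a structural induction on the $\beta$-normal term with exactly this case split (variable as a wire permutation, constants, abstraction reusing the components of the body, tensor and $\const{f}$ combining components disjointly, and the head-variable application/let cases where the trace merely reroutes wires through the head symmetry, so components of the arguments are chained without feedback). The paper carries out the index bookkeeping for the neutral case explicitly inline rather than via a separate lemma, but the underlying argument is the same.
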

\longversion{
\begin{proof}
\begin{itemize}
\item if $M=x$, then $\Gamma=\{x:\sigma\}$, so $m=\sigma^{+}+\sigma^{-}$ and $n=\sigma^{-}+\sigma^{+}$, hence the variables $\alpha_{1},\dots, \alpha_{m}$ can be split as 
$\beta_{1},\dots, \beta_{\sigma^{+}}, \gamma_{1},\dots, \gamma_{\sigma^{-}}$, and we let, for $i\leq \sigma^{-}$, $H^{M}_{i}=\gamma_{i}$, and for $i\geq \sigma^{+}$, $H^{M}_{\sigma^{-}+i}=\alpha_{i}$;

\item if $M=\star$, then $\Gamma=\emptyset$ and $n=1$, and we let $H^{M}_{1}=\star$;

\item if $M=\overline a$, then $\Gamma=\emptyset$ and $n=1$, and we let $H^{M}_{1}=\overline a$;

\item if  $M=\overline f(M_{1},\dots, M_{k})$, then 
there is a partition $J_{1},\dots, J_{k}$ of $1,\dots, m$, so that $\Gamma_{l}\vdash M_{l}:\mathbf U$, where $\Gamma_{l}$ only contains the variables $\alpha_{r}$ with $r\in J_{l}$. Moreover, we have that $m=\Gamma^{+}+\mathbf U^{-}=\Gamma^{+}= \sum_{l}(\Gamma_{l})^{+}$ and $n=\Gamma^{-}+\mathbf R^{+}=
\sum_{l}(\Gamma_{l})^{-}+1$. 
We thus define $H^{M}_{i}$ as follows:
\begin{itemize}
\item if $i=\sum_{l=1}^{m}\Gamma_{l}^{-}+j$, with $m< k$ and $j\leq \Gamma_{{m+1}}^{-}$, then $H^{M}_{i}=H^{M_{m+1}}_{j}$;

\item if $i=\sum_{l}^{j}\Gamma_{l}^{-}+1$, then $H^{M}_{i}=
\overline f(H^{M_{1}}_{\Gamma_{{1}}^{-}+1},\dots, H^{M_{k}}_{\Gamma_{{k}}^{-}+1})$.

\end{itemize}

\item if $M=\lambda x.M'$, then the $H^{M}_{i}$ are defined like the $H^{M'}_{i}$.

\item if $M=xM_{1}\dots M_{k}$, then there is a partition $J_{1},\dots, J_{k}$ of $\rho^{+}+1,\dots, m$ such that $
\Gamma= \{x: \rho\}+\sum_{l=1}\Gamma_{l}$,
with $\Gamma_{l}$ containing only the variables $\alpha_{s}$, for $s\in J_{l}$, and
 where 
$\rho=\sigma_{1}\multimap \dots \multimap \sigma_{k}\multimap \sigma$ and
$\Gamma_{l}\vdash M_{l}:\sigma_{l}$. Then $m=\rho^{+}+\sum_{l}\Gamma_{l}^{+}+\sigma^{-}=
\sum_{l}\sigma_{l}^{-}+ \sigma^{+}+\sum_{l}\Gamma_{l}^{+}+\sigma^{-}$, so the variables $\alpha_{1},\dots, \alpha_{m}$ can be identified with the variables occurring in all the terms $H^{M_{l}}_{l}$ plus new variables $\beta_{s}$ for any negative occurrence in $\sigma$ and $\gamma_{r}$ for any positive occurrence in $\sigma$; moreover, $n=\rho^{-}+\sum_{l}\Gamma_{l}^{-}+\sigma^{+}=
\sum_{l}\sigma_{l}^{+} + \sigma^{-} + \sum_{l}\Gamma_{l}^{-}+\sigma^{+}$. So we define the terms $H^{M}_{i}$ as follows:
	\begin{itemize}
	\item for $i=\sum_{l=1}^{m}\sigma_{l}^{+} + j$ for $m<k$ and $j\leq \sigma_{m+1}^{+}$, $H^{M}_{i}= H^{M_{m+1}}_{\Gamma_{m+1}^{-}+j}$;
	\item for $i= \sum_{l}\sigma_{l}^{+} + s$, for $s\leq \sigma^{-}$, $H^{M}_{i}=\beta_{s}$;
	\item for $i=\sum_{l}\sigma^{+}_{l}+\sigma^{-}+
	\sum_{l=1}^{m}\Gamma_{l}^{-} + j$, for $m<k$ and $j\leq \Gamma_{m+1}^{-}$, $H^{M}_{i}=H^{M_{m-1}}_{j}$;
	\item for $i=\sum_{l}\sigma_{l}^{+}+\sigma^{-}+\sum_{l}\Gamma_{l}^{-}+ r$, for $r\leq \sigma^{+}$, $H^{M}_{i}=\gamma_{r}$.

	\end{itemize}

\item if $M=M_{1}\otimes M_{2}$, then $\sigma=\sigma_{1}\otimes \sigma_{2}$ and $\Gamma$ splits as $\Gamma_{1}+\Gamma_{2}$, with $\Gamma_{1}\vdash M_{1}:\sigma_{1}$ and $\Gamma_{2}\vdash M_{2}:\sigma_{2}$.
Then $m=\Gamma_{1}^{+}+\Gamma_{2}^{+}+\sigma_{1}^{-}+\sigma_{2}^{-}$ and $n=\Gamma_{1}^{-}+\Gamma_{2}^{-}+\sigma_{1}^{+}+\sigma_{2}^{+}$, so we define $H^{M}_{i}$ as follows:
	\begin{itemize}
	\item if $i\leq \Gamma_{1}^{-}$, then $H^{M}_{i}= H^{M_{1}}_{i}$;
	\item if $i= \Gamma_{1}^{-}+j$, with $j\leq \Gamma_{2}^{-}$, then $H^{M}_{i}=H^{M_{2}}_{j}$;
	\item if $i=\Gamma^{-}+j$, with $j\leq \sigma_{1}^{+}$, then $H^{M}_{i}=H^{M_{1}}_{\Gamma^{-}_{1}+j}$;
		\item if $i=\Gamma^{-}+\sigma_{1}^{+}+j$, with $j\leq \sigma_{2}^{+}$, then $H^{M}_{i}=H^{M_{2}}_{\Gamma^{-}_{2}+j}$.

	\end{itemize}

\item if $M=\mathbf{let} \ \star \ \mathbf{be} \ M  \ \mathbf{in} \ N$, then the definition goes as for $(\lambda x.N)M$;

\item if $M=\mathbf{let} \ x\otimes y \ \mathbf{be} \ M  \ \mathbf{in} \ N$, then the definition goes as for
$(\lambda x.N)M$.

\end{itemize}

That $\sem{M}^{\mathbf{Int}(\mathbf{MetCppo})}= \bigotimes_{j}\sem{H^{M}_{j}}^{\mathbf{MetCppo}}$ can easily be checked by induction on $M$.
\end{proof}
}
Intuitively, the variables occurring in the left-hand of $\Gamma_{j}\vdash H^{M}_{j}:\mathbf U$ correspond to the left-hand ``wires'' of the string diagram representation of $\sem{M}^{\mathbf{Int}(\mathbf{MetCppo})}$, and the first-order 
 term $H^{M}_{j}$ describes what exits from  $i$-th right-hand ``wire'' of $\sem{M}^{\mathbf{Int}(\mathbf{MetCppo})}$. 
 
\begin{example}\label{ex:goiterms}
Let $M=\overline f(x(y\overline 0),z\overline 2)$ and $N=\overline g(x(z\overline 1),y\overline 3)$, so that $\Gamma \vdash M,N : \mathbf R$, where 
 $\Gamma= \{x:\mathbf R\multimap \mathbf R, y:\mathbf R\multimap \mathbf R,z:\mathbf R\multimap \mathbf R\}$. 
 The string diagram representations of $M$ and $N$, with the associated int-terms,
 are illustrated in Fig.~\ref{fig:examplewires}. \end{example}
 
  \begin{figure}
 \begin{center}
 \begin{tabular}{c c  c}
 $\vcenter{\hbox{\begin{tikzpicture}[scale=0.5]
 
 \node[draw, rectangle](f) at (0,0.5) {$   f$};
 
 \draw[->-] (-1.5,0.5) node[left=2mm] {$ x$} to 
 (-1.1,0.7) to (-0.4,0.7);
  \draw[->-] (-1.5,-0.5) node[left=2mm] {$ y$} to node[right=7mm] {$ y$} (1.5,-0.5);
   \draw[->-] (-1.5,-1.5) node[left=2mm] {$ z$} to (-1.1,-1.5) to (-0.4,0.3);
   
  \draw[->-] (0.4, 0.5) to node[right=2mm] {$  f(x,z)$} (1.5,0.5); 
  
  \node[draw, rectangle](zero) at (0,-1.5) {$  0$};
    \node[draw, rectangle](two) at (0,-2.5) {$  2$};

    \draw[->-] (zero) to node[right=2.5mm] {$ \overline 0$} (1.5,-1.5);  
        \draw[->-] (two) to node[right=2.5mm] {$ \overline 2$} (1.5,-2.5);
   
 \end{tikzpicture}}}$
 & \ \ \ \ \ \  & 
$\vcenter{\hbox{\begin{tikzpicture}[scale=0.5]
 
 \node[draw, rectangle](f) at (0,0.5) {$  g$};
 
 \draw[->-] (-1.5,0.5) node[left=2mm] {$ x$} to 
 (-1.1,0.7) to (-0.4,0.7);
  \draw[->-] (-1.5,-0.5) node[left=2mm] {$ y$} to (-0.37,0.3);
   \draw[->-] (-1.5,-1.5) node[left=2mm] {$ z$} to
   (-0.5,-0.5) 
   to node[right=4mm] {$ z$} (1.5,-0.5);

  \draw[->-] (0.4, 0.5) to node[right=2mm] {$ \overline g(x,y)$} (1.5,0.5); 
  
  \node[draw, rectangle](zero) at (0,-1.5) {$  3$};
    \node[draw, rectangle](two) at (0,-2.5) {$  1$};

    \draw[->-] (zero) to node[right=2mm] {$ \overline 3$} (1.5,-1.5);  
        \draw[->-] (two) to node[right=2mm] {$ \overline 1$} (1.5,-2.5);
   
 \end{tikzpicture}
 }}$\\
 \end{tabular}
 \end{center}
 
 \caption{String diagrams with int-terms for  
 $M=\overline f(x(y\overline 0),z\overline 2)$ and $N=\overline g(x(z\overline 1),y\overline 3)$.}
 \label{fig:examplewires}
 \end{figure}

 From Proposition \ref{prop:wires} we can now deduce the following:

\begin{corollary}\label{cor:goiterms}
For all $\beta$-normal terms $M,N$, 
$\dint(M,N)= \sum_{j=1}^{n}\dden(H^{M}_{j}, H^{N}_{j})$.
\end{corollary}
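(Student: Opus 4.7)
The proof is a direct consequence of Proposition~\ref{prop:wires} combined with the additive structure of the tensor metric in $\metcppo$. First I would observe that, by the definition of the interactive semantic model, $\dint(M,N)$ is computed as the distance in $\metcppo$ between $\sem{M}^{\mathbf{Int}(\metcppo)}$ and $\sem{N}^{\mathbf{Int}(\metcppo)}$, which are morphisms in $\metcppo$ sharing the same domain $\psem{\Gamma}_{+} \otimes \psem{\sigma}_{-}$ and the same codomain $\psem{\Gamma}_{-} \otimes \psem{\sigma}_{+}$.

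By Proposition~\ref{prop:wires}, these interpretations factor as $\bigotimes_{j=1}^{n} \sem{H^{M}_{j}}^{\metcppo}$ and $\bigotimes_{j=1}^{n} \sem{H^{N}_{j}}^{\metcppo}$, with the input variables $x_{1},\ldots,x_{m}$ distributed among the $n$ output factors according to the partition $I_{1},\ldots,I_{n}$. The bulk of the proof thus reduces to the following additivity claim: for non-expansive morphisms $f_{j}, g_{j} : \bigotimes_{l \in I_{j}} X_{l} \to Y_{j}$ in $\metcppo$ (with $\{I_{j}\}$ partitioning $\{1,\ldots,m\}$), the distance between the pairings $\langle f_{1},\ldots,f_{n}\rangle$ and $\langle g_{1},\ldots,g_{n}\rangle$ as morphisms $\bigotimes_{l} X_{l} \to \bigotimes_{j} Y_{j}$ equals $\sum_{j} d(f_{j},g_{j})$.

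This claim is verified by unfolding the sup-metric on hom-spaces and the additive metric on tensor products: the distance between the pairings is the supremum over tuples $(x_{1},\ldots,x_{m})$ of the sum $\sum_{j} d_{Y_{j}}(f_{j}(x|_{I_{j}}), g_{j}(x|_{I_{j}}))$; since the $j$-th summand depends only on the coordinates indexed by $I_{j}$ and the sets $I_{j}$ are disjoint, the supremum commutes with the finite sum, yielding $\sum_{j} \sup_{x|_{I_{j}}} d_{Y_{j}}(f_{j}(x|_{I_{j}}), g_{j}(x|_{I_{j}})) = \sum_{j} d(f_{j},g_{j})$. Combined with the factorization from Proposition~\ref{prop:wires}, this gives $\dint(M,N) = \sum_{j=1}^{n} d(\sem{H^{M}_{j}}, \sem{H^{N}_{j}}) = \sum_{j=1}^{n} \dden(H^{M}_{j}, H^{N}_{j})$.

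The main bookkeeping task will be to identify $\psem{\Gamma}_{+} \otimes \psem{\sigma}_{-}$ with $\bigotimes_{l=1}^{m} X_{l}$ in a way that respects the partition $\{I_{j}\}$, so that the independence of variables across different $I_{j}$ is genuinely available and the sup--sum interchange becomes valid. Once this identification is set up, no further subtlety concerning traces, fixed points, or the Int-construction itself arises, since all such complications have already been absorbed in the proof of Proposition~\ref{prop:wires}.
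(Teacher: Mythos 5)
Your reduction of $\dint(M,N)$ to the hom-metric distance between $\bigotimes_{j}\sem{H^{M}_{j}}^{\metcppo}$ and $\bigotimes_{j}\sem{H^{N}_{j}}^{\metcppo}$ is right, and the sup--sum interchange you invoke is indeed valid when the summands depend on pairwise disjoint blocks of coordinates. The gap is that you silently assume the partition $I_{1},\dots,I_{n}$ produced by Proposition~\ref{prop:wires} is the \emph{same} for $M$ and for $N$. That proposition yields, for each term separately, a partition recording which input wires feed which output wire, and this partition depends on the term, not just on $\Gamma$ and $\sigma$: in Example~\ref{ex:goiterms} the second int-term of $M$ is $y$ while the second int-term of $N$ is $z$, so $H^{M}_{2}$ and $H^{N}_{2}$ live over different variable sets. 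The $j$-th summand $d(\sem{H^{M}_{j}}(\vec r),\sem{H^{N}_{j}}(\vec r))$ therefore depends on $I^{M}_{j}\cup I^{N}_{j}$, and these unions need not be disjoint across $j$, so your argument only delivers
\begin{equation*}
\dint(M,N)=\sup_{\vec r}\ \sum_{j}d\bigl(\sem{H^{M}_{j}}(\vec r),\sem{H^{N}_{j}}(\vec r)\bigr)\ \le\ \sum_{j}\sup_{\vec r}\ d\bigl(\sem{H^{M}_{j}}(\vec r),\sem{H^{N}_{j}}(\vec r)\bigr)=\sum_{j}\dden(H^{M}_{j},H^{N}_{j}),
\end{equation*}
since a single $\vec r$ need not simultaneously near-maximize all summands.

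The missing step cannot be supplied without further input: take $M=\overline f(x)\otimes\overline g(y,z)$ and $N=\overline{f'}(x,y)\otimes\overline{g'}(z)$ with $f=0$, $g(y,z)=c(-y)$, $f'(x,y)=c(y)$, $g'=0$, where $c(t)=\min(\max(t,0),1)$ is a (non-expansive) clamp in $S$. Then the left-hand side is $\sup_{y}\,(c(y)+c(-y))=1$ while the right-hand side is $1+1=2$, so the interchange genuinely fails for pairs whose wirings disagree. Your proof does establish the corollary in the special case where $M$ and $N$ induce the same partition (which is what is used for the $k\overline 2$ versus $k\overline 3$ computation), and note that the identity actually exploited in the proof of Theorem~\ref{thm:metgoi} is the one with the supremum outside the sum, which follows from Proposition~\ref{prop:wires} alone and needs no interchange. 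To defend the equality in full generality you would have to either add the common-partition hypothesis or give a genuinely different argument.
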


For instance, in the case of Example \ref{ex:goiterms}, the distance $\dint(M,N)$ coincides with the sum of the 
 distances, computed in $\mathbf{MetCppo}$, between the int-terms illustrated in Fig.~\ref{fig:examplewires}.

%
%
%In Section 8 we will rather compare the $\mathbf{GoI}$-distance between $M$ and $N$ with the $\mathbf{MetCppo}$-distance between these terms, and not the associated list of first-order terms $H^{M}_{j},H^{N}_{j}$.

We can use Corollary \ref{cor:goiterms} to show that 
the equality $\dint= \dden$ cannot hold.
For instance, while $\dden_{(k:\mathbf R\multimap \mathbf I),\mathbf I}(k\overline 2, k\overline 3)=0$, 
%Let $T:= \lambda x.0: \mathbf R\multimap \mathbf R$, and 
%$M,N: (\mathbf R\multimap\mathbf  R)\multimap \mathbf R$ be 
%$M=\lambda u.T(u0)$ and $N=\lambda u.T(u1)$. 
%Then, since $M$ is $\beta$-equivalent to $N$, we have $\dden(M,N)=0$. Instead,
 by computing the int-terms 
$H^{k\overline 2}_{1}(x)=H^{k\overline 3}_{1}(x)=x$, $H^{k\overline 3}_{2}=\overline 2$, $H^{k\overline 3}_{2}=\overline 3$ we deduce $\dint_{(k:\mathbf R\multimap \mathbf I),\mathbf I}(k\overline 2, k\overline 3)=0+1=1$.

It remains to prove then that $\dden\leq \dint$.  

\begin{theorem}\label{thm:metgoi}
For all $M,N$ such that $\Gamma \vdash M,N:\sigma$ holds, 
$\dden(M,N)\leq \dint(M,N)$.
\end{theorem}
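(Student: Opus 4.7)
The strategy is to exhibit, for each environment $\Gamma$ and each type $\sigma$, a non-expansive \emph{decoder}
\[
\Phi_{\Gamma,\sigma} \colon \mathbf{Int}(\metcppo)(\psem{\Gamma},\psem{\sigma}) \to \metcppo(\sem{\Gamma}^{\metcppo},\sem{\sigma}^{\metcppo})
\]
satisfying $\Phi_{\Gamma,\sigma}(\psem{M}) = \sem{M}^{\metcppo}$ for every term $\Gamma \vdash M : \sigma$. Once such a $\Phi_{\Gamma,\sigma}$ is in place, the theorem is immediate: both $\dden$ and $\dint$ are by construction the hom-space metrics applied to the corresponding interpretations, so non-expansiveness of $\Phi_{\Gamma,\sigma}$ gives $\dden(M,N) = d(\Phi(\psem{M}),\Phi(\psem{N})) \leq d(\psem{M},\psem{N}) = \dint(M,N)$.

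The map $\Phi_{\Gamma,\sigma}$ is built via a standard Geometry-of-Interaction execution recipe, adapted to $\metcppo$. For each type $\tau$ one defines, by induction on $\tau$, a pair of ``unfolding'' morphisms in $\metcppo$ mediating between $\sem{\tau}^{\metcppo}$ and the interactive data $(\psem{\tau}_{+},\psem{\tau}_{-})$: at $\mathbf R$ and $\mathbf I$ they are essentially identities; on $\otimes$ they act componentwise; at $\sigma \multimap \rho$ the trace operator of $\metcppo$ (Proposition~\ref{prop:trace}) is used to feed answers back into queries, thereby turning an interactive strategy on $\psem{\sigma \multimap \rho}$ into an actual function $\sem{\sigma}^{\metcppo} \to \sem{\rho}^{\metcppo}$. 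The map $\Phi_{\Gamma,\sigma}$ is then obtained by plugging its argument into a fixed composite built from these unfolding morphisms together with identities, symmetries, tensors, composition and trace. Since each of these operations is non-expansive in the $\mathbf{Met}$-enriched category $\metcppo$, so is $\Phi_{\Gamma,\sigma}$.

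Correctness of the decoder, namely $\Phi_{\Gamma,\sigma}(\psem{M}) = \sem{M}^{\metcppo}$, would be proved by induction on the typing derivation of $M$. The variable, constant and $\ast$ cases follow from the base-case unfoldings. For the compound constructors ($\otimes$, $\lambda$-abstraction, application, and the two $\mathbf{let}$-bindings) the argument reduces to matching the string-diagrammatic recipes of Figure~\ref{fig:goi} with the corresponding denotational computations in $\metcppo$: the feedback loops introduced by Int-composition cancel, via the trace axioms (yanking, sliding, tightening) of the traced category $\metcppo$, against the unfoldings introduced by $\Phi$.

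The main obstacle is precisely this correctness verification at the application and $\mathbf{let}$-cases, where both sides rely on nontrivial feedback structure. Reconciling them amounts to the Geometry-of-Interaction \emph{execution identity}: the unfolding of the Int-composition of two morphisms equals the ordinary composition of their unfoldings in $\metcppo$. Proving this requires careful trace-operator manipulations, and is conceptually the same bookkeeping that already underpins Proposition~\ref{prop:wires} and Corollary~\ref{cor:goiterms}. Indeed, when specialised to $\beta$-normal terms, $\Phi_{\Gamma,\sigma}$ evaluates the tensor $\bigotimes_j \sem{H^{M}_{j}}^{\metcppo}$ by a sequence of traces yielding $\sem{M}^{\metcppo}$, and its joint non-expansiveness in the inputs $(H^{M}_{j})_{j}$ directly reproves the bound $\dden(M,N) \leq \sum_{j} \dden(H^{M}_{j},H^{N}_{j}) = \dint(M,N)$, which by a triangle-inequality reduction using admissibility-derived $\beta$-invariance of both metrics extends to all terms.
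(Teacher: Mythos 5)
Your reduction to $\beta$-normal forms and the closing triangle-inequality bookkeeping match the paper, but the central object you rely on --- a non-expansive decoder $\Phi_{\Gamma,\sigma}\colon \mathbf{Int}(\metcppo)(\psem{\Gamma},\psem{\sigma})\to\metcppo(\sem{\Gamma}^{\metcppo},\sem{\sigma}^{\metcppo})$ with $\Phi(\psem{M})=\sem{M}^{\metcppo}$ --- is not available in the form you need, and this is exactly where the difficulty of the theorem lives. The issue is that $\dden(M,N)$ is a supremum over \emph{all} of $\sem{\Gamma}^{\metcppo}$, whereas an interactive strategy exchanges only first-order data: $\psem{\Gamma}_{+}$ and $\psem{\Gamma}_{-}$ are tensors of copies of $R$ and $I$. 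To evaluate $\Phi(g)$ at a denotational argument you must first encode that argument through this first-order interface and afterwards decode, and at higher-order types encode-then-decode is not the identity on $\sem{\tau}^{\metcppo}$: already for $\tau=(\mathbf R\multimap\mathbf R)\multimap\mathbf R$ the interface $(R\otimes R,R)$ offers a single query--answer wire towards the functional argument, while $(R\multimap R)\multimap R$ in $\metcppo$ contains continuous non-expansive functionals (e.g.\ ones averaging two independent probes of their argument) that no feedback of the form $\phi\mapsto h(\mathrm{lfp}(\phi\circ q))$ can reproduce. Consequently your correctness induction fails at its very first case: $\Phi(\psem{x})$ is $\mathrm{decode}\circ\mathrm{encode}$, not the identity, so $\Phi(\psem{x})\neq\sem{x}^{\metcppo}$ for higher-order $\sigma$, and the $\lambda$- and application cases inherit the defect. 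The paper itself signals this obstruction: its conclusion notes that the known coreflection between $\mathbf{Int}(\metcppo)$ and $\metcppo$ has not been made to yield information at the level of higher-order programs, and the authors even doubt that the theorem survives graded exponentials --- which it certainly would if a compositional non-expansive decoder existed.

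The paper's actual proof avoids encoding denotational arguments altogether. For a fixed $\vec a\in\sem{\Gamma}^{\metcppo}$ it evaluates the ground-type application subterms $\phi_{j}M$ of $M$ under $\vec a$ to obtain reals $\vec r$, and then constructs a purely syntactic interpolating chain $M=M_{0},\dots,M_{k}=N$ in which each step replaces a single positive subterm $H^{M}_{\rho(i)}[\dots]$ by the corresponding $H^{N}_{\rho(i)}[\dots]$, the order of replacements being controlled by a bijection between positions of $M$ and $N$ and a well-founded order that makes every replacement well defined and forces $M_{k}=N$. Each step is bounded by $\dden_{\mathbf R}(H^{M}_{\rho(i)}[\vec r],H^{N}_{\rho(i)}[\vec r])$, and summing against Corollary~\ref{cor:goiterms} gives $\dden(M,N)\leq\dint(M,N)$. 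Your final paragraph, which asserts that non-expansiveness of the assembly of the $H^{M}_{j}$ ``directly reproves'' this bound, silently assumes the decoder identity again; to make a semantic argument work you would need at least a logical relation between the two interpretations rather than a map from one hom-set to the other.
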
 
\shortversion{
\begin{proof}[Proof sketch]
It suffices to prove the claim for $M,N$ $\beta$-normal, using the fact that, if $M^{*}$ and $N^{*}$ are the $\beta$-normal forms of $M,N$, then $\dden(M,M^{*})=\dden(N,N^{*})=0$, and moreover $\dint(M^{*},N^{*})\leq \dint(M,N)$, as a consequence of the non-expansiveness of the trace operator. Recall that 
\begin{align*}
\dden(M,N)& =\sup\{\dden_{\sigma}(
\sem{M}^{\mathbf{MetCppo}}(\vec a),\sem{N}^{\mathbf{MetCppo}}(\vec a))\mid \vec a\in \sem{\Gamma}^{\mathbf{MetCppo}}\},\\ 
\dint(M,N)&=\sup\left\{\sum_{i=1}^{n}
\dden_{\mathbf R}(H^{M}_{i}[\vec r],H^{N}_{i}[\vec r])\ \Big \vert \  \vec r\in \mathbb R^{m}\right\}.
\end{align*}
For fixed $\vec a\in \sem{\Gamma}^{\mathbf{MetCppo}}$ we will construct reals $\vec r\in \mathbb R^{m}$, 
a sequence of terms $M=M_{0},\dots, M_{k}=N$, 
where $k=\Gamma^{-}+\sigma^{+}$, and a 
bijection $\rho: \{1,\dots, k\}\to \{1,\dots k\}$ such that the distance between $M_{i}[\vec a]$ and $M_{i+1}[\vec a]$ is bounded by the distance between the int-terms $H^{M}_{\rho(i+1)}[\vec r]$ and $H^{N}_{\rho(i+1)}[\vec r]$. 
In this way we can conclude by a finite number of applications of the triangular law that 
\begin{align*}
\dden_{\sigma}(M[\vec a],N[\vec a])  & \leq 
\dden_{\sigma}(M_{0}[\vec a], M_{1}[\vec a]) + \dots+
\dden_{\sigma}(M_{k-1}[\vec a], M_{k}[\vec a]) \\
  & \leq
\dden_{\mathbf R}(H^{M}_{\rho(1)}[\vec r],H^{N}_{\rho({1})}[\vec r])
+ \dots + 
\dden_{\mathbf R}(H^{M}_{\rho(k)}[\vec r],H^{N}_{\rho({k})}[\vec r])
\leq
\dint(M,N).
\end{align*}
We observe (see \cite{longversion} for more details) that:
\begin{itemize}
\item bound or free variables $x_{i}$ of $M$ are bijectively associated with subterms $\phi_{i}M$ of $M$ of the form $x_{i}\vec Q$ and with first-order variables $\alpha_{i}$;

 \item int-terms $H^{M}_{i}[\alpha_{i_{1}},\dots, \alpha_{i_{s}}]$ are bijectively associated with subterms $\psi_{i}M$ of $M$ of the form $H_{i}^{M}[\phi_{i_{1}}M,\dots, \phi_{i_{s}}M]$.  
\end{itemize}
Similar observations hold for $N$, and $\rho$ is defined so that, whenever $\psi_{i}N$ is a subterm of $\psi_{j}N$, $\phi(j)\leq \phi(i)$. 
%
%
%
%
% order $a_{1}\prec \dots \prec a_{k}$ on $1,\dots, k$ compatible with $\sqsubseteq^{M}$ (i.e.~such that $a_{i}\sqsubseteq a_{j}$ implies $a_{i}\prec a_{j}$). 
Let us set $r_{\alpha_{j}}:= \phi_{j}M[\vec a]$.
 The desired sequence is defined by letting $M_{0}=M$ and $M_{i+1}$ be obtained from $M_{i}$ by replacing the subterm $\psi_{\rho(i)}M=H^{M}_{\rho(i)}[\phi_{i_{1}}M,\dots, \phi_{i_{s}}M]$ by 
$H^{N}_{\rho(i)}[\phi_{i_{1}}M,\dots, \phi_{i_{s}}M]$. Using the properties of $\rho$, one can check that this replacement is well-defined at each step, and that $M_{k}$ actually coincides with $N$. Moreover, at each step the passage from $M_{i}$ to $M_{i+1}$ is bounded in distance by
\begin{align*} 
  \dden(H^{M}_{\rho(i)}[\dots \phi_{j}M[\vec a]\dots],H^{N}_{\rho(i)}[\dots  \phi_{j}M[\vec a]\dots])  & 
 =  
\dint(H^{M}_{\rho(i)}[\dots r_{a_{j}}\dots],H^{N}_{\rho(i)}[\dots r_{a_{j}}\dots])\\ & \leq 
\dint(H^{M}_{\rho(i)},H^{N}_{\rho(i)}).
\end{align*}
\end{proof}
}

\begin{example}
For the terms $M$ and $N$ from Example \ref{ex:goiterms}, the procedure just sketched defines the sequence:
%\begin{align*}
$M=\overline f(x(y\overline 0),z\overline 2)   \stackrel{\overline f(x,z)\mapsto \overline g(x,y)}{\to}
\overline g(x(y\overline 0),y\overline 0) \stackrel{y\mapsto z}{\to} \overline g(x(z\overline 2),y\overline 0)
\stackrel{\overline 0\mapsto \overline 3}{\to} \overline g(x(z\overline 2),y\overline 3)
\stackrel{\overline 2\mapsto\overline  1}{\to}\overline  g(x(z\overline 1),y\overline 3)= N$,
%\end{align*}
where at each step the replacement is of the form $H^{M}_{i}[\dots \varphi_{j}M\dots]\mapsto H^{N}_{i}[\dots \varphi_{j}M\dots]$.
\end{example}

While the argument above holds in the linear case, it does not seem to scale to graded exponentials, and in this last case we are not even sure if a result like Theorem \ref{thm:metgoi} may actually hold (see also the discussion in the next section).

\longversion{
The rest of this section is devoted to prove Theorem \ref{thm:metgoi}. For simplicity, we will restrict ourselves to a linear language without unit and tensor types $\mathbf I,\sigma\otimes \tau$, and without the associated term constructors. However, the argument developed below can be easily adapted to include such constructors. 
Given our restriction, we can suppose w.l.o.g.~that in Theorem \ref{thm:metgoi} the right-hand type $\sigma$ is $\mathbf R$.

Moreover, 
it suffices to prove the claim for $M,N$ $\beta$-normal, using the fact that, if $M^{*}$ and $N^{*}$ are the $\beta$-normal forms of $M,N$, then $\dden(M,M^{*})=\dden(N,N^{*})=0$, and moreover $\dint(M^{*},N^{*})\leq \dint(M,N)$, as a consequence of the non-expansiveness of the trace operator. 

Recall that 
\begin{align*}
\dden(M,N)& =\sup\{\dden_{\sigma}(
\sem{M}^{\mathbf{MetCppo}}(\vec a),\sem{N}^{\mathbf{MetCppo}}(\vec a))\mid \vec a\in \sem{\Gamma}^{\mathbf{MetCppo}}\}\\ 
\dint(M,N)&=\sup\left\{\sum_{i=1}^{n}
\dden_{\mathbf R}(H^{M}_{i}[\vec r],H^{N}_{i}[\vec r])\ \Big \vert \  \vec r\in \mathbb R^{m}\right\}
\end{align*}
For fixed $\vec a\in \sem{\Gamma}^{\mathbf{MetCppo}}$ we will construct reals $\vec r\in \mathbb R^{m}$, 
a sequence of terms $M=M_{0},\dots, M_{k}=N$, 
where $k=\Gamma^{-}+\sigma^{+}$, and a 
bijection $\rho: \{1,\dots, k\}\to \{1,\dots k\}$ such that the distance between $M_{i}[\vec a]$ and $M_{i+1}[\vec a]$ is bounded by the distance between the int-terms $H^{M}_{\rho(i+1)}[\vec r]$ and $H^{N}_{\rho(i+1)}[\vec r]$. 
In this way we can conclude by a finite number of applications of the triangular law that 
\begin{align*}
\dden_{\sigma}(M[\vec a],N[\vec a])  & \leq 
\dden_{\sigma}(M_{0}[\vec a], M_{1}[\vec a]) + \dots+
\dden_{\sigma}(M_{k-1}[\vec a], M_{k}[\vec a]) \\
  & \leq
\dden_{\mathbf R}(H^{M}_{\rho(1)}[\vec r],H^{N}_{\rho({1})}[\vec r])
+ \dots + 
\dden_{\mathbf R}(H^{M}_{\rho(k)}[\vec r],H^{N}_{\rho({k})}[\vec r])
\leq
\dint(M,N)
\end{align*}

To construct the sequence $M_{0},\dots, M_{k}$, we need a few preliminary results.

For any type $\sigma$ (or list of types $\Gamma$), we indicate as $\{\sigma^{+}\}$ (resp.~$\{\sigma\}^{-}$) the list, read from left to right, of positive (resp.~negative) atomic occurrences in $\sigma$. Observe that $\sigma^{+}$ (resp.~$\sigma^{-}$) coincides with the length of the list $\{\sigma^{+}\}$ (resp.~$\{\sigma^{-}\}$).

We will establish a few bijections, more precisely:
\begin{itemize}

\item between the elements of the list $\{\Gamma^{-}\}*\mathbf R$ and the \emph{positive subterms} of $M$ (resp.~of $N$), cf.~Def.~\ref{def:positivesubterms} below; this will allow us to associate each first-order term $H_{i}^{M}$ with a positive subterm of $M$;
\item between the elements of the list $\{\Gamma^{+}\}$ and the free and bound variables of $M$ (resp.~of $N$); this will allow us to associate each variable $x_{i}$ in $M$ with a first-order variable $\XX_{i}$ appearing in the int-terms of $M$.

\item finally, between $\{\Gamma^{-}\}*\mathbf R$ and a certain quotient over the set of variables of $M$.

\end{itemize}

\begin{notation}
In the following we use $F(\XX_{1},\dots, \XX_{n})$ and $G(\XX_{1},\dots, \XX_{n})$ to indicate linear first-order terms with free variables included in $\XX_{1},\dots, \XX_{n}$. Moreover, given terms $M_{1},\dots, M_{n}$ of type $R$, we indicate with $F(M_{1},\dots, M_{n})$ the (non necessarily first-order) term obtained by substituting $\XX_{1},\dots, \XX_{n}$ with $M_{1},\dots, M_{n}$ in $F$.
%
%We also identify a list $\Delta^{+}$ or $\Delta^{-}$ with the set $\{1,\dots, k\}$ of its elements, ordered from left to right. 
\end{notation}
%We make the assumption that our first-order signature contains all projections $\overline{\pi_{i}^{n}}(x_{1},\dots, x_{n})=x_{i}$. 

\begin{definition}\label{def:positivesubterms}
A subterm of $M$ of the form $N=F(N_{1},\dots, N_{k})$ is called a \emph{positive subterm of $M$} if 
for no other first-order function $\overline g$, $N$ occurs in $M$ in a term of the form $\overline g(P_{1},\dots, P_{r-1}, N,P_{r+1},\dots, P_{k})$.
We let $\mathsf{PS}(M)$ indicate the set of positive subterms of $M$.
Moreover, we let $\mathsf{PS}_{0}(M)\subseteq \mathsf{PS}(M)$ indicate the set of positive subterms of $M$ containing no free or bound variable. 
\end{definition}

\begin{notation}
In the following, when indicating positive subterms as 
$F(M_{1},\dots, M_{n})$ we make w.l.o.g.~the assumption that the terms $M_{1},\dots, M_{n}$ do not start with a function symbol, i.e.~are of the form $x Q_{1}\dots Q_{s}$.
\end{notation}

\begin{lemma}
There exists a bijection $\iota_{M}: \{\Gamma^{-}\}*\mathbf R \to \mathsf{PS}(M)$ (and similarly for $N$).
\end{lemma}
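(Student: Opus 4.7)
The plan is to prove the claim by structural induction on the $\beta$-normal term $M$, strengthening the statement so it holds at any right-hand type $\sigma$: for every $\beta$-normal $\Gamma \vdash M : \sigma$, there is a bijection $\iota_{M}:\{\Gamma^{-}\}*\{\sigma^{+}\}\to \mathsf{PS}(M)$; the lemma is the special case $\sigma = \mathbf{R}$. The bijection will be constructed recursively, mirroring the case analysis of Proposition~\ref{prop:wires}, so that positive subterms end up indexed by exactly the same data as the int-terms $H^{M}_{j}$, which is what is needed for the rest of the argument.

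For the bases $M = \overline{a}$ and $M = x$ at atomic type, both sides are singletons, and the bijection is forced. For the abstraction case $M = \lambda y{:}\tau_{1}.\,M'$ with $M':\tau_{2}$, one has $\mathsf{PS}(M) = \mathsf{PS}(M')$ since abstraction introduces no first-order structure; and the index lists agree because $\{(\Gamma, y{:}\tau_{1})^{-}\} * \{\tau_{2}^{+}\} = \{\Gamma^{-}\} * \{\tau_{1}^{-}\} * \{\tau_{2}^{+}\} = \{\Gamma^{-}\} * \{(\tau_{1} \multimap \tau_{2})^{+}\}$, so one sets $\iota_{M} := \iota_{M'}$.

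The central inductive case is $M = \overline{f}(M_{1},\ldots,M_{k})$ at type $\mathbf{R}$, with $\Gamma = \Gamma_{1} \# \cdots \# \Gamma_{k}$ and each $M_{l} : \mathbf{R}$. The IH furnishes bijections $\iota_{M_{l}} : \{\Gamma_{l}^{-}\} * [\mathbf{R}] \to \mathsf{PS}(M_{l})$; write $T_{l} := \iota_{M_{l}}(\mathbf{R})$ for the top positive subterm of each $M_{l}$. Embedding $M_{l}$ as an immediate argument of $\overline{f}$ makes $T_{l}$ an immediate argument of a function symbol, so $T_{l}$ loses positivity in $M$; but the $k$ lost subterms are jointly replaced by a single new top positive subterm $T_{M}$ of $M$, obtained by extending the first-order contexts rooted at the $T_{l}$ with the new outer $\overline{f}$. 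All other positive subterms of the $M_{l}$ are preserved since positivity depends only on the immediate syntactic context. Hence $\mathsf{PS}(M) = \{T_{M}\} \cup \bigcup_{l} (\mathsf{PS}(M_{l}) \setminus \{T_{l}\})$, and $\iota_{M}$ sends the trailing $\mathbf{R}$ to $T_{M}$ and otherwise factors through the $\iota_{M_{l}}$.

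The remaining case $M = x\,N_{1}\cdots N_{k}$, with $x : \tau_{1}\multimap\cdots\multimap\tau_{k}\multimap\mathbf{R}$, is structurally symmetric: since $x$ is not a function symbol, every positive subterm of each $N_{l}$ survives in $M$, and $M$ itself is a fresh top positive subterm with trivial first-order skeleton. The index list $\{\Gamma^{-}\} * [\mathbf{R}]$ must be split, using $\Gamma = (x{:}\tau_{1}\multimap\cdots\multimap\mathbf{R}) \# \Gamma_{1} \# \cdots \# \Gamma_{k}$ and $\{(\tau_{1}\multimap\cdots\multimap\mathbf{R})^{-}\} = \{\tau_{1}^{+}\} * \cdots * \{\tau_{k}^{+}\}$, into segments that are regrouped into the pairs $\{\Gamma_{l}^{-}\} * \{\tau_{l}^{+}\}$ indexing the $\iota_{N_{l}}$, with the trailing $\mathbf{R}$ assigned to $M$. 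The main obstacle I expect is this index-chasing: verifying that the reshuffling is compatible with the merge structure of $\Gamma$ and faithfully mirrors the crossings built into the int-terms of Proposition~\ref{prop:wires}. Once this is handled, the induction concludes and the same recursive procedure produces $\iota_{N}$ for $N$.
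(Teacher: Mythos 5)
Your proof is correct and follows essentially the same strategy as the paper's: induction on the $\beta$-normal form, matching positive subterms against the negative atomic occurrences of $\Gamma$ together with the positive ones of the result type. The only real difference is organizational: the paper collapses the root's maximal first-order skeleton together with its head-variable arguments into a single inductive case and recurses directly on the bodies $t'_{ij}$ of the abstractions under those head variables (so its statement never needs to leave result type $\mathbf R$), whereas you recurse one constructor at a time, which forces the strengthening to $\{\Gamma^{-}\}*\{\sigma^{+}\}\to\mathsf{PS}(M)$ for arbitrary $\sigma$; that strengthening is the standard positive/negative-occurrence bookkeeping and is exactly what makes your abstraction and head-variable cases close. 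One caveat, which your strengthened statement actually makes more visible than the paper's formulation: for a bare variable of higher type (e.g.\ $y:\mathbf R\multimap\mathbf R$ not $\eta$-expanded) one has $\lvert\mathsf{PS}(y)\rvert=1$ while the index list has $\rho^{-}+\rho^{+}=2$ elements, so both your argument and the paper's implicitly require $\eta$-long normal forms in the head-variable case; neither text states this, and it would be worth making explicit.
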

\begin{proof}
By induction on $M$:
\begin{itemize}
\item if $M=F(x_{1},\dots, x_{n})$ is a first-order term, then $\{\Gamma^{+}\}=\mathbf R*\dots *\mathbf  R$ so $\{\Gamma^{-}\}*\mathbf R=\{\mathbf R\}$, and the bijection is $\iota_{M}(1)= t$;
\item if $M= F(M_{1},\dots, M_{n})$, where 
$M_{i}=x_{i}M_{i1}\dots M_{iq_{i}}$, then let $M_{ij}= \lambda z_{1}.\dots.\lambda z_{n_{ij}}.t'_{ij}$, where for some context $\Delta_{ij}=\{z_{1}:\sigma_{ij1},\dots, z_{n_{ij}}:\sigma_{ijr_{ij}}\}$, $\Gamma_{ij}, \Delta_{ij} \vdash t'_{ij}:R$, with $\Gamma_{ij}$ being a partition of $\Gamma-\{x_{1}:\sigma_{1},\dots, x_{n}:\sigma_{n}\}$, with $\sigma_{i}= \sigma_{i1}\multimap \dots \multimap \sigma_{iq_{i}}\multimap \mathbf R$, with $\sigma_{ij}= \sigma_{ij1}\multimap \dots \multimap \sigma_{ijn_{ij}}\multimap R$; then by the I.H.~there exist bijections $\iota_{M_{ij}}$ between $\{(\Gamma_{ij},\Delta_{ij})^{-}\}$ and $\mathsf{PS}(M_{ij})$. Notice that $\mathsf{PS}(M)=\{t\}\cup \bigcup_{i,j}\mathsf{PS}(M_{ij})$. 

Now, observe that an element of $\{\Gamma^{-}\}*\mathbf R$ is either (1) the last element $\mathbf R$, (2) an element of $\{\Gamma^{-}_{ij}\}$, (3) the last element of some $\{\sigma_{ij}^{+}\}$, or (4) an element of some $\{\sigma_{ijm}^{-}\}$.
We thus obtain then a bijection $\iota_{M}:\{ \Gamma^{-}\}*\mathbf R \to \mathrm{PS}(M)$ by letting:
\begin{itemize}
\item if $l$ is the last element of $\{\Gamma^{-}\}*\mathbf R$, then 
$\iota_{M}(l)=M$;

\item if $l$ is in $\{\Gamma^{-}_{ij}\}$, $\iota_{M}(l)=\iota_{M_{ij}}(l)$;
\item if $l$ is the last element of $\{\sigma_{ij}^{+}\}$, then $\iota_{M}(l)=M'_{ij}$;

\item if $l$ is in $\{\sigma_{ijm}^{-}\}$, $\iota_{M}(l)=\iota_{M_{ij}}(l^{*})$, where $l^{*}$ is the corresponding element in $\sigma_{i}$.

\end{itemize}
\end{itemize}
\end{proof}

\begin{remark}
The lemma above actually defines a bijection between the positive subterms of $t=F(N_{1},\dots, N_{k})$ and the terms $H_{i}^{M}$ (which, as described in more detail below, are indeed of the form $F(\XX_{1},\dots, \XX_{k})$).
\end{remark}

Let $\sigma$ be a type; for any occurrence $l\in\{ \sigma^{+}\}$, let $\sigma_{l}$ indicate the unique sub-type of $\sigma$ having $l$ as its rightmost occurrence.
Intuitively, $\iota_{M}(l)$ is the positive subterm that receives type $\sigma_{l}$ in the typing of $M$.

%Notice that any term in $\mathsf{PS}(M)$ can be written as 
%$F(x_{1}M_{11}\dots M_{1n_{1}}, \dots ,x_{k}M_{k1}\dots M_{kn_{k}} )$.

Let $\Gamma=\{x_{1}:\sigma_{1},\dots, x_{n}:\sigma_{n}\}$ and let $\mathrm{V}(M)$ be the set of free and bound variables of $t$.
\begin{lemma}\label{lemma:delta}
There exists a bijection $\delta_{M}:\{\Gamma^{+}\}\to \mathrm{V}(M)$ (and similarly for $\mathrm{V}(N)$).
\end{lemma}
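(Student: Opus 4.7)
The plan is to define $\delta_M$ by structural induction on the $\beta$-normal term $M$, following exactly the same recursion scheme used in the preceding lemma for $\iota_M$. In the base case, $M$ is first-order, $M=F(x_1,\ldots,x_n)$ with $\Gamma=\{x_1:\mathbf R,\ldots,x_n:\mathbf R\}$; then $\{\Gamma^+\}$ is the list of $n$ atoms $\mathbf R$ while $\mathrm{V}(M)=\{x_1,\ldots,x_n\}$, and $\delta_M$ is the obvious positional bijection $i\mapsto x_i$.

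For the inductive step, decompose $M$ as $F(M_1,\ldots,M_n)$ with $M_i=x_i\,M_{i1}\cdots M_{iq_i}$ and each $M_{ij}=\lambda z_1.\cdots\lambda z_{n_{ij}}.t'_{ij}$, typed in $\Gamma_{ij},\Delta_{ij}$ with $\Delta_{ij}=\{z_1:\sigma_{ij1},\ldots,z_{n_{ij}}:\sigma_{ijn_{ij}}\}$ and $\{\Gamma_{ij}\}_{i,j}$ a partition of $\Gamma'=\Gamma\setminus\{x_1:\sigma_1,\ldots,x_n:\sigma_n\}$. The key type-theoretic observation is that for $\sigma_i=\sigma_{i1}\multimap\cdots\multimap\sigma_{iq_i}\multimap\mathbf R$ one has $\{\sigma_i^+\}=\{\sigma_{i1}^-\}\ast\cdots\ast\{\sigma_{iq_i}^-\}\ast\{\mathbf R\}$, and dually each $\{\sigma_{ij}^-\}$ equals $\{\sigma_{ij1}^+\}\ast\cdots\ast\{\sigma_{ijn_{ij}}^+\}=\{\Delta_{ij}^+\}$ since $\mathbf R$ is atomic and contributes only a positive occurrence. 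Consequently $\{\Gamma^+\}$ splits into three kinds of entries: (i) entries coming from $\Gamma'$, each lying in some $\{\Gamma_{ij}^+\}$; (ii) the trailing $\mathbf R$ of each $\sigma_i$; and (iii) the entries of each $\{\sigma_{ij}^-\}=\{\Delta_{ij}^+\}$.

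By induction hypothesis we have bijections $\delta_{t'_{ij}}\colon\{(\Gamma_{ij},\Delta_{ij})^+\}\to\mathrm{V}(t'_{ij})$. I define $\delta_M$ by routing each of the three kinds of entries to the appropriate component of the decomposition $\mathrm{V}(M)=\{x_1,\ldots,x_n\}\cup\bigcup_{i,j}\mathrm{V}(t'_{ij})$: entries of type (ii) are sent to the head variable $x_i$, while entries of types (i) and (iii) both lie in $\{(\Gamma_{ij},\Delta_{ij})^+\}$ for a uniquely determined pair $(i,j)$, and are sent via $\delta_{t'_{ij}}$ into $\mathrm{V}(t'_{ij})$.

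What remains is to check that the resulting map is a bijection. The partition of $\{\Gamma^+\}$ into the three kinds is immediate from the decomposition of $\{\sigma_i^+\}$ above, and that the images partition $\mathrm{V}(M)$ uses linearity of $M$: each $x_i$ appears exactly once as the head of its $M_i$ and does not occur inside any $t'_{ij}$, and the $\Gamma_{ij}$ genuinely partition $\Gamma'$, so the sets $\mathrm{V}(t'_{ij})$ are pairwise disjoint and disjoint from $\{x_1,\ldots,x_n\}$. The actual subtlety, and what I expect to be the main obstacle, is not the existence of \emph{some} bijection (a counting argument already shows $|\{\Gamma^+\}|=|\mathrm{V}(M)|$) but rather fixing a positional enumeration convention consistent with the one used for $\iota_M$, so that the two bijections $\iota_M$ and $\delta_M$ mesh properly with the forthcoming third bijection between $\{\Gamma^-\}\ast\mathbf R$ and a quotient of $\mathrm V(M)$, as required in the proof of Theorem~\ref{thm:metgoi}.
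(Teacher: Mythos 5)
Your proof is correct and follows essentially the same route as the paper's: induction on the $\beta$-normal form with the identical decomposition $M=F(M_1,\dots,M_n)$, $M_i=x_iM_{i1}\cdots M_{iq_i}$, the same three-way case split on the occurrences in $\{\Gamma^{+}\}$ (those in some $\Gamma_{ij}^{+}$, the trailing $\mathbf R$ of each $\sigma_i^{+}$ sent to the head variable $x_i$, and those in $\{\sigma_{ij}^{-}\}=\{\Delta_{ij}^{+}\}$), and the same appeal to the inductive bijections for the subterms. Your explicit disjointness check via linearity and the remark about keeping the enumeration consistent with $\iota_M$ are sensible additions but do not change the argument.
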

\begin{proof}
By induction on $M$:
\begin{itemize}
\item if $M=F(x_{1},\dots, x_{n})$, then $\Gamma^{+}=\underbrace{\mathbf R*\dots *\mathbf R}_{n\text{ times}}$, and we let $\delta_{M}(i)=x_{i}$;

\item if $M= F(M_{1},\dots, M_{n})$, where 
$M_{i}=x_{i}M_{i1}\dots M_{iq_{i}}$, then let $M_{ij}= \lambda z_{1}.\dots.\lambda z_{n_{ij}}.t'_{ij}$, where for some context $\Delta_{ij}=\{z_{1}:\sigma_{ij1},\dots, z_{n_{ij}}:\sigma_{ijr_{ij}}\}$, $\Gamma_{ij}, \Delta_{ij} \vdash t'_{ij}:\mathbf R$, with $\Gamma_{ij}$ being a partition of $\Gamma-\{x_{1}:\sigma_{1},\dots, x_{n}:\sigma_{n}\}$, with $\sigma_{i}= \sigma_{i1}\multimap \dots \multimap \sigma_{iq_{i}}\multimap \mathbf R$, with $\sigma_{ij}= \sigma_{ij1}\multimap \dots \multimap \sigma_{ijn_{ij}}\multimap \mathbf R$; then by the I.H.~there exist bijections $\delta_{M_{ij}}$ between $(\Gamma_{ij}*\Delta_{ij})^{+}$ and $\mathrm{V}(M_{ij})$. Notice that $\mathrm{V}(M)=\{x_{1},\dots, x_{n}\}\cup \bigcup_{i,j}\mathrm{V}(M_{ij})$. 

Now, observe that an element of $\Gamma^{+}$ is either (1) an element of $\Gamma^{+}_{ij}$, (2) the last element of some $\sigma_{i}^{+}$, or (3) an element of some $\sigma_{ijm}^{+}$.
We obtain then a bijection $\iota_{M}: \Gamma^{+} \to \mathrm{V}(M)$ by letting:
\begin{itemize}

\item if $l$ is in $\Gamma^{+}_{ij}$, $\delta_{M}(l)=\delta_{M_{ij}}(l)$;
\item if $l$ is the last element of $\sigma_{i}^{+}$, then $\delta_{M}(l)=x_{i}$;

\item if $l$ is in $\sigma_{ijm}^{+}$, $\delta_{M}(l)=\delta_{M_{ij}}(l^{*})$, where $l^{*}$ is the corresponding element in $\sigma_{i}$.

\end{itemize}
\end{itemize}
\end{proof}

%From now on we will indicate as $x_{i}$, for $i\in \{\Gamma^{+}\}$ the variable $\delta(i)$.

\begin{remark}
The lemma above actually defines a bijection between the variables of $M$ and the first-order variables appearing in the int-terms of $M$ (which are indeed enumerated by the list $\{\Gamma^{+}\}$).

\end{remark}

%\begin{notation}
%In the following, we will indicate the first-order variables associated with $x,y,z,\dots$ simply as 
%$\XX, \YY, \ZZ,\dots$.
%\end{notation}
\begin{notation}
Using the lemma above $\mathrm{V}(M)\simeq \mathrm{V}(N)\simeq \Gamma^{+}$. 
We use $X$ to indicate any of these equivalent sets. As
modulo renaming we can suppose $\delta_{M}=\delta_{N}$, from now on, for all $i\in X$, we indicate as $x_{i}$ the variable $\delta_{M}(i)=\delta_{N}(i)$.
\end{notation}
 
\begin{definition}For any $i\in X$, let $\phi_{i}M$ be the unique subterm of $M$ of the form $x_{i}M_{1}\dots M_{k}$, and 
$\psi_{i}M$ the unique positive subterm of $M$ of the form 
$F(N_{1},\dots, N_{m-1}, \phi_{i}M, N_{m+1},\dots, N_{n})$. 
For all $i,j\in X$, let $i \sqsubset_{M} j$ if 
$\phi_{i}M$ is a subterm of $\phi_{j}M$, and 
$i\sim_{M}j$ if $\psi_{i}M=\psi_{j}M$.

\end{definition}

%
%$\phi_{j}M= x_{j}M_{1}\dots M_{r}$ and there is $l=1,\dots,r$ such that 
%$$M_{l}= \lambda \vec z. F(Q_{1},\dots,Q_{m-1}, \phi_{i}M,Q_{m+1},\dots, Q_{s})$$
%Let $\sqsubseteq_{M}$ indicate the reflexive-transitive closure of $\sqsubset_{M}^{0}$.
%Finally, let $i \sim_{M} j$ if $\psi_{i}M=\psi_{j}M$.
%
%It is easily observed that $i\sqsubseteq_{M}j$ precisely when 
%$\phi_{i}M$ is a subterm of $\phi_{j}M$. 

\begin{lemma}
$\sqsubseteq_{M}$ is a well-founded order on $X$. \\
 $\sim_{M}$ is an equivalence relation on $X$.
 \end{lemma}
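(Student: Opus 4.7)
The plan is to dispatch both claims by reducing each required property to a standard fact about the subterm relation on a finite term or about equality. For $\sqsubseteq_M$, read as ``$\phi_i M$ is a subterm of $\phi_j M$'' (with ``subterm'' taken reflexively), I would verify reflexivity, transitivity, antisymmetry, and well-foundedness in turn. Reflexivity and transitivity are inherited immediately from the corresponding properties of the syntactic subterm relation. Well-foundedness follows from the finiteness of $M$: its set of subterms is finite, hence so is the family $\{\phi_i M \mid i \in X\}$, ruling out infinite descending chains.

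The one step that does any actual work is antisymmetry. Suppose $\phi_i M$ is a subterm of $\phi_j M$ and $\phi_j M$ is a subterm of $\phi_i M$; then $\phi_i M = \phi_j M$. By the defining property of $\phi_i M$, this term has head variable $x_i$, and by the same property applied to $j$ it has head variable $x_j$. Hence $x_i = x_j$, and by injectivity of the bijection $\delta_M$ of Lemma~\ref{lemma:delta}, we conclude $i = j$.

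For $\sim_M$, the three axioms of an equivalence relation are immediate consequences of the corresponding properties of equality applied to the terms $\psi_i M$: reflexivity from $\psi_i M = \psi_i M$, and symmetry and transitivity lifted directly from those of $=$.

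The main obstacle I anticipate is not any of these verifications individually, but rather ensuring that $\phi_i M$ and $\psi_i M$ are genuinely well-defined (uniquely determined) for every $i \in X$. The uniqueness of $\phi_i M$ is guaranteed by the linearity of $\LL{S}$, which ensures that each variable occurs exactly once in $M$, so there is a single applicative subterm of $M$ headed by $x_i$; the uniqueness of $\psi_i M$ then follows from the defining clause of positive subterm, since $\phi_i M$ sits inside a unique maximal block of first-order function applications. Once this bookkeeping is pinned down, both assertions are routine.
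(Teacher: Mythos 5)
Your proof is correct; the paper in fact states this lemma without any proof, treating it as routine, and your verification fills it in exactly the way one would expect (antisymmetry via mutual subterms having equal heads plus injectivity of $\delta_{M}$, well-foundedness from finiteness of $X$, and the equivalence-relation axioms for $\sim_{M}$ lifted from equality of the $\psi_{i}M$). Your remark that the real content lies in the well-definedness of $\phi_{i}M$ and $\psi_{i}M$ — guaranteed by linearity and by the maximality clause in the definition of positive subterm — is also the right thing to flag.
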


%For any $i\in X$, let $\sigma_{i}$ indicate the type $\tau$ in $\Gamma$ such that the $i$ coincides with the rightmost atom of $\tau$. Following the proof of Lemma \ref{lemma:delta} one can check that $\sigma_{i}$ coincides with the type such that the declaration $x_{i}:\sigma_{i}$ occurs in the type derivation of both $M$ and $N$.
%%
%%
%% that the occurrence $x_{i}:\sigma_{i}$ occurs in the type derivation of $M$.
%%% Given two distinct occurrences $l,r $ in $\sigma$, let us say that $l$ \emph{is in the scope of $r$}, if $\sigma_{l}$ is a subtype of $\sigma_{r}$.
%
%\begin{lemma}\label{lemma:scope}
%If $\sigma_{i}$ is a subtype of $\sigma_{j}$, then $i\sqsubset_{M} j$.
%\end{lemma}
%\begin{proof}
%
%
%Let $\phi_{j}M=x_{j}M_{1}\dots M_{r}$; then 
%$\sigma_{j}= \tau_{1}\multimap \dots \multimap \tau_{r}\multimap R$, where the elements of $\{\tau_{rm}^{-}\}$ are in bijection with the bound variables of $M_{m}$. Hence, if $\sigma_{i}$ is a subtype of $\sigma_{j}$, it is a subtype of one of the $\tau_{rm}$; we deduce that $x_{i}$ is among the bound variables of one the $M_{m}$. This implies in turn that $\phi_{{i}}M$ must occur within the scope of $\phi_{{j}}M$, and thus that $i\sqsubset_{M}j$.
%\end{proof}

The relation $\sqsubseteq_{M}$ extends naturally to $\mathsf{PS}_{0}(M)$, by letting 
$F \sqsubset_{M} i$, for $F\in\mathsf{PS}_{0}(M)$, if 
$F$ is a subterm of $\phi_{i}M$.

Let $X_{\sim_{M}}$ indicate the quotient of $X$ under $\sim_{M}$ and let $X^{M}:=X_{\sim_{M}}\cup \mathsf{PS}_{0}(M)$.
 In the following we will use $\xi,\chi,\dots$ to indicate elements of  $X^{M}$.
 
Let us extend the relation $\sqsubseteq_{M}$ from $X$ to $X^{M}$:

\begin{definition}
For all $\xi,\chi \in X^{M}$,
$\xi\sqsubseteq_{M}^{*} \chi$ holds if either $ 
\xi =\chi $ or 
$\exists j\in \chi \ \forall   i\in \xi\ i\sqsubset_{M} j
$.

Moreover, we write $\xi\sqsubseteq_{M}^{0}\chi$ if $\xi\neq \chi$, $\xi\sqsubseteq_{M}\chi$ and for all $\theta\in X^{M}$,
$\xi\sqsubseteq_{M}\theta$ and 
$\theta\sqsubseteq_{M}\chi$ implies $\theta\in \{\xi,\chi\}$. 
\end{definition}
%Let $\sqsubseteq_{M} $ indicate the transitive-reflexive closure of $\sqsubset_{M}^{0}$, as a relation on $X^{M}$.

Observe that  $\xi\sqsubseteq_{M}^{0}\chi$ holds precisely when there is $j\in \chi$ such that for all $i\in \xi$, $\phi_{j}M= x_{j}Q_{1}\dots Q_{l-1}(\psi_{i}M) Q_{l+1}\dots Q_{r}$. Moreover the following is easily proved:
\begin{lemma}
$\sqsubseteq_{M}^{*}$ is a well-founded preorder with a maximum element $\top_{M}=\{{i_{1}},\dots, {i_{r}}\}$, where $M=F ( \phi_{i_{1}}M, \dots \phi_{i_{r}}M)$.
\end{lemma}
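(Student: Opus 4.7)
The plan is to verify the three properties separately, exploiting the already-established well-foundedness of $\sqsubseteq_{M}$ on $X$ and the structural description of positive subterms.

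First I would dispatch reflexivity, which is immediate from the clause $\xi = \chi$ in the definition. For transitivity, given $\xi \sqsubseteq_{M}^{*} \chi \sqsubseteq_{M}^{*} \theta$ with all three distinct, I would unpack the definition to obtain witnesses $j \in \chi$ with $i \sqsubset_{M} j$ for all $i \in \xi$, and $k \in \theta$ with $i' \sqsubset_{M} k$ for all $i' \in \chi$. Applying the second fact at $i' = j$ gives $\phi_{j}M$ a subterm of $\phi_{k}M$; combined with the first fact, each $\phi_{i}M$ (for $i \in \xi$) is a subterm of $\phi_{j}M$, hence of $\phi_{k}M$. So $k$ witnesses $\xi \sqsubseteq_{M}^{*} \theta$. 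The edge cases where $\xi$ or $\chi$ lies in $\mathsf{PS}_{0}(M)$ are treated in the same way by reading $i \in \xi$ as $i = \xi$, since the extension $F \sqsubset_{M} j$ already expresses ``$F$ is a subterm of $\phi_{j}M$''.

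For well-foundedness, I would observe that $X^{M}$ is finite: $X$ is in bijection with the finite set $\mathrm{V}(M)$ by Lemma \ref{lemma:delta}, and $\mathsf{PS}_{0}(M)$ is a subset of the finite set of subterms of $M$. Hence any infinite strictly descending chain is impossible. (Alternatively, a strict descending chain of equivalence classes would project via any sequence of representatives to a strictly descending chain in $(X, \sqsubset_{M})$, contradicting well-foundedness of $\sqsubset_{M}$ already proved for $X$.)

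The substantive step is constructing the maximum. Since $M$ is $\beta$-normal of type $\mathbf{R}$, it admits a unique outermost first-order decomposition $M = F(\phi_{i_{1}}M, \dots, \phi_{i_{r}}M)$, so each $\phi_{i_{k}}M$ is a head application $x_{i_{k}}\vec{Q}$ and $M$ itself is a positive subterm containing all of them at top level. Consequently $\psi_{i_{1}}M = \dots = \psi_{i_{r}}M = M$, so $\{i_{1}, \dots, i_{r}\}$ is a single $\sim_{M}$-class, which we call $\top_{M}$. To show it is maximum, take any $\xi \in X^{M}$ with $\xi \neq \top_{M}$. For $\xi \in X_{\sim_{M}}$, pick any $j \in \xi$; the positive subterm $\psi_{j}M$ is a proper positive subterm of $M$, hence lies strictly inside some $\phi_{i_{k}}M$, and the same $i_{k}$ serves for every $j' \in \xi$ because they all share $\psi_{j}M = \psi_{j'}M$. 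Thus $\phi_{j'}M$ is a subterm of $\phi_{i_{k}}M$ for every $j' \in \xi$, i.e.\ $\xi \sqsubseteq_{M}^{*} \top_{M}$. For $\xi \in \mathsf{PS}_{0}(M)$, the same reasoning places the closed positive subterm inside some $\phi_{i_{k}}M$, yielding the required witness. The main obstacle is only notational: ensuring a \emph{single} witness $i_{k}$ works for all elements of $\xi$, which is precisely where the definition of $\sim_{M}$ via equality of $\psi$-subterms pays off.
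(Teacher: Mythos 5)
The paper offers no proof of this lemma (it is introduced with ``the following is easily proved''), so there is nothing to diverge from; your argument is the natural one and it is correct. Reflexivity and transitivity follow from the definition and transitivity of the subterm relation, well-foundedness from finiteness of $X^{M}$, and the maximum from the observation that every positive subterm of $M$ other than $M$ itself sits inside some $\phi_{i_{k}}M$, with the definition of $\sim_{M}$ guaranteeing a single witness $i_{k}$ for the whole class $\xi$ --- which is exactly the point that needs care in the existential-universal shape of $\sqsubseteq^{*}_{M}$. The only loose ends are cosmetic: the elements of $\mathsf{PS}_{0}(M)$ are $\sqsubseteq_{M}$-minimal (nothing is defined to lie below them), which makes the transitivity cases involving them degenerate rather than ``the same'' as the index case, and the parenthetical alternative for well-foundedness should project along the chosen witnesses $w_{n}\in\xi_{n}$ rather than arbitrary representatives; neither affects correctness.
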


We can define a bijection $\theta_{M}: \mathsf{PS}(M) \to X^{M}$ sending each positive subterm $P=F(N_{1},\dots, N_{n})$, where $N_{i}=x_{i}Q_{i1}\dots Q_{in_{i}}$, onto the equivalence class $\{x_{1},\dots, x_{n}\}$, if $n>0$, and onto the singleton $\{P\}$ otherwise (i.e.~if $P\in \mathsf{PS}_{0}$). 
The existence of bijections $\theta_{M}: \mathsf{PS}(M) \to X^{M}$, $\theta_{N}: \mathsf{PS}(N) \to X^{N}$ together with $\iota_{M}: \{\Gamma^{-}\}*\mathbf R$ and $\iota_{M}: \{\Gamma^{-}\}*\mathbf R$  implies that the two sets $X^{M}= X_{\sim_{M}}\cup \mathsf{PS}_{0}(M)$
and $X^{N}= X_{\sim_{N}}\cup \mathsf{PS}_{0}(N)$ are also in bijection (being both in bijection with $\{\Gamma^{-}\}*\mathbf R$).

%Recall that for any type $\sigma$ and any occurrence $l\in \{\sigma^{+}\}$, $\sigma_{l}$ indicates the unique sub-type of $\sigma$ having $l$ as its rightmost occurrence.
%As observed above, for any variable $x\in V(M)$, $\delta_{M}^{-1}(x)$ coincides with the occurrence $l$ in $\{\Gamma^{+}\}$ such that the assumption $x:\sigma_{l}$ occurs in the type derivation of $M$. 

%For any variable $x$, let $[x]_{M}$ indicate the unique type such that the type derivation of $M$ contains the assumption $x: [x]_{M}$.
%One can check on the construction of the lemma above that for any variable $x$, $\delta^{-1}_{M}(x)$ is the rightmost occurrence of $[x]_{M}$, i.e.~$[x]_{M}= \sigma_{\delta^{-1}_{M}(x)}$.

%The following useful property can be checked by looking at the constructions of the bijections $\delta_{M}$ and $\iota_{M}$:
%\begin{lemma}\label{lemma:psv}
%For all $P\in \mathsf{PS}(M)$ and $x_{i}\in \mathrm{V}(M)$, $\sigma_{\delta_{M}^{-1}(x_{i})}$ is of the form
%$$\tau_{1}\multimap \dots  \multimap \tau_{m-1}\multimap \sigma_{\iota_{M}^{-1}(P)}\multimap  \tau_{m}\multimap \dots \multimap \tau_{r}\multimap  R$$
%iff $\phi_{i}(M)=x_{i}M_{1}\dots M_{r}$, where $M_{m}= \lambda \vec z. P$.
%\end{lemma}

%Similar constructions can be done for $u$. 

Let $L= (\theta_{N}\circ \delta_{N})\circ (\theta_{M}\circ \delta_{M})^{-1}: X^{M} \to X^{N}$ be the bijection associating each element of $X^{M}$ with the unique element of $X^{N}$ corresponding to the same occurrence of $R$ in $\{\Gamma^{-}\}*\mathbf R$. Observe that the rightmost element of the list $\{\Gamma^{-}\}*\mathbf R$ is associated with $\top_{M}=M$ and $\top_{N}=N$, we deduce that $L(\top_{M})=\top_{N}$.

%
%\begin{lemma}\label{lemma:sq}
%%\begin{itemize}
%%\item[i.] $\xi \sqsubset^{*}_{M} {i}$ iff $L(\xi)\sqsubset^{*}_{N} {i}$.
%%\item[ii.] $L(\top_{M})=\top_{N}$.
%%\end{itemize}
%$L(\top_{M})=\top_{N}$.
%\end{lemma}
%\begin{proof}
%
%
%If $\xi \sqsubset^{*}_{M} {i}$, then 
%$\phi_{i}M=x_{i} M_{1}\dots M_{k}$ where for some $j=1,\dots,k$, 
%$M_{j}$ contains a subterm $P=\lambda \vec z. F \big(\phi_{l}M\big)_{l\in \xi}$.
%This implies that $\sigma_{i}$ is of the form 
%$\tau_{1}\multimap \dots \multimap \tau_{j-1}\multimap \rho\multimap \tau_{j+1}\multimap \dots \multimap \tau_{k}\multimap R$, where $\rho$ contains a subtype of the form $\sigma_{\iota_{M}^{-1}(P)}=\sigma_{\theta_{M}^{-1}(\xi)}$.
%By the same Lemma, we deduce then that $\phi_{x_{i}}N=x_{i}Q_{1}\dots Q_{k}$, where $Q_{j}$ contains a subterm of the form $ \lambda \vec z.P'$, with $P'= F'\big(\phi_{l}N\big)_{l\in L(\xi)}$, and this implies then $L(\xi)\sqsubset^{*}_{N} x_{i}$.
%A similar argument can be developed for the other direction.
%
%Now, from Claim i., Claim ii.~immediately follows.
%\end{proof}

As a consequence of the bijections established above, we can enumerate the int-terms of $M$ and $N$ using, as index set, $X^{M}$ rather than $\{\Gamma^{-}\}*\mathbf R$. In particular, for all $\xi\in X^{M}$, we indicate the associated positive subterm of $t$ as
\begin{align*}
M_{\xi}& := F_{\xi}\big( \phi_{j}M\big)_{j\in \xi} \\
N_{\xi}& := G_{\xi}\big( \phi_{j}N\big)_{j\in L(\xi)}
\end{align*}
and the associated int-terms as
\begin{align*}
H^{M}_{\xi}& := F_{\xi}(\XX_{j})_{j\in \xi}\\
H^{N}_{\xi}& := G_{\xi}(\XX_{j})_{j\in L(\xi)}
\end{align*}

%For all $\alpha\in X^{M}$ and $\beta\in X^{u}$, let 
%$\alpha \sqsubset \beta$ if there exists $x\in \beta$ such that 
%$\phi_{x}(M)=xM_{1}\dots M_{n}$ and for some $i=1,\dots,n$, 
%$M_{i}= \lambda \vec y. M_{\alpha}$.
%
%Let us define a sequence $\alpha_{0},\dots, \alpha_{k}$ in $X^{M}$ together with a sequence 
%$\beta_{0},\dots, \beta_{k}$ in $X^{u}$ and a sequence $x_{1},\dots, x_{k}\in X$.
%Let $\alpha_{0}$ be the class of $t$-maximal variables and 
%$\beta_{0}$ the class of $u$-maximal variables. 
%For all $i$, construct $\alpha_{i+1}$ and $\beta_{i+1}$ as follows: 
%if there is some $x_{i+1}\in \beta_{i}$ distinct from all $x_{1},\dots, x_{i}$, choose
%$\alpha_{i+1} \sqsubset_{M} x_{k+1}$ (if any) and let $\beta_{i+1}:= L(\alpha_{i+1})$.
%Otherwise, if there is some $x\in X$ distinct from all $x_{1},\dots, x_{k}$, let $\beta$ be a maximal class containing some $x_{k+1}\neq x_{1},\dots, x_{i}$, and choose 
%$\alpha_{i+1}\sqsubset_{M}  x_{k+1}$ and let $\beta_{i+1}=L(\beta)$.
%
%
%Notice that the only case in which the sequence terminates is when all $x$ have been chosen. In this case the sequence has thus exhausted $X^{M}$ and $k+1=\sharp X^{M}=\sharp X^{u}$. 

We now introduce a special class of terms:

\begin{definition}[terms with brackets]
The set of \emph{$\lambda$-terms with brackets} is defined by enriching the syntax of $\lambda$-terms with a new clause:
 if $t$ is a term, then $[{t}]$ is a term.
%The typing of a $\lambda$-term with brackets is given by the rules below:
%$$
%\AXC{$\Gamma \vdash t: \sigma$}
%\UIC{$\Gamma \vdash [ t]:\sigma$}
%\DP
%$$
For any term with brackets $M$, we let $M^{\downarrow}$ indicate the term obtained by erasing all brackets from $M$.
%
%We define an order relation $\preceq$ on $\lambda$-terms with hat as follows:
%\begin{itemize}
%\item $x\preceq x$;
%
%\item $t\preceq u$ implies $\lambda x.t\preceq \lambda x.u$;
%\item $t\preceq u$ and $t'\preceq u'$ implies $tt'\preceq uu'$;
%\item $M_{1}\preceq u_{1},\dots, M_{n}\preceq u_{n}$ implies $F(M_{1},\dots, M_{n})\preceq F(u_{1},\dots, u_{n})$;
%\item if $u\neq \widehat{v}$, then $\widehat t \preceq u$.
%
%\end{itemize}
%
%For any $\lambda$-term with hat, we let $\mathrm{FV}(M)$ be the set of free variables of $t$ not occurring under a hat, and $\mathrm{EFV}(M)$ be the set of all free variables of $t$.

\end{definition}

%Observe that if $t$ and $u$ are hat-free terms, then $t\preceq u$ implies $t=u$.

For all $i\in X$, let $\{{\phi_{i}M}\}= x_{i}(\lambda \vec z_{1}. [{M_{1}}])\dots (\lambda \vec z_{n}. [{M_{n}}])$, where $\phi_{i}M=x_{i}(\lambda \vec z.M_{1})\dots (\lambda \vec z.M_{n})$.

\begin{definition}
A set $U\subseteq X^{M}$ is \emph{upward closed} if $\alpha \in U$ and $\alpha\sqsubseteq_{M}\alpha'$ implies $\alpha'\in U$. 
For all upward closed sets $U$, the \emph{frontier of $U$}, noted $\partial U$, is the set
of $\xi\in X^{M}- U$ such that, for some $\chi \in U$ 
$\xi\sqsubseteq_{M}^{0}\chi$. We conventionally let $\partial\emptyset= \{\top_{M}\}$.
\end{definition}

%Let us add a new constant $\dag$ to the syntax of $\lambda$-terms. 
For any upward closed $U\subseteq X^{M}$, we define a set $U(M)$ of $\lambda$-terms with brackets by  induction on $M$ as follows:
\begin{itemize}
\item if $U=\emptyset$ and $M=\lambda \vec z.M'$, then $U(M)=\{\lambda \vec z. [ Q]\mid Q \text{ is a $\lambda$-term}\}$;
\item if $U\neq \emptyset$ (which implies $\top_{M}\in U$) and $M=\lambda \vec z. F\big(\phi_{i}M\big)_{i\in \top_{M}}$, then 
\begin{align*}
U(M)=\left \{ \lambda \vec z.F\big( x_{i} {Q^{i}_{1}},\dots, {Q^{i}_{r_{i}}}\big)_{i\in\top_{M}}\mid Q^{i}_{j}\in U_{j}^{i}(M^{i}_{j})\right\}
\end{align*}
where for all $i\in\top_{M}$, $\phi_{i}M=x_{i} P^{i}_{1}\dots P^{i}_{r_{i}}$ and $U_{j}^{i}= U\cap X^{P_{j}^{i}}$ is an upward closed set of $X^{M_{j}^{i}}$.

\end{itemize}
Intuitively, $P\in U(M)$ if it is a term which is defined like $M$ at all positions corresponding to elements of $U$, 
and has a term in brackets at all positions of $\partial U$.

The following facts are easily established by induction on $M$:
\begin{lemma}\label{lemma:xt}
\begin{itemize}
\item[i.] $P\in X^{M}(M)$ iff $P=M$.
\item[ii.] if $P\in U(M)$ and $P$ is bracket-free, then $U=X^{M}$ and $P=M$.
\end{itemize}
\end{lemma}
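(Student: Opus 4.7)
The plan is to establish both parts by simultaneous structural induction on the $\beta$-normal term $M$, matching the two clauses in the inductive definition of $U(M)$. At each inductive step, the head decomposition $M = \lambda \vec z. F(\phi_{i}M)_{i \in \top_{M}}$ exposes strictly smaller $\beta$-normal subterms $M^{i}_{j}$ (the bodies of the $P^{i}_{j}$), and the induction hypothesis is applied to these.

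For part (i), the direction $P = M \Rightarrow M \in X^{M}(M)$ follows by unfolding the definition: since $\top_{M} \in X^{M}$, we fall into the second (non-empty) clause, and the required membership $P^{i}_{j} \in (X^{M}\cap X^{P^{i}_{j}})(M^{i}_{j})$ reduces, using $X^{P^{i}_{j}} \subseteq X^{M}$, to the induction hypothesis on $M^{i}_{j}$. Conversely, any $P \in X^{M}(M)$ is forced by the definition to have the shape $\lambda \vec z. F(x_{i}Q^{i}_{1},\dots,Q^{i}_{r_{i}})_{i \in \top_{M}}$ with $Q^{i}_{j} \in X^{P^{i}_{j}}(M^{i}_{j})$, so the induction hypothesis gives $Q^{i}_{j}=M^{i}_{j}$ for all $i,j$, whence $P=M$.

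For part (ii), I would first rule out $U = \emptyset$: every element of $\emptyset(M)$ contains an explicit bracket $[\,Q\,]$, contradicting bracket-freeness of $P$. Hence $\top_{M} \in U$ and the second clause applies, giving $P = \lambda \vec z. F(x_{i}Q^{i}_{1},\dots,Q^{i}_{r_{i}})_{i \in \top_{M}}$ with each $Q^{i}_{j}$ bracket-free (inherited from $P$) and $Q^{i}_{j} \in U^{i}_{j}(M^{i}_{j})$, where $U^{i}_{j} = U \cap X^{P^{i}_{j}}$. Applying the induction hypothesis to each $Q^{i}_{j}$ yields $U^{i}_{j} = X^{P^{i}_{j}}$ and $Q^{i}_{j} = M^{i}_{j}$. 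The first of these gives $X^{P^{i}_{j}} \subseteq U$ for all $i,j$, which combined with $\top_{M} \in U$ and the set-theoretic decomposition $X^{M} = \{\top_{M}\} \cup \bigcup_{i,j}X^{P^{i}_{j}}$ forces $U = X^{M}$. Reassembling the pieces (or invoking (i)) then gives $P = M$.

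The only real subtlety I foresee is the bookkeeping needed to justify the decomposition $X^{M} = \{\top_{M}\} \cup \bigcup_{i,j}X^{P^{i}_{j}}$, together with the compatibility of the bijections $\iota_{M}$, $\delta_{M}$, $\theta_{M}$ with the passage to immediate subterms. Concretely, one has to check that every positive subterm of $M$ other than $M$ itself is a positive subterm of some $P^{i}_{j}$, and that $\sim_{M}$ and $\mathsf{PS}_{0}(M)$ restrict correctly to $\sim_{M^{i}_{j}}$ and $\mathsf{PS}_{0}(M^{i}_{j})$. These are immediate from the definitions but tedious to spell out; once they are granted, both inductive arguments go through mechanically.
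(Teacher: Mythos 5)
Your proof is correct and follows exactly the route the paper intends: the paper merely asserts that both items are "easily established by induction on $M$", and your structural induction on the two clauses of the definition of $U(M)$, together with the decomposition $X^{M}=\{\top_{M}\}\cup\bigcup_{i,j}X^{P^{i}_{j}}$ (which is implicit in the construction of the bijections $\iota_{M}$, $\delta_{M}$, $\theta_{M}$), is precisely the argument being elided. The one subtlety you flag is the right one, and your handling of it is adequate.
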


We now have all ingredients to define, by induction, a sequence of terms $S_{0},\dots, S_{k}$ and a 
sequence of upward closed sets $U_{0}\subseteq \dots\subseteq U_{k}\subseteq X^{N}$, 
verifying at each step $i$ that:
\begin{itemize}
\item[a.] $S_{i}\in U_{i}(N)$;
\item[b.] for all $\xi \in \partial U_{i}$, $S_{i}$ contains the subterm  $[
M_{L^{-1}(\xi)}]$ at position $\xi$.
%
%\widehat{F_{L^{-1}(\beta)}\big[ \phi_{x}(M)\big]_{x\in L^{-1}(\beta)}}$.
%\item[c.] $u=N_{i}\\tauig[  \widehat{F_{L^{-1}(\beta)}\big[ \phi_{x}(M)\big]_{x\in L^{-1}(\beta)}} \mapsto
%G_{L^{-1}(\beta)}\big[{\phi_{y}(u)}\big]_{y\in \beta}\Big]_{\beta\in \partial U_{i}}
%$.

\end{itemize}

Let $S_{0}:= [M]= [{M_{\alpha_{\top}} }]$,  and $U_{0}=\emptyset$. Then $S_{0}\in U_{0}(M)$ certainly holds.
Moreover, 
 $ \partial U_{0}=\{\top_{N}\}$, and $S_{0}$ contains at its root the subterm 
$ [{M_{L^{-1}(\top_{N})}}]= [{M_{\top_{M}}}]=[ M]$.

Now, to define $S_{i+1}$, choose $\xi \in \partial U_{i}$, let $\chi=L^{-1}(\xi)$ and  
\begin{align*}
S_{i+1}:=S_{i}\Big( [{F_{\chi}\big( \phi_{i}M\big)_{i\in \chi}} ]\mapsto
G_{\xi}\big(\{{\phi_{j}M}\}\big)_{j\in\xi}\Big)
\end{align*}
and finally let $U_{i+1}= U_{i}\cup\{\xi\}$.
To check that $S_{i+1}$ is well-defined let us observe that:
\begin{itemize}
\item by the induction hypothesis $S_{i}$ contains the subterm 
$[{M_{\chi}}]=[{F_{\chi}\big( \phi_{i}M\big)_{i\in \chi}}]$ at position $\xi$; 
\item if one of the newly introduced variables $j\in \xi$ is bound in $S_{i}$, it is never introduced outside the scope of its abstraction $\lambda x_{j}$. Indeed, by the induction hypothesis, $S_{i}$ coincides with $N$ at all positions $\theta\in U_{i}$; hence, since $N$ has at position $\xi$ the subterm $G_{\xi}(\phi_{j}N)_{j\in \xi}$, it follows that any of the variables $x_{j}$ is in the scope of an abstraction $\lambda x_{j}$ in $S_{i+1}$ iff it is in the scope of the same abstraction in $N$.

%
% if one of the newly introduced variables $j\in \xi$ is bound in $S_{i}$, it is never introduced outside the scope of its abstraction $\lambda x_{j}$. Indeed, for all $j\in\xi$, if $x_{j}$ is bound in $N$, then since $N$ has type $R$, its
%  abstraction $\lambda x_{j}$ must occur in the scope of some other variable $x_{k}$; then, by Lemma \ref{lemma:scope}, $j\sqsubset_{M}k$ and since $x_{j}$ is introduced at a position corresponding to $\xi\in \partial U_{i}$, $x_{j}$ must occur in $N$ at a position corresponding to some $\xi'\in U_{i}$, and $\lambda x_{j}$ must occur at some position $\xi'' \sqsubseteq_{M}\xi'$ (where possibly $\xi''=\xi$).
%From the induction hypothesis we deduce then that $x_{k}$ also occurs in $S_{i}$ at position $\xi'$, and thus $\lambda x_{j}$ occurs in $S_{i}$ at position $\xi''$. 
\end{itemize}

Now, from $S_{i}\in U_{i}(N)$, that $S_{i+1}\in U_{i+1}(N)$ follows from the definition of $U_{i+1}(N)$.
Moreover, if $\xi' \in \partial U_{i+1}$, then either $\xi' \in \partial U_{i}$, so by IH we deduce that 
$S_{i}$ contains $[{M}_{L^{-1}(\xi')}]$, and since $\xi'\neq \xi$, also $S_{i+1}$ does;
or $\xi' \sqsubset_{M}^{0}\xi$, i.e.~there is $j\in\xi$ such that $\xi' \sqsubset_{M}^{0} j$; now, position $\xi'$ in $S_{i+1}$ contains one of the terms $ Q_{i}$, for $i=1,\dots, s$, where $[{\phi_{j}M}]=x_{j} {W_{1}}\dots {W_{s}}$ and $W_{i}= \lambda \vec z.Q_{i}$, and indeed we have that $Q_{i}=[{M_{L^{-1}(\xi')}}]$. 

%
% Let $\gamma \in \partial U_{i+1}$; then either 
%$\gamma \in \partial U_{i}$ or $\gamma \sqsubset_{u}\beta$; in the first case, then, the term $w=\widehat{F_{L^{-1}(\gamma)}\big[ \phi_{x}(M)\big]_{x\in L^{-1}(\gamma)}}$ occurs in $N_{i}$ and thus also in $N_{i+1}$; in the second case, since $\gamma \sqsubset_{u} y$, for some $y\in \beta$, we deduce that $N_{i+1}$ contains the term $\widetilde{\phi_{y}(M)}= y \widehat{M_{1}}\dots \widehat{M_{s}}$;
%moreover, from Lemma \ref{lemma:sq}, from $\gamma \sqsubset_{u} y$ we deduce $L^{-1}(\gamma)\sqsubset_{M} y$, which implies that for some $j=1,\dots,s$, $\widehat M_{s}=w$.
%Since the only new hatted subterms are of the form $\widehat M_{\alpha}$, where $\alpha \sqsubset_{M} y \in \beta$ we can conclude that the only hatted subterms of $N_{i+1}$ are of the desired form.
%
%
%Finally, from 
%$u=N_{i}\Big[  \widehat{F_{L^{-1}(\beta)}\big[ \phi_{x}(M)\big]_{x\in L^{-1}(\beta)}} \mapsto
%G_{L^{-1}(\beta)}\big[{\phi_{y}(u)}\big]_{y\in \beta}\Big]_{\beta\in \partial U_{i}}
%$ one can deduce that $u=N_{i+1}\Big[  \widehat{F_{L^{-1}(\beta)}\big[ \phi_{x}(M)\big]_{x\in L^{-1}(\beta)}} \mapsto
%G_{L^{-1}(\beta)}\big[{\phi_{y}(u)}\big]_{y\in \beta}\Big]_{\beta\in \partial U_{i+1}}$.
 
Let us define now $M_{1},\dots, M_{k}$ as $M_{i}:=S_{i}^{\downarrow}$. It is clear then that $M_{1}=M$.
Let us check that $N_{k}=N$: from $\sharp U_{0}=0$ and $\sharp U_{i+1}=\sharp U_{i}+1$, we deduce $\sharp U_{k}=k$ and thus $\sharp \partial U_{k}=0$, which implies $U_{k}=X^{N}$. 
From $S_{k}\in U_{k}(N)=X^{N}(N)$, using Lemma \ref{lemma:xt} we deduce then that $M_{k}=S_{k}^{\downarrow}=S_{k}=N$.
%
%\begin{example}
%Consider the terms 
%$$
%t=\overline f(x(y\overline 0), z\overline 2) 
%\qquad\qquad
%u= \overline g (x(z\overline 1), y\overline 3)
%$$
%where $x: R_{1}\multimap R_{2},y :R_{3}\multimap R_{4}, z:R_{5}\multimap R_{6}$. We have then $\Gamma^{-}*R= R_{1}*R_{3}*R_{5}*R$ and $\Gamma^{+}=R_{2}*R_{4}*R_{6}$.
%
%The $t$-equivalence classes are $\{x,z\}$ and $\{y\}$, enriched with $\overline 0$ and $\overline 2$.
%The $u$-equivalence classes are $\{x,y\}$ and $\{z\}$, enriched with $\overline 0$ and $\overline 2$. 
%
%
%
%The associated first-order variables are $\XX:R_{2}, \YY:R_{4}, \ZZ:R_{6}$ and the 
% first-order terms are:
%\begin{align*}
%H_{1} & = \overline f(\XX, \ZZ)  & K_{1} & = \overline g(\XX, \YY)\\
%H_{2} & = \YY		& K_{2} & = \ZZ\\
%H_{3} & = \overline 0 & K_{3} & =  \overline 3 \\
%H_{4} &  = \overline 2& K_{4} & = \overline 1
%\end{align*}
%
%We define our sequence $N_{0}=\widehat t,N_{1},N_{2},N_{3},N_{4}=u$ by letting:
%\begin{align*}
%\begin{tikzcd}
%\overline f(\widehat{x(y\overline 0)}, \widehat{z\overline 2})  
%\ar{d}{\overline f(\XX, \ZZ)\mapsto\overline g(\XX, \YY)  } \\
%\overline g(x(\widehat{y\overline 0}), y\widehat{\overline 0}) 
%\ar{d}{\YY \mapsto \ZZ} \\
%\overline g(x (z \widehat{\overline 2}), y \widehat{\overline 0})= 
%\ar{d}{\overline 0 \mapsto \overline 3}\\
%\overline g(x (z \widehat{\overline 2}), y {\overline 3})= 
%\ar{d}{\overline 2 \mapsto \overline 1}\\
%\overline g(x (z {\overline 1}), y {\overline 3})
%\end{tikzcd}
%\end{align*}
%
% \end{example}

We can now establish the main result:
\begin{proposition}
For all $\vec a \in \sem{\Gamma}$ there exist $\vec r\in \sem{\Gamma}_{+}$ such that 
$$
| \sem{M}(\vec \varphi)- \sem{N}(\vec a)| \ \leq \ \sum_{i \in
\{\Gamma^-\}*R} \sem{H^{M}_{i}}(\vec r)-\sem{H^{N}_{i}}(\vec r)|.
$$

\end{proposition}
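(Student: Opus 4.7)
The plan is to leverage the sequence $M = M_0, M_1, \dots, M_k = N$ and the bijection $L : X^M \to X^N$ constructed just before the statement, together with a careful choice of $\vec r$. Given $\vec a \in \sem{\Gamma}^{\metcppo}$, for each first-order variable $\alpha_j$ (indexed by $j \in X$ via $\delta_M$) I would set $r_{\alpha_j}$ to the value assumed by the subterm $\phi_j M = x_j Q_1 \dots Q_s$ when $M$ is interpreted at $\vec a$; concretely this means reading off, at each occurrence of $\phi_j M$ in $M$, the current $\lambda$-environment together with $\vec a$, and applying $\sem{\phi_j M}^{\metcppo}$ to it. This yields $\vec r \in \sem{\Gamma}_{+}$.

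Applying the triangle inequality along the sequence gives
\begin{equation*}
|\sem{M}^{\metcppo}(\vec a) - \sem{N}^{\metcppo}(\vec a)|
\,\leq\, \sum_{i=0}^{k-1}|\sem{M_i}^{\metcppo}(\vec a) - \sem{M_{i+1}}^{\metcppo}(\vec a)|,
\end{equation*}
so it suffices to bound each summand. The core step is to prove that, if $\xi_i \in \partial U_i$ is the frontier position chosen at step $i$ and $\chi_i = L^{-1}(\xi_i)$, then
\begin{equation*}
|\sem{M_i}^{\metcppo}(\vec a) - \sem{M_{i+1}}^{\metcppo}(\vec a)| \,\leq\, |\sem{H^M_{\chi_i}}^{\metcppo}(\vec r) - \sem{H^N_{\xi_i}}^{\metcppo}(\vec r)|.
\end{equation*}
Summing over $i$ and re-indexing via $\theta_M$, $\theta_N$ and $L$ through $\{\Gamma^-\} * \mathbf R$ will then give the conclusion, since by construction the sequence of frontier choices enumerates $X^M$.

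For the local bound I would observe that $M_i$ and $M_{i+1}$ differ only at position $\xi_i$: in $M_i$ the subterm is $F_{\chi_i}(\phi_j M)_{j \in \chi_i}$, while in $M_{i+1}$ it is $G_{\xi_i}(\phi_j M)_{j \in \xi_i}$. By admissibility of $\dden$ (condition (A1), i.e.~non-expansiveness of $\metcppo$-interpretations in their context), the difference of interpretations at $\vec a$ is bounded by the local difference between these two subterms evaluated in the shared ambient environment. Because the occurrences of $\phi_j M$ appearing in both subterms coincide, they evaluate to the chosen reals $r_{\alpha_j}$, and the local difference equals exactly the distance between the first-order terms $F_{\chi_i}$ and $G_{\xi_i}$ applied to those reals, i.e.~$|\sem{H^M_{\chi_i}}^{\metcppo}(\vec r) - \sem{H^N_{\xi_i}}^{\metcppo}(\vec r)|$.

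The hard part will be making rigorous the claim that the $\phi_j M$'s plugged strictly below $\xi_i$ are preserved along the rewriting $M_i \rightsquigarrow M_{i+1}$, even though the structure above $\xi_i$ is already $N$-like. This is where the invariant $S_i \in U_i(N)$ combined with the inductive definition of $U(N)$ becomes essential: the replacement at $\xi_i$ places $N$'s function $G_{\xi_i}$ on top of subterms of shape $\{\phi_j M\} = x_j(\lambda \vec z_1.[M_1])\dots(\lambda \vec z_n.[M_n])$, that is, $M$-subterms shielded by brackets. These brackets are only opened when their own frontier turn arrives, so the $\phi_j M$'s relevant to step $i$ remain intact, guaranteeing that the local difference is indeed $|\sem{H^M_{\chi_i}}^{\metcppo}(\vec r) - \sem{H^N_{\xi_i}}^{\metcppo}(\vec r)|$ and not some larger quantity coming from changes already performed further down the tree.
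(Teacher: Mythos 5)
Your proposal follows essentially the same route as the paper's own proof: you choose $\vec r$ by evaluating the subterms $\phi_{j}M$ at $\vec a$ exactly as the paper does ($r_{l}:=\sem{\phi_{\delta(l)}M}(\vec a)$), decompose via the triangle inequality along the same sequence $M_{0},\dots,M_{k}$, and bound each step by the distance between the corresponding int-terms using non-expansiveness of the surrounding interpretation, relying on the invariant $S_{i}\in U_{i}(N)$ to guarantee that the arguments $\phi_{j}M$ at the replaced position are still intact. This matches the paper's argument, including the delicate point about preservation of the $M$-subterms under brackets, so the proposal is correct and essentially identical in approach.
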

\begin{proof}
Using the bijection $\delta:\Gamma^{+}\to  V(M)$, let $r_{l}:=\sem{\phi_{\delta(l)}M}(\vec \varphi)$. 

Recall that the terms $H^{M}_{i}$ and $H^{N}_{i}$, for $i\in \{\Gamma^{-}\}*\mathbf R$, can be equivalently enumerated
as $H^{M}_{L^{-1}(\xi)}$, $H^{N}_{\xi}$, for $\xi\in X^{N}$.
Let $\xi_{0},\dots, \xi_{k}$ be the sequence in $X^{N}$ chosen in the construction of the sequence $S_{0},\dots, S_{k}$, and let 
$\chi_{0}=L^{-1}(\xi_{0}),\dots, \chi_{k}=L^{-1}(\xi_{k})$.
We will show that for all $i=1,\dots,k$, $| \sem{M_{i-1}}(\varphi)-\sem{M_{i}}(\varphi)| \leq | \sem{H^{M}_{\chi_{i}}}(\vec r)-\sem{H^{N}_{\xi_{i}}}(\vec r)|$. Indeed we have
\begin{align*}
\Big \vert \sem{M_{i-1}}(\vec \varphi) -\sem{M_{i}}(\vec a)\Big \vert & =
\Big\vert \sem{M_{i-1}}(\vec \varphi) - 
\sem{M_{i-1}\Big( F_{\chi_{i}}\big( r_{l}\big)_{l\in \chi_{i}}\mapsto
G_{\xi_{i}}\big({r_{m}}\big)_{m\in \xi_{i}}\Big)
}(\vec a) \Big\vert\\
&  = 
\Big\vert \sem{M_{i-1}}(\vec \varphi) - 
\sem{M_{i-1}\Big( H^{M}_{\chi_{i}}(\vec r)\mapsto
H^{N}_{\xi_{i}}(\vec r)\Big)}(\vec a)
\Big\vert\\
& \leq 
\big \vert \sem{H^{M}_{\chi_{i}}}(\vec r) - \sem{H^{N}_{\xi_{i}}}(\vec r)\big \vert
\end{align*}
Using the fact that $M_{0}=M$ and $M_{k}=N$, as well as the triangular law, we deduce then
\begin{align*}
|\sem{M}(\vec a)-\sem{N}(\vec a)| & \leq 
|\sem{M_{0}}(\vec a)-\sem{M_{1}}(\vec a)| + \dots + |\sem{M_{k-1}}(\vec a)- \sem{M_{k}}(\vec a)| \\
& \leq \sum_{i=1}^{k}\big \vert \sem{H^{M}_{\chi_{i}}}(\vec r)- \sem{H^{N}_{\xi_{i}}}(\vec r) \big \vert.
\end{align*}
\end{proof}

%\begin{corollary}
%For all normal $\eta$-long terms $M,N$ such that $\Gamma \vdash M,N:R$, 
%$$
%d_{\mathsf{den}}(M,N) \ \leq \ d_{\mathsf{Goi}}(M,N)
%$$
%\end{corollary}
%
%

}

%
%
%First observe that the term $M$ (resp.~$N$) induces a well-founded order $\sqsubseteq^{M}$ (resp.~$\sqsubseteq^{N}$) over the natural numbers $1,\dots, n$ which enumerate the int-terms $H^{M}_{i}, H^{N}_{i}$. To define this order one observes that
%\begin{itemize} 
%\item to each $i=1,\dots,n$ 
%\item to each $i=1,\dots, m$ one can associate 
%
%\end{itemize}
%
%the values $1,\dots, n$ are in one-to-one correspondence with the \emph{maximal} subterms of $M$ of type $R$, i.e.~the subterms $P$ of $M$ such that either $M=P$ or $M$ contains the subterm $\lambda z.P$. 
%One has then $i\sqsubset_{0}^{M}j$ if 
%
%and $\sqsubseteq^{M}$ is then the reflexive-transitive closure of $\sqsubseteq_{0}^{M}$.
%

\shortversion{
  \section{Graded Exponentials}
  \label{sec:grade}

  % \mymarginnote{Mention that all what we have done
  % so far can be extended to the case of graded
  % exponentials.}

  We give some hints about which ones of our
  results can be extended to Fuzz
  \cite{10.1145/1932681.1863568}. For results in
  Section~\ref{sec:dext_dobs}, we can show that
  the metric given in
  \cite{10.1145/1932681.1863568} in terms of
  metric logical relations is equal to the
  observational metric $\dobs_{\Gamma,\tau}(M,N)$
  given by
  \begin{equation*}
    \sup_{C[-] \colon (\Gamma,\tau) \to
        (\varnothing,\mathbf{R})}
    \inf
    \left\{
      r \in \bbRR \,\middle|\,
      \begin{array}{l}
        \text{if} \
        C[M] \hookrightarrow a,
        \ \text{then} \
        C[N] \hookrightarrow b
        \ \text{and} \ |a-b| \leq r, \\
        \text{and if} \
        C[N] \hookrightarrow a,
        \ \text{then} \
        C[M] \hookrightarrow b
        \ \text{and} \ |a-b| \leq r
      \end{array}
    \right\}
  \end{equation*}
  as long as we equip Fuzz with unary
  multiplications
  $r \times (-) \colon \oc_{r}\mathbf{R} \to
  \mathbf{R}$ for all $r \in \bbR_{\geq 0}$. For
  Fuzz \emph{without} unary multiplications, we
  need to define observational metric by
  observations at types
  $\oc_{r_{1}} \mathbf{R} \otimes \cdots \otimes
  \oc_{r_{n}} \mathbf{R}$ in order to prove that
  the observational metric coincides with the
  logical metric. As is shown in
  \cite{10.1145/3009837.3009890}, $\metcppo$
  provides an adequate semantics for Fuzz, and hence,
  the denotational semantics provides an
  over-approximation of the observational metric
  for Fuzz. Unfortunately, extending the interactive semantics to Fuzz
  is problematic. In order to do that, we
  need to model the graded exponentials
  $\oc_{r}(-)$. A natural way to do that in
  $\mathbf{Int}(\metcppo)$ would be to lift the
  denotation of $\oc_{r}(-)$ in $\metcppo$ to
  $\mathbf{Int}(\metcppo)$. However, this approach
  does not work because we can not lift
  contraction:
  $\oc_{r + s}\tau \multimap \oc_{r}\tau \otimes
  \oc_{s}\tau$. This failure stems from the nature
  of the Int-construction. Namely, because
  morphisms in $\mathbf{Int}(\metcppo)$ model
  bidirectional computation, lifting contraction
  to $\mathbf{Int}(\metcppo)$ requires
  co-contraction
  $\oc_{r}\tau \otimes \oc_{s}\tau \multimap\oc_{r
    + s}\tau$ in $\metcppo$, and there is no
  natural co-contraction in $\metcppo$. In the
  long version of this paper, we provide an interactive
  semantic model for Fuzz whose gradings are
  restricted to non-negative possibly infinite
  integers. We believe that this restriction is not so
  strong because, for example, closed terms of
  type
  $\oc_{k/n}\mathbf{R} \multimap
  \oc_{h/m}\mathbf{R}$ in Fuzz are
  ``definable'' as closed terms of type
  $\oc_{km}\mathbf{R} \multimap
  \oc_{hn}\mathbf{R}$. }

\longversion{\section{A Linear Programming Language with Graded
  Exponentials}
\label{sec:grade}
In the following part of this paper, we generalize
some of our arguments to a restriction of Fuzz,
namely, Fuzz without additive (co)products and
recursive types where gradings are non-negative
possibly infinite integers rather than real
numbers. We note that while we do not have
recursive types, we have recursion. Our
generalization goes as follows.
\begin{itemize}
\item We describe our target language, which we
  call $\LLL{S}$.
\item We extend the logical metric and the
  observational metric to $\LLL{S}$, and we show
  that these extensions coincide.
\item We extend the denotational metric and the
  interactive metric to $\LLL{S}$, and we show
  that the observational metric is bounded by
  these metrics.
\end{itemize}

\subsection{Syntax}
\label{sec:fuzz-syntax}
Let us give our extended target language, which we
call $\LLL{S}$, and its operational semantics. For
types, we have graded exponentials.
\begin{align*}
  \mathrm{Types}
  &&
  \tau,\sigma := \cdots \mid
  \oc_{n} \tau &&
  \mathrm{Environments}
  &&
  \Gamma,\Delta,\Xi := \varnothing \mid
  \Gamma,x:_{n}\tau
\end{align*}
In the definition of types and environments, $n$
varies over the set
$\mathbb{N}_{> 0}^{\infty} = \{n \in \mathbb{N}
\mid n > 0\} \cup \{\infty\}$ consisting of
positive possibly infinite integers.

For an environment $\Gamma$, we write
$\abs{\Gamma}$ for the syntactic object obtained
by removing all gradings from $x :_{n} \tau$ in
$\Gamma$. For environments $\Gamma$ and $\Delta$
such that $\abs{\Gamma} = \abs{\Delta}$, we
inductively define
an environment $\Gamma + \Delta$ by
\begin{equation*}
  \varnothing + \varnothing = \varnothing,
  \qquad
  (\Gamma, x :_{n} \tau) + (\Delta, x :_{m} \tau)
  = (\Gamma+\Delta), x :_{n + m} \tau.
\end{equation*}
When we write $\Gamma + \Delta$, we always suppose
that $\abs{\Gamma}$ is equal to $\abs{\Delta}$. We
write $\Gamma \geq \Delta$ when
$\abs{\Gamma}=\abs{\Delta}$ and for all
$x:_{n} \tau \in \Gamma$ and
$x:_{m} \tau \in \Delta$, we have $n \leq m$. For
an environment $\Gamma$ and
$n \in \mathbb{N}^{\infty}_{> 0}$, we define
$n \cdot \Gamma$ to be the environment obtained by
the componentwise multiplication of gradings in
$\Gamma$ by $n$.

Terms, values contexts are given by the following
BNF.
\begin{align*}
  \mathrm{Terms} &&
  M,N := \ &
  \cdots \mid
  \oc M \mid
  \letin{\oc x}{M}{N} \mid
  \fix{\tau,\sigma}{f,x,M} \mid
  \const{a} \cdot M \mid
  M + N \\
  \mathrm{Values} &&
  V,U := \ & \cdots
  \mid \fix{\tau,\sigma}{f,x,M}
  \mid \oc V
  \\
  \text{Contexts}
  &&
  C[-] :=
  &
  \cdots \mid
  \oc C[-] \mid
  \letin{\oc x}{C[-]}{M} \mid
  \letin{\oc x}{M}{C[-]}%  \\
  % &&&
  % \mid \fix{\tau,\sigma}{f,x,C[-]}
\end{align*}
Namely, we have graded exponentials $\oc (-)$,
let-bindings for the graded comonad, recursion,
unary multiplications and addition. We add these
term constructors so as to simplify the definition
of the observational metric on $\LLL{S}$. Typing
rules are given in Figure~\ref{fig:fuzz-typing},
and evaluation rules are given in
Figure~\ref{fig:fuzz-evaluation_rules}. We
naturally extend the definition of
$\fterm(\Gamma,\tau)$, $\fval(\tau)$ and
$\fval(\Gamma)$ to $\LLL{S}$. We write
$C[-] \colon (\Gamma,\tau) \to (\Delta,\sigma)$
when $C[-]$ satisfies the following conditions.
\begin{itemize}
\item For all terms $\Gamma \vdash M:\tau$, we
  have $\Delta \vdash C[M]:\sigma$.
\item For a fresh variable $y$,
  we have $y:_{1} \oc_{k_{1}}\tau_{1} \multimap \cdots
  \oc_{k_{n}}\tau_{n} \multimap \tau,\Delta \vdash
  C[y\,\oc x_{1} \cdots \, \oc x_{n}] : \sigma$
  where $\Gamma = (x_{1}:_{k_{1}} \tau_{1},
  \ldots,x_{n}:_{k_{n}}\tau_{n})$.
\end{itemize}
We do not have $\mathbf{fix}$ in the definition of
contexts because when a hole of a context is under
$\mathbf{fix}$, then the second condition never
holds. Intuitively, the second condition means
that the hole $[-]$ appears linearly in the
context $C[]$.

% \begin{remark}\label{rem:fuzz-weakeing}
%   In the derivation rules for $\oc$ and
%   let-bindings, we have conditions
%   $n \cdot \Gamma \leq \Delta$. These conditions
%   are necessary for proving weakening
%   (Proposition~\ref{prop:weakening}). For example,
%   the last step of the derivation of
%   $y:_{1}\mathbf{R},x:_{\infty} \mathbf{R} \vdash
%   \oc x : \oc_{\infty}\mathbf{R}$
%   must be
%   \begin{equation*}
%     \prftree{
%       y:_{0} \mathbf{R},x:_{1}\mathbf{R} \vdash x:\mathbf{R}
%     }{
%       \infty \cdot (y:_{0} \mathbf{R},x:_{1}\mathbf{R})
%       \leq
%       (y:_{1} \mathbf{R},x:_{\infty}\mathbf{R})
%     }{
%       y:_{1}\mathbf{R},x:_{\infty} \mathbf{R} \vdash
%       \oc x: \oc_{\infty} \mathbf{R}
%     }.
%   \end{equation*}
%   This example also explains why we need
%   $0$-gradings in environemnts. 
% \end{remark}

\begin{figure}[t]
  \centering
  \begin{equation*}
    \prftree{
      a \in \bbR
    }{
      \Gamma \vdash  \const{a} : \mathbf{R}
    }
    \qquad
    \prftree{
    }{
      \Gamma \vdash \ast: \mathbf{I}
    }
    \qquad
    \prftree{
      f \in S
    }{
      \Gamma_{1} \vdash M_{1}:\mathbf{R}
    }{
      \ldots
    }{
      \Gamma_{\mathrm{ar}(f)} \vdash
      M_{\mathrm{ar}(f)}
      :\mathbf{R}
    }{
      \Gamma_{1} + \cdots + \Gamma_{\mathrm{ar}(f)}
      \vdash \const{f}
      (M_{1},\ldots,M_{\mathrm{ar}(f)}):\mathbf{R}
    }
  \end{equation*}
  \begin{equation*}
    \prftree{
      x :_{n} \tau \in \Gamma
    }{
      n \geq 1
    }{
      \Gamma
      \vdash x: \tau
    }
    \qquad
    \prftree{
      \Gamma
      \vdash M :\tau
    }{
      n \cdot \Gamma \leq \Delta
    }{
      \Delta \vdash \oc M : \oc_{n} \tau
    }
    \qquad
    \prftree{
      \Gamma,f:_{\infty} \tau \multimap \sigma,
      x:_{1}\tau
      \vdash M:\sigma
    }{
      \infty \cdot \Gamma \vdash
      \fix{\tau,\sigma}{f,x,M}
      : \tau \multimap \sigma
    }
  \end{equation*}
  \begin{equation*}
    \prftree{
      \Gamma \vdash M : \mathbf{R}
    }{
      \Delta \vdash N : \mathbf{R}
    }{
      \Gamma + \Delta \vdash M + N : \mathbf{R}
    }
    \qquad
    \prftree{
      \Gamma \vdash M : \oc_{n}\mathbf{R}
    }{
      |a| \leq n
    }{
      \Gamma \vdash \const{a} \cdot M : \mathbf{R}
    }
  \end{equation*}
  \begin{equation*}
    \prftree{
      \Gamma , x :_{1} \sigma  \vdash M:\tau
    }{
      \Gamma
      \vdash
      \lambda x:\sigma.\,M : \sigma \multimap \tau
    }
    \qquad
    \prftree{
      \Gamma \vdash M : \sigma \multimap \tau
    }{
      \Delta \vdash N : \sigma
    }{
      \Gamma + \Delta \vdash
      M\,N : \tau
    }
    \qquad
    \prftree{
      \Gamma \vdash M : \tau
    }{
      \Delta \vdash N : \sigma
    }{
      \Gamma + \Delta \vdash
      M \otimes N : \tau \otimes \sigma
    }
  \end{equation*}
  \begin{equation*}
    \prftree{
      \Gamma \vdash M : \mathbf{I}
    }{
      \Xi \vdash N : \tau
    }{
      n \cdot \Gamma \leq \Delta 
    }{
      \Delta + \Xi
      \vdash
      \letin{\ast}{M}{N} : \tau
    }
    \qquad
    \prftree{
      \Gamma \vdash M : \sigma_{1} \otimes \sigma_{2}
    }{
      \Xi,x:_{n}\sigma_{1},y:_{n}\sigma_{2} \vdash N:\tau
    }{
      n \cdot \Gamma \leq \Delta
    }{
      \Delta + \Xi
      \vdash
      \letin{x \otimes y}{M}{N}:\tau
    }
  \end{equation*}
  \begin{equation*}
    \prftree{
      \Gamma \vdash M:\oc_{m} \sigma
    }{
      \Xi, x :_{n \cdot m} \sigma \vdash
      N :\tau
    }{
      n \cdot \Gamma \leq \Delta
    }{
      \Delta +
      \Xi
      \vdash
      \letin{\oc x}{M}{N}
      :\tau
    }
  \end{equation*}
  \caption{Typing Rules}
  \label{fig:fuzz-typing}
\end{figure}

\begin{figure}[t]
  \centering
  \begin{equation*}
    \prftree{
      V
      \hookrightarrow
      V
    }
    \qquad
    \prftree{
      M_{1}
      \hookrightarrow
      \const{a_{1}}
    }{
      \ldots
    }{
      M_{n}
      \hookrightarrow
      \const{a_{n}}
    }{
      \const{f}(M_{1},\ldots,M_{n})
      \hookrightarrow
      \const{f(a_{1},\ldots,a_{n})}
    }
    \qquad
    \prftree{
      M \hookrightarrow \const{a}
    }{
      N \hookrightarrow \const{b}
    }{
      c = a + b
    }{
      M + N \hookrightarrow \const{c}
    }
  \end{equation*}
  \begin{equation*}
    \prftree{
      M \hookrightarrow \oc\const{b}
    }{
      ab = c
    }{
      \const{a} \cdot M \hookrightarrow \const{c}
    }
    \qquad
    \prftree{
      M
      \hookrightarrow
      \lambda x:\tau.\,M'
    }{
      N
      \hookrightarrow
      V
    }{
      M'[V/x]
      \hookrightarrow U
    }{
      M\,N
      \hookrightarrow
      U
    }
  \end{equation*}
  \begin{equation*}
    \prftree{
      M \hookrightarrow V
    }{
      N \hookrightarrow U
    }{
      M \otimes N
      \hookrightarrow
      V \otimes U
    }
    \qquad
    \prftree{
      M \hookrightarrow \ast
    }{
      N \hookrightarrow V
    }{
      \letin{\ast}{M}{N}
      \hookrightarrow
      V
    }
    \qquad
    \prftree{
      M
      \hookrightarrow
      V \otimes U
    }{
      N[V/x,U/y]
      \hookrightarrow
      W
    }{
      \letin{x \otimes y}{M}{N}
      \hookrightarrow
      W
    }
  \end{equation*}
  \begin{equation*}
    \prftree{
      M \hookrightarrow
      \fix{\tau,\sigma}{f,x,M'}
    }{
      N \hookrightarrow V
    }{
      M'[\fix{\tau,\sigma}{f,x,M'}/f,
      V/x
      ]
      \hookrightarrow U
    }{
      M\,N \hookrightarrow U
    }
  \end{equation*}
  \begin{equation*}
    \prftree{
      M \hookrightarrow V
    }{
      \oc M \hookrightarrow \oc V
    }
    \qquad
    \prftree{
      M \hookrightarrow \oc V
    }{
      N[V/x] \hookrightarrow U
    }{
      \letin{\oc x}{M}{N}
      \hookrightarrow
      U
    }
  \end{equation*}
  \caption{Evaluation Rules}
  \label{fig:fuzz-evaluation_rules}
\end{figure}
The following propositions can be shown by
induction on derivations of type judgements.
% \begin{proposition}[Weakening]\label{prop:weakening}
%   Let $\Gamma \vdash M:\tau$ be a well-typed term,
%   and let $\Delta$ be an environment. If for any
%   $x:_{n} \sigma \in \Gamma$, there is
%   $x:_{m} \sigma \in \Delta$ such that $m \geq n$,
%   then $\Delta \vdash M : \tau$.
% \end{proposition}
\begin{proposition}[Substitution]
  If $\Gamma \vdash M:\tau$ and
  $\gamma \in \fval(\Gamma)$, then
  $\vdash M\gamma:\tau$.
\end{proposition}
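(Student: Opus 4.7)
The plan is to strengthen the statement before doing induction. Namely, I would prove: if $\Gamma_1 + \Gamma_2 \vdash M : \tau$ and $\gamma \in \fval(\Gamma_1)$ (which is well-defined since $\fval$ depends only on $\abs{\Gamma_1}$), then $\Gamma_2 \vdash M\gamma : \tau$. The stated proposition is the special case where $\Gamma_2$ is empty. The strengthening is forced by the binder cases: to push a substitution under $\lambda x.M$ we need to be able to keep $x$ in the remaining context. Everything then proceeds by induction on the derivation of $\Gamma_1 + \Gamma_2 \vdash M : \tau$.

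For the variable rule, $x\gamma$ is either a supplied closed value with the correct type or $x$ itself, which is still typeable in $\Gamma_2$. The constants $\const{a}$ and $\ast$ are untouched by substitution. For the multi-premise rules (application, tensor, addition, function symbols, $\const{a} \cdot (-)$, and the three $\letin{}{}{}$ rules), I would use the additive splittings of the environment built into the rule: since $\gamma$ is determined by its action on $\abs{\Gamma_1}$, it restricts compatibly to any subenvironment whose underlying types form a subsequence, giving one substitution per premise; I then apply the induction hypothesis to each premise and reassemble the conclusion using the same rule. For $\lambda x.M$ and the binder branches of the $\letin{}{}{}$ rules, I would $\alpha$-convert to make the bound variable(s) fresh with respect to $\gamma$ and apply the induction hypothesis to the body with those variables retained in the $\Gamma_2$ part. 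The rule $\fix{\tau,\sigma}{f,x,M}$ is similar, with $f$ and $x$ treated as fresh binders; the side condition $\infty \cdot \Gamma \vdash \fix{\tau,\sigma}{f,x,M}$ reappears verbatim since substitution does not touch gradings.

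The only step that deserves a little care is the promotion rule, whose premise is $\Gamma \vdash M : \tau$ with side condition $n \cdot \Gamma \leq \Delta$. Here the induction hypothesis yields $\Gamma' \vdash M\gamma : \tau$ where $\Gamma'$ is the piece of $\Delta$ not covered by $\gamma$; the side condition then survives because $n \cdot (-)$ and $\leq$ are computed pointwise on gradings, which substitution leaves alone, so the promoted judgement $\Delta' \vdash \oc (M\gamma) : \oc_n \tau$ is obtained by reapplying the same rule with the corresponding $\Delta'$. I do not expect any real obstacle: the graded typing discipline was set up so that $+$, $n \cdot (-)$, and $\leq$ are congruences with respect to the underlying type skeleton $\abs{-}$, which is exactly what makes substitution compositional. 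The main thing to track carefully, as usual in linear/graded systems, is that the splitting of $\gamma$ across premises matches the additive splitting of the environment imposed by each typing rule.
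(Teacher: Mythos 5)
Your proof is correct and matches the paper's approach: the paper dispatches this proposition with the single remark that it can be shown ``by induction on derivations of type judgements,'' and your strengthened induction (keeping a residual context so the binder cases go through) is the standard way to carry that out. One notational caution: the paper already reserves $\Gamma_1 + \Gamma_2$ for the pointwise \emph{sum of gradings} over environments with the same underlying variables ($\abs{\Gamma_1}=\abs{\Gamma_2}$), whereas your strengthened statement needs disjoint context \emph{extension} (so that the special case ``$\Gamma_2$ empty'' recovers the proposition), so you should state the generalization with concatenation rather than $+$.
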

\begin{proposition}[Preservation]
  If $\vdash M:\tau$ and $M \hookrightarrow V$,
  then $\vdash V : \tau$.
\end{proposition}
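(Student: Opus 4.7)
The plan is to induct on the derivation of $M \hookrightarrow V$. For each evaluation rule I invert the typing $\vdash M:\tau$ to isolate the typing of its principal subterms, apply the induction hypothesis to the evaluation premises, and, whenever the rule substitutes a value into a body, invoke the Substitution proposition before a final use of the induction hypothesis on the resulting evaluation.

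The base case $V\hookrightarrow V$ is immediate, and the rules for constants, addition, scalar multiplication, tensor, and $\oc$-formation are routine: the induction hypothesis yields typings of the computed subvalues, and the same typing rule that typed the original term now types the resulting value. For $\const{f}(M_{1},\ldots,M_{n})\hookrightarrow\const{f(a_{1},\ldots,a_{n})}$ and $M+N\hookrightarrow \const{c}$ one appeals to the typing rule for constants; for $\const{a}\cdot M\hookrightarrow \const{c}$ with $c=ab$, the premise $\vdash M:\oc_{n}\mathbf{R}$ and the side condition $|a|\leq n$ are only needed to justify the original typing and play no role for the value $\const{c}$.

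The genuinely interesting cases are the substitutive ones. For $M\,N\hookrightarrow U$ via $M\hookrightarrow \lambda x:\sigma.\,M'$ and $N\hookrightarrow V$, the induction hypothesis gives $\vdash \lambda x:\sigma.\,M':\sigma\multimap\tau$ and $\vdash V:\sigma$; inverting the abstraction rule forces the enclosing environment to be empty, leaving $x:_{1}\sigma\vdash M':\tau$, so Substitution yields $\vdash M'[V/x]:\tau$, and a second appeal to the induction hypothesis on $M'[V/x]\hookrightarrow U$ concludes. The $\mathbf{fix}$ case is analogous but performs a simultaneous substitution of both $\fix{\tau,\sigma}{f,x,M'}$ for $f$ and $V$ for $x$; here inversion of the $\mathbf{fix}$ rule on a closed value gives $f:_{\infty} \tau\multimap\sigma,\,x:_{1}\tau\vdash M':\sigma$, and Substitution applies to the resulting two-element environment. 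The cases $\letin{\ast}{M}{N}$ and $\letin{x\otimes y}{M}{N}$ proceed identically.

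The subtle case, and the one I expect to require the most care, is the graded let $\letin{\oc x}{M}{N}\hookrightarrow U$ with $M\hookrightarrow \oc V$. Inverting the typing rule of $\letin{\oc x}{-}{-}$ on a closed term forces $\Xi$ and $\Delta$ to be empty, and from $n\cdot \Gamma\leq\varnothing$ also $\Gamma=\varnothing$; this leaves $\vdash M:\oc_{m}\sigma$ and $x:_{n\cdot m}\sigma\vdash N:\tau$. The induction hypothesis yields $\vdash \oc V:\oc_{m}\sigma$, and inverting the $\oc$-rule forces the witnessing context to be empty as well, giving $\vdash V:\sigma$. The Substitution proposition then produces $\vdash N[V/x]:\tau$ regardless of the grade $n\cdot m$ on $x$: its hypothesis $\gamma\in\fval(\Gamma)$ tracks only underlying types, since a closed value is replicable at any grade. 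A final application of the induction hypothesis on $N[V/x]\hookrightarrow U$ completes this case, and with it the proof.
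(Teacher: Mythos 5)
Your proof is correct, and it is the standard argument the paper is gesturing at: the paper itself gives no details, stating only that Substitution and Preservation ``can be shown by induction on derivations'', so there is nothing more specific to compare against. Your structuring --- induction on the derivation of $M \hookrightarrow V$, with inversion of the typing rules (using that a closed conclusion $\Delta+\Xi=\varnothing$ forces all the component environments, including $\Gamma$ via $n\cdot\Gamma\leq\varnothing$, to be empty) followed by the Substitution proposition in the $\beta$-, $\mathbf{fix}$- and let-cases --- is exactly the expected one, and your handling of the graded cases is sound.
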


In general, a type judgement
$\Gamma \vdash M : \tau$ may have different
derivations. For example,
$x:_{3}\mathbf{R} \vdash x + x : \mathbf{R}$ has
the following derivations.
\begin{equation*}
  \prftree{
    x :_{1} \mathbf{R} \vdash x:\mathbf{R}
  }{
    x :_{2} \mathbf{R} \vdash x:\mathbf{R}
  }{
    x :_{3} \mathbf{R} \vdash x + x : \mathbf{R}
  },
  \qquad
  \prftree{
    x :_{2} \mathbf{R} \vdash x:\mathbf{R}
  }{
    x :_{1} \mathbf{R} \vdash x:\mathbf{R}
  }{
    x :_{3} \mathbf{R} \vdash x + x : \mathbf{R}
  }.
\end{equation*}
We can show that grading is the only source of
non-uniqueness of derivations. This observation is
useful to define denotational semantics for $\LLL{S}$
in Section~\ref{sec:fuzz-denotational-metric}.
\begin{proposition}\label{prop:uniqueness_grading}
  For any environment $\Gamma$ and any term $M$,
  if $D_{1}$ is a derivation of
  $\Gamma \vdash M : \tau$ and $D_{2}$ is a
  derivation of $\Gamma \vdash M : \sigma$, then
  $\tau$ can be obtained by changing gradings in
  $\sigma$, and $D_{1}$ can also be obtained by
  changing gradings in $D_{2}$.  
\end{proposition}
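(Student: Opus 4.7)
The plan is to prove a slightly strengthened statement by induction on $M$: \emph{if $\Gamma_1$ and $\Gamma_2$ satisfy $\abs{\Gamma_1} = \abs{\Gamma_2}$ and $D_i$ derives $\Gamma_i \vdash M : \tau_i$ for $i = 1,2$, then $\abs{\tau_1} = \abs{\tau_2}$ and $D_1, D_2$ coincide once all gradings are forgotten.} The original statement corresponds to the special case $\Gamma_1 = \Gamma_2$. The strengthening is needed because in rules with a context split $\Gamma = \Xi + \Delta$, two derivations of the same judgement may split $\Gamma$ differently, so the inductive hypothesis has to be applied to sub-derivations sitting in potentially different contexts that only agree up to gradings.

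For the induction step, the key observation is that the outermost constructor of $M$ uniquely determines the typing rule applied at the root of any derivation of $\Gamma \vdash M : \tau$. This is the standard consequence of the Church-style discipline: the type annotations on $\lambda x:\sigma.\,M'$ and on $\fix{\tau,\sigma}{f,x,M'}$ are part of the syntax, and the shapes of the remaining term constructors are pairwise distinct. Hence $D_1$ and $D_2$ must apply the same rule at the root, and it suffices to verify that the inductive hypothesis applies to each immediate sub-derivation.

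For each case one checks that the underlying contexts of the immediate sub-derivations are determined by $\abs{\Gamma}$ and $M$. For the variable rule and the constants $\const{a}, \ast$ the conclusion is immediate. For application $PQ$, tensor $P \otimes Q$ and function symbol applications $\const{f}(M_1,\ldots,M_n)$, the sub-contexts arise from a split which may differ between $D_1$ and $D_2$, but $\abs{\Xi_j + \Delta_j} = \abs{\Gamma}$ forces $\abs{\Xi_j} = \abs{\Delta_j} = \abs{\Gamma}$, so the IH applies to the sub-terms in contexts with coinciding underlying structure. For $\oc M'$, $\fix{\tau,\sigma}{f,x,M'}$, and the various let-bindings, the sub-derivations are typed in contexts of the form $\Gamma'$ with $n\cdot \Gamma' \leq \Delta$ (or $\Xi, x:_k \sigma$), whose underlying structure is again fixed by $\abs{\Gamma}$ and $M$; the grading choices (such as the index $n$ in $\oc_n \tau$ or the scaling factor in let-bindings) are free, but harmless since the conclusion only constrains underlying types.

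The main obstacle will be the careful bookkeeping for the $\oc$-rule and the let-bindings: there the context $\Delta + \Xi$ can be split in many ways, and the sub-derivation contexts are obtained by further dividing by a grading factor $n$, so one has to argue that $\abs{\cdot}$ is preserved by both $+$ and $n\cdot (-)$ in order to keep applying the inductive hypothesis. Once this bookkeeping is done, the IH gives $\abs{\tau_1} = \abs{\tau_2}$ for each sub-derivation, which assembles into the desired equality for $M$ itself, and the coincidence of underlying derivation shapes follows from the coincidence of root rules together with the coincidence of all sub-derivations.
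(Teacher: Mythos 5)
Your proposal is correct. The paper in fact states Proposition~\ref{prop:uniqueness_grading} without giving any proof (it only remarks, just before, that such propositions ``can be shown by induction on derivations''), so there is no official argument to compare against; your write-up supplies the missing details in a sound way. The one genuinely important idea is the strengthening of the induction hypothesis to contexts $\Gamma_1,\Gamma_2$ with $\abs{\Gamma_1}=\abs{\Gamma_2}$ rather than $\Gamma_1=\Gamma_2$, and you identify it for exactly the right reason: the paper's own example $x:_3\mathbf R\vdash x+x:\mathbf R$ shows that a split $\Gamma+\Delta$ of the same conclusion context can distribute gradings differently in two derivations, so the sub-derivations live in contexts that agree only up to grading. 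Your case analysis then goes through: the Church-style annotations and the pairwise-distinct term constructors pin down the root rule; the definition of $+$ (which requires $\abs{\Gamma}=\abs{\Delta}$ and preserves $\abs{\cdot}$) and of $\leq$ and $n\cdot(-)$ (which also preserve underlying types) guarantee that every immediate sub-derivation sits in a context whose underlying structure is determined by $\abs{\Gamma}$, $M$, and (for the let-bindings) the underlying type of the bound subterm, which the IH on that subterm already controls. Assembling the sub-derivations yields both the equality of underlying types and the coincidence of derivation skeletons.
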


\section{Logical Metric and Observational Metric}
\label{sec:fuzz-logic-metr-observ}

\subsection{Metric Logical Relation}
\label{sec:fuzz-metr-logic-relat}

We define metric logical relations
\begin{equation*}
  \{
  (-) \preceq_{r}^{\tau} (-)
  \subseteq \fterm(\tau) \times \fterm(\tau)\}_{
    \tau \in \type,\
    r \in \bbRR
  }
\end{equation*}
for $\LLL{S}$ by induction on $\tau$ as follows.
\begin{align*}
  M \preceq_{r}^{\mathbf{R}} N
  &
  \iff
  \text{if} \
  M \hookrightarrow a,
  \ \text{then} \
  N \hookrightarrow b
  \ \text{and} \
  |a - b| \leq r \\
  M \preceq_{r}^{\mathbf{I}} N
  &\iff
  \text{if} \
  M \hookrightarrow \ast,
  \ \text{then} \
  N \hookrightarrow \ast
  \\
  M \preceq_{r}^{\tau \otimes \sigma} N
  &\iff
  \text{if} \
  M \hookrightarrow V \otimes V',
  \ \text{then} \
  N \hookrightarrow U \otimes U'
  \\
  &\mathrel{\phantom{\iff}}
  \text{and} \
  \exists s,s' \in \bbRR,\,
  V \preceq_{s}^{\tau} U
  \text{ and }
  V' \preceq_{s'}^{\sigma} U'
  \text{ and }
  s + s' \leq r
  \\
  M \preceq_{r}^{\tau \multimap \sigma} N
  &\iff
  \text{if} \
  M \hookrightarrow V,
  \ \text{then} \
  N \hookrightarrow V'
  \\
  &\mathrel{\phantom{\iff}}
  \text{and} \
  \forall U,U' \in \fval(\tau),\,
  \text{if} \
  U \preceq_{s}^{\tau} U',
  \ \text{then} \
  V\,U \preceq_{r + s}^{\sigma} V'\,U'
  \\
  M \preceq_{r}^{\oc_{n}\tau} N
  &\iff
  \text{if} \
  M \hookrightarrow \oc V,
  \ \text{then} \
  N \hookrightarrow \oc U
  \\
  &\mathrel{\phantom{\iff}}
  \text{and} \
  \exists s \in \bbRR,\,
  V \precsim_{s}^{\tau} U
  \ \text{and} \
  r \geq
  n \, s.
\end{align*}
Let
$\Gamma=(x_{1}:_{k_{1}}\sigma_{1},\ldots,
x_{n}:_{k_{n}}\sigma_{n})$ be an environment. For
$\gamma=(V_{1},\ldots,V_{n})$ and
$\gamma'=(V'_{1},\ldots,V'_{n})$ in
$\fval(\Gamma)$, and for
$\epsilon =(r_{1},\ldots,r_{n})\in (\bbRR)^{n}$,
we write
$\gamma \preceq_{\epsilon}^{\Gamma} \gamma'$ when
we have
$V_{1} \preceq_{r_{1}}^{\sigma_{1}} V'_{1},
\ldots,V_{n} \preceq_{r_{n}}^{\sigma_{n}} V'_{n}$.
We define $\epsilon \cdot \Gamma$ to be
$r_{1}k_{1} + \cdots + r_{n}k_{n}$. Here, we
define $0 \infty$ to be $0$. Then, for terms
$\Gamma \vdash M :\tau$ and
$\Gamma \vdash N:\tau$, we define
$\dext_{\Gamma,\tau}(M,N) \in \bbRR$ by
\begin{equation*}
  \dext_{\Gamma,\tau}(M,N)
  =
  \inf \left\{r \in \bbRR \,\middle|\,
    \begin{array}{l}
      \forall \gamma,\gamma' \in \fval(\Gamma),\,
      \text{if} \
      \gamma \preceq_{\epsilon}^{\Gamma} \gamma',
      \ \text{then} \\
      M\gamma
      \preceq_{r + \epsilon \cdot \Gamma}^{\tau}
      N\gamma'
      \ \text{and} \
      N\gamma
      \preceq_{r + \epsilon \cdot \Gamma}^{\tau}
      M\gamma'
    \end{array}
  \right\}.
\end{equation*}
% \begin{proposition}
%   For any environment $\Gamma$ and any type
%   $\tau$, the function $\dext_{\Gamma,\tau}$ is a
%   metric on $\fterm(\Gamma,\tau)$. Furthermore,
%   $\{\dext_{\Gamma,\tau}\}_{\Gamma \in \env,\tau
%     \in \type}$ is weakly admissible.
% \end{proposition}
% \begin{proof}
%   It is straightforward to show that $\dobs$ given
%   in the next section is a metric on $\LLL{S}$.
%   Hence, it follows from
%   Theorem~\ref{thm:fuzz-dobs=dext} that
%   $\dext_{\Gamma,\tau}$ is a metric on
%   $\fterm(\Gamma,\tau)$ and satisfies (A!1). (A!2)
%   and (A!3) follow from the definition of $\dext$.
% \end{proof}

We call $\dext$ the \emph{logical metric} on
$\LLL{S}$. For later use, we prove the fundamental
lemma.
\begin{lemma}\label{lem:fuzz-basic-lemma}
  Let
  $\Gamma =
  (x_{1}:_{k_{1}}\tau_{1},\ldots,x_{n}:_{k_{n}}\tau_{n})$
  be an environment, and let
  $\Gamma \vdash M : \tau$ be a term. Given
  $\gamma, \gamma' \in \fval(\Gamma)$ such that
  $\gamma \preceq_{\epsilon}^{\Gamma} \gamma'$,
  then we have
  $M\gamma \preceq_{\epsilon \cdot \Gamma}^{\tau}
  M\gamma'$.
\end{lemma}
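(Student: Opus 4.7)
The plan is to proceed by induction on the derivation of $\Gamma \vdash M : \tau$, which by Proposition~\ref{prop:uniqueness_grading} is essentially an induction on $M$, with the gradings on the environment threaded through each case. A preliminary lemma, shown by straightforward induction on $\tau$, is that $\preceq_{r}^{\tau}$ is monotone in $r$; this is invoked tacitly throughout.

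For the atomic and connective rules, the argument is routine. For the variable rule $\Gamma \vdash x_i : \tau_i$ with $k_i \geq 1$, the bound $r_i \leq r_i k_i \leq \epsilon \cdot \Gamma$ is immediate. For $\const{f}(M_1, \ldots, M_{\mathrm{ar}(f)})$, the IH on each $M_i$ combined with non-expansiveness of $f$ on $\bbR^{\otimes \mathrm{ar}(f)}$ yields the sum-of-distances bound. The other rules (abstraction, application, tensor, the eliminators for $\mathbf I$ and $\otimes$, addition, and scalar multiplication $\const{a} \cdot (-)$) combine inductive hypotheses using the additivity of $\epsilon \cdot (-)$ over environment addition. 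For the $\oc$-introduction rule with premise $\Gamma \vdash M : \tau$ and side condition $n \cdot \Gamma \leq \Delta$, the IH gives $M\gamma \preceq_{\epsilon \cdot \Gamma}^{\tau} M\gamma'$, whence $\oc M\gamma \preceq_{n(\epsilon \cdot \Gamma)}^{\oc_n \tau} \oc M\gamma'$, and $n(\epsilon \cdot \Gamma) \leq \epsilon \cdot \Delta$ follows componentwise from $n \cdot \Gamma \leq \Delta$. The $\oc$-elimination $\letin{\oc x}{M}{N}$ is handled similarly: the IH on $M$ produces related reducts $\oc V$ and $\oc U$ with $V \preceq_{s}^{\sigma} U$ for some $s$ satisfying $ms \leq \epsilon \cdot \Gamma$, and the IH on $N$ (extending the substitution by $V$ and $U$) yields the final bound $\epsilon \cdot \Xi + (nm)s \leq \epsilon \cdot \Xi + n(\epsilon \cdot \Gamma) \leq \epsilon \cdot (\Delta + \Xi)$.

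The main obstacle is the $\mathbf{fix}$ rule, whose conclusion has environment $\infty \cdot \Gamma$. Since $\epsilon \cdot (\infty \cdot \Gamma)$ equals $0$ when $\epsilon = 0$ and $\infty$ otherwise (making the latter case vacuous), it suffices to show $\fix{\tau,\sigma}{f,x,M}\gamma \preceq_{0}^{\tau \multimap \sigma} \fix{\tau,\sigma}{f,x,M}\gamma'$ whenever $\gamma \preceq_{0}^{\infty \cdot \Gamma} \gamma'$. A direct application of the outer IH on $M$ would be circular, since substituting the two fix values for $f$ presupposes the very property we wish to establish. We break the circularity by syntactic approximation: set $\phi^{(0)} := \fix{\tau,\sigma}{g, y, g\,y}$ (a value whose application diverges on every input) and $\phi^{(k+1)} := \lambda x.\,M[\phi^{(k)}/f]$, and prove by induction on $k$ that $\phi^{(k)}\gamma \preceq_{0}^{\tau \multimap \sigma} \phi^{(k)}\gamma'$. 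The base case is vacuous since applications of $\phi^{(0)}\gamma$ never converge; the step invokes the outer IH on the subderivation for $M$ with substitution extended by $\phi^{(k)}\gamma/f$ and $U/x$ for an arbitrary pair $U \preceq_{s}^{\tau} U'$, and the crucial cancellation $\infty \cdot 0 = 0$ collapses the contribution of $f$, leaving only $s$. The result lifts to $\fix{\tau,\sigma}{f,x,M}$ itself via the standard fact that every convergent evaluation $\fix{\tau,\sigma}{f,x,M}\gamma\,U \hookrightarrow W$ uses finitely many unfoldings and hence agrees with $\phi^{(k)}\gamma\,U \hookrightarrow W$ for some $k$, and conversely that convergence at the approximation level implies convergence of the full fix to the same value.
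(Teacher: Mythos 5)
Your overall strategy---induction on the typing derivation, with the $\mathbf{fix}$ case handled by syntactic approximants $\phi^{(k)}$ and an unwinding argument---is a genuinely different route from the paper's, which simply step-indexes the logical relation by the number of $\mathbf{fix}$-reductions occurring in $M\hookrightarrow V$ and otherwise follows Reed and Pierce. The first-order, connective and $\oc$ cases are handled correctly. But the $\mathbf{fix}$ case, which is the only real content of the lemma, has two problems. The lesser one: the case $\epsilon\cdot(\infty\cdot\Gamma)=\infty$ is not vacuous. The relation $M\gamma\preceq^{\tau}_{\infty}M\gamma'$ is not the total relation, because at every type the clause ``if $M\gamma\hookrightarrow\cdots$ then $M\gamma'\hookrightarrow\cdots$'' retains its convergence content even when the numerical bound is $\infty$; this is exactly the hard part of a fundamental lemma for a language with recursion. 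The case is repairable (your approximation argument runs unchanged with bound $\infty$), but as written you dismiss a case that still needs proof.

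The more serious gap is the final lifting step from the approximants to $\fix{\tau,\sigma}{f,x,M}$ itself. The claimed ``standard fact'' that a convergent evaluation $\fix{\tau,\sigma}{f,x,M}\gamma\,U\hookrightarrow W$ agrees with $\phi^{(k)}\gamma\,U\hookrightarrow W$ for some $k$, \emph{with the same value $W$}, holds only when the result type $\sigma$ is ground. When $\sigma$ contains $\multimap$ or $\oc$, the value produced by the fixpoint contains residual occurrences of $\fix{\tau,\sigma}{f,x,M}$ while the value produced by $\phi^{(k)}$ contains lower approximants, so the two values are syntactically distinct and must be related hereditarily. Moreover, $\preceq^{\sigma}_{r}$ at higher types quantifies over arbitrary further related arguments, and no single $k$ bounds the number of unfoldings needed across all of them; transferring $\phi^{(k)}\gamma\preceq_{0}\phi^{(k)}\gamma'$ to the fixpoint would require something like $W\preceq_{0}W_{k}$, which fails for every fixed $k$ because the left-hand side can converge on inputs where the $k$-th approximant diverges. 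This is precisely the circularity that the paper's step-indexed relation is introduced to break, and your sketch supplies no substitute for it---one would need an admissibility or unwinding theorem for $\preceq$ at all types, which is itself a logical-relations argument of comparable difficulty.
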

\begin{proof}
  The proof is essentially the same with
  \cite{10.1145/1932681.1863568} using step
  indexed logical relations counting the number of
  $\mathbf{fix}$-reductions in
  $M \hookrightarrow V$.
\end{proof}

\section{Observational Metric}
\label{sec:fuzz-observational-metric}

For terms $\Gamma \vdash M : \tau$ and
$\Gamma \vdash N : \tau$, we define
$\dobs_{\Gamma,\tau}(M,N) \in \bbRR$ by
\begin{equation*}
  \dobs_{\Gamma,\tau}(M,N)
  =
  \sup_{C[-] \colon (\Gamma,\tau)
    \to (\varnothing,\mathbf{R})
  }
  \inf
  \left\{r \in \bbRR \, \middle|\,
    C[M] \sqsubseteq_{r} C[N]
    \ \text{and} \
    C[N] \sqsubseteq_{r} C[M]
  \right\}
\end{equation*}
where for $\vdash L,L':\mathbf{R}$,
\begin{equation*}
  L \sqsubseteq_{r} L'
  \iff
  \text{if} \ L \hookrightarrow a,
  \ \text{then} \
  L' \hookrightarrow b
  \ \text{and} \
  |a-b| \leq r.
\end{equation*}

\begin{theorem}\label{thm:fuzz-dobs=dext}
  For terms $\Gamma \vdash M : \tau$ and
  $\Gamma \vdash N : \tau$, we have
  $\dobs_{\Gamma,\tau}(M,N) =
  \dext_{\Gamma,\tau}(M,N)$.
\end{theorem}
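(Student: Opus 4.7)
The plan is to mimic the strategy used to prove Theorem~\ref{thm:ctx-ext}, adapting it to the graded environments, the graded exponential types, and the one-sided observation $\sqsubseteq_r$.

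For the inequality $\dobs_{\Gamma,\tau}(M,N) \leq \dext_{\Gamma,\tau}(M,N)$, I would proceed by induction on the construction of the context $C[-]$, proving the following stronger claim: for every $C[-] \colon (\Gamma,\tau) \to (\Delta,\sigma)$ and every pair $\delta, \delta' \in \fval(\Delta)$ with $\delta \preceq_{\zeta}^{\Delta} \delta'$, if $r > \dext_{\Gamma,\tau}(M,N)$ then both $C[M]\delta \preceq_{r + \zeta \cdot \Delta}^{\sigma} C[N]\delta'$ and $C[N]\delta \preceq_{r + \zeta \cdot \Delta}^{\sigma} C[M]\delta'$. Each inductive step reduces to an application of the fundamental lemma (Lemma~\ref{lem:fuzz-basic-lemma}) to the immediately surrounding subterm. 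Specializing to $\Delta = \varnothing$ (so $\zeta \cdot \Delta = 0$) and $\sigma = \mathbf{R}$ yields $C[M] \sqsubseteq_r C[N]$ and $C[N] \sqsubseteq_r C[M]$, hence $\dobs_{\Gamma,\tau}(M,N) \leq r$.

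For the converse inequality $\dext_{\Gamma,\tau}(M,N) \leq \dobs_{\Gamma,\tau}(M,N)$, the strategy combines two auxiliary lemmas. The first is a \emph{triangle inequality} for the logical relations, proved by induction on $\tau$: if $M \preceq_{r}^{\tau} N$ and $N \preceq_{s}^{\tau} L$ then $M \preceq_{r+s}^{\tau} L$; the function-type case relies on Lemma~\ref{lem:fuzz-basic-lemma}, which yields $N\,U \preceq_{s}^{\rho} N\,U'$ from $U \preceq_{s}^{\sigma} U'$, bridging the two uses of the relation. The second is a \emph{closed-term context lemma}, the graded analog of Lemma~\ref{alem:ext->ctx}: for closed $\vdash M, N : \tau$, one has $M \preceq_{\dobs_{\varnothing,\tau}(M,N)}^{\tau} N$ and $N \preceq_{\dobs_{\varnothing,\tau}(M,N)}^{\tau} M$. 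This is proved by induction on $\tau$, building probing contexts at each higher-type step: at $\sigma \multimap \rho$ one plugs the hole with value applications $[-]\,U$; at $\sigma \otimes \sigma'$ one destructs with $\letin{x \otimes y}{[-]}{-}$; and at $\oc_n \tau$ one uses $\letin{\oc x}{[-]}{-}$, with the unary multiplication $\const{a} \cdot (-)$ and the addition $M + N$ of $\LLL{S}$ providing the scaling required to translate between $\oc_n$-level and $\tau$-level bounds.

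Equipped with the two lemmas, the main argument proceeds as follows. Fix any $r > \dobs_{\Gamma,\tau}(M,N)$ and any $\gamma \preceq_{\epsilon}^{\Gamma} \gamma'$. Closing off $\gamma$ by iterated let-bindings $\letin{\oc x_i}{\oc V_i}{-}$ turns every closed observer $D[-] \colon (\varnothing,\tau) \to (\varnothing,\mathbf{R})$ into a context from $(\Gamma,\tau)$ to $(\varnothing,\mathbf{R})$, so that $\dobs_{\varnothing,\tau}(M\gamma,N\gamma) \leq r$; the closed-term context lemma then yields $M\gamma \preceq_{r}^{\tau} N\gamma$. Lemma~\ref{lem:fuzz-basic-lemma} supplies $N\gamma \preceq_{\epsilon \cdot \Gamma}^{\tau} N\gamma'$, and the triangle inequality combines these bounds into $M\gamma \preceq_{r + \epsilon \cdot \Gamma}^{\tau} N\gamma'$. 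The symmetric statement $N\gamma \preceq_{r + \epsilon \cdot \Gamma}^{\tau} M\gamma'$ follows by swapping the roles of $M$ and $N$. The main obstacle is the closed-term context lemma at the exponential type $\oc_n \tau$: one must extract, from an $\mathbf{R}$-valued observer, a $\tau$-level probing context whose distance bound scales correctly with the grading $n$, and carefully track the multiplicative interplay between gradings and distances through the inductive construction of these contexts.
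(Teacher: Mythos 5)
Your proposal is correct in outline and its overall architecture coincides with the paper's: the paper likewise splits the theorem into the two inequalities (Lemmas~\ref{lem:fuzz-dext<=dobs} and~\ref{lem:fuzz-dobs<=dext}), proves $\dext\leq\dobs$ by induction on $\tau$ with probing contexts built from pair destructors, applications, addition and unary multiplication (it explicitly remarks that unary multiplication is what makes the $\oc_{n}\sigma$ case go through --- exactly the point you flag as the main obstacle), and glues the pieces together with the fundamental lemma (Lemma~\ref{lem:fuzz-basic-lemma}) and a triangle-inequality argument for $\preceq$. Where you genuinely diverge is the direction $\dobs\leq\dext$: you prove it by induction on the structure of $C[-]$, whereas the paper exploits the linearity condition built into the definition of $C[-]\colon(\Gamma,\tau)\to(\Delta,\sigma)$ --- namely that $y:_{1}\oc_{k_{1}}\tau_{1}\multimap\cdots\multimap\tau,\Delta\vdash C[y\,\oc x_{1}\cdots\oc x_{n}]:\sigma$ --- to factor $C[M]$ through a single linear variable $y$, substitute the $\lambda$-closures of $M$ and $N$ for $y$, and apply the fundamental lemma \emph{once}; the price is an appeal to adequacy of the denotational model (Theorem~\ref{thm:fuzz-den-adequacy}) to see that the factored term co-evaluates with $C[M]$ in the presence of recursion. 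Your induction avoids adequacy but must confront, case by case, the clauses where gradings could multiply the bound $r$ (namely $\oc C[-]$ and the graded let-bindings): only the linearity condition on holes guarantees that the relevant grading is $1$ there, and this should be said explicitly, since a naive structural induction on the raw context grammar would fail. Two smaller points: your closed-term context lemma needs the side condition $\dobs_{\varnothing,\tau}(M,N)<\infty$ (the relation $\preceq_{\infty}^{\tau}$ still demands co-convergence, so the statement is false verbatim in the degenerate case, which one instead dispatches trivially since $\dext\leq\infty$); and your reduction from open to closed instances via $\letin{\oc x_{i}}{\oc V_{i}}{[-]}$ silently assumes that $D[\letin{\oc \XX}{\oc V}{M}]$ and $D[M\gamma]$ evaluate to the same constant even when the hole of $D$ sits under a binder --- again the kind of fact the paper extracts from adequacy, and which its version of the context lemma (Lemma~\ref{lem:fuzz-dext<=dobs}, stated directly for open $M,N$ under an arbitrary substitution $\gamma$) is designed to avoid.
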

\begin{proof}
  The statement follows from
  Lemma~\ref{lem:fuzz-dext<=dobs}
  and
  Lemma~\ref{lem:fuzz-dobs<=dext}
  shown below.
\end{proof}

\begin{lemma}\label{lem:fuzz-dext<=dobs}
  For any environment
  $\Gamma= (x_{1}:_{k_{1}}\tau_{1},\ldots,x_{n}:_{k_{n}}\tau_{n})$
  and any pair of terms $\Gamma \vdash M : \tau$
  and $\Gamma \vdash N : \tau$, if
  $\gamma \in \fval(\Gamma)$ and
  $\dobs_{\Gamma,\tau}(M,N) < \infty$, then
  $M\gamma
  \preceq_{\dobs_{\Gamma,\tau}(M,N)}^{\tau}
  N\gamma$.
\end{lemma}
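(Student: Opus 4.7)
The plan is to prove the lemma by induction on the type $\tau$, adapting the strategy of Lemma~\ref{alem:ext->ctx} from the linear setting to the graded one. Throughout, fix $\Gamma$, terms $\Gamma\vdash M,N:\tau$ with $r := \dobs_{\Gamma,\tau}(M,N) < \infty$, and $\gamma=(V_{1},\ldots,V_{n})\in\fval(\Gamma)$; the goal is $M\gamma \preceq_{r}^{\tau} N\gamma$. The main building block is the ``closing context''
\[
K[-] \;:=\; (\lambda y_{1}:\oc_{k_{1}}\tau_{1}.\cdots \lambda y_{n}:\oc_{k_{n}}\tau_{n}.\,\letin{\oc x_{1}}{y_{1}}{\cdots \letin{\oc x_{n}}{y_{n}}{[-]} \cdots})\,\oc V_{1}\,\cdots\,\oc V_{n},
\]
which types as a context $(\Gamma,\tau)\to(\varnothing,\tau)$ and satisfies $K[P]\hookrightarrow P\gamma$ for all $\Gamma\vdash P:\tau$. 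For the base cases $\tau=\mathbf R$ and $\tau=\mathbf I$, I will plug $K$ into (respectively) the identity and $\letin{\ast}{[-]}{\const{0}}$ to obtain $\mathbf R$-observations; the defining inequalities of $\dobs$ combined with $r<\infty$ immediately force convergence of $N\gamma$ within distance $r$.

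For the tensor case $\tau=\tau_{1}\otimes \tau_{2}$, writing $M\gamma\hookrightarrow V_{1}\otimes V_{2}$ and $N\gamma\hookrightarrow U_{1}\otimes U_{2}$, I plan to establish the key inequality $\dobs_{\varnothing,\tau_{1}}(V_{1},U_{1}) + \dobs_{\varnothing,\tau_{2}}(V_{2},U_{2})\leq r$ by combining arbitrary observation contexts $C_{1},C_{2}$ on the two components into the joint context $D[-] := K[\letin{y\otimes z}{[-]}{C_{1}[y] + C_{2}[z]}]$, after normalising each $C_{i}$ via the sign-flipping gadget $\const{-1}\cdot \oc(\,\cdot\,)$ so that both component differences are non-negative. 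The induction hypothesis on $\tau_{1},\tau_{2}$ then supplies $V_{i}\preceq_{\dobs(V_{i},U_{i})}^{\tau_{i}} U_{i}$, and summation yields $V_{1}\otimes V_{2} \preceq_{r}^{\tau_{1}\otimes \tau_{2}} U_{1}\otimes U_{2}$. For the arrow case $\tau=\tau_{1}\multimap \tau_{2}$, writing $V,V'$ for the values of $M\gamma,N\gamma$ and letting $U\preceq_{s}^{\tau_{1}} U'$, I will decompose $VU \preceq_{r+s}^{\tau_{2}} V'U'$ through $VU'$: first, $VU \preceq_{s}^{\tau_{2}} VU'$ follows from the fundamental lemma (Lemma~\ref{lem:fuzz-basic-lemma}) applied to the term $y\,z$ in context $y:_{1}\tau_{1}\multimap\tau_{2},\,z:_{1}\tau_{1}$; second, $VU' \preceq_{r}^{\tau_{2}} V'U'$ follows from the induction hypothesis on $\tau_{2}$ applied to $Mx,Nx$ in context $\Gamma,\,x:_{1}\tau_{1}$ with substitution $(\gamma,U')$, after showing $\dobs(Mx,Nx)\leq r$ via the context lifting $D[-] \mapsto D[(-)\,x]$. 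The two steps will be combined by a separately verified transitivity of $\preceq$, provable by a parallel induction on $\tau$ with reflexivity supplied by Lemma~\ref{lem:fuzz-basic-lemma}.

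The exponential case $\tau=\oc_{n}\sigma$ will be the main obstacle. Suppose $M\gamma\hookrightarrow \oc V$ and $N\gamma\hookrightarrow \oc V'$, and introduce $M_{0} := \letin{\oc x}{M}{x}$ and $N_{0} := \letin{\oc x}{N}{x}$; these type as $\Gamma \vdash M_{0},N_{0}:\sigma$ and satisfy $M_{0}\gamma\hookrightarrow V$, $N_{0}\gamma\hookrightarrow V'$. The induction hypothesis on $\sigma$ (and its symmetric counterpart, via the symmetry of $\dobs$) will give $V\precsim_{\dobs(M_{0},N_{0})}^{\sigma} V'$, so it remains to prove $n\cdot \dobs(M_{0},N_{0}) \leq r$. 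For finite $n$, I will lift any $\sigma$-observation context $C$ on $M_{0},N_{0}$ to the $\oc_{n}\sigma$-observation context $C'[-] := \letin{\oc x}{[-]}{\const{n}\cdot \oc C[x]}$; this type-checks precisely because the unary multiplication rule allows scaling a value in $\oc_{n}\mathbf R$ by a constant of magnitude at most $n$, and the observed difference is scaled by exactly $n$. For $n=\infty$ the multiplication trick is unavailable, so I will replace $\const{n}\cdot \oc C[x]$ with a $k$-fold sum of the form $C[x] + \cdots + C[x]$, which remains admissible for every $k\geq 1$ because the envelope is unbounded; the resulting arbitrarily large lower bounds $k\cdot \dobs(M_{0},N_{0})\leq r$ will force $\dobs(M_{0},N_{0})=0$, matching the exponential clause of $\preceq$ under the convention $\infty\cdot 0 = 0$. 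The essential delicacy throughout this case is the grading arithmetic: one must verify at each step that the lifted contexts remain well-typed in Fuzz, in particular that the multiplicities in let-binding bodies are compatible with the ambient exponential.
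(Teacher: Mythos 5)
Your proposal is correct and follows essentially the same route as the paper's proof: induction on $\tau$, closing up with a context built from $\lambda y_i{:}\oc_{k_i}\tau_i$-abstractions and $\letin{\oc x_i}{y_i}{-}$ bindings, combining component observations with $+$ in the tensor case, and splitting the arrow case into a fundamental-lemma step and an induction-hypothesis step glued by transitivity of $\preceq$ (you route through $VU'$ where the paper routes through $V'U$, an immaterial difference). The paper only details the $\otimes$ and $\multimap$ cases and merely remarks that unary multiplication is needed for $\oc_n$; your treatment of that case (and of the sign issue in the combined tensor observation) fills in exactly what the paper leaves implicit — just make sure the $\oc_n$ lifting is applied to \emph{closed} observation contexts on the values $V,V'$ (after closing with $K$), since a context of type $(\Gamma,\sigma)\to(\varnothing,\mathbf R)$ cannot be re-applied to the bound variable $x$ without consuming $\Gamma$ twice.
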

\begin{proof}
  By induction on $\tau$, we show that for all
  $\Gamma \vdash M : \tau$ and
  $\Gamma \vdash N : \tau$, we have
  $M\gamma
  \preceq^{\tau}_{\dobs_{\Gamma,\tau}(M,N)}
  N\gamma$. We only give a proof for
  $\tau = \sigma \otimes \rho$ and
  $\tau = \sigma \multimap \rho$. (The case of
  $\tau = \sigma \otimes \rho$) Given
  $\gamma \in \fval(\Gamma)$, if $M\gamma$
  diverges, then by the definition of $\preceq$,
  we obtain
  $M\gamma \preceq_{\dobs_{\Gamma,\tau}(M,N)}
  N\gamma$. If we have
  $M\gamma \hookrightarrow V_{1} \otimes V_{2}$,
  then since $\dobs_{\Gamma,\tau}(M,N) <\infty$,
  there are $U_{1} \in \fval(\sigma)$
  and $U_{2} \in \fval(\rho)$ such that
  $N\gamma \hookrightarrow U_{1} \otimes U_{2}$.
  By the induction hypothesis on $\sigma$ and
  $\rho$, we have
  $V_{1}
  \preceq_{\dobs_{\varnothing,\sigma}(V_{1},U_{1})}^{\sigma}
  U_{1}$ and
  $V_{2}
  \preceq_{\dobs_{\varnothing,\rho}(V_{2},U_{2})}^{\rho}
  U_{2}$. It remains to check that
  $\dobs_{\varnothing,\sigma}(V_{1},U_{1}) +
  \dobs_{\varnothing,\rho}(V_{2},U_{2}) \leq
  \dobs_{\Gamma,\sigma \otimes \rho} (M,N)$.
  Below, we suppose that
  $\gamma=(W_{1},\ldots,W_{n})$. We write
  $\widetilde{M}$ and $\widetilde{N}$ for
  \begin{align*}
    (\lambda x_{1}:\tau_{1}.\,\cdots
    \lambda x_{n}:\tau_{n}.\,M)
    \,W_{1}\,\cdots\,W_{n} \\
    (\lambda x_{1}:\tau_{1}.\,\cdots
    \lambda x_{n}:\tau_{n}.\,N)
    \,W_{1}\,\cdots\,W_{n}
  \end{align*}
  respectively. Then,
  \begin{align*}
    &\dobs_{\varnothing,\sigma}(V_{1},U_{1}) +
    \dobs_{\varnothing,\rho}(V_{2},U_{2})
    \\
    &\leq
    \sup_{
      \begin{array}{c}
        \scriptstyle
        C[-] \colon (\varnothing,\sigma)
        \to (\varnothing,\mathbf{R}) \\
        \scriptstyle
        D[-] \colon (\varnothing,\rho)
        \to (\varnothing,\mathbf{R}) \\
      \end{array}
    }
    \inf
    \left\{
      r \in \bbR_{\geq 0}
      \,\middle|\,
      \begin{array}{l}
        C[U_{1}] + D[U_{2}]
        \sqsubseteq_{r} C[V_{1}]
        + D[V_{2}] \\
        \ \text{and} \\
        C[V_{1}] + D[V_{2}]
        \sqsubseteq_{r}
        C[U_{1}] + D[U_{2}]
      \end{array}
    \right\}
    \\
    &\leq
    \sup_{
      \begin{array}{c}
        \scriptstyle
        C[-] \colon (\varnothing,\sigma)
        \to (\varnothing,\mathbf{R}) \\
        \scriptstyle
        D[-] \colon (\varnothing,\rho)
        \to (\varnothing,\mathbf{R}) \\
      \end{array}
    }
    \inf
    \left\{
      r \in \bbR_{\geq 0}
      \, \middle|\,
      \begin{array}{l}
        \letin{x \otimes y}{
        \widetilde{M}}{C[x]+D[y]} \\
        \qquad \sqsubseteq_{r} \letin{x \otimes y}{
        \widetilde{N}
        }{C[x]+D[y]} \\
        \ \text{and} \ \\
        \letin{x \otimes y}{
        \widetilde{N}
        }{C[x]+D[y]} \\
        \qquad \sqsubseteq_{r} \letin{x \otimes y}{
        \widetilde{M}
        }{C[x]+D[y]}      
      \end{array}
    \right\}
    \\
    &\leq
    \dobs_{\Gamma,\sigma\otimes\rho}
    (M,N).
  \end{align*}
  We note that we use the addition to construct
  contexts. We need unary multiplications to prove
  the case where $\tau = \oc_{n}\sigma$. (The case
  of $\tau = \sigma \multimap \rho$) Given
  $\gamma \in \fval(\Gamma)$, if $M\gamma$
  diverges, then by the definition of $\preceq, $
  we have
  $M\gamma
  \preceq^{\tau}_{\dobs_{\Gamma,\tau}(M,N)}
  N\gamma$. Let us assume that we have
  $M \gamma \hookrightarrow V$, and we show that
  $M\gamma
  \preceq^{\tau}_{\dobs_{\Gamma,\tau}(M,N)}
  N\gamma$. From the assumption, since
  $\dobs_{\Gamma,\tau}(M,N)< \infty$, we see that
  we have $N \gamma \hookrightarrow V'$ for some
  $V' \in \fval(\tau)$. In order to prove
  $M\gamma
  \preceq^{\tau}_{\dobs_{\Gamma,\tau}(M,N)}
  N\gamma$, we show that for all
  $U \preceq_{r}^{\sigma} U'$, we have
  $V\,U \preceq_{r +
    \dobs_{\Gamma,\tau}(M,N)}^{\rho} V'\,U'$. By
  Lemma~\ref{lem:fuzz-basic-lemma}, we obtain
  $V'\,U \preceq_{r}^{\rho} V'\,U'$. Hence, by the
  triangle inequality, it remains to check
  $V\,U \preceq_{\dobs_{\Gamma,\tau}(M,N)}^{\rho}
  V'\,U$. By the definition of $\preceq$, this is
  equivalent to
  $M\gamma\,U
  \preceq_{\dobs_{\Gamma,\tau}(M,N)}^{\rho}
  N\gamma\,U$. It follows from the induction
  hypothesis on $\rho$ that we have
  \begin{equation*}
    M\gamma\,U \preceq_{\dobs_{\Gamma,\rho}(M\,U,N\,U)}^{\rho}
    N\gamma\,U.
  \end{equation*}
  Since
  $\dobs_{\Gamma,\rho}(M\,U,N\,U) \leq
  \dobs_{\Gamma,\tau}(M,N)$, we obtain the claim.
\end{proof}
\begin{lemma}\label{lem:fuzz-dobs<=dext}
  For terms $\Gamma \vdash M : \tau$
  and $\Gamma \vdash N : \tau$,
  we have
  $\dobs_{\Gamma,\tau}(M,N)
  \leq \dext_{\Gamma,\tau}(M,N)$.
\end{lemma}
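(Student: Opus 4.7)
The plan is to adapt the approach of Theorem~\ref{thm:ctx-ext} from the linear setting, lifting the logical relation $\preceq$ through arbitrary observing contexts via a single application of the fundamental lemma (Lemma~\ref{lem:fuzz-basic-lemma}). Fix $r > \dext_{\Gamma,\tau}(M,N)$ and a context $C[-] \colon (\Gamma,\tau) \to (\varnothing,\mathbf{R})$, with $\Gamma = (x_1 :_{k_1} \tau_1, \ldots, x_n :_{k_n} \tau_n)$, and set $\tau' := \oc_{k_1}\tau_1 \multimap \cdots \multimap \oc_{k_n}\tau_n \multimap \tau$. The key construction is the closed ``reification'' of $M$ as the value
\begin{equation*}
  M^\sharp := \lambda z_1 : \oc_{k_1}\tau_1.\,\cdots\,\lambda z_n : \oc_{k_n}\tau_n.\,\letin{\oc x_1}{z_1}{\cdots \letin{\oc x_n}{z_n}{M}\cdots},
\end{equation*}
of type $\tau'$, and analogously $N^\sharp$.

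First, I would unfold the clauses of $\preceq^{\tau'}$ at function and graded-exponential types and match them against the definition of $\dext$, in order to obtain $M^\sharp \preceq_r^{\tau'} N^\sharp$ and $N^\sharp \preceq_r^{\tau'} M^\sharp$. The point is that $M^\sharp\,\oc V_1 \cdots \oc V_n \hookrightarrow V$ holds iff $M[V_1/x_1,\ldots,V_n/x_n] \hookrightarrow V$, and the grades $k_i$ appearing in $\tau'$ match exactly the summand $\epsilon \cdot \Gamma$ in the definition of $\dext$, so the required distance bound follows directly from the hypothesis $\dext_{\Gamma,\tau}(M,N) \leq r$.

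Second, the second defining condition of contexts provides a well-typed term $y :_1 \tau' \vdash D_y : \mathbf{R}$ with $D_y := C[y\,\oc x_1 \cdots \oc x_n]$. I would then apply Lemma~\ref{lem:fuzz-basic-lemma} to $D_y$ with the related singleton substitutions $M^\sharp$ and $N^\sharp$ for $y$ (using $M^\sharp \preceq_r^{\tau'} N^\sharp$), obtaining $D_y[M^\sharp/y] \preceq_r^{\mathbf{R}} D_y[N^\sharp/y]$, and symmetrically with $M$ and $N$ swapped.

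Finally, I would conclude by observing that $D_y[M^\sharp/y] = C[M^\sharp\,\oc x_1 \cdots \oc x_n]$ and $C[M]$ evaluate to the same value (or both diverge), and similarly for $N$; so the relation at $\mathbf{R}$ transports to $C[M] \preceq_r^{\mathbf{R}} C[N]$, i.e.\ $C[M] \sqsubseteq_r C[N]$, and symmetrically $C[N] \sqsubseteq_r C[M]$. Taking the supremum over $C$ and the infimum over $r > \dext_{\Gamma,\tau}(M,N)$ then yields $\dobs_{\Gamma,\tau}(M,N) \leq \dext_{\Gamma,\tau}(M,N)$. The main obstacle I expect is precisely this last operational equivalence: in the call-by-value setting, the head-reduction contracting $M^\sharp\,\oc x_1 \cdots \oc x_n$ into $M$ only fires once the enclosing evaluation context demands its value, so one has to argue via a standardisation-style result on $\hookrightarrow$ that both big-step evaluations really reach the same value.
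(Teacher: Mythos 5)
Your proposal follows essentially the same route as the paper's proof: reify $M$ and $N$ as closed values of type $\oc_{k_1}\tau_1\multimap\cdots\multimap\oc_{k_n}\tau_n\multimap\tau$ built from $\mathbf{let}\ \oc x_i$ bindings, check that $\dext_{\Gamma,\tau}(M,N)\leq r$ yields the two $\preceq_r$ relations between the reifications (this uses, implicitly, that $\preceq$ is upward closed in its index, since $\epsilon\cdot\Gamma$ may be strictly smaller than the sum of the witnesses at the exponential types), and push these relations through the term $C[y\,\oc x_1\cdots\oc x_n]$ — well-typed by the second defining condition on contexts — via the fundamental lemma (Lemma~\ref{lem:fuzz-basic-lemma}). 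The one point of divergence is exactly the obstacle you flag at the end. The paper does not argue the operational coincidence of $C[M]$ and $C[M^\sharp\,\oc x_1\cdots\oc x_n]$ by any standardisation result on $\hookrightarrow$; instead it reads it off from adequacy of the denotational model in $\mathbf{MetCpo}_{\bot}$ (Theorem~\ref{thm:fuzz-den-adequacy}): the two terms have equal denotations by soundness of the interpretation, and adequacy at type $\mathbf{R}$ forces them to evaluate to the same constant or to both diverge. Since adequacy is already imported from \cite{10.1145/3009837.3009890}, this avoids developing any syntactic rewriting theory for a call-by-value calculus with $\mathbf{fix}$ and graded exponentials; your standardisation route is plausible but would require proving that auxiliary result first, which is the only substantive piece missing from your argument.
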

\begin{proof}
  For simplicity, we suppose that
  $\Gamma=(x:_{k}\sigma)$. We show that if
  \begin{align*}
    \lambda y:\oc_{k}\sigma.\,
    \letin{\oc x}{y}{M}
    &\preceq_{r}^{\oc_{k}\sigma \multimap \tau}
    \lambda y:\oc_{k}\sigma.\,
    \letin{\oc x}{y}{N} \\
    \lambda y:\oc_{k}\sigma.\,
    \letin{\oc x}{y}{N}
    &\preceq_{r}^{\oc_{k}\sigma \multimap \tau}
    \lambda y:\oc_{k}\sigma.\,
    \letin{\oc x}{y}{M}
  \end{align*}
  for some $r \in \bbRR$, then
  $\dobs_{\Gamma,\tau}(M,N) \leq r$. Given a
  context
  $C[-] \colon (\Gamma,\tau) \to
  (\varnothing,\mathbf{R})$, it follows from
  adequacy of denotational model
  (Theorem~\ref{thm:fuzz-den-adequacy}) that
  \begin{align*}
    C[M] \hookrightarrow a
    &\iff
    (\lambda z:\oc_{k}\sigma \multimap \tau.\,
    C[z\,\oc x])(\lambda y:\oc_{k}\sigma.\,
    \letin{\oc x}{y}{M})
    \hookrightarrow a, \\
    C[N]\hookrightarrow b
    &\iff
    (\lambda z:\oc_{k}\sigma \multimap \tau.\,
    C[z\,\oc x])(\lambda y:\oc_{k}\sigma.\,
    \letin{\oc x}{y}{N})
    \hookrightarrow b.
  \end{align*}
  Hence, it follows from
  Lemma~\ref{lem:fuzz-basic-lemma} that $C[M]$
  converges if and only if $C[N]$ converges. If
  $C[M] \hookrightarrow a$ and
  $C[N] \hookrightarrow b$, then we have
  $|a-b| \leq r$. Since this holds for any context
  $C[-] \colon (\Gamma,\tau) \to
  (\varnothing,\mathbf{R})$, we obtain
  $\dobs_{\Gamma,\tau}(M,N) \leq r$.
\end{proof}

\section{Denotational Metric}
\label{sec:fuzz-denotational-metric}

Let $\mathbf{MetCpo}_{\bot}$ be the category of
pointed metric cpos and strict continuous and
non-expansive functions. Concretely, objects in
$\mathbf{MetCpo}_{\bot}$ are metric cpos with
least elements $\bot$ such that the distances
$d(\bot,x)$ are $\infty$ when $x \neq \bot$, and
morphisms from $X$ to $Y$ are bottom-preserving.
As is shown in \cite{10.1145/3009837.3009890},
$\mathbf{MetCpo}_{\bot}$ provides an adequate
semantics for Fuzz. By restricting their result to
our language, we obtain adequacy for $\LLL{S}$.
For a term $\Gamma \vdash M : \tau$, let us write
$\sem{M}^{\mathrm{den}} \colon
\sem{\Gamma}^{\mathrm{den}} \to
\sem{\tau}^{\mathrm{den}}$ for the interpretation
of $M$ in $\mathbf{MetCpo}_{\bot}$. We note that
$\sem{M}^{\mathrm{den}}$ is defined with respect to
the type judgement rather than type derivations
of $\Gamma \vdash M : \tau$. This can be checked by
observing that the underlying continuous function
of $\sem{M}^{\mathrm{den}}$
is obtained by first transforming $\LLL{S}$
into the $\lambda_{c}$-calculus
\cite{10.5555/77350.77353}
and then interpreting the transformed term
in $\mathbf{Cpo}_{\bot}$. In the transformation
of $\LLL{S}$ into the $\lambda_{c}$-calculus,
gradings are dropped, and therefore, all
derivations of a type judgement $\Gamma \vdash M :\tau$
are transformed into the same derivation in
the $\lambda_{c}$-calculus.

\begin{theorem}[\cite{10.1145/3009837.3009890}]
  \label{thm:fuzz-den-adequacy}
  Let $\vdash M : \tau$ be a term in $\LLL{S}$.
  \begin{itemize}
  \item If $M \hookrightarrow V$, then
    $\sem{M}^{\mathrm{den}} =
    \sem{V}^{\mathrm{den}}$.
  \item If $\sem{M}^{\mathrm{den}} \neq \bot$,
    then there is a value $V \in \fval(\tau)$ such
    that $M \hookrightarrow V$.
  \end{itemize}
\end{theorem}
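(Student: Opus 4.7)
The plan is to treat the two bullets separately since they correspond to the two familiar halves of an adequacy proof: \emph{soundness} (values agree with the terms they compute) and \emph{computational adequacy} (non-bottom denotations are observably convergent). Both statements are at type $\tau$ on closed terms, but the induction that proves them will have to go through open terms, so the first step is to fix a \emph{substitution lemma} for the denotational interpretation: if $\Gamma, x:_{n}\sigma \vdash M : \tau$ and $\vdash V : \sigma$, then $\sem{M[V/x]}^{\mathrm{den}} = \sem{M}^{\mathrm{den}} \circ \langle \mathrm{id}, \sem{V}^{\mathrm{den}}\rangle$ (and analogously for the $\mathbf{fix}$-generated substitution). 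The key point here is that Proposition~\ref{prop:uniqueness_grading} guarantees that changing the gradings does not change the underlying morphism, so the clause for graded let-binding and $\oc$ behaves uniformly.

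For the first bullet, I would proceed by induction on the derivation of $M \hookrightarrow V$. The purely linear cases (application, tensor destruction, unit destruction) are immediate from the substitution lemma together with the defining equations for $\otimes$ and $\multimap$ in a symmetric monoidal closed category. The cases for $\const{f}$, $+$, and $\const{a}\cdot(-)$ use condition (M1) together with the fact that $f$ is realized by $\lfloor f\rfloor$. The cases for $\oc$ and $\letin{\oc x}{M}{N}$ are the naturality of the graded comonad structure. The only case that needs genuine care is fixpoint unfolding, where one must verify that $\sem{\fix{\tau,\sigma}{f,x,M}}^{\mathrm{den}}$, defined as the least fixed point of the continuous functional $\Phi(g) = \sem{M}^{\mathrm{den}}[g/f,-/x]$, coincides with $\Phi\bigl(\sem{\fix{\tau,\sigma}{f,x,M}}^{\mathrm{den}}\bigr)$, which is exactly the fixed-point equation.

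The second bullet is the substantive part. I would prove it via a Plotkin-style formal approximation logical relation $\{\mathcal{R}_{\tau} \subseteq \sem{\tau}^{\mathrm{den}} \times \fterm(\tau)\}$ defined by induction on $\tau$ so that, roughly, $d \mathrel{\mathcal{R}_{\tau}} M$ asserts ``whenever $d \neq \bot$, $M \hookrightarrow V$ and $d \mathrel{\mathcal{R}_{\tau}^{\mathrm{val}}} V$,'' unfolded structurally at function, tensor, and graded types. The fundamental lemma then says that for every $\Gamma \vdash M : \tau$ and every substitution $\gamma$ related to an environment $e \in \sem{\Gamma}^{\mathrm{den}}$ pointwise in the logical relation, one has $\sem{M}^{\mathrm{den}}(e) \mathrel{\mathcal{R}_{\tau}} M\gamma$; I prove this by induction on the typing derivation. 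Applying the fundamental lemma to a closed term $\vdash M : \tau$ gives $\sem{M}^{\mathrm{den}} \mathrel{\mathcal{R}_{\tau}} M$, and unfolding the definition at $\tau$ yields the required $V$ with $M \hookrightarrow V$.

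The hard part is, as always, the $\fix$-case of the fundamental lemma, which requires $\mathcal{R}_{\tau}$ to be \emph{admissible}: $\bot \mathrel{\mathcal{R}_{\tau}} M$ trivially, and if $(d_{n})_{n}$ is an ascending chain in $\sem{\tau}^{\mathrm{den}}$ with $d_{n} \mathrel{\mathcal{R}_{\tau}} M$ for every $n$, then $\bigvee_{n} d_{n} \mathrel{\mathcal{R}_{\tau}} M$. Admissibility at function and graded types follows from pointwise suprema, while admissibility at ground and tensor types is immediate since the chain stabilizes as soon as one member is non-bottom (the order on $\lfloor\mathbf{R}\rfloor$ being flat). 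Given admissibility, $\sem{\fix{\tau,\sigma}{f,x,M}}^{\mathrm{den}} = \bigvee_{n} \Phi^{n}(\bot)$ is related to $\fix{\tau,\sigma}{f,x,M}$ because each $\Phi^{n}(\bot)$ is related to the $n$-fold unfolding, which by soundness of $\beta$-reduction has the same observational behaviour as the fixed-point term up to step $n$. With the fundamental lemma in hand, instantiating at the empty environment yields the second bullet immediately.
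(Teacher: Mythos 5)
Your proposal actually proves something the paper does not prove at all: Theorem~\ref{thm:fuzz-den-adequacy} is stated with the citation \cite{10.1145/3009837.3009890} precisely because the authors import it wholesale --- the surrounding text says that $\mathbf{MetCpo}_{\bot}$ is shown to be adequate for Fuzz in that reference, and that adequacy for $\LLL{S}$ follows ``by restricting their result to our language.'' So the paper's ``proof'' is a citation, whereas you reconstruct the argument from scratch. Your reconstruction follows the standard template (and, in essence, the one used in the cited work): a substitution lemma, soundness by induction on the evaluation derivation, and computational adequacy via a Plotkin-style formal approximation relation whose admissibility handles the $\mathbf{fix}$ case. This is the right shape and I see no structural gap. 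What your route buys is self-containedness; what the paper's route buys is not having to re-verify that the restriction of the Fuzz semantics to $\LLL{S}$ (which adds $+$, $\const{a}\cdot(-)$ and drops recursive types and additives) really is an instance of the cited theorem --- a point worth one sentence in either treatment.

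Two places where your justifications are looser than they should be, though neither is fatal. First, in the $\mathbf{fix}$ case you relate $\Phi^{n}(\bot)$ to ``the $n$-fold unfolding'' and then transfer to the fixed-point term via soundness of $\beta$-reduction; the cleaner (and standard) move is to build closure under evaluation-expansion into $\mathcal{R}_{\tau}$ (if $d \mathrel{\mathcal{R}_{\tau}} M'$ and $M$ evaluates whenever $M'$ does, with the same value, then $d \mathrel{\mathcal{R}_{\tau}} M$) and show directly by induction on $n$ that each $\Phi^{n}(\bot)$ is related to $\fix{\tau,\sigma}{f,x,M}$ itself, using the evaluation rule for $\mathbf{fix}$. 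Second, your claim that admissibility at tensor types is ``immediate since the chain stabilizes'' is only true when both components have flat order; for a type like $(\mathbf{R}\multimap\mathbf{R})\otimes\mathbf{R}$ the chain need not stabilize, and you should instead argue componentwise from the admissibility of the component relations. Finally, the appeal to condition (M1) is a slight category error --- (M1) is a condition on models of the linear language $\LL{S}$, whereas here the interpretation of $\const{f}$ in $\mathbf{MetCpo}_{\bot}$ is fixed concretely as the strict extension of $f$, so the corresponding evaluation case holds by that definition rather than by an axiom.
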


For terms $\Gamma \vdash M :\tau$ and
$\Gamma \vdash N : \tau$, we define
$\dden_{\Gamma,\tau}(M,N) \in \bbRR$ by
\begin{equation*}
  \dden_{\Gamma,\tau}(M,N) =
  d(\sem{M}^{\mathrm{den}},\sem{N}^{\mathrm{den}}).
\end{equation*}
It is easy to see that $\dden$ is a metric on
$\LLL{S}$. We call $\dden$ the \emph{denotational
  metric} on $\LLL{S}$.
% \begin{proposition}\label{prop:fuzz-dden-admissible}
%   For any environment $\Gamma$ and any type
%   $\tau$, the function $\dden_{\Gamma,\tau}$ is a metric
%   on $\fterm(\Gamma,\tau)$. Furthermore, the family
%   $\{\dden_{\Gamma,\tau}\}_{\Gamma,\tau}$ is
%   weakly
%   admissible.
% \end{proposition}
% \begin{proof}
%   (A!1) follows from non-expansiveness of
%   constructors of the categorical structure of
%   $\mathbf{MetCpo}_{\bot}$. (A!2) is a direct
%   consequence of the definition of the
%   intepretation in $\mathbf{MetCpo}_{\bot}$. (A!3)
%   follows from
%   Theorem~\ref{thm:fuzz-den-adequacy}.
% \end{proof}

It follows from adequacy of
$\mathbf{MetCpo}_{\bot}$ that $\dobs$ is bounded
by $\dden$.
\begin{theorem}\label{thm:fuzz-dctx<=dden}
  $\dobs \leq \dden$.
\end{theorem}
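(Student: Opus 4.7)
The plan is to show, for every context $C[-]\colon(\Gamma,\tau)\to(\varnothing,\mathbf{R})$, that
\begin{equation*}
\inf\left\{r\in\bbRR \,\middle|\, C[M]\sqsubseteq_{r} C[N]\text{ and }C[N]\sqsubseteq_{r}C[M]\right\}\leq \dden_{\Gamma,\tau}(M,N),
\end{equation*}
and then take the supremum over $C$. The bridge from the operational left-hand side to the denotational right-hand side will be the key semantic inequality
\begin{equation*}
d\bigl(\sem{C[M]}^{\mathrm{den}},\sem{C[N]}^{\mathrm{den}}\bigr)\leq \dden_{\Gamma,\tau}(M,N).
\end{equation*}

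To prove this semantic inequality, I would exploit the linearity constraint built into the definition of contexts for $\LLL{S}$. Writing $\Gamma=(x_1:_{k_1}\tau_1,\ldots,x_n:_{k_n}\tau_n)$, the second defining clause of $C[-]\colon(\Gamma,\tau)\to(\varnothing,\mathbf{R})$ ensures that the closed term $F=\lambda y.\,C[y\,\oc x_1\cdots\,\oc x_n]$ of type $(\oc_{k_1}\tau_1\multimap\cdots\multimap\oc_{k_n}\tau_n\multimap\tau)\multimap \mathbf{R}$ is typable with $y$ at grade $1$. Its denotation $\sem{F}^{\mathrm{den}}$ is thus a morphism in $\mathbf{MetCpo}_\bot$, hence in particular a non-expansive function. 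Letting $\widetilde{M}=\lambda x_1.\cdots\lambda x_n.\,\letin{\oc x_1'}{x_1}{\cdots\letin{\oc x_n'}{x_n}{M[\vec{x}'/\vec{x}]}}$, and analogously $\widetilde{N}$, $\beta$-soundness of the interpretation in $\mathbf{MetCpo}_\bot$ yields $\sem{C[M]}^{\mathrm{den}}=\sem{F\,\widetilde{M}}^{\mathrm{den}}$ and $\sem{C[N]}^{\mathrm{den}}=\sem{F\,\widetilde{N}}^{\mathrm{den}}$. Non-expansiveness of $\sem{F}^{\mathrm{den}}$ then gives $d(\sem{C[M]}^{\mathrm{den}},\sem{C[N]}^{\mathrm{den}})\leq d(\sem{\widetilde{M}}^{\mathrm{den}},\sem{\widetilde{N}}^{\mathrm{den}})$, and this latter quantity coincides with $\dden_{\Gamma,\tau}(M,N)$ thanks to the isometric nature of currying and of the graded exponential $\oc_{k}(-)$ in the $\mathbf{Met}$-enriched structure of $\mathbf{MetCpo}_\bot$.

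The semantic inequality is then transported to the operational side via adequacy (Theorem~\ref{thm:fuzz-den-adequacy}). The case $\dden_{\Gamma,\tau}(M,N)=\infty$ is trivial, so suppose this distance is finite and $C[M]\hookrightarrow\const{a}$. By soundness $\sem{C[M]}^{\mathrm{den}}=a\neq\bot$, and the semantic inequality forces $\sem{C[N]}^{\mathrm{den}}\neq\bot$, since otherwise $d(a,\bot)=\infty$ would exceed the finite bound. By the second clause of adequacy, $C[N]\hookrightarrow\const{b}$ for some $b\in\bbR$, and soundness gives $\sem{C[N]}^{\mathrm{den}}=b$, whence $|a-b|=d(\sem{C[M]}^{\mathrm{den}},\sem{C[N]}^{\mathrm{den}})\leq\dden_{\Gamma,\tau}(M,N)$; the symmetric direction is identical. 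Taking infima and then suprema over contexts concludes the proof.

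The main obstacle will be the clean proof of the semantic inequality: verifying that the interpretation of a context is non-expansive in its hole requires carefully tracking gradings and invoking both $\mathbf{Met}$-enrichment of $\mathbf{MetCpo}_\bot$ and the isometric properties of currying together with the graded comonad $\oc_n$, as well as independence of the interpretation from the choice of type derivation (Proposition~\ref{prop:uniqueness_grading}). Once this is settled, the remainder of the argument is the standard adequacy routine already used to prove Lemma~\ref{lem:fuzz-dobs<=dext}.
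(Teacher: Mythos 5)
Your proof is correct and follows essentially the same route as the paper: a convergence/divergence case analysis via adequacy (Theorem~\ref{thm:fuzz-den-adequacy}) combined with the non-expansiveness of contexts with respect to $\dden$, i.e.~$d(\sem{C[M]}^{\mathrm{den}},\sem{C[N]}^{\mathrm{den}})\leq \dden_{\Gamma,\tau}(M,N)$. The only difference is that you spell out this last inequality explicitly (via the closed term $F=\lambda y.\,C[y\,\oc x_1\cdots\oc x_n]$ and $\beta$-soundness), whereas the paper takes it as immediate from the $\mathbf{Met}$-enriched structure of $\mathbf{MetCpo}_\bot$.
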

\begin{proof}
  If there is a context
  $C[-] \colon (\Gamma,\tau) \to
  (\varnothing,\mathbf{R})$ such that $C[M]$
  converges and $C[N]$ diverges, then, by
  adequacy, we have
  $\sem{C[M]} = \sem{\const{a}}$.
  $\sem{C[N]} = \bot$ for some $a \in \mathbb{R}$.
  Hence,
  $\dden_{\Gamma,\tau}(M,N) \geq
  \dden_{\varnothing,\mathbf{R}} (C[M],C[N]) =
  \infty \geq \dobs_{\Gamma,\tau}(M,N)$.
  Similarly, when $C[M]$ diverges and $C[N]$
  converges, then
  $\dden_{\Gamma,\tau}(M,N) \geq
  \dobs_{\Gamma,\tau}(M,N)$. Below, we suppose
  that for any context
  $C[-] \colon
  (\Gamma,\tau)\to(\varnothing,\mathbf{R})$,
  $C[M]$ diverges if and only if $C[N]$ diverges.
  In this case, it follows from
  Theorem~\ref{thm:fuzz-den-adequacy} that if
  $C[M] \hookrightarrow a$ and
  $C[N] \hookrightarrow b$, then
  $|a-b| \leq \dden_{\Gamma,\tau}(M,N)$. Hence, we
  obtain the statement.
\end{proof}

\section{Interactive Semantic Model}
\label{sec:fuzz-inter-semant-model}

\subsection{Preparation}
\label{sec:fuzz-preparation}

\subsubsection{Structures for Interpreting Graded Exponentials}
\label{sec:fuzz-struct-interpr-grad}

We prepare structures on the category
$\mathbf{Int}(\metcppo)$ that we use to interpret
graded exponentials in $\LLL{S}$. For
$X \in \metcppo$ and
$n \in \mathbb{N}^{\infty}_{>0}$, we define
$n \cdot X$ to be the countably infinite product
of the underlying cpo of $X$ equipped with the
following metric:
\begin{equation*}
  d((x_{i})_{i \in \mathbb{N}},(y_{i})_{i \in \mathbb{N}})
  = \sum_{i < n} d(x_{i},y_{i}).
\end{equation*}
It is not difficult to check that $n \cdot (-)$ is
a traced symmetric monoidal functor on
$\metcppo$. Hence, we can lift the
functors $n \cdot (-)$ to symmetric monoidal
functors on $\mathbf{Int}(\metcppo)$.
Abusing notation, we also denote the functors on
$\mathbf{Int}(\metcppo)$ by $n \cdot (-)$.
To be concrete, on objects $X = (X_{+},X_{-})$ in
$\mathbf{Int}(\metcppo)$, we have
$n \cdot X = (n \cdot X_{+},n \cdot X_{-})$.

In order to interpret dereliction,
digging and contraction of $\LLL{S}$, for
$n,m \in \mathbb{N}^{\infty}_{>0}$, we choose bijections
$u_{n,m} \colon \mathbb{N} \times \mathbb{N} \to
\mathbb{N}$ and
$v_{n,m} \colon \{0,1\} \times \mathbb{N}\to
\mathbb{N}$ such that
\begin{itemize}
\item $u_{n,m}$ embeds
  $\{(i,j) \in \mathbb{N} \times \mathbb{N} \mid
  i < n \ \text{and} \ j < m\}$ into
  $\{i \in \mathbb{N} \mid i < nm\}$; and
\item $v_{n,m}$ embeds
  $\{(0,i) \mid i < n\}
  \cup
  \{(1,i) \mid i < m\}$ into
  $\{i \in \mathbb{N} \mid i < n + m\}$.
\end{itemize}
Then, we define the following morphisms
\begin{align*}
  d_{n,X} \colon n \cdot X &\to
  X, \\
  \delta_{n,m,X} \colon nm \cdot X &\cong
  n \cdot (m \cdot X), \\
  c_{n,m,X} \colon (n + m) \cdot X &\cong
  (n \cdot X) \otimes (m \cdot X)
\end{align*}
for $n,m \in \mathbb{N}^{\infty}_{>0}$ by
\begin{align*}
  d_{n,X}((x_{i})_{i \in \mathbb{N}},y)
  &= ((y,\bot,\bot,\ldots),x_{0}), \\
  \delta_{n,m,X}((x_{i})_{i \in \mathbb{N}},
  ((y_{i,j})_{j \in \mathbb{N}})_{i \in \mathbb{N}}) 
  &=
  ((y_{u^{-1}_{n,m}(i)})_{i \in \mathbb{N}},
  ((x_{u_{n,m}(i,j)})_{j \in \mathbb{N}}
  )_{i \in \mathbb{N}}),
  \\
  c_{X}((x_{i})_{i \in \mathbb{N}},
  ((y_{0,i})_{i \in \mathbb{N}},
  (y_{1,i})_{i \in \mathbb{N}}))
  &=
  ((y_{v_{n,m}(i)})_{i \in \mathbb{N}},
  ((x_{v^{-1}_{n,m}(0,i)})_{i \in \mathbb{N}},
  (x_{v^{-1}_{n,m}(1,i)})_{i \in \mathbb{N}})
  ).
\end{align*}
We also give a bit more general dereliction
$\tilde{d}_{n,m,X} \colon (n + m) \cdot X
\to n \cdot X$ by
\begin{multline*}
  \tilde{d}_{n,m,X}((x_{i})_{i \in \mathbb{N}},
  (y_{i})_{i \in \mathbb{N}})
  = ((y_{0},y_{1},\ldots,y_{n-1},
  \overbrace{\bot,\ldots,\bot}^{m},y_{n},y_{n+1},\ldots),
  \\
  (x_{0},x_{1},\ldots,x_{n-1},x_{n+m-1},x_{n+m},\ldots)).
\end{multline*}
We note that morphisms $d_{n,X}$,
$\delta_{n,m,X}$, $c_{n,m,X}$, $\tilde{d}_{n,m,X}$
and $w_{X}$ are not natural with respect to $X$.
Still, we can show that they are \emph{pointwise}
natural transformation, that is, these morphisms
satisfy naturality conditions for global elements.
For example, for all $x \colon I \to X$, we have
$d_{n,X} \circ (n \cdot x) = x \circ d_{n,I} = x$. For
more details on pointwise naturality, see
\cite{abramsky_haghverdi_scott_2002}.

% As is the case for other interaction semantics
% \cite{abramsky_haghverdi_scott_2002}, the above
% morphisms are not natural transformations.
% Therefore, we can not directly interpret $\LLL{S}$
% in $\mathbf{Int}(\metcppo)$ to obtain a
% sound semantics. This is why we apply
% realizability interpretation to
% $\mathbf{Int}(\metcppo)$.

\subsubsection{Structures for Interpreting
  Weakening and Cbv Evaluation}
\label{sec:struct-interpr-weak}

We also need structures on
$\mathbf{Int}(\metcppo)$ to interpret
weakening and call-by-value evaluation. For the
former, for an object
$X \in \mathbf{Int}(\metcppo)$, we define
$w_{X} \colon X \to I$ in
$\mathbf{Int}(\metcppo)$ to be
$\bot \colon X_{+} \to X_{-}$ in
$\metcppo$. For the latter, we
use the Kleisli category of a continuation monad
\begin{equation*}
  TX =
  X \otimes (K,K)
  \cong (X \multimap (K,I))
  \multimap (K,I)
\end{equation*}
on $\mathbf{Int}(\metcppo)$ where
$K \in \metcppo$ is the Sierpi\'nski
space $\{\bot \leq \top\}$ equipped with
$d(\bot,\top) = \infty$.
We denote the unit, the multiplication and
the strength of the monad $T$ by
\begin{align*}
  \eta_{X} &\colon X \to TX, \\
  \mu_{X} &\colon TTX \to TX, \\
  \mathrm{str}_{X,Y} &\colon TX \otimes Y \to T(X \otimes Y),
\end{align*}
and we write
$\mathrm{dstr}_{X,Y} \colon TX \otimes TY \to T(X
\otimes Y)$ for the double strength
\begin{equation*}
  TX \otimes TY
  \longrightarrow
  T(X \otimes TY)
  \longrightarrow
  TT(X \otimes Y)
  \longrightarrow
  T(X \otimes Y).
\end{equation*}
For $n \in \mathbb{N}^{\infty}_{> 0}$,
we define a morphism
\begin{equation*}
  \xi_{n,X} \colon n \cdot TX
  \to T(n \cdot X)
\end{equation*}
in $\mathbf{Int}(\metcppo)$ to be
\begin{equation*}
  n \cdot TX
  = n \cdot (X \otimes (K,K))
  \cong n \cdot X \otimes n \cdot (K,K)
  \xrightarrow{(n \cdot X) \otimes d}
  (n \cdot X) \otimes (K,K)
  =
  T(n \cdot X).
\end{equation*}
We use this distributivity of $n \cdot(-)$ over
$T$ to model the action of graded exponentials on
terms.

\subsubsection{Structures for Interpreting Constants}
\label{sec:struct-interpr-const}

For interpretation of constants $\const{a}$ and
$\const{f}(M_{1},\ldots,M_{\mathrm{ar}(f)})$, we
follow the interpretation of terms in $\LL{S}$: we
interpret the base type $\mathbf{R}$ by $(R,I)$,
and we use $\lfloor a \rfloor \colon I \to R$ and
$\lfloor f \rfloor \colon R^{\otimes
  \mathrm{ar}(f)} \to R$ to interpret real numbers
and first order functions. For interpretation
of unary multiplications, we use
\begin{equation*}
  \mathrm{mult}_{a,n} \colon
  n \cdot (R,I) \to (R,I)
\end{equation*}
for $a \in \mathbb{R}$ and $n \in \mathbb{N}^{\infty}_{>0}$
such that $|a| \leq n$ given by
\begin{equation*}
  \mathrm{mult}_{a,n}((x_{i})_{i \in \mathbb{N}},\ast)
  = (\ast,a(x_{0} + \cdots + x_{n-1})/n).
\end{equation*}

\subsection{Interactive Semantic Model and its
  Associated Metrics}
\label{sec:inter-model-its}

Based on preparations in the previous sections, we
give interpretation of $\LLL{S}$ in the Kleisli
category $\mathbf{Int}(\metcppo)_{T}$.

Types in $\LLL{S}$ are interpreted as follows:
\begin{align*}
  \sem{\mathbf{R}}^{\mathrm{int}}
  &= (R,I), \\
  \sem{\mathbf{I}}^{\mathrm{int}}
  &= (I,I), \\
  \sem{\tau \otimes \sigma}^{\mathrm{int}}
  &= \sem{\tau}^{\mathrm{int}}
  \otimes \sem{\sigma}^{\mathrm{int}} \\
  \sem{\tau \multimap \sigma}^{\mathrm{int}}
  &= \sem{\tau}^{\mathrm{int}}
  \multimap T\sem{\sigma}^{\mathrm{int}}
  = (\sem{\tau}^{\mathrm{int}})^{\ast}
  \otimes \sem{\sigma}^{\mathrm{int}}
  \otimes (K,K) \\
  \sem{\oc_{n} \tau}^{\mathrm{int}}
  &= n \cdot \sem{\tau}^{\mathrm{int}}
\end{align*}
where $(X_{+},X_{-})^{\ast}$ is defined to be
$(X_{-},X_{+})$. As usual, we interpret
environments as follows:
\begin{equation*}
  \sem{(x:_{n}\tau,\ldots,y:_{m}\sigma)}^{\mathrm{int}}
  = n \cdot \sem{\tau}^{\mathrm{int}} \otimes \cdots \otimes
  m \cdot \sem{\sigma}^{\mathrm{int}}.
\end{equation*}
We next define interpretation of type judgements
in $\LLL{S}$ in
Figure~\ref{fig:interpretation_fuzz} where we
simply write $\sem{-}$ for
$\sem{-}^{\mathrm{int}}$ for the sake of
legibility. We note that the interpretation is
given with respect to type derivations rather than
type judgements. % Below, we write
% $\overline{\sem{\Gamma \vdash M:
%     \tau}^{\mathrm{int}}}$ (or $\overline{\sem{M}^{\mathrm{int}}}$
% simply) for the set of interpretations of type
% judgements of $\Gamma \vdash M:\tau$.

We call this model the \emph{interactive semantic
  model} for $\LLL{S}$. The interactive semantic
model gives rise to another semantically obtained
family of metrics. For terms
$\Gamma\vdash M :\tau$ and
$\Gamma \vdash N : \tau$, we define
$\dint_{\Gamma,\tau}(M,N) \in \bbRR$ by
\begin{equation*}
  \dint_{\Gamma,\tau}(M,N) =
  d(\sem{M}^{\mathrm{int}},\sem{N}^{\mathrm{int}}).
  % \text{the Haudorff distance between
  %   $\overline{\sem{M}^{\mathrm{int}}}$
  %   and $\overline{\sem{N}^{\mathrm{int}}}$}.
\end{equation*}
% This is a weakly admissible family of metrics for
% $\LLL{S}$. 
% \begin{proposition}\label{prop:fuzz-dint-admissible}
%   For any environment $\Gamma$ and any type
%   $\tau$, the function $\dint_{\Gamma,\tau}$ is a
%   metric on $\fterm(\Gamma,\tau)$. Furthermore, the
%   family $\{\dint_{\Gamma,\tau}\}_{\Gamma,\tau}$
%   is weakly admissible.
% \end{proposition}
% \begin{proof}
%   (A!1) follows from non-expansiveness of
%   constructors of the categorical structure of
%   $\mathbf{Int}(\metcppo)$. (A!2) is a direct
%   consequence of the definition of the
%   intepretation in the interactive semantic model.
%   (A!3) follows soundness of the interactive
%   semantic model, which can be checked by
%   induction on the derivation of
%   $M \hookrightarrow V$.
% \end{proof}

We prove adequacy of the interactive semantic
model, which will be used to prove
$\dobs \leq \dint$. Below, for $\vdash M : \tau$,
we write $\sem{M} \Downarrow$ when there is no
$f \colon I \to \sem{\tau}$ such that
$\sem{M} = f \otimes \bot_{(K,K)}$ where
$\bot_{X}$ denotes the least element of
$\mathbf{Int}(\metcppo)(I, X)$.
\begin{theorem}
  Let $\vdash M : \tau$ be a term in $\LLL{S}$.
  \begin{itemize}
  \item If $M \hookrightarrow V$, then there is a
    derivation of $\vdash V:\tau$ such that
    $\sem{M} = \sem{V}$.
  \item If $\sem{M} \Downarrow$, then there is a
    value $V$ such that $M \hookrightarrow V$.
  \end{itemize}
\end{theorem}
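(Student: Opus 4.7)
The plan is to prove the two parts by the standard pattern of Plotkin-style adequacy, but adapted to the interactive setting of $\mathbf{Int}(\metcppo)_{T}$. Soundness is by induction on the derivation of $M \hookrightarrow V$, while adequacy goes via a logical predicate whose definition is keyed on the $T$-layer (the Sierpi\'nski component) that tracks convergence.

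For the first claim, I would proceed by induction on the derivation of $M \hookrightarrow V$. The cases for application, tensor introduction/elimination, and let-bindings reduce to routine diagram-chases using the Kleisli structure of $T$ and the strengths $\mathrm{str},\mathrm{dstr}$. The case $\const{f}(M_{1},\ldots,M_{\mathrm{ar}(f)}) \hookrightarrow \const{f(a_{1},\ldots,a_{n})}$ is handled by the same argument as for $\LL{S}$, using (M1) together with the explicit definition of $\lfloor f\rfloor$ in $\mathbf{Int}(\metcppo)$. The case for $\oc V$ and $\letin{\oc x}{M}{N}$ uses the distributivity $\xi_{n,X} \colon n\cdot TX \to T(n\cdot X)$ and the pointwise naturality of $d_{n,X},\delta_{n,m,X},c_{n,m,X}$ on global elements, which is enough because we are composing with values. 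The case $M\,N \hookrightarrow U$ with $M\hookrightarrow \fix{\tau,\sigma}{f,x,M'}$ is the delicate one: it unfolds the fixed point once and invokes the I.H.~on $M'[\fix{\tau,\sigma}{f,x,M'}/f, V/x]$, so I would prove that the interpretation of $\fix{\tau,\sigma}{f,x,M'}\,V$ equals that of the one-step unfolding using the fixed-point law of the trace operator in $\metcppo$ (Proposition~\ref{prop:trace}) lifted along the Int-construction. A subtlety is that the conclusion states existence of \emph{some} derivation of $\vdash V:\tau$; this is fine because, by Proposition~\ref{prop:uniqueness_grading}, derivations only differ by gradings, and at each inductive step one can choose the derivation whose gradings match those used by the source of the reduction.

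For the second claim I would introduce, for every closed type $\tau$, a logical predicate $\mathcal{R}_{\tau}$ on pairs $(f,M)$ with $f\colon I\to T\sem{\tau}^{\mathrm{int}}$ and $\vdash M:\tau$, defined by induction on $\tau$:
\begin{itemize}
\item $\mathcal{R}_{\mathbf{R}}(f,M)$ iff $f\Downarrow$ implies $M\hookrightarrow \const{a}$ and $f=\sem{\const{a}}^{\mathrm{int}}$;
\item $\mathcal{R}_{\mathbf{I}}$ analogously with the unique value $\ast$;
\item $\mathcal{R}_{\sigma\otimes\rho}(f,M)$ iff $f\Downarrow$ implies $M\hookrightarrow V\otimes U$ and the two projected morphisms satisfy $\mathcal{R}_{\sigma}$ and $\mathcal{R}_{\rho}$ with $V,U$;
\item $\mathcal{R}_{\sigma\multimap\rho}(f,M)$ iff $f\Downarrow$ implies $M\hookrightarrow \lambda x.\,M'$ and for every $\mathcal{R}_{\sigma}(g,N)$, one has $\mathcal{R}_{\rho}(\mu\circ T(\mathrm{ev})\circ\mathrm{dstr}(f,g),\, M'[N/x])$;
\item $\mathcal{R}_{\oc_{n}\sigma}(f,M)$ iff $f\Downarrow$ implies $M\hookrightarrow \oc V$ and, for every $i<n$, the $i$-th projection of $f$ (composed through $\tilde d$ and $d$) stands in $\mathcal{R}_{\sigma}$ with $V$.
\end{itemize}
Each $\mathcal{R}_{\tau}$ is easily seen to contain the bottom morphism vacuously, and I would check that it is \emph{admissible}, i.e.~closed under suprema of ascending chains in $\mathbf{Int}(\metcppo)(I,T\sem{\tau}^{\mathrm{int}})$. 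This uses that convergence on the Sierpi\'nski component is preserved by sups, together with the construction of sups in $\metcppo$.

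Next I would prove the fundamental lemma: for every derivation of $\Gamma\vdash M:\tau$ and every substitution $\gamma=(V_{1},\ldots,V_{n})$ together with morphisms $g_{i}\colon I\to T\sem{\sigma_{i}}^{\mathrm{int}}$ with $\mathcal{R}_{\sigma_{i}}(g_{i},V_{i})$, we have $\mathcal{R}_{\tau}(\sem{\Gamma\vdash M:\tau}^{\mathrm{int}}\circ \langle g_{1},\ldots,g_{n}\rangle,\, M\gamma)$. The proof is by induction on the derivation; the only genuinely non-routine case is $\fix{\tau,\sigma}{f,x,M'}$, which is treated by expressing the trace as a sup over finite unfoldings $\mathrm{fix}^{(k)}$ and using admissibility of $\mathcal{R}_{\tau\multimap\sigma}$ together with the I.H.~applied to each $k$-th syntactic unfolding (which only uses finitely many fixed-point expansions and hence is handled by the other inductive cases). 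Instantiated at closed $M$ with the empty substitution, this gives $\mathcal{R}_{\tau}(\sem{M}^{\mathrm{int}},M)$; if $\sem{M}^{\mathrm{int}}\Downarrow$, the definition of $\mathcal{R}_{\tau}$ produces the required value $V$.

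The main obstacle I anticipate is the $\oc_{n}\sigma$ case, both in the definition of $\mathcal{R}$ and in the inductive step. Because $n\cdot X$ is a countably infinite product and the morphisms $d,\delta,c$ are only \emph{pointwise} natural, one has to be careful that the projected components of $\sem{\oc V}^{\mathrm{int}}$ really do factor (at the level of global elements) as $n$ copies of $\sem{V}^{\mathrm{int}}$, so that the I.H.~applies uniformly to each component. Furthermore, the admissibility check for $\mathcal{R}_{\oc_{n}\sigma}$ with $n=\infty$ requires commuting a sup over chains with the infinite product, which is exactly the point where the metric-cpo conditions from Section~\ref{sec:domain} enter in a substantive way. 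Once this case is settled, the fundamental lemma closes and adequacy follows.
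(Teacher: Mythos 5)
Your overall strategy is the same as the paper's: soundness by induction on the evaluation derivation (using pointwise naturality, the distributivity $\xi$, and unfolding of the fixed point), and adequacy via a type-indexed logical relation between global elements and syntax, closed under suprema of ascending chains so that the $\mathbf{fix}$ case can be discharged by finite unfoldings of the interpreting functional. The only substantive difference is presentational: the paper defines a relation $P_{\tau}$ between morphisms $I \to \sem{\tau}$ and \emph{values}, together with a separate lifting $\overline{P}_{\tau}$ to computations that builds in the divergence clause once and for all, whereas you thread the ``$f\Downarrow$ implies\dots'' guard through every type case of a single computation-level predicate. Both formulations work and yield the conclusion the same way (if $\sem{M}\Downarrow$, the divergence clause is unavailable, so convergence of $M$ is forced).

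One concrete defect in your definition needs repair: at type $\sigma\multimap\rho$ you require $M\hookrightarrow\lambda x.\,M'$ and then substitute into the body $M'$. In $\LLL{S}$ the values of arrow type also include $\fix{\tau,\sigma}{f,x,M'}$, which is \emph{not} a $\lambda$-abstraction; with your clause as written, $\mathcal{R}_{\sigma\multimap\rho}$ would be false of every converging fixed-point term, and the $\mathbf{fix}$ case of your fundamental lemma could not close. The paper avoids this by stating the arrow clause extensionally, in terms of the behaviour of applications $V\,U$ for related arguments, rather than in terms of the syntactic shape of the value; adopting that formulation (or adding a separate clause for $\mathbf{fix}$-values) fixes the problem and the rest of your argument goes through as in the paper.
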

\begin{proof}
  We can show the first claim by induction on the
  derivation of $M \hookrightarrow V$ using
  pointwise naturality of morphisms in
  Section~\ref{sec:fuzz-struct-interpr-grad}, and we
  omit the detail. We prove the second claim by
  means of logical relations. For a type $\tau$,
  we define a binary relation
  \begin{equation*}
    P_{\tau}
    \subseteq
    \mathbf{Int}(\metcppo)
    (I,\sem{\tau})
    \times
    \fval(\tau)
  \end{equation*}
  by
  \begin{align*}
    P_{\mathbf{R}}
    &= \{(\sem{\const{a}},
    \const{a}) \mid a \in \bbR\}, \\
    P_{\mathbf{I}}
    &= \{(\sem{\ast},\ast)\}, \\
    P_{\tau \otimes \sigma}
    &= \{(f \otimes g , V \otimes U) \mid
    (f,V) \in P_{\tau} \ \text{and} \
    (g,U) \in P_{\sigma}
    \}, \\
    P_{\tau \multimap \sigma}
    &=
    \{(f,V) \mid
    \forall (g,U) \in P_{\tau},\,
    (f \bullet g, V\,U) \in \overline{P}_{\sigma}
    \}, \\
    P_{\oc_{n} \tau}
    &=
    \{(n \cdot f,\oc V) \mid
    (f,V) \in P_{\tau}
    \}
  \end{align*}
  where
  \begin{multline*}
    \overline{P}_{\tau}
    =
    \{(\eta \circ f , M) \mid
    M \hookrightarrow V
    \ \text{and} \
    (f,V) \in P_{\tau}\}
    \\ {}
    \cup {}
    \{(f \otimes \bot_{(K,K)},M) \mid
    f \colon I \to \sem{\tau}
    \ \text{and} \
    M \in \fterm(\tau)\}
  \end{multline*}
  and $f \bullet g$ is given by
  \begin{equation*}
    I \xrightarrow{f \otimes g}
    (\sem{\tau} \multimap T\sem{\sigma})
    \otimes
    \sem{\tau}
    \xrightarrow{\text{eval}}
    T\sem{\sigma}.
  \end{equation*}
  By the definition of $P_{\tau}$, we can show
  that $P_{\tau}$ and $\overline{P}_{\tau}$ are
  closed under taking least upper bounds of the
  first component: for all
  $(x_{1},M),(x_{2},M),\ldots \in P_{\tau}$, if
  $x_{1} \leq x_{2} \leq \cdots$, then we have
  $\left(\bigvee_{n \in \mathbb{N}}x_{n},M\right)
  \in P_{\tau}$. We show basic lemma for
  $P_{\tau}$: for any
  $\Gamma = (x:_{n_{1}}\sigma,\ldots,y:_{n_{k}}\rho)$, any
  $\Gamma \vdash M : \tau$ and any
  $(v,V) \in P_{\sigma},\ldots, (u,U) \in
  P_{\rho}$, we have
  \begin{equation*}
    (\sem{M} \circ ((n_{1} \cdot v) \otimes \cdots \otimes
    (n_{k} \cdot u)),
    M[V/x,\ldots,U/y]) \in \overline{P}_{\tau} .
  \end{equation*}
  We only check the case for $\mathbf{fix}$. The
  other cases are not difficult to check. What we
  check is: for environments
  $\Gamma = (x_{1}:_{k_{1}}\rho_{1},
  \ldots,x_{n}:_{k_{n}} \rho_{n})$ and
  $\Delta = (x_{1}:_{k'_{1}}\rho_{1},
  \ldots,x_{n}:_{k'_{n}} \rho_{n})$ such that
  $\infty \cdot \Gamma \leq \Delta$ and for a term
  $\Delta \vdash \fix{\tau,\sigma}{f,x,M} : \tau
  \multimap \sigma$, given
  $(v_{1},V_{1}) \in P_{\rho_{1}},\ldots,
  (v_{n},V_{n}) \in P_{\rho_{n}}$, we have
  \begin{equation*}
    (\sem{\fix{\tau,\sigma}{f,x,M}} \circ ((k_{1}
    \cdot v_{n}) \otimes \cdots \otimes (k_{n} \cdot
    v_{n})),
    \fix{\tau,\sigma}{f,x,M}[V_{1}/x_{1},\ldots,V_{n}/x_{n}])
    \in P_{\tau \multimap \sigma}.  
  \end{equation*}
  Let
  $\varphi \colon \mathbf{Int}(\metcppo) (\infty
  \cdot \sem{\Gamma},\tau \multimap \sigma) \to
  \mathbf{Int}(\metcppo) (\infty \cdot
  \sem{\Gamma},\tau \multimap \sigma)$ be the
  function given in
  Figure~\ref{fig:interpretation_fuzz}. Then, by
  induction on $m$, we can show that
  \begin{equation*}
    \left(
      I
      \xrightarrow{(k'_{1} \cdot v_{1}) \otimes
        \cdots (k'_{n} \cdot v_{n})}
      \sem{\Delta}
      \xrightarrow{d}
      \sem{\infty \cdot \Gamma}
      \xrightarrow{\varphi^{m}}
      \sem{\tau \multimap \sigma},
      \fix{\tau,\sigma}{f,x,M}[V_{1}/x_{1},\ldots,V_{n}/x_{n}]
    \right)
  \end{equation*}
  is an element of
  $P_{\tau \multimap \sigma}$. Since
  $P_{\tau \multimap \tau'}$ is closed under least
  upper bounds on the first component, we obtain
  the claim.
\end{proof}
\begin{theorem}\label{thm:fuzz-dctx<=dint}
  For any pair of terms $\Gamma \vdash M :\tau$
  and $\Gamma \vdash N : \tau$, we have
  $\dobs_{\Gamma,\tau}(M,N) \leq
  \dint(M,N)$.
\end{theorem}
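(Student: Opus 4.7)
The plan is to follow the template of the proof of Theorem~\ref{thm:fuzz-dctx<=dden}, replacing adequacy of the denotational model by the adequacy of the interactive model that was established just above. Fix terms $\Gamma \vdash M,N:\tau$ and a context $C[-] \colon (\Gamma,\tau) \to (\varnothing,\mathbf{R})$; it suffices to prove
\begin{equation*}
\inf\bigl\{r \in \bbRR \mid C[M] \sqsubseteq_{r} C[N] \text{ and } C[N] \sqsubseteq_{r} C[M]\bigr\} \;\leq\; \dint_{\Gamma,\tau}(M,N).
\end{equation*}

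The first step is to establish context non-expansiveness at the level of the interactive semantics, namely
\begin{equation*}
d\bigl(\sem{C[M]}^{\mathrm{int}},\,\sem{C[N]}^{\mathrm{int}}\bigr) \;\leq\; \dint_{\Gamma,\tau}(M,N).
\end{equation*}
Because the definition of $C[-] \colon (\Gamma,\tau) \to (\varnothing,\mathbf{R})$ explicitly forces the hole to appear linearly, the interactive interpretation of $C[M]$ is a closed Kleisli morphism $I \to T\sem{\mathbf{R}}^{\mathrm{int}}$ obtained compositionally from $\sem{M}^{\mathrm{int}} \colon \sem{\Gamma}^{\mathrm{int}} \to T\sem{\tau}^{\mathrm{int}}$ by pasting together Kleisli composition, tensor, currying, and the structural morphisms $d_{n,X}$, $\delta_{n,m,X}$, $c_{n,m,X}$, $w_{X}$, $\eta$, $\mu$, $\mathrm{str}$, $\xi_{n,X}$ from Sections~\ref{sec:fuzz-struct-interpr-grad} and~\ref{sec:struct-interpr-weak}. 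Each of these is either a fixed morphism of $\mathbf{Int}(\metcppo)_{T}$ or a non-expansive operation on hom-sets (by $\mathbf{Met}$-enrichment of $\mathbf{Int}(\metcppo)_{T}$). Hence replacing $\sem{M}^{\mathrm{int}}$ by $\sem{N}^{\mathrm{int}}$ is non-expansive, proving the claim; this is the direct generalisation to $\LLL{S}$ of the (A1) step underlying Proposition~\ref{prop:adm_M}.

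The second step is a case analysis on convergence based on the adequacy theorem above. If $C[M] \hookrightarrow \const{a}$ while $C[N]$ diverges, the contrapositive of the second part of adequacy forces $\sem{C[N]}^{\mathrm{int}}$ to be of the form $g \otimes \bot_{(K,K)}$, while the first part gives $\sem{C[M]}^{\mathrm{int}} = \sem{\const{a}}^{\mathrm{int}}$, whose $(K,K)$-component is $\top$; since $d_{K}(\bot,\top)=\infty$, the two interpretations are at infinite distance, so by the first step $\dint_{\Gamma,\tau}(M,N)=\infty$ and the inequality is trivial (the symmetric case is analogous). If both converge, say $C[M] \hookrightarrow \const{a}$ and $C[N] \hookrightarrow \const{b}$, the first part of adequacy gives $\sem{C[M]}^{\mathrm{int}} = \sem{\const{a}}^{\mathrm{int}}$ and $\sem{C[N]}^{\mathrm{int}} = \sem{\const{b}}^{\mathrm{int}}$, and a direct unfolding of the interpretation of real constants (together with the fact that their $(K,K)$-components agree and that $R$ carries the absolute-value metric) yields $d\bigl(\sem{\const{a}}^{\mathrm{int}}, \sem{\const{b}}^{\mathrm{int}}\bigr) = |a-b|$, so the first step gives $|a-b| \leq \dint_{\Gamma,\tau}(M,N)$, as required. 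If both diverge, the infimum on the left is $0$ and nothing need be proved.

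The main obstacle in this plan is really only the first step, i.e., the verification that $\sem{C[-]}^{\mathrm{int}}$ is functorially built out of non-expansive operations for \emph{every} contextual constructor, including the graded ones $\oc C[-]$ and $\letin{\oc x}{C[-]}{M}$, and including those going through $\mathbf{fix}$ via least upper bounds. Here the linearity condition on the hole is essential: it is precisely what prevents the structural morphisms associated with graded exponentials from amplifying the distance between $\sem{M}^{\mathrm{int}}$ and $\sem{N}^{\mathrm{int}}$ by a factor greater than one. Once this non-expansiveness is in place, the case analysis via adequacy is essentially identical to the one in the proof of Theorem~\ref{thm:fuzz-dctx<=dden}.
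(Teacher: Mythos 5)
Your proposal is correct and follows essentially the same route as the paper, which simply proves this theorem ``in the same way as'' the corresponding bound for $\dden$: adequacy of the interactive model plus a case analysis on convergence of $C[M]$ and $C[N]$, resting on the non-expansiveness of (linear-hole) contexts under the interpretation. The only difference is that you make explicit the verification of that non-expansiveness step --- in particular the role of the linearity condition on the hole in preventing the graded structural maps from amplifying distances --- which the paper leaves implicit.
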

\begin{proof}
  We can prove the statement in the same way with
  Theorem~\ref{thm:fuzz-dctx<=dden}.
\end{proof}

\begin{figure}
  \centering
  \begin{align*}
    &\sem{\Gamma \vdash \const{a} : \mathbf{R}}
    =
    \sem{\Gamma} \xrightarrow{w}
    I
    \xrightarrow{\lfloor a \rfloor}
    (R,I)
    \xrightarrow{\eta}
    T(R,I)
    \\
    &\sem{\Gamma \vdash \ast : \mathbf{I}}
    =
    \sem{\Gamma} \xrightarrow{w}
    I
    \xrightarrow{\eta}
    TI \\
    &\sem{\Gamma_{1}+\Gamma_{2} \vdash
      \const{f}(M_{1},M_{2}) : \mathbf{R}}
    = \\
    &\quad
    \sem{\Gamma_{1}+\Gamma_{2}}
    \xrightarrow{c}
    \sem{\Gamma_{1}} \otimes
    \sem{\Gamma_{2}}
    \xrightarrow{\sem{M_{1}} \otimes \sem{M_{2}}}
    T\sem{\mathbf{R}} \otimes
    T\sem{\mathbf{R}}
    \xrightarrow{\mathrm{dstr}}
    T(\sem{\mathbf{R}} \otimes \sem{\mathbf{R}})
    \xrightarrow{T\lfloor f \rfloor}
    T\sem{\mathbf{R}}
    \\
    &\sem{x:_{n}\tau \vdash x:\tau}
    = n \cdot \sem{\tau}
    \xrightarrow{d}
    \sem{\tau}
    \xrightarrow{\eta}
    T\sem{\tau} \\
    &\sem{\Delta \vdash \oc M : \oc_{n}\tau}
    =
    \sem{\Delta}
    \xrightarrow{\tilde{d}}
    n \cdot \sem{\Gamma}
    \xrightarrow{n \cdot \sem{M}}
    n \cdot T\sem{\tau}
    \xrightarrow{\xi}
    T\sem{\oc_{n} \tau}
    \\
    &
    \sem{\Delta \vdash \fix{\tau,\sigma}{
        f,x,M
      } : \tau \multimap \sigma}
    =
    \sem{\Delta}
    \xrightarrow{\tilde{d}}
    \infty \cdot \sem{\Gamma}
    \xrightarrow{\text{the least fixed point of $\varphi$}}
    \tau \multimap \sigma
    \\
    &\sem{\Gamma+\Delta \vdash
      M + N : \mathbf{R}}
    = \\
    &\qquad
    \sem{\Gamma+\Delta}
    \xrightarrow{c}
    \sem{\Gamma} \otimes
    \sem{\Delta}
    \xrightarrow{\sem{M} \otimes \sem{N}}
    T\sem{\mathbf{R}} \otimes
    T\sem{\mathbf{R}}
    \xrightarrow{\mathrm{dstr}}
    T(\sem{\mathbf{R}} \otimes \sem{\mathbf{R}})
    \xrightarrow{T(+)}
    T\sem{\mathbf{R}}
    \\
    &
    \sem{\Delta \vdash \const{a} \cdot M :\mathbf{R}}
    =
    \sem{\Delta}
    \xrightarrow{\tilde{d}}
    \sem{\Gamma}
    \xrightarrow{\sem{M}}
    n \cdot \sem{\mathbf{R}}
    \xrightarrow{\mathrm{mult}_{n,a}}
    \sem{\mathbf{R}}
    \xrightarrow{\eta}
    T\sem{\mathbf{R}}
    \\
    &\sem{\Gamma \vdash \lambda x:\sigma.\,
      M : \sigma \multimap \tau}
    = (\text{the currying of} \
    \sem{\Gamma} \otimes \sem{\sigma}
    \xrightarrow{\sem{M}}
    \sem{\tau})
    \xrightarrow{\eta}
    T\sem{\sigma \multimap \tau}
    \\
    &\sem{\Gamma + \Delta \vdash
      M \, N : \tau}
    = \sem{\Gamma + \Delta}
    \xrightarrow{c}
    \sem{\Gamma}
    \otimes
    \sem{\Delta}
    \xrightarrow{\sem{M} \otimes \sem{N}}
    \sem{\tau \multimap \sigma}
    \otimes
    \sem{\tau}
    \xrightarrow{\text{eval}}
    T\sem{\sigma}
    \\
    &\sem{\Gamma + \Delta \vdash
      M \otimes N : \tau \otimes \sigma}
    = \sem{\Gamma + \Delta}
    \xrightarrow{c}
    \sem{\Gamma}
    \otimes
    \sem{\Delta}
    \xrightarrow{\sem{M} \otimes \sem{N}}
    T\sem{\tau} \otimes T\sem{\sigma}
    \xrightarrow{\mathrm{dstr}}
    T\sem{\tau \otimes \sigma}
    \\
    &\sem{\Delta + \Xi \vdash
      \letin{\ast}{M}{N} : \tau}
    =
    \sem{\Delta + \Xi}
    \xrightarrow{c}
    \sem{\Xi} \otimes \sem{\Delta}
    \xrightarrow{\tilde{d}}
    \sem{\Xi} \otimes n \cdot \sem{\Gamma} 
    \xrightarrow{\sem{N} \otimes ((n \cdot \sem{M});\xi)}
    \\
    &\quad
    T\sem{\tau} \otimes TI
    \xrightarrow{\mathrm{dstr}}
    T\sem{\tau}
    \\
    &\sem{\Delta + \Xi \vdash
      \letin{x \otimes y}{M}{N} : \tau}
    =
    \sem{\Delta + \Xi}
    \xrightarrow{c}
    \sem{\Xi} \otimes \sem{\Delta}
    \xrightarrow{\tilde{d}}
    \sem{\Xi} \otimes n \cdot \sem{\Gamma}
    \xrightarrow{\sem{\Xi} \otimes ((n \cdot \sem{M});\xi)}
    \\
    &\quad
    \sem{\Xi} \otimes
    T(n \cdot (\sem{\sigma_{1}} \otimes
    \sem{\sigma_{2}}))
    \xrightarrow{\cong}
    \sem{\Xi} \otimes 
    T(n \cdot \sem{\sigma_{1}} \otimes
    n \cdot \sem{\sigma_{2}})
    \xrightarrow{\mathrm{str};\sem{N};\mu}
    T\sem{\tau}
    \\
    &\sem{\Delta + \Xi \vdash
      \letin{\oc x}{M}{N} : \tau}
    =
    \sem{\Delta + \Xi}
    \xrightarrow{c}
    \sem{\Xi} \otimes \sem{\Delta}
    \xrightarrow{\tilde{d}}
    \sem{\Xi} \otimes n \cdot \sem{\Gamma}
    \xrightarrow{\sem{\Xi} \otimes ((n \cdot \sem{M});\xi)}
    \\
    &\quad
    \sem{\Xi} \otimes T(n \cdot m \cdot \sem{\sigma})
    \xrightarrow{\cong}
    \sem{\Xi} \otimes T(nm \cdot \sem{\sigma})
    \xrightarrow{\mathrm{str};\sem{N};\mu}
    T\sem{\tau}
    \\
  \end{align*}
  where
  $\varphi \colon \mathbf{Int}(\metcppo)
  (\infty \cdot \sem{\Gamma},\tau \multimap
  \sigma) \to \mathbf{Int}(\metcppo)
  (\infty \cdot \sem{\Gamma},\tau \multimap
  \sigma)$ is given by
  \begin{multline*}
    \varphi(f) =
    \infty \cdot \sem{\Gamma}
    \xrightarrow{c}
    \infty \cdot \sem{\Gamma}
    \otimes
    \infty \cdot \sem{\Gamma}
    \xrightarrow{d \otimes \delta}
    \sem{\Gamma}
    \otimes
    \infty \cdot \infty \cdot \sem{\Gamma}
    \\
    \xrightarrow{\sem{\Gamma}
      \otimes \infty \cdot f}
    \sem{\Gamma}
    \otimes
    \infty \cdot (\tau \multimap \sigma)
    \xrightarrow{\text{the currying of $\sem{M}$}}
    \tau \multimap \sigma
  \end{multline*}
  \caption{Interpretation of Terms}
  \label{fig:interpretation_fuzz}
\end{figure}

%%% Local Variables:
%%% mode: latex
%%% TeX-master: "main"
%%% End:
}

\section{Conclusion}

In this paper we study quantitative reasoning
about linearly typed higher-order programs. We
introduce a notion of admissibility for families
of metrics on a purely linear programming language
$\LL{S}$, and among them, we investigate five
notions of program metrics and how these are
related, namely the logical metric, observational
metric, equational metric, denotational metric and
interactive metric. Some of our results can be
seen as quantitative analogues of well-known
results about program equivalences: the
observational metric is less discriminating than
or equal to semantic metrics, and non-definable
functionals in the semantics are the source of
inclusions. Existence of fully abstract semantics
for $\LL{S}$ with respect to the observational
metric is left open. Our study reveals the
intrinsic difficulty of comparing denotational
models with interactive semantic models obtained
by applying the Int-construction. Indeed, their
relationship is not trivial already at the level
of program equivalences. It follows from
\cite{hasegawa_2009} that there is a symmetric
monoidal coreflection between
$\mathbf{Int}(\mathbf{MetCppo})$ and
$\mathbf{MetCppo}$. This is a strong connection
between these models. However, we do not know
whether this categorical structure sheds light on
their relationship at the level of higher-order
programs.

% \begin{itemize}
% \item logical relation metric coincides with
%   observational metric;
% \item observational metric is the least one and
%   the equational metric is the greatest one among
%   admissible metrics;
% \item interaction metric is greater than
%   or equal to denotational metric;
% \item interaction metric is strictly greater
%   than observational metric;
% \item denotational metric is strictly smaller
%   than equational metric.
% \end{itemize}

Some of our results can be extended to a fragment
of Fuzz where grading is restricted to extended
natural numbers. Providing a quantitative
equational theory and an interactive metric for
full Fuzz is another very interesting topic for
 future work. There are some notions of metric
that we have not taken into account in this paper.
In \cite{10.1145/3209108.3209149}, Gavazzo gives
coinductively defined metrics for an extension
of Fuzz with algebraic effects and recursive types, 
which we do not consider here. 
The so-called observational quotient \cite{Hyland:2000aa} is a way to
construct less discriminating program metrics from
fine-grained ones. A thorough comparison of these notions
of program distance with the ones we introduce here
is another intriguing problem on which we plan to
work in the future.

% For the latter, one direction would
% be to adopt the notion of graded linear
% exponential comonad
% \cite{10.1145/3022670.2951939}, and another
% direction would be to adopt the equational theory
% for computational lambda calculus
% \cite{10.5555/77350.77353}. This is motivated by
% the existence of the forgetful functor from the
% category of pointed metric cpos to the category of
% pointed cpos, which is the categorical model of
% computational lambda calculus.

\bibliographystyle{splncs04}
\bibliography{mybibliography}

\end{document}